\newcommand\quotient[2]{{^{\displaystyle #1}}\Big/{_{\displaystyle #2}}}
\newtheorem{theorem}{Theorem}
\newtheorem{lemma}{Lemma}
\newtheorem{conjecture}{Conjecture}
\newtheorem{definition}{Definition}
\newtheorem{claim}{Claim}
\newtheorem*{questionsnn}{Questions}
\newtheorem*{questionnn}{Question}
\DeclareMathOperator{\ord}{ord}
\newcommand{\IZ}{\mathbb{Z}}
\newcommand{\IC}{\mathbb{C}}
\newcommand{\IP}{\mathbb{P}}
\newcommand{\INnz}{\mathbb{N}^*}
\newcommand{\IQ}{\mathbb{Q}}
\newcommand{\IH}{\mathbb{H}}
\newcommand{\IE}{\mathbb{E}}
\newcommand{\IS}{\mathbb{S}}
\newcommand{\id}{\mathrm{id}}
\newcommand{\cO}{{\cal O}}
\newcommand{\cT}{{\cal T}}
\newcommand{\cL}{{\cal L}}
\newcommand{\bk}{\boldsymbol{k}}
\newcommand{\bl}{\boldsymbol{l}}
\newcommand{\bv}{\boldsymbol{v}}
\newcommand{\bq}{\boldsymbol{q}}
\newcommand{\bQ}{\boldsymbol{Q}}
\newcommand{\KK}{\mathrm{KK}}
\newcommand{\e}{\mathrm e}
\newcommand{\ii}{\mathrm{i}}
\newcommand{\dd}{\mathrm{d}}
\newcommand{\tr}{\mathrm{tr}}
\newcommand{\tors}[1]{#1_{\mathrm{tors}}}
\newcommand{\Tors}[1]{\mathrm{Tors}\,\,#1}
\newcommand{\free}[1]{#1_{\mathrm{free}}}
\DeclareMathOperator{\corank}{corank}
\DeclareMathOperator{\adj}{adj}
\newcommand{\Gr}{\mathrm{Gr}}
\DeclareMathOperator{\codim}{codim}
\newcommand{\Ch}{\mathrm{Ch}}
\newcommand{\chitop}{\chi_{\mathrm{top}}}
\newcommand{\Amp}{\mathrm{Amp}}
\newcommand{\mslash}[1]{\left.\right|_{\raisebox{-0.3ex}{$\scriptstyle #1$}}}
\newcommand{\F}{\mathrm{F}}
\newcommand{\M}{\mathrm{M}}
\newcommand{\U}{\mathrm{U}}
\newcommand{\G}{\mathrm{G}}
\newcommand{\Gsd}{\G_{6\dd}}
\newcommand{\Gfd}{\G_{5\dd}}
\newcommand{\tGsd}{\widetilde{\G}_{6\dd}}
\newcommand{\tGfd}{\widetilde{\G}_{5\dd}}
\newcommand{\tcT}{\widetilde{\cT}}
\newcommand{\Gd}{\G[\cT_d]}
\newcommand{\tGd}{\G[\tcT_d]}
\newcommand{\Gdm}{\G[\cT_{d-1}]}
\newcommand{\tGdm}{\G[\tcT_{d-1}]}
\newcommand{\myvec}[1]{\boldsymbol{#1}}
\newcommand{\Azz}{A_{z_1,z_2}}
\newcommand{\tse}{\gamma}
\newcommand{\tseOne}{{\tse_1}}
\newcommand{\tseTwo}{{\tse_2}}
\newcommand{\tseOnePrime}{{\tse'_1}}
\newcommand{\tseTwoPrime}{{\tse'_2}}
\newcommand{\Jbase}{j}
\newcommand{\gre}{\tse}
\newcommand{\lii}{i}
\newcommand{\lij}{j}
\newcommand{\lik}{k}
\newcommand{\sia}{\alpha}
\newcommand{\sib}{\beta}
\newcommand{\nn}{\nonumber}
\DeclareFontFamily{U}{wncy}{}
\DeclareFontShape{U}{wncy}{m}{n}{<->wncyr10}{}
\DeclareSymbolFont{mcy}{U}{wncy}{m}{n}
\DeclareMathSymbol{\Sh}{\mathord}{mcy}{"58}
\def\SLtwoZ{\text{SL}(2,\mathbb{Z})}
\title{\centering The twisted geometry of 6d F-theory vacua\\with discrete gauge symmetries}
\abstract{
We study the fate of discrete gauge groups and discrete charges of gravitational theories under twisted circle compactification. We then apply our results to six-dimensional F-theory vacua with discrete gauge symmetries and relate them to the geometry of the genus one fibered Calabi-Yau threefolds that underlie the dual M-theory compactifications.
This leads us to introduce a class of geometries, which we call {\it almost generic elliptic/genus one fibered Calabi-Yau threefolds}, and to make detailed conjectures about their properties.
A second twisted circle compactification relates these M-theory vacua to Type IIA compactifications with flat but topologically non-trivial B-fields along the internal geometry. The A-model topological string partition function on such configurations is intimately tied to the twisted-twined elliptic genera of the six-dimensional non-critical strings of the associated F-theory vacuum.
The modular properties of the twisted-twined elliptic genera imply new twisted derived equivalences.
We thus recover and significantly extend earlier results from both the physical and the mathematical literature.
An important outcome of our study is that if the discrete gauge symmetry is not cyclic, then no smooth genus one fibration exists that represents the associated axio-dilaton profile.
}
\author[a]{David Jaramillo Duque}
\author[b]{Amir-Kian Kashani-Poor}
\author[c]{Thorsten Schimannek}
\affiliation[a]{Centre R\&I Talan, 14 rue Pergolèse, 75116 Paris, France}
\affiliation[b]{Laboratoire de Physique de l’\'Ecole normale sup\'erieure,\\
CNRS, PSL Research University and Sorbonne Universit\'es,\\
24 rue Lhomond, 75005 Paris, France}
\affiliation[c]{
Institute for Theoretical Physics \& Department of Mathematics,\\
Utrecht University,\\
3584 CC Utrecht, The Netherlands
}
\emailAdd{amir-kian.kashani-poor@ens.fr}
\emailAdd{thorsten.schimannek@gmail.com}
\DeclareFontFamily{U}{wncy}{}
\DeclareFontShape{U}{wncy}{m}{n}{<->wncyr10}{}
\DeclareSymbolFont{mcy}{U}{wncy}{m}{n}
\DeclareMathSymbol{\Sh}{\mathord}{mcy}{"58} 
\begin{document}

\maketitle

\section{Introduction}
A key lesson of string theory is that geometry encodes physics, and, in turn, physics can elucidate geometry.
Moreover, string theory enhances the ordinary metric structure on the underlying geometry with additional degrees of freedom, such as the B-field and the dilaton, thus enlarging the classical notions of geometry when specifying a string theory background.
Intimately tied to this generalized notion of geometry is the insight that to capture the richness of physical phenomena, we must include string compactifications on singular spaces.
This paper studies the intricate interplay of these notions in the context of F-theory compactifications that lead to gravitational theories with discrete gauge symmetries.

Recall that in the context of a Type IIB string compactification on a space $B$, F-theory encodes a non-trivial axio-dilaton profile -- and the corresponding 7-brane stacks --  in a genus one fibration $\pi:X\rightarrow B$~\cite{Vafa1996}.
We will focus on the situation where $B$ is a projective surface and $X$ a projective Calabi-Yau threefold, such that the corresponding effective theory is a six-dimensional $\mathcal{N}=1$ supergravity (SUGRA)~\cite{Morrison1996a,Morrison1996b}.

The geometric origin of discrete gauge symmetries in F-theory has been studied in~\cite{deBoer:2001wca,Braun:2014oya,Morrison:2014era,Anderson:2014yva,Klevers:2014bqa,Mayrhofer:2014laa,Cvetic:2015moa,Oehlmann:2019ohh,Knapp:2021vkm}, and it is conjectured that the group
\begin{align}
    \Gamma=\pi_0(\Gsd)   
\end{align}
of connected components of the six-dimensional gauge group $\Gsd$ corresponds to the so-called Tate-Shafarevich group (or, more generally, the Weil-Ch\^{a}telet group) that is associated to the fibration.
The definition of this group is somewhat technical and we relegate it to Section~\ref{ss:TSgroup}.
Physically, the elements can be thought of as being associated to geometries that encode the same F-theory vacuum.
However, the geometries can still be rather different and, in general, they lead to different M-theory vacua.

At first glance, this seems to be at odds with the expectation that the compactification of the F-theory vacuum on an additional circle is dual to the corresponding M-theory vacuum.
The resolution of this apparent paradox lies in the possibility to modify the boundary conditions of fields along the circle by the action of a gauge transformation.
If the gauge transformation represents a non-trivial element $\tse\in\Gamma$, we say that the compactification is twisted by $\tse$.
This twist can be thought of as the discrete analogue of a Wilson line and amounts to the choice of a topology of the discrete gauge bundle.
The compactification of the six-dimensional F-theory vacuum on a circle with twist $\tse\in\Gamma$ is then conjecturally dual to the M-theory compactification on the corresponding genus one fibered Calabi-Yau threefold $X^\tse$~\cite{Morrison:2014era,Cvetic:2015moa}.

It will be useful to introduce the following notation.
We denote the effective theories that are associated to the compactifications of F-, M- and Type IIA string theory on a Calabi-Yau threefold respectively by ${\rm T}[X]$ with ${\rm T}\in\{{\rm F},{\rm M},{\rm IIA}\}$.
The choice of fibration structure, which determines the F-theory vacuum ${\rm F}[X]$, will be left implicit.
Moreover, if ${\rm G}$ is the gauge group of ${\rm T}[X]$, we denote the circle compactification with a twist $\tse\in {\rm G}$ along the circle by ${\rm T}[X][S^1_\tse]$.
The duality between F- and M-theory then takes the form
\begin{align}
    {\rm F}[X^0][S^1_\tse]={\rm M}[X^\tse]\,.
    \label{eqn:MFduality}
\end{align}

It turns out that the gauge group $\Gfd^\tse$ of the five-dimensional effective theory of a circle compactification ${\rm F}[X^0][S^1_\tse]$ depends on the choice of twist $\tse\in\Gamma$.
This is perhaps surprising, because unlike continuous gauge symmetries, a discrete gauge symmetry is not associated to a massless gauge boson which could acquire a mass which depends on the choice of the twist.
Nevertheless, we will carefully analyze twisted circle compactifications of gravitational theories with discrete gauge symmetries to determine $\Gfd^\tse$ and find that the group of connected components $\Gamma^\tse=\pi_0(\Gfd^\tse)$ is
\begin{align}
    \Gamma^\tse=\quotient{\Gamma}{\langle\tse\rangle}\,,
\end{align}
where $\langle\tse\rangle\subset\Gamma$ is the subgroup that is generated by $\tse$.
This result will have profound physical and geometrical implications.

If $\Gamma^\tse \neq 0$, we again have the freedom of twisting upon performing a second circle compactification.
For trivial twist, the resulting four-dimensional theory is conjecturally dual to a compactification of Type IIA string theory on the Calabi-Yau $X^\tse$.
On the other hand, if the twist is non-trivial, it has recently been proposed that the twist corresponds to a choice of topology for the B-field background in the dual Type IIA compactification on $X^\tse$~\cite{Schimannek:2021pau,Dierigl:2022zll,Katz:2022lyl,Katz:2023zan,Schimannek:2025cok}.
Denoting the group of topologies of flat B-fields on $X^\tse$ by $B(X^\tse)$, this implies that $B(X^\tse)=\Gamma^\tse$.

Given two elements $\tseOne,\tseTwo\in\Gamma$, we will  denote the geometry $X^\tseOne$ together with the choice of topology for the flat B-field $[\tseTwo]\in B(X^\tseOne)$ by $X^\tseOne_\tseTwo$.\footnote{In the following, we will sometimes -- in the spirit of the above opening remarks -- simply refer to $X^\tseOne_\tseTwo$ as a geometry.}
Together, the dualities between F-theory, M-theory and Type IIA string theory then take the form
\begin{align}
    {\rm F}[X^0][S^1_\tseOne][S^1_\tseTwo]={\rm M}[X^\tseOne][S^1_\tseTwo]={\rm IIA}[X^\tseOne_\tseTwo]\,.
    \label{eqn:MFAduality}
\end{align}

Our goal in this paper is to systematically study twisted compactifications of six-dimensional F-theory vacua and to use the duality with M-theory and Type IIA string theory to relate their physical properties to the geometry of the corresponding Calabi-Yau threefolds and to the modular properties of the topological string partition functions.

We will assume that the F-theory vacuum is almost generic in the following sense:
\begin{definition}
    We call a six-dimensional F-theory vacuum ``almost generic'' if the following holds:
    \begin{enumerate}
        \item The vacuum expectation values (vevs) of the tensor multiplet scalars are generic.
        \item The vevs of the scalar fields in the uncharged hypermultiplets are also generic.
    \end{enumerate}
    \label{def:almostGenericFtheory}
\end{definition}

As our focus is on discrete gauge symmetry, we will for simplicity exclude six-dimensional vacua with massless vector multiplets from our considerations. This implies that the six-dimensional gauge group is finite, such that $\Gamma=\Gsd$.
Moreover, since the Tate-Shafarevich group is always Abelian, it takes the form
\begin{align}
    \Gsd=\mathbb{Z}_{k_1}\times\ldots\times \mathbb{Z}_{k_r}\,,\qquad k_1,\ldots,k_r\in\mathbb{N}\,.
    \label{eqn:gaugegroupFAproduct}
\end{align}
Already in this restricted setup, we will find intricate relationships between the geometry of the genus one fibered Calabi-Yau threefolds and the physics of the twisted circle compactifications.
Even if one further restricts to F-theory vacua without any tensor multiplets, corresponding to the base of the genus one fibration being $B=\mathbb{P}^2$, and assumes that the Calabi-Yau threefold is smooth -- which as we will argue in this paper only happens when $\Gsd=\mathbb{Z}_k$ for some $k\in\mathbb{N}$ -- a very conservative lower bound on the number of the corresponding Calabi-Yau threefolds with $h^{1,1}=2$ is $60$, while the actual number is likely to be significantly larger~\cite{Pioline:wip,Dierigl:wip}.
Moreover, our techniques can be readily applied to study theories with vector multiplets and higher rank gauge groups.

From a mathematical perspective, we will define a class of genus one fibered Calabi-Yau threefolds that we also refer to as \textit{almost generic}.
Compactifications of F-theory on such almost generic fibrations lead to almost generic vacua in the sense above.~\footnote{The question if almost generic F-theory vacua without massless vector multiplets always arise from compactifications on such almost generic fibrations is more difficult and will be discussed in Section~\ref{sec:genericity}.}
Our physical analysis then translates to precise mathematical conjectures about the properties of these fibrations and the geometries that are associated to the various elements of the corresponding Tate-Shafarevich groups.
These conjectures should also make working with the Tate-Shafarevich group of such fibrations more accessible to physicists. They will serve as preparation for our study of F-theory vacua with gauge group $\Gsd=\mathbb{Z}_2\times\mathbb{Z}_2$ in~\cite{wipZ2Z2}.

We will now outline the central themes of our investigation.

\paragraph {Analytic resolutions}
If the twist $\tse\in\Gamma$ generates a proper subgroup $\langle\tse \rangle \subset \Gamma$, the theory ${\rm M}[X^\tse]$ has a non-trivial gauge group $\Gamma^\tse$.
Assuming that the spectrum of the six-dimensional theory is sufficiently generic, the five-dimensional theory will exhibit massless (half-)hypermultiplets that carry a charge only under the discrete part of the gauge group $\Gfd\simeq\ldots \times\Gamma^\tse$.
These states hence remain massless even on a generic point of the Coulomb branch. Yet the fact that they are charged suggests that they too, just like hypermultiplets carrying the charge of a continuous gauge group factor, arise from M2-branes that probe isolated singularities of $X^\tse$.
That these states cannot acquire a Coulomb branch mass geometrically corresponds to the fact that these singularities are $\mathbb{Q}$-factorial terminal and do not admit any K\"ahler small resolution, see e.g.~\cite{Arras:2016evy,Baume:2017hxm}. We therefore expect that $X^\tse$ is smooth if and only if $\Gamma=\langle\tse\rangle$.
Assuming that $\Gamma$ is finite Abelian, this can only happen if $\Gamma$ is cyclic, i.e. $\Gamma=\mathbb{Z}_k$ for some $k\in\mathbb{N}$.
To study six-dimensional F-theory vacua with non-cyclic gauge groups, it seems that one cannot avoid working with singular Calabi-Yau threefolds.

To circumvent the difficulties associated with working with singular geometries, we base our analysis on the proposal~\cite{Katz:2022lyl, Katz:2023zan, Schimannek:2025cok}, summarized in Claims~\ref{claim:mtheoryNodalCYgaugeGroup} and~\ref{claim:mtheoryNodalCYmatter} in Section~\ref{sec:discreteHolonomiesAndGenusOne}, that the gauge group and matter content of the M-theory compactification on the singular Calabi-Yau $X^\gamma$ are encoded in any analytic (thus possibly non-K\"ahler) small resolution $\widehat{X}^\gamma$ of $X^\gamma$.
From our general study of twisted circle compactifications in Section~\ref{sec:kkandhol}, we learn how to determine the discrete gauge symmetry which survives the twisted circle compactification and to identify the Kaluza-Klein towers which retain massless modes. This leads to detailed predictions regarding the topology of the geometries $X^\gamma$ and $\widehat{X}^\gamma$. These are summarized in the Conjectures~\ref{conj:TSandTorsion} and~\ref{conj:I2fibers} in Section~\ref{sec:almostgeneric}, and motivated physically in Sections~\ref{sec:GandLambdaTwisted} and~\ref{ss:twistedAndGenusOne}.

\paragraph{Higgs transitions and smoothing} The singular geometries in the class we consider all admit smooth deformations (this is a consequence of Theorem \ref{thm:NamikawaSteenbrinkDeformation} cited in Appendix \ref{sec:nodalCY3}). Such deformations $\widetilde{X}^\tse$ correspond to Higgs transitions in the M-theory compactifications on the singular geometries $X^\tse$. We study the fibration structure of these deformations in Section \ref{sec:partialHiggsing}. In particular, we conclude that, just like $X^\tse$, $\widetilde{X}^\tse$ exhibits a torus fibration which exhibits a multisection of minimal degree $m_\tse= \ord(\tse)$. By recourse to the interplay between the Chern-Simons terms in the M-theory compactification and anomaly cancellation, we determine topological invariants of $\widetilde{X}^\tse$, notably triple intersection numbers, based on enumerative invariants of $X^0$. This is the content of Conjecture \ref{conj:topologyOfSmoothing}, which summarizes calculations presented in Section \ref{ss:CSterms}.

\paragraph{Modular properties of the topological string partition function}
The compactification from six to four dimensions does not break supersymmetry. The duality~\eqref{eqn:MFAduality} therefore relates BPS excitations in the respective theories.
As a result, the A-model topological string partition function of a smooth elliptically fibered Calabi-Yau threefold encodes the elliptic genera of non-critical strings that exist in the corresponding six-dimensional F-theory compactification~\cite{Klemm:1996hh,Haghighat:2013gba,Haghighat:2014vxa}.
This gives a physical explanation for the modular properties of the topological string partition function that have already been observed in~\cite{Candelas:1994hw} and were further studied in~\cite{Klemm:1996hh,Klemm:2012sx,Alim:2012ss,Haghighat:2013gba,Haghighat:2014vxa,Huang:2015sta,DelZotto:2016pvm,Gu:2017ccq,DelZotto:2017mee,DelZotto:2018tcj,Schimannek:2019ijf,Cota:2019cjx,Duan:2020imo,Knapp:2021vkm,Schimannek:2021pau,Duque:2022tub,Dierigl:2022zll}.

It was found in~\cite{Cota:2019cjx} that the topological string partition function of a smooth genus one fibered Calabi-Yau threefold that exhibits an $N$-section with $N>1$, rather than a section, appears to exhibit modular properties only with respect to the congruence subgroup $\Gamma_1(N)\subset\SLtwoZ$.
In~\cite{Knapp:2021vkm}, it was then observed that the topological string partition functions of two smooth genus one fibrations with {$5$-sections} related by relative homological projective duality transform as vector valued modular forms under an extension of $\Gamma_1(5)$.
Subsequently, in~\cite{Schimannek:2021pau}, it was argued that on a general elliptic or genus one fibered Calabi-Yau threefold, the partition function actually exhibits vector valued modular behavior under the full modular group $\SLtwoZ$ if one takes into account the possibility to choose a non-trivial topology for the B-field.
This was explained in~\cite{Dierigl:2022zll}, focusing on the case $\Gsd=\mathbb{Z}_3$, by proposing the following identification:
\begin{align}
    \begin{array}{c}
    \text{(topological string partition function on $X^\tseOne_\tseTwo$)}\\
    \updownarrow\\
    \text{($\tseOne$-twisted, $\tseTwo$-twined elliptic genera of non-critical strings in ${\rm F}[X^0]$)}
    \end{array}
    \label{eqn:FTopduality}
\end{align}
Our understanding of the gauge groups of twisted circle compactifications allows us to generalize this identification~\eqref{eqn:FTopduality} with the twisted-twined elliptic genera, and therefore also the modular properties of the topological string partition function, for arbitrary finite Abelian groups $\Gamma$ and choices $\tseOne,\tseTwo\in\Gamma$.
We incorporate this identification as the first of three claims in Section~\ref{sec:ellipticGeneraAndTopologicalStrings} on which we base our analysis of the geometries $X^\tseOne_\tseTwo$ \cite{Schimannek:2019ijf, Schimannek:2021pau,Schimannek:2025cok}.

\paragraph{Torsional B-fields, moduli spaces and enumerative invariants}
It has been observed~\cite{Schimannek:2021pau,Katz:2022lyl,Katz:2023zan,Schimannek:2025cok} that, just like for smooth Calabi-Yau threefolds with a trivial B-field topology, the string backgrounds $X^\tseOne_\tseTwo$ admit large volume limits that are related via mirror symmetry to points of maximally unipotent monodromy (MUM-points) in the complex structure moduli spaces of corresponding mirror Calabi-Yau threefolds.
In fact, the different $X^\tseOne_\tseTwo$ that are related to twisted-twined elliptic genera that lie in the same $\SLtwoZ$-orbit are expected to correspond, under mirror symmetry, to different MUM-points in the moduli space of the same Calabi-Yau threefold.\footnote{Considering singular Calabi-Yau threefolds with torsional B-field might seem esoteric. But note that when such an orbit contains the mirror of a smooth Calabi-Yau threefold, taking this mirror as a point of departure leads unavoidably to the consideration of mirrors of such ``esoteric'' configurations.} The category of B-twisted topological branes on $X^\tseOne_\tseTwo$ is conjecturally the derived category of $\tseTwo$-twisted sheaves on any small resolution $\widehat{X}^\tseOne$ of $X^\tseOne$~\cite{caldararuThesis,Caldararu2002}.
Situating different $X^\tseOne_\tseTwo$ in the same moduli space should imply that the corresponding categories can be identified via brane transport.
This implies twisted derived equivalences which are the content of Conjecture~\ref{conj:twisted}, presented in Section \ref{sec:mathresults} and discussed in Section \ref{sec:modularityAndDerivedEquivalences}.

In explicit examples, enumerative evidence supports the identification of MUM points with geometries $X^\tseOne_\tseTwo$ via the notion of torsion-refined Gopakumar-Vafa invariants, introduced in~\cite{Schimannek:2021pau}: when $\tors{H_2(\widehat{X}^\tse,\IZ)}\neq 0$, the conventional Gopakumar-Vafa invariants can be refined by keeping track of the torsion class of curves. In the topological string partition function, the parameter keeping track of this refinement is the torsional B-field, which pairs with the torsion class of the curve to provide a phase. Extracting the torsion-refined invariants requires knowledge of the partition function for different values of the torsional B-field,\footnote{Consider e.g. $\IZ_2$ torsion, and denote the invariants by $n_{\beta,+}$ and $n_\beta,-$, with $\beta$ keeping track of the free part of the curve class.
Then the partition function for vanishing torsional B-field only allows extracting the sum $n_{\beta,+} + n_{\beta,-}$.
On the other hand, a torsional B-field 1 mod 2 introduces a sign in the partition function that depends on the $\mathbb{Z}_2$-charge. Knowledge of both partition functions is required to extract the individual invariants $n_{\beta,+}$ and $n_{\beta,-}$.} providing a check on the overall consistency of our computations and identifications. 

\paragraph{Complete invariants of generic genus one fibered Calabi-Yau threefolds}
A natural question is if a set of invariants exists that uniquely characterizes the topological type of a generic genus one fibered Calabi-Yau threefold.
This question is closely related to the cancellation of global anomalies in the corresponding F-theory vacua, which has been discussed in~\cite{Dierigl:2022zll} and will be further elaborated on in~\cite{Dierigl:wip}.
While we will not discuss questions related to global anomalies in this paper, we have included Conjecture~\ref{conj:datum} that proposes a complete set of topological invariants and is based on the results from~\cite{Dierigl:2022zll,Dierigl:wip}.

\vspace{0.5cm}

\paragraph{Outline}
This paper is organized as follows.
In Section~\ref{sec:kkandhol}, we discuss the physics of twisted circle compactifications and Higgs transitions in gravitational theories that couple to a discrete gauge symmetry.
The results of this section are independent of string theory and rely entirely on field theoretic considerations.
They are illustrated in an example with gauge group $\IZ_4$ in Section~\ref{sec:fieldTheoryExampleZ4}.
The following Section~\ref{sec:almostgenericfibrations} is devoted to the geometry of genus one fibered Calabi-Yau threefolds.
After reviewing the definition of the Tate-Shafarevich and the Weil-Ch\^atelet group, we introduce the class of {\it almost generic genus one fibered Calabi-Yau threefolds} and discuss some of their properties that follow directly from the definition.
Then, in Section~\ref{sec:mathresults}, we summarize the results of our paper in the form of five mathematical conjectures.
We present the physical considerations based on F/M-theory duality that underlie these conjectures in Section~\ref{sec:discreteHolonomiesAndGenusOne}, and those based on M-theory/type IIA duality in Section~\ref{sec:ellipticGeneraAndTopologicalStrings}.
The latter section also provides a more detailed map between the topological string partition function and the twisted-twined elliptic genus of non-critical strings than available to date. Finally, we discuss the conjectures and their implications at the hand of an example with Tate-Shafarevich group $\IZ_4$ in Section~\ref{sec:exampleZ4}.
Numerous technical details are relegated to the appendices.

\subsection{Notation}
Throughout this paper, we will need to introduce a rich set of definitions and notation.
Here, we summarize some of the recurring notation in the hope that this will aid the reader in following the exposition.
We use ``$\stackrel{\rm c}{=}$'' to denote identifications that are conditioned on the conjectures from Section~\ref{sec:mathresults}.
\begin{itemize}
    \item $\e[x] = \exp(2\pi \ii x)$.
    \item $[a]_b$ denotes the equivalence class of $a$ in the additive group $\mathbb{Z}_b=\mathbb{Z}/b\mathbb{Z}$.
    \item $\Gamma$ or $\Gamma^0$ are finite Abelian groups.
    \item $\Gamma^\tse=\Gamma^0/\langle\tse\rangle$ for $\tse\in\Gamma^0$.
    \item $m_\gre$ is the order of a group element $\gre \in \G$.
    \item $m$ is sometimes used for $m_\gre$ when the element $\gre$ is clear from the context.
    \item Given a group $\G$, the corresponding group of characters, or Pontryagin dual, is $\widehat{\G}=\text{Hom}(\G,\U(1))$. 
    \item Given $\tse\in\Gamma$ and $\chi\in\widehat{\Gamma}$, there is a unique $q_\tse(\chi)\in\{0,\ldots,m_\tse-1\}$ such that $\chi(\tse)=\e[q_\tse(\chi)/m_\tse]$.
    \item $\chi_\phi\in\widehat{\G}$ is the charge of a field $\phi$ with respect to a gauge group $\G$.
    \item Geometrically, $\Gamma^0=\Sh_B(X^0)$ will be the Tate-Shafarevich group of an almost generic elliptically fibered Calabi-Yau threefold $\pi_0:X^0\rightarrow B$ in the sense of Definition~\ref{def:almostgeneric}.
    \item Physically, $\Gsd=\Gamma^0$ is the gauge group of the corresponding F-theory vacuum.
    \item $\pi_\tse:X^\tse\rightarrow B$ is the almost generic genus one fibered Calabi-Yau threefold that is associated to the element $\tse\in\Gamma^0=\Sh_B(X^0)\stackrel{\rm c}{=}\tors{H^3(\widehat{X}^0,\IZ)}$.
    \item $\rho_\tse:\widehat{X}^\tse\rightarrow X^\tse$ is an analytic small resolution of $X^\tse$. If $X^\tse$ is smooth, then $\widehat{X}^\tse=X^\tse$.
    \item $\widetilde{X}^\tse$ is a generic smooth deformation of $X^\tse$. If $X^\tse$ is smooth, then $\widetilde{X}^\tse=X^\tse$.
    \item Geometrically, for $\chi\ne 0$, $N_\chi$ is the number of $I_2$-fibers in an almost generic Weierstra{\ss} fibration $X^0$ that are resolved by exceptional curves that represent the class $\chi$ or $-\chi$ in $\widehat{\Gamma}^0\stackrel{\rm c}{=}\tors{H_2(\widehat{X}^0,\IZ)}$.
    \item Physically, $N_\chi$ is the number of half-hypermultiplets with charge $\chi\in\widehat{\Gamma}^0$ in the F-theory compactification on an almost generic Weierstra{\ss} fibration $X^0$.
    \item $X^\tseOne_\tseTwo$ is the Calabi-Yau $X^\tseOne$ together with a choice of B-field topology $\tseTwo\in \Gamma^\tse\stackrel{\rm c}{=}\tors{H^3(\widehat{X}^\tseOne,\IZ)}$.
    \item $\Delta$ is the discriminant locus of a Weierstra{\ss} fibration $X^0$.
    \item $S_\Delta\subset\Delta$ is the set of nodes of the discriminant locus.
    \item $S^\tse\subset X^\tse$ is the set of nodes of $X^\tse$.
    \item $C_{p,\tse}^E$ is the exceptional curve in $\widehat{X}^\tse$ that resolves a node $p\in S^\tse$.
    \item $S_\tse^0=\{\,p\in S^0\,\,\vert\,\,\tse([C_{p,0}^E])=1\,\}$ for $\tse\in \Gamma^0\stackrel{\rm c}{=}\tors{H^3(\widehat{X}^0,\IZ)}=\text{Hom}(\tors{H_2(\widehat{X}^0,\IZ)},{\rm U}(1))$.
    \item $S_{\Delta,\tse}=\pi_0(S^0_{\tse})\subset S_\Delta$.
\end{itemize}

\section{Kaluza-Klein reduction in the presence of discrete holonomies}\label{sec:kkandhol}
In this section, we will study twisted circle compactifications independently from the F-theory/M-theory context which will occupy us for the rest of the paper. We will address the following questions, generalizing a recent discussion from~\cite{Dierigl:2022zll}:

\begin{questionsnn}
Consider a $d$-dimensional quantum field theory $\cT_d$ coupled to gravity and with a finite Abelian gauge group $\Gd=\Gamma_d$.
What is the gauge group $\Gdm$ of the $(d-1)$-dimensional theory $\cT_{d}[S^1_\gre]$ that arises after compactifying on a circle $S^1$ with a twist $\gre\in\Gamma_d$, and how do charges under $\Gd$ map to charges under $\Gdm$?
\end{questionsnn}

When the finite gauge group $\Gamma_d$ arises upon Higgsing a theory $\tcT_d$ with $\U(1)^p$ gauge symmetry via scalar fields of non-primitive charge, a second path to $\cT_{d}[S^1_\gre]$ exists: we can first perform the circle compactification in the presence of a flat $\U(1)$ connection, and then Higgs the resulting theory to arrive at $\cT_{d}[S^1_\gre]$, see Figure \ref{fig:twoPaths}.

\begin{figure}
\centering
\begin{tikzpicture}[
    box/.style={draw, minimum width=3cm, minimum height=2.0cm, align=center, rounded corners=.25cm},
    node distance=3cm and 5cm 
]

\node[box] (upperLeft) {$\tcT_d$ \\ $\tGd = \U(1)^p$};
\node[box] (upperRight) [right=of upperLeft] {$\cT_d$ \\ $\Gd = \Gamma_d$};
\node[box] (lowerLeft) [below=of upperLeft] {$\tcT_{d-1} = \tcT_d[S^1_\gre]$ \\ $\tGdm = \U(1) \times \U(1)^p$};
\node[box] (lowerRight) [below=of upperRight] {$\cT_{d-1} = \cT_d[S^1_\gre]$ \\ $\Gdm = \U(1) \times \Gamma_{d-1}$ \\ $\Gamma_{d-1} = \Gamma_d / \langle \gre \rangle$};

\coordinate (StartHorizontalLeft) at ($(upperLeft.east)+(0.5cm,0)$);=
\coordinate (EndHorizontalLeft) at ($(upperRight.west)-(0.5cm,0)$);

\coordinate (StartHorizontalRight) at ($(lowerLeft.east)+(0.5cm,0)$);
\coordinate (EndHorizontalRight) at ($(lowerRight.west)-(0.5cm,0)$);

\coordinate (StartVerticalLeft) at ($(upperLeft.south)-(0,0.5cm)$);
\coordinate (EndVerticalLeft) at ($(lowerLeft.north)+(0,0.5cm)$);

\coordinate (StartVerticalRight) at ($(upperRight.south)-(0,0.5cm)$);
\coordinate (EndVerticalRight) at ($(lowerRight.north)+(0,0.5cm)$);

\draw[->,thick] (StartHorizontalLeft) -- (EndHorizontalLeft) node[midway, above] {Higgsing via fields} node[midway, below] {of non-primitive charge};
\draw[->, thick] (StartHorizontalRight) -- (EndHorizontalRight) node[midway, above] {Higgsing via fields} node[midway, below] {of non-primitive charge};

\draw[->,thick] (StartVerticalLeft) -- (EndVerticalLeft) node[midway, left, align=center] {Reduction on $S^1$\\w/ holonomy $\gre \in \tGd$} ;
\draw[->,thick] (StartVerticalRight) -- (EndVerticalRight) node[midway, right, align=center] {Reduction on $S^1$\\w/ holonomy $\gre \in \Gamma_d$};

\end{tikzpicture}
\caption{Two paths towards $\cT_d[S^1_\gre]$} \label{fig:twoPaths}
\end{figure}

In this section, we will demonstrate, independently via both paths leading to $\cT_{d}[S^1_\gre]$, that the gauge group of the lower dimensional theory is
\begin{align}
    \quotient{\U(1)\times\Gamma_d}{\left(\e\left[-\tfrac{1}{m}\right],\gre\right)}\simeq \U(1)\times\Gamma_{d-1}\,,
\end{align}
where $m$ is the order of $\gre$ in $\Gamma_d$, $\e[x] = \exp(2\pi \ii x)$ and
\begin{equation} \label{eq:GGdm1}
    \Gamma_{d-1} = \Gamma_d/ \langle \gre \rangle \,.
\end{equation}
In addition, we will derive the mapping of charges from $\tGd$ to $\Gdm$ by following either path in Figure \ref{fig:twoPaths} from top left to bottom right.

We emphasize that the derivation of \eqref{eq:GGdm1} starting from $\cT_d$ is intrinsic, i.e. does not depend on embedding the right vertical arrow in Figure \ref{fig:twoPaths} into the full diagram.

\subsection{Some preliminaries on Abelian groups and their characters} \label{sec:mathPrelim}
In the following, we will be interested in matter fields charged under a finite Abelian group $\Gamma$, i.e. that transform in an irreducible representation of $\Gamma$. Such irreducible representations are in 1:1 relation to characters of $\Gamma$, i.e. to group homomorphisms
\begin{equation}
    \chi: \Gamma \rightarrow \U(1) \,.
\end{equation}
We will also use the language of characters when referring to irreducible representations of $\U(1)^p$. A field carrying charge $q$ under $\U(1)$ transforms via the character
\begin{equation}
    \chi_q : \U(1) \rightarrow \U(1) \,, \quad \xi \mapsto \xi^q \,.
\end{equation}
In general, we will refer to the group of characters of a group ${\rm G}$ as
\begin{align}
    \widehat{{\rm G}}=\text{Hom}\left({\rm G},\U(1)\right)\,.
\end{align}
This is also referred to as the Pontryagin dual of the group ${\rm G}$.

Clearly, given a subgroup $\Gamma'\subset\Gamma$, a character $\chi\in\widehat{\Gamma}$ induces a character of the quotient group $\Gamma/\Gamma'$ if and only if $\chi\vert_{\Gamma'}=1$.
We will henceforth use the same symbol $\chi$ also for the induced character. The following fact is somewhat less trivial:
\begin{lemma} \label{lemma:liftingCharacters}
		Let $\Gamma$ be a finite Abelian group and $\Gamma'\subset\Gamma$ a subgroup.
		Any character on $\Gamma'$ can be extended to a character on $\Gamma$.
	\end{lemma}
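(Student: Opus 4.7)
My plan is to prove the lemma by induction on the index $[\Gamma:\Gamma']$, reducing the general problem to that of adjoining a single element to $\Gamma'$. The base case $[\Gamma:\Gamma']=1$ is vacuous. For the inductive step, I would pick any element $g\in\Gamma\setminus\Gamma'$, let $n$ be the smallest positive integer such that $g^n\in\Gamma'$ (which exists because $\Gamma$ is finite), and form the intermediate subgroup $\Gamma''=\langle\Gamma',g\rangle$. If $\chi$ can be extended to $\Gamma''$, then since $[\Gamma:\Gamma'']<[\Gamma:\Gamma']$ the inductive hypothesis provides a further extension to all of $\Gamma$.

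To construct the extension $\widetilde\chi:\Gamma''\to\U(1)$, I would exploit the divisibility of $\U(1)$: pick any $\zeta\in\U(1)$ with $\zeta^n=\chi(g^n)$, and set
\[
\widetilde\chi(g^k h)=\zeta^k\,\chi(h)\qquad\text{for } h\in\Gamma',\; k\in\mathbb{Z}.
\]
Once this is well defined, it is clear that $\widetilde\chi$ restricts to $\chi$ on $\Gamma'$ and is a group homomorphism, since $\Gamma''$ is Abelian.

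The main obstacle is precisely this well-definedness, because the representation of an element of $\Gamma''$ in the form $g^k h$ is not unique. If $g^k h=g^{k'}h'$ with $h,h'\in\Gamma'$, then $g^{k-k'}=h'h^{-1}\in\Gamma'$; by minimality of $n$, this forces $n\mid (k-k')$, say $k-k'=n\ell$. Then $h'h^{-1}=(g^n)^\ell$, and the defining property $\zeta^n=\chi(g^n)$ gives $\zeta^{k-k'}=\chi(g^n)^\ell=\chi(h'h^{-1})$, so $\zeta^k\chi(h)=\zeta^{k'}\chi(h')$. This is the one place where the specific choice of $\zeta$ is used, and it encapsulates the entire content of the lemma.

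As an alternative, more conceptual route that avoids the induction, I would observe that $\U(1)\cong\mathbb{R}/\mathbb{Z}$ is a divisible Abelian group, hence an injective object in the category of Abelian groups. Consequently the functor $\mathrm{Hom}(-,\U(1))$ is exact, and the inclusion $\Gamma'\hookrightarrow\Gamma$ induces a surjection $\widehat\Gamma\twoheadrightarrow\widehat{\Gamma'}$, which is precisely the statement to be proved. The inductive argument above is essentially a hands-on unwinding of the extension property used to establish injectivity of $\U(1)$.
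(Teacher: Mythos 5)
Your proof is correct; the paper itself does not supply an argument but merely cites a standard reference, and your induction on the index with a one-element extension (choosing $\zeta$ with $\zeta^{n}=\chi(g^{n})$ via divisibility of $\U(1)$, then checking well-definedness) is exactly the classical proof that such references contain. The remark that this is an instance of the injectivity of $\U(1)$ as a divisible Abelian group is also accurate and adds nothing that needs checking.
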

	\begin{proof}
		See e.g.~\cite[Theorem 3.3]{carthy}.
	\end{proof}

One easily checks the following
\begin{lemma}
        Let $\G$ be a group with center $Z(\G)$ and $\chi:\,\G\rightarrow \U(1)$ a homomorphism. Choose $g\in Z(\G)$. Then the map
        \begin{equation}
            \begin{split}
                f:\,\,\,\quotient{\U(1)\times \G}{\langle \,(\chi(g),g)\,\rangle} &\,\longrightarrow\, \U(1)\times \G/\langle g\rangle \\[0.4cm]
                [\,(\phi,h)\,] &\,\longmapsto\, (\,\phi\chi(h)^{-1},[h]\,)
            \end{split}
        \end{equation}
	is well defined and an isomorphism, with inverse
    \begin{align}
		f^{-1}:\,(\phi,[h])\mapsto [(\phi\chi(h),h)]\,.
	\end{align}
	\label{lem:quotientgroup}
 \end{lemma}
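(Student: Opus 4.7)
The plan is a direct verification in three short steps. The only group-theoretic input is the multiplicativity of $\chi$ together with the fact that $g\in Z(\G)$, which ensures both that $\langle g\rangle\triangleleft\G$ (so the quotient $\G/\langle g\rangle$ on the right makes sense) and that $(\chi(g),g)$ lies in the center of $\U(1)\times\G$, so that the cyclic subgroup it generates is normal and the quotient on the left is a bona fide group.

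First I would check that $f$ descends to the quotient. Any two representatives of a class differ by some power $(\chi(g)^n,g^n)=(\chi(g),g)^n$, so it suffices to show that $(\phi,h)$ and $(\phi\chi(g)^n,g^n h)$ have the same image. Evaluating $f$ on the latter yields $(\phi\chi(g)^n\chi(g^n h)^{-1},[g^n h])$, and multiplicativity of $\chi$ together with $[g^n h]=[h]$ in $\G/\langle g\rangle$ collapses this to $(\phi\chi(h)^{-1},[h])=f(\phi,h)$. The homomorphism property is then immediate: both $f((\phi_1,h_1)(\phi_2,h_2))$ and $f(\phi_1,h_1)\,f(\phi_2,h_2)$ equal $(\phi_1\phi_2\chi(h_1)^{-1}\chi(h_2)^{-1},[h_1][h_2])$ once $\chi(h_1 h_2)=\chi(h_1)\chi(h_2)$ is used.

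For the inverse, I would verify analogously that $f^{-1}(\phi,[h])=[(\phi\chi(h),h)]$ is well-defined on representatives: if $h'=g^n h$, then
\[
(\phi\chi(h'),h') \;=\; (\chi(g)^n,g^n)\cdot(\phi\chi(h),h)
\]
as elements of $\U(1)\times\G$, where centrality of $g$ (and commutativity of $\U(1)$) lets us move the factor $g^n$ past $h$ and the scalar $\chi(g)^n$ past $\phi$. This exhibits the two representatives as lying in the same class, after which $f\circ f^{-1}=\id$ and $f^{-1}\circ f=\id$ follow directly from $\chi(h)\chi(h)^{-1}=1$. The sole subtlety — more bookkeeping than obstacle — is to keep track of the fact that centrality of $g$ is invoked twice: once to make both subgroups normal, and once when rearranging the product in the last step, since the ambient group $\U(1)\times\G$ is not assumed abelian.
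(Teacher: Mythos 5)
Your proof is correct and is precisely the direct verification the paper leaves to the reader (the lemma is stated with no proof beyond ``one easily checks''): descent of $f$ to the quotient, the homomorphism property via multiplicativity of $\chi$, well-definedness of $f^{-1}$ using centrality of $g$, and the two composition identities. Nothing is missing.
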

Note that by the fundamental theorem of finite Abelian groups, any finite Abelian group can be expressed as the direct product of cyclic subgroups of prime-power order (also called primary cyclic groups).
The cyclic group $\IZ_r$ of order $r$ can be realized as 
\begin{equation}
    \left(\{ \e\left[\tfrac{k}{r} \right]  \,|\, k = 0, \ldots r-1\}, \, \cdot \right)\,,
\end{equation}
with group composition identified with complex multiplication.
We will also use the additive realization as $\IZ_r=\IZ/r\IZ$.
We will mostly use the multiplicative realization when referring to elements of a gauge group ${\rm G}$ and use the additive notation for the Pontryagin dual group $\widehat{\rm G}$.

For example, given a gauge group ${\rm G}=\IZ_r$ we denote by $\chi_k$ the character that corresponds to $k\in \IZ_r$ and acts on the generator $\e\left[\frac{1}{r} \right]$ of ${\rm G}$ as $\chi_k(\e\left[\frac{1}{r} \right])=\e\left[\frac{k}{r} \right]$.
We then call $k$ the $\IZ_r$ charge of the character and denote it as $q_{\IZ_r}$. 
Note that the additive convention for the character group implies that given $g\in{\rm G}$ we have $(a\chi_k)(g)=\chi_k(g)^a$.

When ${\rm G}= \U(1)^p$ and $g$ is an element of finite order, the quotient in Lemma \ref{lem:quotientgroup} is in fact isomorphic to $\U(1) \times \U(1)^p$. 
\begin{lemma} \label{lem:quotientgroupU1p}
    Let 
    \begin{align*}
    g= \left(\e\left[\frac{k_1}{m}\right]\,,\,\ldots\,,\e\left[\frac{k_p}{m}\right] \right)\in \U(1)^p \,, \quad m,k_1,\ldots,k_p \in \INnz \,.
\end{align*}
    Then the two groups
    \begin{equation*}
        \U(1) \times \U(1)^p \quad \text{and} \quad \quotient{\U(1) \times \U(1)^p}{ \left\langle \left( e\left[-\tfrac{1}{m}\right], g \right) \right\rangle} 
    \end{equation*}
    are isomorphic, and the isomorphism is realized by the map
    \begin{align}
        \begin{split}
        s: \quad \quotient{\U(1) \times \U(1)^p}{ \left\langle \left( e\left[-\tfrac{1}{m}\right], g \right) \right\rangle} \quad &\longrightarrow \quad \U(1) \times \U(1)^p \\
        [(\xi_0,\xi_1,\ldots,\xi_p)] \quad &\longmapsto \quad (\xi_0^m,\xi_0^{k_1} \xi_1,\ldots,\xi_0^{k_p} \xi_p) 
        \end{split}
    \end{align}
    and its inverse
    \begin{align}
        \begin{split}
        s^{-1}: \quad  \U(1) \times \U(1)^p &\longrightarrow \quad \quotient{\U(1) \times \U(1)^p}{ \left\langle \left( e\left[-\tfrac{1}{m}\right], g \right) \right\rangle} \quad\\
        (\eta_0,\eta_1,\ldots,\eta_p) \quad &\longmapsto \quad [(\xi_0,\xi_0^{-k_1} \eta_1,\ldots,\xi_0^{-k_p} \eta_p)] \quad \text{where} \quad \xi_0^m =\eta_0 \,. 
        \end{split}
    \end{align}
\end{lemma}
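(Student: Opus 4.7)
The plan is to verify directly that the explicit maps $s$ and $s^{-1}$ defined in the statement are mutually inverse group isomorphisms. Since the subgroup we are quotienting by is cyclic of order exactly $m$ (the powers $(\e[-n/m],g^n)$ for $n=0,\ldots,m-1$ are all distinct because the first entry is), the needed checks split into four routine steps: well-definedness of $s$ on equivalence classes, well-definedness of $s^{-1}$ despite the choice of an $m$-th root $\xi_0$ of $\eta_0$, the homomorphism property, and the inversion identities $s\circ s^{-1}=\id$ and $s^{-1}\circ s=\id$.

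First I would check that $s$ descends to the quotient. A representative of an arbitrary equivalence class is obtained from $(\xi_0,\xi_1,\ldots,\xi_p)$ by multiplication with $(\e[-n/m],g^n)=(\e[-n/m],\e[nk_1/m],\ldots,\e[nk_p/m])$ for some $n\in\IZ$. Applying $s$ to the shifted representative, the first component becomes $(\xi_0\e[-n/m])^m=\xi_0^m\e[-n]=\xi_0^m$, and the $(i+1)$-th component becomes $(\xi_0\e[-n/m])^{k_i}\xi_i\e[nk_i/m]=\xi_0^{k_i}\xi_i$, so the value of $s$ is unchanged.

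Next I would verify that $s^{-1}$ is well defined. If $\xi_0$ and $\xi_0'$ are two $m$-th roots of $\eta_0$, then $\xi_0'=\xi_0\e[\ell/m]$ for some $\ell\in\IZ$. The corresponding output
\begin{equation*}
(\xi_0',(\xi_0')^{-k_1}\eta_1,\ldots,(\xi_0')^{-k_p}\eta_p)=(\xi_0,\xi_0^{-k_1}\eta_1,\ldots,\xi_0^{-k_p}\eta_p)\cdot(\e[\ell/m],\e[-\ell k_1/m],\ldots,\e[-\ell k_p/m])
\end{equation*}
differs from $(\xi_0,\xi_0^{-k_1}\eta_1,\ldots)$ by the inverse of $(\e[-\ell/m],g^\ell)$, hence represents the same equivalence class. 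The homomorphism property of $s$ then follows from the identities $(\xi_0\xi_0')^m=\xi_0^m\xi_0'^m$ and $(\xi_0\xi_0')^{k_i}(\xi_i\xi_i')=(\xi_0^{k_i}\xi_i)(\xi_0'^{k_i}\xi_i')$.

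Finally I would confirm that $s$ and $s^{-1}$ are mutually inverse: in one direction, choosing the root $\xi_0$ itself when computing $s^{-1}(s([(\xi_0,\xi_1,\ldots,\xi_p)]))$ returns the original class; in the other, $s(s^{-1}(\eta_0,\ldots,\eta_p))=(\xi_0^m,\xi_0^{k_i}\cdot\xi_0^{-k_i}\eta_i,\ldots)=(\eta_0,\ldots,\eta_p)$. There is no real obstacle here, as the argument is pure bookkeeping; the only subtle point worth flagging is that one must not try to derive this lemma from Lemma~\ref{lem:quotientgroup} by producing a character $\chi$ of $\U(1)^p$ with $\chi(g)=\e[-1/m]$, since such a character exists only when $\gcd(k_1,\ldots,k_p,m)=1$. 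The explicit construction above sidesteps this restriction.
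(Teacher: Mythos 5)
Your proof is correct; the paper in fact states Lemma~\ref{lem:quotientgroupU1p} without proof, treating it as a routine verification, and your direct check of well-definedness, the homomorphism property, and mutual inversion is exactly the computation being omitted. Your closing remark is also well taken: the lemma cannot in general be reduced to Lemma~\ref{lem:quotientgroup}, since a character of $\U(1)^p$ sending $g$ to $\e[-\tfrac{1}{m}]$ exists only when $\gcd(k_1,\ldots,k_p,m)=1$, so the explicit construction is genuinely needed.
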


\subsection{Holonomies of continuous and discrete gauge groups} \label{sec:holonomy}
We will begin by reviewing why twisting a circle compactification with an element of a discrete gauge group $\Gamma$ is the analogue of ``turning on a Wilson line'' around the compactification circle. 

Consider first the continuous gauge group $\U(1)$. Unlike the discrete case, a continuous gauge symmetry requires a gauge field $A$ for its implementation. When the fundamental group of spacetime is non-trivial, flat configurations of the gauge field exist which carry non-trivial holonomy around non-trivial loops. Consider such a holonomy around the compactification circle $S$,
\begin{equation}
    1 \neq \gre = \exp i \oint_{S} A \in \U(1) \,. 
\end{equation}
The background value of a field $\phi$ that carries charge $q$ under $\U(1)$ will satisfy 
\begin{equation}
    D_\mu \phi = (\partial_\mu - i q A_\mu ) \phi = 0 \,,
\end{equation}
ensuring that it is the solution with lowest energy to the equations of motion. Hence, by
\begin{equation}
    \partial_\mu \ln \phi = i q A_\mu  \,\Rightarrow \,
    \frac{\phi(1)}{\phi(0)} = \gre^q \,,
\end{equation}
it will carry twisted boundary conditions around the compactification circle $S$. We have here indicated only the circle coordinate $y \in [0,1)$ as the argument of $\phi$. 

A discrete gauge group $\Gamma$ has no gauge field associated to it. The analogue of gauge holonomy however exists; it is visible at the level of matter fields carrying charge $\chi_\phi$ under $\Gamma$ in the form of non-trivial boundary conditions
\begin{equation}
    \phi(1) = \chi_\phi(\gre) \phi(0)  \,, \quad \gre \in \Gamma\,,
\end{equation}
around the compactification direction.

The Kaluza-Klein ansatz in the presence of a gauge holonomy $\gre \in \G$ is modified, irrespective of whether the gauge group $\G$ is continuous or discrete. For a scalar field $\phi$ transforming in the representation $\chi_\phi$ of $\G$, it becomes
\begin{equation} \label{eq:KKTwisted}
    \phi(y) = \chi_\phi(\gre)^y \sum_{n=-\infty}^\infty \phi_n \e[ n y].
\end{equation}
Let the gauge holonomy $\gre$ be an element of the gauge group of finite order $m$. Keeping in mind the goal of this discussion, to study the commutation of the diagram in Figure \ref{fig:twoPaths}, this is clearly the case of interest also in the case of continuous gauge group. As $\chi_\phi$ is a group homomorphism, it necessarily maps 
\begin{equation} \label{eq:chiPhi}
    \chi_\phi : \, \gre \mapsto \e\left[\frac{q_{\gre}(\chi_\phi)}{m}\right] \,,
\end{equation}
for a unique integer $q_{\gre}(\chi_\phi)\in\{0,\ldots,m-1\}$ associated to the character $\chi_\phi$. The Kaluza-Klein charge of the $n^{\rm th}$ mode $\phi_n$ of the field $\phi$, which governs the phase change of $\phi_n$ under translations along the Kaluza-Klein circle, is thus shifted by the twisting from $n$ to $n + q_{\gre}(\chi_\phi)/m$. Assuming that a scalar field of charge $\chi$ exists in the theory for which $\gcd(q_\gre(\chi),m)=1$, we define the Kaluza-Klein charge $q_{\KK}$ such that the $n^{\rm th}$ mode $\phi_n$ of the field $\phi$ has charge
\begin{equation} \label{eq:defQkk}
    q_\KK = mn + q_{\gre}(\chi_\phi) \,.
\end{equation}
We will henceforth refer to $q_\KK$ as the Kaluza-Klein charge, and to $n$ as the mode or Kaluza-Klein number.

The normalization of the Kaluza-Klein charge is set by the choice of parametrization of the Kaluza-Klein circle; rescaling the parameter such that it takes values in $[0,1/m)$ rather than $[0,1)$ implements the normalization underlying this definition of $q_{\KK}$. This corresponds to a choice of normalization $A \mapsto \tfrac{1}{m} A$ of the Kaluza-Klein gauge field $A$ compared to the conventional normalization. To reiterate, in \eqref{eq:defQkk}, $q_\KK$ is the Kaluza-Klein charge of the mode of the $d$-dimensional scalar field $\phi$ with Kaluza-Klein number $n$, $m$ is the order of the element $\gre$ by which the compactification is twisted, and $q_{\gre}(\chi_\phi)$ depends both on $\gre$ and on the charge $\chi_\phi$ of the scalar field $\phi$.

Upon dimensional reduction, we thus have an action of $\U(1) \times \G$ on the theory, with an element $(e[\theta],h) \in \U(1) \times \G$ acting on the scalar field $\phi_n$ as
\begin{equation} \label{eq:trivialAction}
    \phi_n \,\longmapsto \,\chi_\phi(h)\,\e[(q_{\gre}(\chi_\phi)+m n)\,\theta] \,\phi_n \,.
\end{equation}
To determine the gauge group of the reduced theory, we must mod out by group elements $(e[\theta],h)$ for which this action is trivial on all fields in the theory.

Imposing triviality for all of the modes $\phi_n$ of a given higher dimensional scalar field $\phi$ imposes ${\theta = s/m}$ for $s \in \IZ \mod m$. Since $\theta$ is constant, the gauge field is also invariant under such transformations. The remaining constraint is 
\begin{equation} \label{eq:subConstraint}
    \chi_\phi(h) = \left(\e\left[-\frac{q_{\gre}(\chi_\phi)}{m}\right] \right)^s = \chi_\phi(\gre)^{-s} \,.
\end{equation}
This implies that the action \eqref{eq:trivialAction} can be rendered trivial for all $h \in \langle \gre \rangle$, with the kernel of the action generated by the element $(\e[-\tfrac{1}{m}],\gre)$. Now consider an $h\in \G$ that does not lie in the orbit of $\gre$. Assuming that all possible representations of $\G$ arise in the theory, there exists a field $\phi$ for which $\chi_\phi(\gre) \neq 1$, but $\chi_\phi(h) = 1$. For such a field, the constraint \eqref{eq:subConstraint} can only be satisfied for $s=0$. But as by the same assumption, fields exist
for which $\chi_\phi(h)$ is not trivial and $s=0$ hence does not solve \eqref{eq:subConstraint}, we see that the constraint \eqref{eq:trivialAction} cannot be solved for $h \notin \langle \gre \rangle$. We can thus conclude that the $(d-1)$-dimensional gauge group is given by
\begin{align}
    \quotient{\U(1)\times \G}{\left\langle \left( \e\left[-\tfrac{1}{m}\right], \gre \right) \right\rangle}\,.
    \label{eqn:Gdm1prime}
\end{align}

We can understand this result also from a more conceptual perspective. The definition of $\gamma$-twisting a compactification on a circle $S=S^1$ is that going around $S$ becomes equivalent to acting with $\gamma$. Upon twisting, we therefore need to keep track of the number of times we go around the circle. We can accomplish this by replacing $S$ by a cover $S'$. Going around $S$ $m$-times, with $m$ the order of $\gamma$, is the trivial operation; we therefore choose $S'$ as the $m$-fold cover to avoid redundancy. As a result, moving ${\tiny\frac{1}{m}}$-th along $S'$ is equivalent to acting with $\tse$. We have thus arrived at the statement of~\eqref{eqn:Gdm1prime}, without the need of invoking the action of the gauge group on the fields of the theory.

\subsection{The gauge group of a compactification twisted by a discrete gauge transformation} \label{sec:twistedKK}

Let us first continue the discussion of the previous section under the assumption that the group element $\gre$ with regard to which the compactification is twisted is an element of a discrete gauge group $\Gamma$. In terms of Figure \ref{fig:twoPaths}, we are considering the vertical downward arrow on the right.

By Lemma~\ref{lem:quotientgroup}, we can identify the gauge group of the compactified theory,
\begin{align} \label{eq:QuotientGroupLowerRight}
    \quotient{\U(1)\times \Gamma}{\left\langle \left( e\left[-\tfrac{1}{m}\right], \gre \right) \right\rangle}\,,
\end{align}
with a product group:
introducing the character
\begin{align} \label{eq:deftildeChig}
    \widetilde{\chi}_\gre:\,\langle \gre\rangle \rightarrow \U(1)\,,\quad \gre^a\mapsto \e[-\tfrac{a}{m}]
\end{align}
allows us to write $\langle (\e[-\tfrac{1}{m}],\gre) \rangle = \langle (\widetilde{\chi}_\gre(\gre),\gre) \rangle$. 
By Lemma~\ref{lemma:liftingCharacters}, we can find a character $\chi_\gre \in \widehat{\Gamma}$ that lifts $\widetilde{\chi}_\gre$.
Then Lemma~\ref{lem:quotientgroup} applies and we have a non-canonical isomorphism 
\begin{align} \label{eq:isoQuotientProduct}
        u:\quad \quotient{\U(1)\times \Gamma}{\left\langle \left( e\left[-\tfrac{1}{m}\right], \gre \right) \right\rangle} \quad&\longrightarrow \quad \U(1) \times \quotient{\Gamma\!\!}{\!\!\langle \gre \rangle} \\[10pt] 
    [(e[\sigma],h)] \,&\longmapsto \,(e[\sigma] \chi_\gre(h)^{-1} , [h])\,, \nn
\end{align}
with inverse 
\begin{equation} \label{eq:finv}
    u^{-1}:\,(e[\sigma],[h]) \longmapsto [(e[\sigma]\chi_\gre(h), h)] \,,
\end{equation}
which depends on the choice of lift.
We denote the charge lattices that correspond respectively to the left- and right-hand side of~\eqref{eq:isoQuotientProduct} by
\begin{align} \label{eq:5dchargeLattice}
    \begin{split}
        \Lambda=&\text{ker}\left( \e\left[-\tfrac{1}{m}\right], \gre \right)\subset \mathbb{Z}\times \widehat{\Gamma}\,,\quad \Lambda'=\mathbb{Z}\times \widehat{\Gamma/\langle \gre\rangle}\,.
    \end{split}
\end{align}
The pullback isomorphism that is associated to the map~\eqref{eq:isoQuotientProduct} then acts as
\begin{align}
    \begin{split}
        u^* : \Lambda'\rightarrow \Lambda\,,\quad (q,\chi)  \mapsto (q,\chi-q\chi_\gre)\,.
        \end{split}
        \label{eqn:upullback}
\end{align}
Note that the image indeed defines a character of the quotient group \eqref{eq:QuotientGroupLowerRight}, as
\begin{equation}
    (q,\chi-q\chi_\gre)(\e[-\tfrac{1}{m}] ,\gre ) = \e[-\tfrac{q}{m}]\,\chi_\gre(\gre)^{-q}\,\chi(\gre) =1 \,.
\end{equation}
To verify the last equality, recall that $\chi_\gre$ is a lift of $\widetilde{\chi}_\gre$ defined in \eqref{eq:deftildeChig}, and that $\chi \in \widehat{\Gamma/\langle \gre \rangle}$ maps $\gre$ to 1.

We can render this discussion more explicit in the case that $\Gamma$ is a finite Abelian group.
By the fundamental theorem of finite Abelian groups, we can then write
\begin{align}
    \Gamma=\mathbb{Z}_{q_1}\times\ldots\times \mathbb{Z}_{q_p}\,,
    \label{eqn:Zproduct}
\end{align}
such that
\begin{align} \label{eq:twistingG}
    \gre= \left(\e\left[-\frac{n_1}{q_1}\right]\,,\,\ldots\,,\e\left[-\frac{n_p}{q_p}\right] \right)\,,
\end{align}
for some $q_1,\ldots,q_p\in\mathbb{N}$ and $-n_1,\ldots,-n_p\in\mathbb{N}$.
We choose this slightly odd sign convention for the $n_i$ in order to be able to directly identify them with the Kaluza-Klein numbers of the Higgs modes with charge $q_i$ in Section~\ref{sec:higgsing}. Canceling common factors to simplify each fraction to its lowest terms, then expressing each factor in terms of the lowest common denominator of all $p$ factors yields 
\begin{align} \label{eq:twistingGm}
    \gre= \left(\e\left[\frac{k_1}{m}\right]\,,\,\ldots\,,\e\left[\frac{k_p}{m}\right] \right)\,,
\end{align}
where in terms of $r_i=\text{gcd}(n_i,q_i)$, $\mathfrak{n}_i=n_i/r_i$ and $\mathfrak{q}_i=q_i/r_i$ for $i=1,\ldots, p$, 
\begin{align}
	m=\text{lcm}(\mathfrak{q}_1,\ldots,\mathfrak{q}_p)\,,\quad k_i\equiv -m\mathfrak{n}_i/\mathfrak{q}_i\text{ mod }m\,,
\end{align}
and we define $\bk = (k_1, \ldots, k_p)$ for future convenience.
Note that $m$ thus defined computes the order of the group element $\gre$, matching our notation from above.

To obtain an explicit form of the isomorphism \eqref{eq:isoQuotientProduct}, we need to lift the character $\widetilde{\chi}_\gre$ from $\langle \gre \rangle$ to $\Gamma$.
We have $\widehat{\Gamma}=\mathbb{Z}_{q_1}\times\ldots\times \mathbb{Z}_{q_p}$, where the character corresponding to $\bl = (l_1, \ldots, l_p)\in \mathbb{Z}_{q_1}\times\ldots\times\mathbb{Z}_{q_p}$ (in additive notation) acts as\footnote{\label{footnote:chilVSchiphi}We use $\chi$ indexed by $\bl \in \widehat{\Gamma}$ to indicate the character associated to $\bl$, and indexed by $\phi$ to indicate the charge of $\phi$ under $\Gamma$. We trust that this will not give rise to  confusion.}
\begin{align}
    \chi_{\bl}:\,\,\Gamma\rightarrow \U(1)\,,\quad (\xi_1,\ldots,\xi_p)\mapsto \xi_1^{l_1}\ldots\xi_p^{l_p}\,.
\end{align}
The condition $\chi_{\bl}(\gre)=\e[-\tfrac{1}{m}]$ is then equivalent to
\begin{align}
 	1 + \sum\limits_{i=1}^p l_i k_i=1 -\sum\limits_{i=1}^pm l_i\frac{\mathfrak{n}_i}{\mathfrak{q}_i}\equiv 0\text{ mod }m\,.
    \label{eqn:liftingCondition}
 \end{align}
Given such a choice for $\bl$, which we call $\bl_\gre$ (such that $\chi_\gre$ as introduced above can be written as $\chi_{\bl_\gre}$), the explicit form of the map \eqref{eq:isoQuotientProduct} is
\begin{align}
    u:\quad \left[(\xi_0,\ldots,\xi_p) \right] \,\longmapsto\, (\xi_0\xi_1^{-(l_\gre)_1}\cdot\ldots\cdot \xi_p^{-(l_\gre)_p},\left[\xi_1,\ldots,\xi_p\right])\,.
\end{align}
The corresponding action on the charge lattice is given by
\begin{align}
     u^*:\quad \bv \,\longmapsto\, U\bv\,,
     \label{eqn:Upullback}
\end{align}
in terms of the matrix
\begin{align}
	U=\left(\begin{array}{cccccc}
		1&0&0&\ldots&0&0\\
		-(l_g)_1&1&0&\ldots&0&0\\
		\vdots&&&\ddots&&\\
		-(l_g)_p&0&0&\ldots&0&1
	\end{array}\right)\,.
\end{align}

\subsection{Compactifying in the presence of Wilson lines and Higgsing}
\label{sec:higgsing}
We now shift our focus to the upper left box in Figure \ref{fig:twoPaths}: we consider a $d$-dimensional theory with gauge group $\U(1)^{p}$, and assume that it contains massless scalar fields that, after acquiring a non-zero vacuum expectation value, break the gauge group to a finite subgroup $\Gamma\subset \U(1)^p$ (the upper horizontal arrow in Figure \ref{fig:twoPaths}).
Assuming that the number of Higgs fields is $l$, we will consider the corresponding charges as the columns of a $p\times l$ matrix $Q$.

The physics of the Higgs transition only depends on the sublattice of the charge lattice of the unbroken group generated by the charges of the Higgs fields. We are free to consider instead the matrix $QW$ for any $W\in \text{SL}(l,\mathbb{Z})$. This amounts to labeling the Higgs sector by different linear combinations of the Higgs fields.
On the other hand, there is no canonical ordering of the $p$ $\U(1)$ factors in $\U(1)^p$. Changing this identification amounts to replacing the charge matrix by a matrix $VQ$ for some $V\in\text{SL}(p,\mathbb{Z})$.
Taken together, this allows us to assume that $Q$ is in Smith normal form\footnote{Recall that any $p \times l$ matrix $Q$ can be put in the form $VQW=\mathrm{diag}(q_1, \ldots, q_r, 0, \ldots, 0)$, with invertible integer matrices $V$ and $W$ implementing elementary row and column operations. See \cite{bookSmithNormalForm} for an introduction to such matters.}
Since we assume that the unbroken gauge group after the Higgs transition is finite, we can restrict to the case $p=l$, such that
\begin{align}
    Q=\left(\begin{array}{cccc}
        q_1& 0 &\ldots&0\\
         0 &q_2&\ldots&0\\
           &   &\ddots& \\
         0 & 0 &\ldots&q_p
    \end{array}
    \right)\,.
\end{align}
We will call $\phi_i$ the scalar field of charge $(0,\ldots,q_i,\ldots, 0)$. For future reference, we also introduce the vector $\bq = (q_1, \ldots, q_p)$.

Considering the $d$-dimensional spacetime to contain a circle factor permits us to consider a background for the gauge group of the $d$-dimensional theory $\widetilde{\cT}_d$ with holonomy around this circle. To induce the twisting \eqref{eq:twistingG} upon Higgsing (the upper horizontal arrow of Figure \ref{fig:twoPaths}), we must choose
\begin{align} \label{eq:defgU1}
    \gre= \left(\e\left[-\frac{n_1}{q_1}\right]\,,\,\ldots\,,\e\left[-\frac{n_p}{q_p}\right] \right) \in \U(1)^p\,.
\end{align}
Following the left vertical arrow in Figure \ref{fig:twoPaths}, we now wish to first
compactify on this circle (i.e. describe the theory from a $(d-1)$-dimensional vantage point), before considering the Higgsing.
From \eqref{eq:KKTwisted}, we can read off the mass of the $n^{\mathrm{th}}$ Kaluza-Klein mode of the field $\phi_i$; it is, up to normalization,
\begin{equation}
    m_n \propto \left\vert -\frac{n_i}{q_i} q_i + n \right\vert \,.
\end{equation}
As promised below equation \eqref{eq:twistingG}, the $n_i^{\mathrm{th}}$ mode of $\phi_i$, $n=n_i$, is massless, and can thus acquire a vacuum expectation value.

Following the discussion in Section \ref{sec:holonomy}, the gauge group of the $(d-1)$-dimensional theory $\widetilde{\cT}_{d-1}$ can be written as
\begin{align} \label{eq:TwiddleQuotientGaugeGroup}
    \quotient{\U(1)\times\U(1)^p}{\langle \,(\e[-\tfrac{1}{m}],\gre)\,\rangle}\,.
\end{align}
By Lemma \ref{lem:quotientgroupU1p}, this is isomorphic to the group $\U(1)\times\U(1)^p$, with the isomorphism realized by the maps $s$ and $s^{-1}$.
We denote the charge lattices that correspond respectively to~\eqref{eq:TwiddleQuotientGaugeGroup} and $\U(1)\times\U(1)^p$ by
\begin{align}
    \widetilde{\Lambda}=\{\,(v_0,\ldots,v_p)\in\mathbb{Z}^{p+1}\,\vert\,(1,k_1,\ldots,k_p)\cdot \bv \equiv 0\text{ mod }m\,\}\,,\quad \widetilde{\Lambda}'=\mathbb{Z}\times\mathbb{Z}^p\,.
\end{align}
The constraint on $\bv \in \IZ^{p+1}$ in the definition of $\widetilde{\Lambda}$ ensures that $\langle(\e[-\tfrac{1}{m}],\gre)\rangle$ is mapped to the identity by the corresponding character.
The isomorphism between the charge lattices induced by $s$ takes the form
\begin{align}
\begin{split}
        s^*:\, \widetilde{\Lambda}'&\,\longrightarrow\,  \widetilde{\Lambda}\\
    \bv \,&\longmapsto\, S\,\bv\,,
    \end{split}
        \label{eqn:Spullback}
\end{align}
where the matrix $S$ is defined as
\begin{align}
    S=\left(\begin{array}{cccccc}
    m&k_1&k_2&\ldots&k_{p-1}&k_p\\
    0&1&0&\ldots&0&0\\
     & & &\ddots& & \\
    0&0&0&\ldots&0&1
    \end{array}\right)\,.
\end{align}
Concretely, $s^*:(n,\bl) \mapsto (mn + \bl \cdot \bk, \bl)$.

Calculating $\chi_{\bl}(\gre)=\e[\tfrac{\bl \cdot \bk}{m}]$ and comparing to \eqref{eq:chiPhi} (recall that $\chi_\phi$ in that equation refers to the charge carried by a field $\phi$ -- here, we have explicitly labeled the possible charges via $\bl$, see footnote \ref{footnote:chilVSchiphi}) allows us to equate $q_{\gre}(\chi_{\bl}) = \bl \cdot \bk$, and thus to identify the charge with regard to the first factor of the gauge group in the form $\U(1) \times \U(1)^p$ with the Kaluza-Klein number, and to identify the first component of the image charge with $q_{\KK}$ as defined in \eqref{eq:defQkk}. The latter determines the mass of the Kaluza-Klein modes in $\widetilde{\cT}_{d-1}$. We defined the integers $n_i$ in equation \eqref{eq:twistingG} such that the corresponding modes of the $d$-dimensional fields are massless and can serve as Higgs fields. As a consequence, the $(d-1)$-dimensional Higgs fields have vanishing $q_\KK$ charge,
\begin{align} \label{eq:chargeMapS}
    S\,\left(\begin{array}{cccc}
        n_1&n_2&\ldots&n_p\\
        q_1& 0 &\ldots&0\\
         0 &q_2&\ldots&0\\
           &   &\ddots& \\
         0 & 0 &\ldots&q_p
    \end{array}\right)=\left(\begin{array}{cccc}
         0&0&\ldots&0\\
        q_1& 0 &\ldots&0\\
         0 &q_2&\ldots&0\\
           &   &\ddots& \\
         0 & 0 &\ldots&q_p
    \end{array}\right)\,.
\end{align}

To determine the gauge group upon Higgsing (the lower horizontal arrow of Figure \ref{fig:twoPaths}), we need to compute the kernel of the character under which the Higgs fields transform. The form of the charge matrix $\bQ$ on the RHS of equation \eqref{eq:chargeMapS} yields this kernel immediately, as a subgroup of \eqref{eq:TwiddleQuotientGaugeGroup}. It is
\begin{align}
    \ker(\chi_{\bQ}) \,\,= \,\,\quotient{\U(1)\times \Gamma}{\langle \,(\e[-\tfrac{1}{m}],\gre)\,\rangle} \,\lhook\joinrel\xlongrightarrow{\iota} \,\quotient{\U(1)\times\U(1)^p}{\langle \,(\e[-\tfrac{1}{m}],\gre)\,\rangle}\,,
\end{align}
with
\begin{align}
    \Gamma= \ker(\chi_{\bq}) =\mathbb{Z}_{q_1}\times\ldots\times \mathbb{Z}_{q_p}\,,
\end{align}
and where we retain the name $\gre$ for the element \eqref{eq:defgU1} also when considered as an element of $\Gamma$. As discussed in Section~\ref{sec:twistedKK}, the quotient constituting $\ker(\chi_{\bQ})$ can be identified with
\begin{align} \label{eq:productDm1}
    \U(1)\times \quotient{\Gamma\!\!}{\!\!\langle\, \gre\,\rangle}\,,
\end{align}
with corresponding charge lattice $\Lambda'=\mathbb{Z}\times\widehat{\Gamma/\langle \gre\rangle}$.

We now have all the ingredients in place to work out the image of a charge of the gauge group of $\widetilde{\cT}_{d-1}$ in the presentation $\U(1) \times \U(1)^p$ in the character group of the gauge group of $\cT_{d-1}$ in the presentation \eqref{eq:productDm1}.
Combining these elements, we see that the map between groups is given by
\begin{equation}
    s|_{\ker{\chi_{\bQ}}} \circ \iota \circ u^{-1} :\quad \U(1)\times \quotient{\Gamma\!\!}{\!\!\langle\, \gre\,\rangle}\,\,\longrightarrow \,\,\U(1) \times \U(1)^p  \,,
\end{equation}
which in turn induces the map
\begin{align} \label{eq:chargeMapu1upu1quotientGamma}
    \begin{split}
    (u^{-1})^* \circ \iota^* \circ s^* : \quad\widetilde{\Lambda}' \,\, &\longrightarrow \,\, \Lambda' \\
    \bv \,\, & \longmapsto \,\, U^{-1}S \,\bv \,.
    \end{split}
\end{align}
Note that to identify the charges properly, we need to take into account how $\Gamma/\langle \gre \rangle$ is embedded in $\Gamma$ as a subgroup, as the map \eqref{eq:chargeMapu1upu1quotientGamma} yields these charges as embedded in $\widehat{\Gamma}$.

\subsection{General groups}
As discussed at the end of Section~\ref{sec:holonomy}, one way to phrase the result~\eqref{eqn:Gdm1prime} is that in order to obtain the gauge group of the twisted compactification, one first has to go to the $m$-fold covering space of the circle over which the boundary conditions on the fields become trivial.
Then, a $1/m$-shift along that covering circle can be identified with the gauge transformation associated to the twist $\gre$.
This allows us to generalize the result for gauge groups that are not finite Abelian.

Consider as gauge group a topological group ${\rm G}$ and denote the connected component of the identity by ${\rm G}^0$.
If ${\rm G}$ is locally path connected, then the zeroth fundamental group is ${\rm G}/{\rm G}^0\simeq \pi_0({\rm G})$, such that we have a map $p:\,{\rm G}\rightarrow \pi_0({\rm G})$.

After compactifying on a circle with twist $\gre\in {\rm G}$ of finite order $m$, the gauge group is
\begin{align}
    {\rm G}_{d-1}=\quotient{\U(1)\times C_{\rm G}(\gre)}{\left\langle \left({\rm e}\left[-\tfrac{1}{m}\right],\gre\right)\right\rangle}\,,
\end{align}
where $C_{\rm G}(\gre)$ is the centralizer of $\gre$ in ${\rm G}$.
The group ${\rm G}$ is broken to the centralizer because elements $h\in {\rm G}$ that do not commute with $g$ change the boundary conditions and are therefore not part of the gauge group of the twisted compactification.
The zeroth fundamental group of ${\rm G}_{d-1}$ is
\begin{align}
    \pi_0({\rm G}_{d-1})=\pi_0\left(\quotient{C_{\rm G}(\gre)}{\langle \gre\rangle}\right)=\quotient{\pi_0(C_{\rm G}(\gre))}{\langle p(\gre)\rangle}\,.
\end{align}
The first equality follows because taking the quotient of ${\rm G}_{d-1}$ by $\U(1)$ does not change $\pi_0$.

Note that here we have only considered twists by inner automorphisms of ${\rm G}$.
Twists by outer automorphisms have been considered for example in~\cite{Anderson:2023wkr,Anderson:2023tfy,Ahmed:2024wve} but are beyond the scope of our discussion.

\paragraph{Example}
Consider
\begin{align}
    {\rm G}=\frac{{\rm SU}(2)\times\mathbb{Z}_4}{\mathbb{Z}_2}\,,
\end{align}
where the $\mathbb{Z}_2$ is generated by $(-1,-1)$.
Then $\pi_0({\rm G})=\mathbb{Z}_2$ with the two components respectively represented by $(1,1)$ and $(1,i)$.
Considering the twist $\gre=(1,i)$ of order $4$, the centralizer is $C_{\rm G}(\gre)={\rm G}$ and
\begin{align}
    {\rm G}_{d-1}=\quotient{\U(1)\times {\rm G}}{\langle\left({\rm e}\left[-\tfrac{1}{m}\right],\gre\right)\rangle}=\quotient{\U(1)\times{\rm SU}(2)}{\langle(-1,-1)\rangle}\,,
\end{align}
with $\pi_0({\rm G}_{d-1})=0$.

\subsection{Example $\Gamma=\IZ_4$}
\label{sec:fieldTheoryExampleZ4}
To illustrate the discussion from Section~\ref{sec:twistedKK} at the hand of a concrete example, we will now consider twisted compactifications of gravity coupled to a $\Gamma=\IZ_4$ gauge theory.
In line with the conventions announced above, we will represent the gauge group $\IZ_4$ multiplicatively in terms of roots of unity $\e\left[\tfrac{k}{4}\right]$, $k=0,\ldots,3$, while we use additive notation for the dual character group $\widehat{\Gamma}=\IZ_4\simeq \IZ/4\IZ$.
We denote by $N_k$ the number of charged fields in the $\dd$-dimensional theory $\cT_d$ of charge $k\,\mathrm{mod}\, 4$ under $\IZ_4$.
The twist is again denoted by $\gre\in\Gamma$.

\begin{table}[ht!]
        \centering
        \subfloat[$\gre=1\,,\quad \Lambda=\mathbb{Z}\times \mathbb{Z}_4$\phantom{$\begin{array}{c}x\\[-.1em]x\end{array}$}\label{tab:exZ4Ns0}]{
        \begin{tabular}{|ccccccccc|}\hline
        \multicolumn{3}{|c|}{\diagbox{$q_{\IZ_4}$}{$q_{\KK}$}} & 0 & 1 & 2 & 3 & 4 & $\cdots$ \\ \hline
        \multicolumn{3}{|c|}{0} & $N_0$ & $N_0$ & $N_0$ & $N_0$ & $N_0$ & $\cdots$ \\
        \multicolumn{3}{|c|}{1} & $N_1$ & $N_1$ & $N_1$ & $N_1$ & $N_1$ & $\cdots$ \\
        \multicolumn{3}{|c|}{2} & $N_2$ & $N_2$ & $N_2$ & $N_2$ & $N_2$ & $\cdots$ \\
        \multicolumn{3}{|c|}{3} & $N_3$ & $N_3$ & $N_3$ & $N_3$ & $N_3$ & $\cdots$ \\\hline
        \end{tabular}}\hspace{.2cm}
        \subfloat[$\gre={\e\left[\tfrac{1}{4}\right]}\,,\quad \Lambda=\text{ker}\left({\e\left[-\tfrac{1}{4}\right]},{\e\left[\tfrac{1}{4}\right]}\right)\subset \mathbb{Z}\times \IZ_4$\phantom{$\begin{array}{c}x\\[-.1em]x\end{array}$}\label{tab:exZ4Ns1}]{
        \begin{tabular}{|ccccccccc|}\hline
        \multicolumn{3}{|c|}{\diagbox{$q_{\IZ_4}$}{$q_{\KK}$}} & 0 & 1 & 2 & 3 & 4 & $\cdots$ \\ \hline
        \multicolumn{3}{|c|}{0} & $N_0$ &  0  &  0  &  0  & $N_0$ & $\cdots$ \\
        \multicolumn{3}{|c|}{1} &  0  & $N_1$ &  0  &  0  &  0  & $\cdots$ \\
        \multicolumn{3}{|c|}{2} &  0  &  0  & $N_2$ &  0  &  0  & $\cdots$ \\
        \multicolumn{3}{|c|}{3} &  0  &  0  &  0  & $N_3$ &  0  & $\cdots$ \\\hline
        \end{tabular}}\\[.5cm]
        \subfloat[$\gre={\e\left[\tfrac{2}{4}\right]}\,,\quad \Lambda=\text{ker}\left({\e\left[\tfrac{2}{4}\right]},{\e\left[\tfrac{2}{4}\right]}\right)\subset \mathbb{Z}\times\IZ_4$\phantom{$\begin{array}{c}x\\[-.1em]x\end{array}$}\label{tab:exZ4Ns2}]{
        \begin{tabular}{|ccccccccc|}\hline
        \multicolumn{3}{|c|}{\diagbox{$q_{\IZ_4}$}{$q_{\KK}$}} & 0 & 1 & 2 & 3 & 4 & $\cdots$ \\ \hline
        \multicolumn{3}{|c|}{0} & $N_0$ &  0  & $N_0$  &  0  & $N_0$ & $\cdots$ \\
        \multicolumn{3}{|c|}{1} &  0  & $N_1$ &  0  &  $N_1$  &  0  & $\cdots$ \\
        \multicolumn{3}{|c|}{2} & $N_2$ &  0  & $N_2$ &  0  &  $N_2$  & $\cdots$ \\
        \multicolumn{3}{|c|}{3} &  0  &  $N_3$  &  0  & $N_3$ &  0  & $\cdots$ \\\hline
        \end{tabular}}\hspace{.2cm}
        \subfloat[$\gre={\e\left[\tfrac{3}{4}\right]}\,,\quad \Lambda=\text{ker}\left({\e\left[-\tfrac{1}{4}\right]},{\e\left[-\tfrac{1}{4}\right]}\right)\subset \mathbb{Z}\times \IZ_4$\phantom{$\begin{array}{c}x\\[-.1em]x\end{array}$}\label{tab:exZ4Ns3}]{
        \begin{tabular}{|ccccccccc|}\hline
        \multicolumn{3}{|c|}{\diagbox{$q_{\IZ_4}$}{$q_{\KK}$}} & 0 & 1 & 2 & 3 & 4 & $\cdots$ \\ \hline
        \multicolumn{3}{|c|}{0} & $N_0$ &  0  &  0  &  0  & $N_0$ & $\cdots$ \\
        \multicolumn{3}{|c|}{1} &  0  & 0 &  0  &  $N_1$  &  0  & $\cdots$ \\
        \multicolumn{3}{|c|}{2} &  0  &  0  & $N_2$ &  0  &  0  & $\cdots$ \\
        \multicolumn{3}{|c|}{3} &  0  &  $N_3$  &  0  & 0 &  0  & $\cdots$ \\\hline
        \end{tabular}}
        \caption{The multiplicities of states with a given charge $(q_{\IZ_4},q_{\KK})\in\Lambda$ for the different $\gre$-twisted compactifications of a $d$-dimensional theory with gauge group $\Gamma=\IZ_4$.}
        \label{tab:exZ4Ns}
\end{table}

\begin{itemize}
    \item \textbf{No twist, $\gre=1$:} This is the trivial case. The order of $\gre$ is $m=1$ and the gauge group of the compactified theory is $\U(1)\times \IZ_4$ with charge lattice $\Lambda'=\Lambda=\mathbb{Z}\times\IZ_4$.
    The multiplicities of states with a given charge in $\Lambda$ are listed in Table~\ref{tab:exZ4Ns0}.
    
    \item \textbf{Twist $\gre=\e\left[\pm\tfrac{1}{4}\right]$:} Let us now twist by an element $\gre \in\IZ_4$ that generates the group.
    The order of $\gre$ is $m=4$ and the gauge group of the compactified theory takes the form
    \begin{align}
        \quotient{\U(1)\times\IZ_4}{(\e\left[-\tfrac{1}{4}\right],\e\left[\pm\tfrac{1}{4}\right])}\simeq \U(1)\,,
    \end{align}
    while the corresponding isomorphic expressions for the charge lattice are
    \begin{align}
        \Lambda=\text{ker}(\e\left[-\tfrac{1}{4}\right],\e\left[\pm\tfrac{1}{4}\right])\subset\IZ\times\IZ_4\,,\quad \Lambda'=\IZ\,.
    \end{align}
    The multiplicities of states with a given charge in $\Lambda$ are listed in Tables~\ref{tab:exZ4Ns1} and~\ref{tab:exZ4Ns3}.
    In terms of the notation introduced below~\eqref{eqn:Zproduct}, we have
    \begin{align}
        n=\mathfrak{n}=\mp1\,,\quad q=\mathfrak{q}=4\,,\quad r=1\,,\quad k=\pm 1\,,
    \end{align}
    where we omit the subscript since the number of factors in~\eqref{eqn:Zproduct} is $p=1$.

    In order to find the possible lifts $\chi_{l}$ of the character $\tilde{\chi}_g$ in~\eqref{eq:deftildeChig}, we solve~\eqref{eqn:liftingCondition}, which here takes the form
    \begin{align}
        1\pm l\equiv 0\text{ mod }4\,.
    \end{align}
    We therefore have a unique lift $\chi_l$ with $l=\mp 1$ (note that the sign here is still correlated with the one in the choice of twist $\gre=\e\left[\pm\tfrac{1}{4}\right]$).
    The corresponding pullback map~\eqref{eqn:upullback} acts as
    \begin{align}
        u^*:\,\Lambda'\rightarrow \Lambda\,,\quad q_{\KK}\mapsto (q_{\KK},\,\pm q_{\KK})\,.
    \end{align}
    Using Table~\ref{tab:exZ4Ns1} and~\ref{tab:exZ4Ns3}, we obtain the multiplicities of charged states in the basis $\Lambda'$:
    \begin{align}
        \text{        \begin{tabular}{|c|cccccc|}\hline
        {$q_{\KK}$} & 0 & 1 & 2 & 3 & 4 & $\cdots$ \\ \hline
          & $N_0$ &  $N_{\pm 1}$ & $N_2$ & $N_{\mp 1}$ & $N_0$ & $\cdots$ \\\hline
        \end{tabular}}
    \end{align}
    
    \item \textbf{Twist $\gre=-1$:} This is the last case to consider. The order of $\gre$ is now $m=2$.
     The gauge group of the compactified theory is
     \begin{align}
        \quotient{\U(1)\times\IZ_4}{(-1,-1)}\simeq \U(1)\times\IZ_2\,,
     \end{align}
     and the charge lattice takes the corresponding isomorphic forms
    \begin{align}
        \Lambda=\text{ker}(-1,-1)\subset\IZ\times\IZ_4\,,\quad \Lambda'=\IZ\times \IZ_2\,.
    \end{align}
    The multiplicities of states with a given charge in $\Lambda$ are listed in Table~\ref{tab:exZ4Ns2}.

    Using again the notation introduced below~\eqref{eqn:Zproduct}, we have
    \begin{align}
        n=-2\,,\quad q=4\,,\quad r=2\,,\quad \mathfrak{n}=-1\,,\quad \mathfrak{q}=2\,,\quad k=1\,.
    \end{align}
    The different lifts of the character~\eqref{eq:deftildeChig} are given by $\chi_l$ with $l$ solving the equation
    \begin{align}
        1+ l\equiv 0\text{ mod }2\,.
    \end{align}
    We therefore find two inequivalent lifts correponding to $l\equiv \pm 1\text{ mod }4$.
    
    The corresponding pullback maps~\eqref{eqn:upullback} act as
    \begin{align}
        u^*:\,\Lambda'\rightarrow \Lambda\,,\quad (q_{\KK},q_{\IZ_2})\mapsto (q_{\KK},\,2q_{\IZ_2}\mp q_{\KK})\,.
    \end{align}
    Using Table~\ref{tab:exZ4Ns2}, we obtain the multiplicities of charged states in the basis $\Lambda'$:
    \begin{align}
        \text{\begin{tabular}{|c|cccccc|}\hline
        \diagbox{$q_{\IZ_2}$}{$q_{\KK}$} & 0 & 1 & 2 & 3 & 4 & $\cdots$ \\ \hline
            0 & $N_0$ &  $N_{\mp 1}$ & $N_2$ & $N_{\pm 1}$ & $N_0$ & $\cdots$ \\
            1 & $N_2$ &  $N_{\pm 1}$ & $N_0$ & $N_{\mp 1}$ & $N_2$ & $\cdots$ \\\hline
        \end{tabular}}
    \end{align}
\end{itemize}

\section{Almost generic genus one fibered Calabi-Yau threefolds}
\label{sec:almostgenericfibrations}
Before applying the discussion of the previous section in the context of F-theory and M-theory, we will introduce a class of geometries that we refer to as {\it almost generic genus one fibered/elliptic Calabi-Yau threefolds}.
We discuss some of their properties and then summarize our physical results, which will be derived in Sections~\ref{sec:discreteHolonomiesAndGenusOne} and~\ref{sec:modularityTopString}, in terms of a set of precise mathematical conjectures.

\subsection{Elliptic curves and the Tate-Shafarevich group} \label{ss:TSgroup}
It is instructive to briefly recall the situation for genus one curves before moving on to fibrations.
This material is nicely reviewed in~\cite{aglecture}.

An \textit{elliptic curve} is a projective curve $E$ of genus one over a field $k$ together with a $k$-rational point $p\in E$.
There exists a unique isomorphism $E\simeq \mathbb{C}/(\mathbb{Z}+\mathbb{Z}\tau)$, under which $p$ maps to $0\in\mathbb{C}$ and $\tau$ lies in some fixed fundamental domain inside $\overline{\mathbb{H}}\cup\{\infty\}$.
The addition of points in $\mathbb{C}$ equips $E$ with the structure of an Abelian group.

On the other hand, given any smooth projective curve $C$ of genus one over a field $k$, the group of line bundles of degree zero can be geometrically represented by the so-called Jacobian $J(C)=\text{Pic}(C)^0$, which is again a curve of genus one.
While $C$ does not necessarily have any $k$-rational points, the Jacobian $J(C)$ always has at least one $k$-rational point, corresponding to the trivial line bundle $\mathcal{O}_C\in\text{Pic}(C)^0$. It thus has the structure of an elliptic curve, with marked point $\mathcal{O}_C$. The curve $C$ itself can be identified with the space of line bundles of degree one. The tensor product of line bundles induces an action
\begin{align}
    \sigma:\,J(C)\times C\rightarrow C\,,
\end{align}
that is free and transitive.
This turns $C$ into a principal homogeneous space $(C,\sigma)$ for $J(C)$, i.e. a $J(C)$-torsor.
Two $J$-torsors can be considered equivalent if there is a $k$-isomorphism between them that is compatible with the $J$-action.
In particular, if $C$ itself has a $k$-rational point, then it is equivalent to $J=J(C)$ as a $J$-torsor.

The set of inequivalent $J$-torsors $(C,\sigma)$ can itself be equipped with an Abelian group law.\footnote{We follow here the description from~\cite[Appendix A]{Cvetic:2015moa}.}
The resulting group is called the \textit{Weil-Ch{\^a}telet group} $\text{WC}_k(J)$.
Given two $J$-torsors $(C,\sigma)$ and $(C',\sigma')$, the product is the curve
\begin{align}
    C''=\quotient{C\times_k C'}{\{\,\,(x,x')\sim (y,y')\,\,\Leftrightarrow\,\, \exists g\in J\,\text{ s.t. }(x,x')=\left(\sigma(g,y),\sigma'(-g,y')\right)\,\,\}}\,,
\end{align}
together with the action
\begin{align}
    \sigma'':\,J\times C''\rightarrow C''\,,\quad (g,(x,x'))\mapsto (\sigma(g,x),x')\,.
\end{align}
It is easy to see that the group law is commutative.

Let us denote by $(J,\sigma_J)$ the Jacobian $J$ together with the natural action
\begin{align}
    \sigma_J:J\times J\rightarrow J\,,\quad (g,h)\mapsto g+h\,.
\end{align}
We then have $(C',\sigma')=(J,\sigma_J)\cdot (C,\sigma)$, with the curve $C'$ being
\begin{align}
    C'=\quotient{J\times_k C}{\{\,\,(h,x)\sim (h',x')\,\,\Leftrightarrow\,\, \exists g\in J\,\text{ s.t. }(h,x)=(g+h',\sigma(-g,x')\,\,\,\}}\,.
\end{align}
Since $(h,x)\sim\left(0,\sigma(h,x)\right)$, every point of $C'$ can be uniquely represented by $(0,x)$ for some $x\in C$, such that $C'\simeq C$.
One can also check that $\sigma'$ acts as
\begin{align}
    \sigma':\,J\times C'\rightarrow C'\,,\quad \left(g,(0,x)\right)\mapsto (g,x)\sim \left(0,\sigma(g,x)\right)\,,
\end{align}
and therefore $(C',\sigma')\simeq (C,\sigma)$.
Thus $(J,\sigma_J)$ represents the identity element of the group.

On the other hand, if we let $(C',\sigma')=(C,\sigma^{-1})\cdot (C,\sigma)$, we can choose any $x_0\in C$ and find that every point of $C'$ has a unique representative $\left(x_0,\sigma(g,x_0)\right)\in C\times_kC$ for some $g\in J$ and therefore $C'\simeq J$.
Calculating
\begin{align}
    \sigma':\,J\times C'\,,\quad\left(h,(x_0,\sigma(g,x_0)\right)\mapsto \left(\sigma(h,x_0),\sigma(g,x_0)\right)\,\sim\left(x_0,\sigma(g+h,x_0)\right)\,,
\end{align}
we can also see that $(C',\sigma')\simeq(J,\sigma_J)$ and therefore $(C,\sigma^{-1})$ represents the inverse of $(C,\sigma)$.

The \textit{Tate-Shafarevich group} $\Sh_k(J)\subseteq \text{WC}_k(J)$ consists of the $J$-torsors that have a rational point ``locally'', that is over every completion $k_p$ of $k$.
While we do not have a good intuition for this subgroup in the case of curves, there is a geometric interpretation after applying the construction to fibrations.

A genus one fibration $\pi:X\rightarrow B$ can be interpreted as a genus one curve over the function field of $B$. The previous constructions can then be carried out relatively over $B$.
This sentence of course hides a significant (!) amount of technical complexity and for details, we refer the reader to~\cite{dg92} as well as~\cite{caldararuThesis} and~\cite{Caldararu2002}.
In the following, we will refer to a genus one fibration $\pi:J\rightarrow B$ as an \textit{elliptic fibration} if it has a section, that is a subvariety $D\subset J$ such that $\pi\vert_D: D\rightarrow B$ is a birational morphism.
The Tate-Shafarevich group $\Sh_B(J)$ then consists of genus one fibrations that have relative Jacobian fibrations isomorphic to $J$ and that locally have a section.

\subsection{Almost generic genus one fibered Calabi-Yau threefolds}
\label{sec:almostgeneric}

Consider a projective threefold $X^0$ that exhibits an elliptic fibration $\pi:X^0\rightarrow B$ with a section over a surface $B$.
We will assume that $X^0$ is in Weierstra{\ss} form\footnote{By~\cite[Proposition 2.4]{dg92}, every elliptic threefold is birationally equivalent to a Weierstra{\ss} model.}
\begin{align}
    \{\,y^2=x^3+fxz^4+gz^6\,\}\subset \mathbb{P}_{231}(\mathcal{L}^2\oplus\mathcal{L}^3\oplus\mathcal{O}_B)\,,
    \label{eqn:weierstrass}
\end{align}
for some line bundle $\mathcal{L}$ and sections $f\in\Gamma(B,\mathcal{L}^4)$ and $g\in\Gamma(B,\mathcal{L}^6)$.
The threefold $X^0$ is Calabi-Yau if  $\mathcal{L}$ is the anti-canonical line bundle $\omega_B^{-1}$ on $B$. The \textit{discriminant locus} $\Delta$, over which the fiber develops singularities, is given by the zero locus of the discriminant polynomial $d$,
\begin{align} \label{eq:Delta}
    d = 4 f^3 + 27g^2\,,\quad \Delta=\{\,d=0\,\}\subset B\,.
\end{align}

\begin{table}[ht!]
    \centering
    \begin{tabular}{|c|c|c|}\hline
    Type&$\text{ord}(f,g,d)$&Description\\\hline
    $I_0$&$(0,0,0)$&Generic point of $B$\\\hline
    $I_1$&$(0,0,1)$&Smooth point of $\Delta$\\\hline
    $I_2$&$(0,0,2)$&Node of $\Delta$\\\hline
    $II$&$(\ge1,1,2)$& Cusp of $\Delta$\\\hline
    \end{tabular}
    \caption{The Kodaira types of fibers that appear in almost generic elliptic Calabi-Yau threefolds.}
    \label{tab:Kodaira}
\end{table}

The singular fibers of the fibration can be classified according to the vanishing orders of $f$, $g$ and $d$.
We will only be interested in fibrations with fibers of Kodaira type $I_0$, $I_1$, $I_2$ and $II$, with the corresponding vanishing orders listed in Table~\ref{tab:Kodaira}.

\begin{definition}
We will call a projective Calabi-Yau threefold that is a Weierstra{\ss} elliptic fibration $\pi:X^0\rightarrow B$ over a projective surface $B$ an \textit{almost generic} elliptic Calabi-Yau threefold if the following properties are satisfied:
\begin{enumerate}[label=AG{\arabic*}]
    \item \label{condge:smoothqfac} The base $B$ is smooth and $X^0$ is $\mathbb{Q}$-factorial and has at most isolated nodes as singularities. 
    \item \label{condge:disc} The discriminant locus $\Delta$
    \begin{enumerate}
        \item \label{condge:redirr} is a reduced and irreducible curve
        \item \label{condge:discsing} and has at most isolated cusps and nodes as singularities.
    \end{enumerate}
    \item \label{condge:TSWC} The Tate-Shafarevich group and the Weil-Ch{\^a}telet group of the fibration coincide,
    \begin{align}
        \Sh_B(X^0)=\text{\normalfont WC}_B(X^0)\,.
    \end{align}
\end{enumerate}

More generally, we will call a projective genus one fibered Calabi-Yau threefold an \textit{almost generic} genus one fibered Calabi-Yau threefold if the Weierstra{\ss} model of the corresponding relative Jacobian fibration is an almost generic elliptic Calabi-Yau threefold.
We will denote the genus one fibration that corresponds to an element $\tse\in\Sh_B(X^0)$ by $X^{\tse}$.
If an almost generic elliptic (genus one fibered) Calabi-Yau threefold is smooth, we refer to it as a \textit{generic} elliptic (genus one fibered) Calabi-Yau threefold.
\label{def:almostgeneric}
\end{definition}

Given an almost generic elliptic Calabi-Yau threefold $X^0$, we denote the set of nodes of $\Delta$ by $S_{\Delta}\subset\Delta$, the number of nodes by $n_{I_2}=\# S_{\Delta}$ and the number of cusps of $\Delta$ by $n_{II}$.
The set of nodes in $X^\tse$, for $\tse\in\Sh(X^0)$, will be denoted by $S^\tse\subset X^\tse$.
In the following, we collect a series of remarks with regard to Definition~\ref{def:almostgeneric} and its implications.
\begin{enumerate}
    \item The Kodaira types of the fibers in the Weierstra{\ss} fibration $X^0$ are as described in Table~\ref{tab:Kodaira}.
    Moreover, $X^0$ has an isolated nodal singularity in every $I_2$ fiber and is smooth everywhere else.
    In particular, $\# S^0=\# S_{\Delta}=n_{I_2}$.

    This follows from an explicit analysis of the behavior of the Weierstra{\ss} fibration over nodes and cusps of the discriminant that we relegate to Appendix~\ref{sec:singularFibers}.

    \item
    The number of cusps is
    \begin{align} \label{eq:numberCusps}
        n_{II}=24 K_B \cdot K_B\,,
    \end{align}
    where $K_B$ is the canonical divisor of $B$. 
    
    Recall that the discriminant locus has a cusp at points where $g$ itself has a simple zero and $f$ also vanishes.
     The expression~\eqref{eq:numberCusps} then follows from the fact that $f$ and $g$ are respectively sections of $\mathcal{L}^4$ and $\mathcal{L}^6$ with $\mathcal{L}$ the anti-canonical bundle on $B$.

    \item
    The relative Jacobian of a generic genus one fibered Calabi-Yau threefold is only generic if the fibration itself already has a section, in which case $X$ and $J(X)$ represent the same element of the Tate-Shafarevich group and are birationally equivalent.
    
    To see this, first note that if $X$ lacks a section, it necessarily exhibits at least one $I_2$ fiber.
    The reason is that the unimodularity of the intersection pairing requires the existence of a fiber component that is intersected only once by the multisection.
    Since $I_2$ fibers correspond to nodes of the Weierstra{\ss} fibration, the latter can only be smooth if $X$ itself already has a section.

    \item \label{comment:nonKaehlerResolution}
    If $n_{I_2}>0$, there exist $2^{n_{I_2}}$ analytic small resolutions $\rho:\widehat{X}^0\rightarrow X^0$, none of which are K\"ahler.
    Given any such resolution, the homology class of the exceptional curve $[\rho^{-1}(p)]\in H_2(\widehat{X}^0,\mathbb{Z})$ over a node $p\in S^0$ will be torsion or trivial.

    As we will explain in Appendix~\ref{sec:nodalCY3}, this follows from the assumption~\ref{condge:smoothqfac} that $X^0$ is $\IQ$-factorial and has at most isolated nodes as singularities.
    
    \item The elements $\tse$ and $\tse^{-1}$ of $\Sh_B(X^0)$ are represented by the same geometry, i.e. $X^\gamma\simeq X^{\gamma^{-1}}$.

    This is analogous to the situation for genus one curves discussed in Section~\ref{ss:TSgroup} and follows from~\cite[Remark 4.5.3]{caldararuThesis}.

    \item The Mordell-Weil group of $X^0$ is trivial.
    
    This follows from the assumption of $\IQ$-factoriality of the Weierstra{\ss} fibration, as imposed in \ref{condge:smoothqfac}, together with~\cite[Corollary 2.7]{Kloosterman}.
    If we also assume \ref{condge:discsing}, this can be argued for rather simply: The Weierstra{\ss} model of an elliptic fibration blows down all fibral curves which do not meet the zero section, leaving the generic fiber as the only fibral curve class. Hence, $b_2(X^0) = b_2(B) + 1$. The $\IQ$-factoriality of $X^0$, given the assumption that it only exhibits isolated nodes, implies $b_2(X^0) = b_4(X^0)$, see Appendix \ref{sec:nodalCY3}. Hence, the homology classes of any two sections on $X^0$ must differ by a vertical divisor.
    
    \item The Calabi-Yau threefold $X^\tse$ is $\IQ$-factorial for all $\tse\in \Sh_B(X^0)$ and
    \begin{align}
        b_2(X^\tse)=b_4(X^\tse)=b_2(B)+1\,.
    \end{align}

    To see this, we use the Shioda-Tate-Wazir theorem extended to genus one fibrations, as was argued for in \cite[Section 8]{Braun:2014oya}.
    Together with the triviality of the Mordell-Weil group of $X^0$, this implies that $b_4(X^\tse)=b_2(B)+1$. The RHS of this relation is a lower bound on $b_2(X^\tse)$.
    By Theorem~\ref{thm:NamikawaSteenbrinkDefect}, we have $b_2(X^\tse) = b_4(X^\tse)-\sigma(X^\tse)$, in terms of the defect $\sigma(X^\tse)$ defined in~\eqref{eqn:defect}.
    Since the defect cannot be negative, it follows that $\sigma(X^\tse)=0$ and therefore $X^\tse$ is $\IQ$-factorial.
    
    \item The base $B$ is a weak del Pezzo surface, i.e. $\mathbb{P}^1\times \mathbb{P}^1$, the Hirzebruch surface $\mathbb{F}_2$ or the result of a sequence $S_i$ of blow-ups of $S_0 = \mathbb{P}^2$ at up to $8$ points, with $S_i$ a blow-up of $S_{i-1}$ at a point not lying on a curve of self-intersection $-2$.
    
    As we review in Section \ref{sec:genusonefibrations}, the smooth base of a smooth genus one fibered Calabi-Yau threefold is either a rational surface or birational to an Enriques surface. The arguments there also go through for a Calabi-Yau threefold with $\IQ$-factorial singularities (see~\cite[Theorem 8.2]{KawamataKodairaDimension} for the statement that the Kodaira dimension of such Calabi-Yau threefolds still vanishes).
    The conditions we impose in \ref{condge:disc} allow us to restrict the possible base surfaces further. As discussed in~\cite{Morrison:2012np}, the existence of a rational curve $C\subset B$ with $C\cdot C<-2$ in the base $B$ implies that the discriminant divisor $\Delta$ has a one-dimensional component over which the vanishing order of $4f^3+27g^2$ is greater than one.
    In order to satisfy condition~\ref{condge:redirr}, we must therefore restrict to base surfaces such that  every rational curve $C\subset B$ satisfies $C\cdot C\ge -2$.
    Furthermore, the base cannot be an Enriques surface, as the canonical class of an Enriques surface squares to zero; hence the Weierstra{\ss} coefficients $f,g \in \Gamma(B,\cO)$ are constant and therefore $\Delta$ is not a curve.
    Therefore, $B$ must be a weak del Pezzo surface.
    Recall that an ordinary del Pezzo surface, defined as a smooth projective surface with ample anti-canonical bundle, is either $\mathbb{P}^1\times\mathbb{P}^1$ or the blow-up $\text{dP}_{9-k}$ of $\mathbb{P}^2$ in $0\le k\le 8$ generic points.
    To also allow for $-2$ curves, the condition on the anti-canonical bundle is weakened to merely being nef and big in the definition of weak (or generalized) del Pezzo surface, see for example~\cite{Derenthal2008}.
    
    \item The integer homology $H_*(B,\mathbb{Z})$ is torsion free and both the Brauer group $\text{Br}(B)$ and the fundamental group $\pi_1(B)$ are trivial.

    This follows from $B$ being a weak del Pezzo surface and the fact that both the Brauer group and the fundamental group are a birational invariant for smooth projective varieties.

    \item We can then also show the following:
\begin{lemma}
    The fundamental group of an almost generic genus one fibered Calabi-Yau threefold $\pi:X^\tse\rightarrow B$ is trivial.
\end{lemma}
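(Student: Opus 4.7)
The plan is to reduce first to an analytic small resolution $\widehat{X}^\tse$ and then to exploit the genus-one fibration together with the existence of a simply connected singular fiber.

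\emph{Step 1 (small resolution).} By property~\ref{condge:smoothqfac} and the discussion around Comment~\ref{comment:nonKaehlerResolution}, the singularities of $X^\tse$ consist of finitely many isolated nodes. The link of a threefold node is the smooth quadric $\{xy=zw\}\cap S^7\subset \IR^8$, diffeomorphic to $S^2\times S^3$ and hence simply connected. Applying the Seifert--van Kampen theorem to the decomposition of $X^\tse$ into the smooth locus $X^\tse\setminus S^\tse$ and a disjoint union of cone neighborhoods of the nodes gives $\pi_1(X^\tse)=\pi_1(X^\tse\setminus S^\tse)$. The analytic small resolution $\rho_\tse:\widehat{X}^\tse\to X^\tse$ replaces each node by a $\IP^1$ whose tubular neighborhood still has boundary $S^2\times S^3$; a second van Kampen argument yields $\pi_1(\widehat{X}^\tse)=\pi_1(X^\tse\setminus S^\tse)=\pi_1(X^\tse)$.

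\emph{Step 2 (fibration structure).} The induced map $\pi:\widehat{X}^\tse\to B$ is proper with connected fibers, so $\pi_*:\pi_1(\widehat{X}^\tse)\to \pi_1(B)$ is surjective, and $\pi_1(B)=1$ since $B$ is a weak del Pezzo surface (already established above). For injectivity, the key ingredient is the existence of at least one cuspidal $II$-fiber: by~\eqref{eq:numberCusps}, $n_{II}=24\,K_B\cdot K_B>0$, because $-K_B$ is nef and big on a weak del Pezzo. The reduction of such a fiber is the image of the normalization $\IP^1\to\{y^2=x^3\}$, a continuous bijection between compact Hausdorff spaces, hence a homeomorphism; therefore $F_{II}$ is topologically an $S^2$, and in particular simply connected. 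A topological Ehresmann-type argument, applied to a small polydisc neighborhood $V\subset B$ of a cusp of $\Delta$, shows that $\pi^{-1}(V)$ deformation retracts onto $F_{II}$, so that any smooth torus fiber $F_t\subset \pi^{-1}(V)$ satisfies $\pi_1(F_t)\to\pi_1(\pi^{-1}(V))=\pi_1(F_{II})=1$, which maps trivially into $\pi_1(\widehat{X}^\tse)$. For proper surjective morphisms of normal complex-analytic spaces with connected fibers, the kernel of $\pi_*$ is normally generated by the images of the $\pi_1$'s of the fibers; since every fiber can be connected to $F_t$ by a path in $B$, these images are all trivial, so $\ker\pi_*=1$.

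\emph{Main obstacle.} The delicate technical point is the assertion in Step 2 that $\pi^{-1}(V)$ deformation retracts onto the cuspidal fiber, and that $\ker\pi_*$ is controlled by images of fiber $\pi_1$'s. For smooth elliptic surfaces, both statements are part of Kodaira's classical analysis, but here $\widehat{X}^\tse$ is a possibly non-Kähler threefold whose fibers over $S_\Delta$ and over the cusps of $\Delta$ may differ from the standard Kodaira types for $\tse\ne 0$. One can handle this either by a direct analytic argument based on the local Weierstra{\ss} structure (noting that only the topological type of the set-theoretic fiber enters, and that this type remains that of a rational cuspidal curve), or by passing through a smoothing $\widetilde{X}^\tse$ of $X^\tse$ (cf.\ Theorem~\ref{thm:NamikawaSteenbrinkDeformation}): smoothing replaces each node's $S^2\times S^3$ link by a neighborhood of a vanishing $S^3$, and a van Kampen argument entirely analogous to Step 1 shows $\pi_1(\widetilde{X}^\tse)=\pi_1(X^\tse)$, reducing the question to a smooth projective Calabi-Yau where the standard tools apply more cleanly.
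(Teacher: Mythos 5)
Your argument is correct in outline but follows a genuinely different route from the paper. The paper works directly on $X^\tse$: it first deforms an arbitrary loop into a single smooth fiber $\Sigma_p$ (justified by the simple connectedness of $B$), then kills one $1$-cycle of $\Sigma_p$ using the pinching at a nearby $I_1$ fiber, and finally invokes the fact that the monodromy group generated by the $I_1$ and $II$ degenerations is all of $\SLtwoZ$ to put the second $1$-cycle on the same footing as the first. You instead pass to the small resolution (or smoothing) via van Kampen on the $S^2\times S^3$ links, and then kill \emph{both} generators of $\pi_1(\Sigma_p)$ at once by retracting onto a simply connected cuspidal type $II$ fiber, whose existence is guaranteed by $n_{II}=24\,K_B\cdot K_B>0$ on a weak del Pezzo surface. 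This trades the monodromy argument for a Nori-type exactness statement. It is worth noting that the two proofs are on the same footing with respect to their least rigorous input: the paper's opening claim that any loop can be deformed into a single smooth fiber is precisely the surjectivity-plus-fiber-generation statement you isolate as your ``main obstacle,'' so neither proof avoids it; your version just makes the dependence explicit.

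Two points deserve sharpening. First, your general assertion that $\ker\pi_*$ is normally generated by the images of the $\pi_1$'s of the (set-theoretic) fibers is \emph{false} without excluding multiple fibers — elliptic surfaces obtained by logarithmic transforms give counterexamples with simply connected base and connected fibers but $\pi_1\ne 1$. In the present setting the hypothesis is available (the absence of multiple fibers is built into the almost generic / Tate-Shafarevich framework, cf.\ Section~\ref{sec:genericity}), but it must be invoked. Second, the final sentence ``since every fiber can be connected to $F_t$ by a path, these images are all trivial'' glosses over the contribution of the singular fibers: one needs that for each singular fiber $F_y$ the composition $\pi_1(F_t)\to\pi_1(\pi^{-1}(V_y))\cong\pi_1(F_y)$ is surjective, so that singular fibers contribute nothing beyond the (conjugacy class of the) image of $\pi_1(F_t)$. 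This holds for the fiber types $I_1$, $I_2$ and $II$ occurring here, and again fails precisely for multiple fibers, so the check is cheap but not optional.
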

\begin{proof}
    As the base $B$ of the fibration is simply connected, we can deform a given loop ${r:[0,1] \rightarrow X^\tse}$ such that $r([0,1])$ lies in a single smooth fiber $\Sigma_p=\pi^{-1}(p)$ for some $p\in X^\tse$.  Choose a generic point $p' \in \Delta$ and a path $s:[0,1] \rightarrow X^\tse$ from $p$ to $p'$ which intersects $\Delta$ only in $p'$. As one cycle of the generic fiber pinches as we approach the $I_1$ fiber lying over $p'$, we have $H_1\left(\pi^{-1}(s([0,1]),\IZ\right) = \IZ$. Hence, one of the 1-cycles of $\Sigma_p$ is homotopically trivial in $X^\tse$. On the other hand, the fact that the fibration has fibers of type $I_1$ and $II$ implies that the group acting on the $1$-cycles of $\Sigma_p$ that is generated by the monodromies around singular fibers\footnote{Kodaira worked out the monodromies associated to degenerations of the elliptic fiber in the case of elliptic surfaces~\cite{KodairaII}. To apply this discussion to an isolated singularity of a genus one fibered threefold lying above a point $p\in B$, we consider a disk $D$ in a small neighborhood of $p$ which intersects $\Delta$ only in $p$, see also~\cite[Section 4]{Grassi:2001xu}. Note that if $p$ is a node or a cusp of $\Delta$, deforming $D$ away from $p$ while keep its intersection with a small $S^3$ surrounding $p$ fixed will result in multiple intersections of the deformed disk with regular points of $\Delta$. This permits relating the monodromy around nodes or cusps to the product of (conjugated) monodromies associated to $I_1$ fibers.} is the full modular group $\SLtwoZ$. Hence, both 1-cycles of $\Sigma_p$ are homotopically on the same footing, and $r([0,1])$ therefore contractible to a point.
\end{proof}

    \item
    There exist smooth deformations $\widetilde{X}^\tse$ of $X^\tse$ with a compatible genus one fibration $\tilde{\pi}_\gamma:\widetilde{X}^\tse\rightarrow B$.

    The existence of the smoothing follows from the fact the $X^0$ is $\mathbb{Q}$-factorial with at most isolated nodal singularities together with Theorem~\ref{thm:NamikawaSteenbrinkDeformation}.
    The existence of the compatible genus one fibration follows from~\cite[Theorem 31]{Kollar:2012pv}.

    \item The condition~\ref{condge:TSWC}, namely that the Tate-Shafarevich group and the Weil-Ch{\^a}telet group coincide, is certainly the most non-trivial. However, based on the physical discussion that we will come to in Section~\ref{sec:discreteHolonomiesAndGenusOne}, we are tempted to expect that it is automatically satisfied if the first two conditions are met.

    \item Our definition of almost generic genus one fibered Calabi-Yau threefolds is inspired by the definition of a generic elliptic threefold in~\cite{Caldararu2002}.
    We expect that a generic genus one fibered Calabi-Yau threefold according to Definition~\ref{def:almostgeneric} is also a generic elliptic threefold according to the definition from~\cite{Caldararu2002}.
    It would be interesting to understand the precise relationship between the two definitions.

\end{enumerate}

\subsection{Smooth genus one fibered Calabi-Yau threefolds}
\label{sec:genusonefibrations}

We will now summarize some known properties of smooth genus one fibered Calabi-Yau threefolds $\pi:X\rightarrow B$ with general fiber a curve $\Sigma$ of genus one.

Note first that the base of the fibration must satisfy 
\begin{equation} \label{eq:baseHodge}
    h^{i,0}(B)=0\,, \quad i=1,2 \,;
\end{equation} 
otherwise, the pullback along the fibration would lead to non-vanishing $h^{i,0}(X)$, in contradiction to $X$ being Calabi-Yau.
Using Dolbeault's theorem together with the Hirzebruch-Riemann-Roch theorem, it follows that the holomorphic Euler characteristic of the base is $\chi(\cO_B) = 1$.
Noether's formula $12\chi(\cO_B) = K_B^2 + c_2(T_B)$ thus implies
\begin{equation} \label{eq:Noether}
    K_B \cdot K_B = 10 - b_2(B) \,.
\end{equation}

The conditions \eqref{eq:baseHodge} have further implications, see e.g.~\cite[Section 6]{Grassi:2011hq}.
By the Enriques-Kodaira classification of surfaces, $b_1(B) = 0$ restricts the base of the fibration, up to birational equivalence, to be one of the following: a rational surface, an Enriques surface, a K3 surface, or a minimal surface of general type, see e.g.~\cite[Table 10]{BarthComplexSurfaces}.
A K3 surface as base is excluded by $h^{2,0}(\mathrm{K3})=1$.
Surfaces of positive Kodaira dimension, namely surfaces of general type and elliptic surfaces, are excluded by the subadditivity of the Kodaira dimensions, ${\kappa(X) \ge \kappa(E) + \kappa(B)}$ for $\dim(E) = 1$ \cite{ViehwegSubadditivity}, as $\kappa(X)=\kappa(E)=0$. We can conclude that $B$ is either a rational surface or birational to an Enriques surface. If $(X,\pi)$ is generic, in the sense of Section~\ref{sec:almostgeneric}, we have already noted that $B$ has to be a weak del Pezzo surface.

In the following, we will restrict the discussion to the case
\begin{equation} \label{eq:generic}
    b_2(X)=b_2(B)+1
\end{equation}
of relevance in this paper. This in particular implies that the fibration is flat, i.e. all of the fibers are one dimensional. We will also assume that there are no multiple fibers, which would correspond to points $p\in B$ over which the fiber $\pi^{-1}(p)$ is non-reduced.

Denote a basis of divisors on $B$ by $\Jbase_\alpha\in \text{Pic}(B),\,\alpha=1,\ldots,b_2(B)$.
The corresponding divisors $J_\alpha=\pi^*\Jbase_\alpha$ on $X$ are referred to as vertical divisors.
On the other hand, a divisor is referred to as an $m$-section, with the integer $m\ge 1$, if the degree of the restriction to the generic fiber $\Sigma$ is $m$.

The extension of the Shioda-Tate-Wazir theorem~\cite{SHIODA1972,tate1965algebraic,SB_1964-1966__9__415_0,Wazir2004} to genus one fibrations that are not elliptic was discussed in~\cite[Section 8]{Braun:2014oya} and implies that
\begin{align} \label{eq:STWbasis}
    \text{Pic}(X)=\langle J_0,J_1,\ldots,J_{b_2(B)}\rangle\,,    
\end{align}
with $J_0$ an $m$-section for some minimal $m$.

We denote the intersection numbers by
\begin{align}
    c_{ijk}=J_i\cdot J_j\cdot J_k\,,\quad \Omega_{\alpha\beta}=j_\alpha\cdot j_\beta\,,
    \label{eqn:topInvcandb}
\end{align}
and denote by $\Omega^{\alpha\beta}$ the inverse of the matrix $\Omega_{\alpha\beta}$.
Since $J_\alpha\cdot J_\beta=\Omega_{\alpha\beta}\Sigma$, we have
\begin{align}
     c_{i\alpha\beta}=m\delta_{i,0}\Omega_{\alpha\beta}\,.
\end{align}
Fiberwise integration provides a pushforward map from the cohomology of a fiber bundle to its base.
The pushforward map is needed to define the height-pairing of the $m$-section, which is given by
\begin{align}
    \pi_*(J_0\cdot J_0)=c_{00\alpha}\Omega^{\alpha\beta}j_\beta\,.
\end{align}
The relation follows from the fact that the pushforward is adjoint to the pullback, such that
\begin{align}
    j_\alpha\cdot \pi_*(J_0\cdot J_0)=J_\alpha\cdot J_0\cdot J_0\,.
\end{align}
If $J_0$ is a section, one also has
\begin{align}
    \pi_*(J_0\cdot J_0)=K_B\,.
    \label{eqn:selfintersectionSection}
\end{align}

The relation
\begin{align}
    J_\alpha \cdot c_2(TX)=-12 \Jbase_\alpha\cdot K_B\,,
    \label{eqn:intVerticalSecondChern}
\end{align}
was proven for special cases in~\cite[Appendix D]{Grimm:2013oga} and~\cite[Appendix B]{Cota:2019cjx}. More generally, the identity
\begin{equation} \label{eq:jalphaC2}
    J_\alpha \cdot c_2(TX) = \chitop(J_\alpha) \,,
\end{equation}
follows from adjunction~\cite{Grimm:2013oga}.\footnote{We can choose a smooth representative for $J_\alpha$ to which we apply the adjunction formula as the pullback of a base point free linear system is base point free.}
As $J_\alpha$ is a smooth elliptic surface, only singular fibers contribute to its Euler characteristic.
If all but $I_1$ fibers (which contribution +1 to $\chitop$) can be avoided, and all $I_1$ fibers lie over the discriminant $\Delta$ of the Jacobian fibration, then \eqref{eqn:intVerticalSecondChern} follows from \eqref{eq:jalphaC2} by $[\Delta] = - 12 K_B$, see \eqref{eq:Delta}.
In fact, the analysis of the Chern-Simons terms in Section~\ref{ss:CSterms} can be seen as a physical derivation of~\eqref{eqn:intVerticalSecondChern}.

Finally, given any Calabi-Yau threefold $X$, there is a numerical criterion to determine whether $X$ exhibits a genus one fibration.
The results from~\cite{Wilson1989} imply that if there is a divisor $D$ on a Calabi-Yau threefold $X$ such that
\begin{align}
    D^3=0\,,\quad D^2\ne 0\,,\quad D\cdot c_2(TX)>0\,,
    \label{eqn:fibcondition}
\end{align}
then $X$ exhibits a genus one fibration structure with respect to which $D$ is a vertical divisor, see also~\cite{OGUISO1993,Wilson1994,Kollar:2012pv}.

\subsection{Five mathematical conjectures}
\label{sec:mathresults}
Our physical analysis in Sections~\ref{sec:discreteHolonomiesAndGenusOne} and~\ref{sec:ellipticGeneraAndTopologicalStrings} will lead us to conjecture properties of almost generic genus one fibered Calabi-Yau threefolds that we will now summarize.
They can be seen as natural generalizations of results from~\cite{dg92,Caldararu2002,caldararuThesis,Donagi:2003av}.
We will discuss the conjectures at the hand of a concrete example in Section~\ref{sec:exampleZ4}.

Let $\pi_0:X^0\rightarrow B$ be an almost generic elliptically fibered Calabi-Yau threefold and $\rho_0:\widehat{X}^0\rightarrow X^0$ any small resolution.
Given $\tse\in\Sh_B(X^0)$, we denote the corresponding almost generic genus one fibered Calabi-Yau threefold by $\pi_{\tse}:X^{\tse}\rightarrow B$ and a choice of small resolution by $\rho_{\tse}:\widehat{X}^{\tse}\rightarrow X^{\tse}$.
We also introduce
\begin{align} \label{eq:defGamma}
    \Gamma^{\tse}=\tors{H^3(\widehat{X}^{\tse},\mathbb{Z})}\,,\qquad \widehat{\Gamma}^\tse=\text{Hom}(\Gamma^\tse,\U(1))=\tors{H_2(\widehat{X}^\tse,\IZ)}\,,
\end{align}
where the last equality follows from the universal coefficient theorem.

Recall that $S_\Delta\subset \Delta$ and $S^\tse \subset X^\tse$ are respectively the sets of nodes of $\Delta$ and $S^\tse$.
We will also use the following definitions:
\begin{itemize}
    \item $m_{\tse}=\text{ord}(\tse)$ is the order of $\tse$ in $\Sh_B(X^0)$.
    \item Anticipating conjecture \ref{conj:TSandTorsion}\ref{en:Sh}, $q_\tse:\widehat{\Gamma}^0\rightarrow \mathbb{N}$ assigns the unique integer $q_\tse(\chi)\in\{0,\ldots,m_\tse - 1\}$ to the torsion class $\chi$ such that $\tse(\chi)={\rm e}[q_\tse(\chi)/m_\tse]$.
    \item $C_{p,\tse}^E=\rho_\tse^{-1}(p)$ for $p\in S^\tse$ is the exceptional curve in $\widehat{X}^\tse$ that resolves the node $p$.
\end{itemize}

The conclusions from our physical analysis can then be summarized in four conjectures.
\paragraph{Tate-Shafarevich group and torsion curves}
The first conjecture describes the relation between the Tate-Shafarevich group of $X^0$ and the torsion homology of the fibrations $\widehat{X}^\tse$ for $\tse\in\Sh_B(X^0)$.
\begin{conjecture}
We make the following claims:
\begin{enumerate}[label=\alph*)]
    \item \label{en:Sh} $\Sh_B(X^0)=\Gamma^0$
    \item \label{en:Gamma} $\Gamma^{\tse}=\Gamma^0/\langle\tse\rangle$
\end{enumerate}
\label{conj:TSandTorsion}
\end{conjecture}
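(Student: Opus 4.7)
The plan is to establish both claims by exploiting F/M-theory duality, together with the general result about twisted circle reductions derived in Section~\ref{sec:kkandhol}, and to tie the resulting physical gauge groups to the stated topological invariants via the M-theory description of torsional C-field modes. I would treat part (a) first, then bootstrap part (b) from it.

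For part (a), I would proceed in three steps. First, I invoke the identification of the six-dimensional F-theory gauge group of the almost generic vacuum $\mathrm{F}[X^0]$ with the Tate-Shafarevich group, $\pi_0(\mathrm{G}_{6\mathrm{d}}) = \Sh_B(X^0)$, as reviewed in the Introduction. Second, I compactify on a circle with trivial twist $\tse = 0$ and use the duality $\mathrm{F}[X^0][S^1_0] = \mathrm{M}[X^0]$; the untwisted reduction preserves the discrete factor of $\pi_0(\mathrm{G}_{5\mathrm{d}})$ (this is the $\tse = 0$ case of the general result derived in Section~\ref{sec:twistedKK}). Third, I use the M-theory description of the discrete gauge sector: on an analytic small resolution $\widehat{X}^0$, the discrete five-dimensional gauge symmetry is produced by reducing the M-theory three-form on torsional three-cycles and is accordingly identified with $\tors{H^3(\widehat{X}^0, \mathbb{Z})}$, which is the content of Claim~\ref{claim:mtheoryNodalCYgaugeGroup} that we are taking as input. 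Combining these three identifications yields $\Sh_B(X^0) = \tors{H^3(\widehat{X}^0, \mathbb{Z})} = \Gamma^0$.

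For part (b), I would run the same chain of identifications, but now for the $\tse$-twisted circle compactification. The duality~\eqref{eqn:MFduality} gives $\mathrm{F}[X^0][S^1_\tse] = \mathrm{M}[X^\tse]$. On the left, the general analysis of Section~\ref{sec:twistedKK} (leading to equation~\eqref{eq:GGdm1}) shows that the discrete part of the five-dimensional gauge group is exactly $\Gamma^0/\langle \tse \rangle$. On the right, applying Claim~\ref{claim:mtheoryNodalCYgaugeGroup} to any analytic small resolution $\widehat{X}^\tse$ of $X^\tse$ identifies that discrete five-dimensional gauge group with $\tors{H^3(\widehat{X}^\tse, \mathbb{Z})} = \Gamma^\tse$. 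Equating the two sides gives $\Gamma^\tse = \Gamma^0 / \langle \tse \rangle$.

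The main obstacle, which is precisely the reason the statement is phrased as a conjecture rather than a theorem, is step three in part (a): the assertion that the discrete gauge group of M-theory on the singular Calabi-Yau $X^\tse$ is computed by the torsional degree-three cohomology of a \emph{non-K\"ahler} analytic small resolution $\widehat{X}^\tse$ (cf.\ Remark~\ref{comment:nonKaehlerResolution}). One has to justify that the physically relevant C-field configurations still organize themselves into flat connections on torsion cycles of $\widehat{X}^\tse$, independently of the choice of small resolution, and that this choice drops out at the level of the cohomology group. A secondary subtlety is ensuring that the result of Section~\ref{sec:twistedKK}, which is derived intrinsically in terms of field theory on a circle, is legitimately matched with the M-theoretic five-dimensional gauge group after accounting for the Kaluza-Klein $\U(1)$; this is resolved by the quotient appearing in~\eqref{eq:QuotientGroupLowerRight} together with the identification of Lemma~\ref{lem:quotientgroup}, which cleanly separates a KK $\U(1)$ factor from the residual discrete gauge group $\Gamma^0/\langle \tse \rangle$.
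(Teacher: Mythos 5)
Your proposal reproduces the paper's own argument: part (a) via the identification $\Gsd=\Sh_B(X^0)$, the untwisted circle reduction, Claim~\ref{claim:mtheoryNodalCYgaugeGroup}, Pontryagin duality and the universal coefficient theorem; part (b) via the twisted reduction result of Section~\ref{sec:twistedKK} equated with the M-theory gauge group on $\widehat{X}^\tse$. You also correctly single out the status of Claim~\ref{claim:mtheoryNodalCYgaugeGroup} on non-K\"ahler small resolutions as the reason the statement remains a conjecture, which matches the paper's framing.
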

Note that Conjecture~\ref{conj:TSandTorsion}\ref{en:Gamma} can be equivalently stated as $\widehat{\Gamma}^\tse=\text{\normalfont ker}\,\tse$, and allows us to embed the torsion of curves in $\widehat{X}^\tse$ into $\widehat{\Gamma}^0$.

\paragraph{Singularities and reducible fibers}
Let us now assume that Conjecture~\ref{conj:TSandTorsion} holds.
The following Conjecture~\ref{conj:I2fibers} can be seen as a refinement of Conjecture~\ref{conj:TSandTorsion}\ref{en:Gamma} and describes the homology classes of the components of the different $I_2$-fibers in $\widehat{X}^\tse$.

We denote by $H_2(\widehat{X}^\tse,\mathbb{Z})_{\rm F}$ the homology classes of curves that are orthogonal, in terms of the intersection product, to all vertical divisors.
This subgroup is generated by the class of the generic fiber $[\Sigma_\tse]$ and by the classes $[C_{p,\tse}^A],[C_{p,\tse}^B]$ of the rational components of the $I_2$-fibers in $\widehat{X}^\tse$,
\begin{align}
    C_{p,\tse}=C_{p,\tse}^A\cup C_{p,\tse}^B=(\pi_\tse\circ\rho_\tse)^{-1}(p)\,,\quad p\in S_\Delta\,.
\end{align}
These classes are not all independent; in particular, $[\Sigma_\tse]=[C_{p,\tse}^A]+[C_{p,\tse}^B]$.
We will refer to these curves as \textit{fibral curves}.

Note that even though $H_2(\widehat{X}^\tse,\mathbb{Z})_{\rm F}\simeq \mathbb{Z}\times\widehat{\Gamma}^\tse$, there is in general no canonical choice for this splitting: for any isomorphism
\begin{align}
    \iota: \,H_2(\widehat{X}^0,\mathbb{Z})_{\rm F}\rightarrow \mathbb{Z}\times \widehat{\Gamma}^0\,,\quad C \mapsto \left(\iota(C)_1, \iota(C)_2\right) \,,
\end{align}
the map $\iota_\chi(C) = (\iota(C)_1, \iota(C)_1\chi+ \iota(C)_2)$ for any $\chi \in \widehat{\Gamma}^0$ is an equally valid choice. However, for $\tse=0$, we can fix this ambiguity by requiring that the isomorphism satisfies $\iota_0([\Sigma_\tse])=(1,\text{id})$; $\iota_0$ is then unique determined up to automorphisms of $\widehat{\Gamma}^0$.\footnote{For $\iota([\Sigma_\tse]) = n > 1$, requiring $\iota([\Sigma_\tse]) = (n, \id)$ fixes the ambiguity up to an order $n$ element of $\widehat{\Gamma}^0$.}
For each $p\in S^0$, we can then choose the labels $A,B$ such that
\begin{align}
    C_{p,0}^B=C_{p,0}^E\,,\quad [C_{p,0}^A]=(1,-\chi_p)\,,\quad [C_{p,0}^B]=(0,\chi_p)\,,
\end{align}
for some $\chi_p\in\widehat{\Gamma}^0$.
Note that $\chi_p$ depends on the choice of $\widehat{X}^0$ among the $2^{\# S_\Delta}$ small resolutions of $X^0$; flopping the exceptional curve over $p$ exchanges $\chi_p$ with $-\chi_p$.

The following Conjecture~\ref{conj:I2fibers} generalizes this to $\widehat{X}^\gamma$ with $\gamma\ne 0$:

\begin{conjecture} 
Let $S_\tse^0=\{\,p\in S^0\,\,\vert\,\,\tse([C_{p,0}^E])=1\,\}$ and let $S_{\Delta,\tse}=\pi_0(S^0_{\tse})\subset S_\Delta$ denote the corresponding subset of the set of nodes of the discriminant locus.
We claim that
\begin{align}
    \text{\normalfont Hom}\left(H_2(\widehat{X}^\tse,\mathbb{Z})_{\rm F},\U(1)\right)=\quotient{\U(1)\times\Gamma^0}{\langle \,(\e[-\tfrac{1}{m_\tse}],\tse)\,\rangle}\,,
\end{align}
in terms of the Pontryagin dual $\Gamma^0=\text{\normalfont Hom}(\widehat{\Gamma}^0,\U(1))$, such that pullback gives an isomorphism
\begin{align}
    \iota_\tse:\,H_2(\widehat{X}^\tse,\mathbb{Z})_{\rm F}\rightarrow\text{\normalfont ker}\,(\e[-\tfrac{1}{m_\tse}],\tse)\subset \mathbb{Z}\times\widehat{\Gamma}^0\,,
    \label{eqn:conj2iso}
\end{align}
with $\iota_\tse([\Sigma_\tse])=(m_\tse,\text{\normalfont id})$.
Moreover, for $p\in S_{\Delta}\backslash S_{\Delta,\gamma}$ we can choose the labels $A,B$ such that
\begin{align}
    \begin{split}
        \iota_\tse([C_{p,\tse}^{A}])=&\left(m_\tse-q_\tse(\chi_p),\,-\chi_p\right)\,,\quad \iota_\tse([C_{p,\tse}^{B}])=\left(q_\tse(\chi_p),\,\chi_p\right)\,.
    \end{split}
    \label{eqn:curvesXgammaI2}
\end{align}
Note that $q_\tse(-\chi)=m_\tse-q_\tse(\chi)$ and therefore~\eqref{eqn:curvesXgammaI2} is independent of the choice of small resolution $\widehat{X}^0$ of $X^0$ up to exchange of the labels $A$ and $B$.
For $p\in S_{\Delta,\gamma}$ the curves represent the classes
\begin{align}
    \begin{split}
    \iota_\tse([C_{p,\tse}^A])=&\left(m_\tse,\,\mp\chi_p\right)\,,\quad \iota_\tse([C_{p,\tse}^B])=\left(0,\,\pm\chi_p\right)\,,
    \end{split}
\end{align}
where the sign of $\pm\chi_p$ depends on the choice of the small resolution $\widehat{X}^\gamma$.
This also implies that
\begin{align}
    \pi_{\tse}(S^{\tse})=S_{\Delta,\tse}\,.
\end{align}
There exist $2^{\# S_{\Delta,\tse}}$ small resolutions of $X^\tse$, none of which are K\"ahler if $S_{\Delta,\tse}\ne \emptyset$.
\label{conj:I2fibers}
\end{conjecture}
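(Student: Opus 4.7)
The overall plan is to derive Conjecture~\ref{conj:I2fibers} from F/M-theory duality, the twisted Kaluza-Klein analysis of Section~\ref{sec:kkandhol}, and Conjecture~\ref{conj:TSandTorsion}. I would identify $\widehat{X}^\tse$ with the target of the M-theory compactification dual to $\mathrm{F}[X^0][S^1_\tse]$, so that M2-branes wrapping two-cycles in $\widehat{X}^\tse$ furnish the charged BPS particles of the five-dimensional theory, and intersections with divisors read off their gauge charges. Fibral curves are then exactly those whose wrapped M2-branes carry only Kaluza-Klein and discrete gauge charge (being orthogonal to every vertical divisor, as in the definition of $H_2(\widehat{X}^\tse,\IZ)_{\rm F}$).

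First, I would establish the isomorphism $\text{Hom}(H_2(\widehat{X}^\tse,\IZ)_{\rm F},\U(1)) \cong (\U(1)\times\Gamma^0)/\langle(\e[-\tfrac{1}{m_\tse}],\tse)\rangle$ by matching gauge groups. The right-hand side is precisely the five-dimensional gauge group predicted by \eqref{eqn:Gdm1prime} for the $\tse$-twisted circle compactification of $\mathrm{F}[X^0]$. On the M-theory side, Conjecture~\ref{conj:TSandTorsion} identifies the discrete part with $\widehat{\Gamma}^0/\langle\tse\rangle$, while the continuous $\U(1)$ is the Kaluza-Klein photon; the character pairing with fibral two-cycles then produces the embedding $\iota_\tse$ and forces its image to lie in $\ker(\e[-\tfrac{1}{m_\tse}],\tse)$.

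Second, I would pin down individual fibral classes. The generic fiber $\Sigma_\tse$ carries no discrete charge, and since the $\tse$-torsor structure endows $X^\tse$ with a multisection of minimal degree $m_\tse=\ord(\tse)$, the intersection pairing with that multisection yields $\iota_\tse([\Sigma_\tse])=(m_\tse,\mathrm{id})$. For each $p\in S_\Delta$ the two components of the $I_2$-fiber of $\widehat{X}^0$ carry classes $(1,-\chi_p)$ and $(0,\chi_p)$ with $\chi_p\in\widehat{\Gamma}^0$. Transporting these classes to $\widehat{X}^\tse$, the Kaluza-Klein shift~\eqref{eq:defQkk} assigns fractional KK number $q_\tse(\chi_p)/m_\tse$ to a state of discrete charge $\chi_p$, which rescales to the integral representative $(q_\tse(\chi_p),\chi_p)$ on the $m_\tse$-fold cover. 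When $\tse(\chi_p)\neq 1$, i.e.\ $p\notin S_{\Delta,\tse}$, this splits the two components non-trivially between KK and discrete sectors, giving the classes~\eqref{eqn:curvesXgammaI2}. When $p\in S_{\Delta,\tse}$, the charge $\chi_p$ lies in $\ker\tse=\widehat{\Gamma}^\tse$, both components carry well-defined integral KK numbers, and the two options $(m_\tse,\mp\chi_p)$, $(0,\pm\chi_p)$ correspond to flopping the exceptional curve over $p$ between the two analytic small resolutions.

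Finally, $\pi_\tse(S^\tse)=S_{\Delta,\tse}$ follows because a node of $X^\tse$ above $p\in S_\Delta$ persists precisely when the $I_2$-fiber admits a component of purely torsion class $(0,\pm\chi_p)$, which by the previous step happens exactly for $p\in S_{\Delta,\tse}$; the count $2^{\#S_{\Delta,\tse}}$ of small resolutions and their non-Kählerness then follow from Remark~\ref{comment:nonKaehlerResolution} applied to $X^\tse$, whose hypotheses ($\IQ$-factoriality and isolated nodal singularities localized at $S^\tse$) are inherited from Definition~\ref{def:almostgeneric}. The hardest step is the second: making the assignment of the fractional KK charge $q_\tse(\chi_p)/m_\tse$ to a geometric component of an $I_2$-fiber mathematically precise, especially in the absence of a section of $\pi_\tse$ against which integral KK number can be calibrated. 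A purely mathematical route would likely proceed through the relative Poincaré sheaves underlying the torsor structure of $\Sh_B(X^0)$, tracking how a choice of analytic small resolution of $X^0$ transports to one of $X^\tse$ over a neighborhood of each node of $\Delta$, in the spirit of~\cite{caldararuThesis, Caldararu2002}.
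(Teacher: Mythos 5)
Your strategy is the same as the paper's (Section~\ref{ss:twistedAndGenusOne}): derive the conjecture from ${\rm F}[X^0][S^1_\tse]={\rm M}[X^\tse]$, Claims~\ref{claim:mtheoryNodalCYgaugeGroup} and~\ref{claim:mtheoryNodalCYmatter}, the twisted Kaluza--Klein analysis of Section~\ref{sec:kkandhol} and Conjecture~\ref{conj:TSandTorsion}, with the gauge-group match fixing $\text{Hom}(H_2(\widehat{X}^\tse,\IZ)_{\rm F},\U(1))$, the KK tower fixing $[\Sigma_\tse]$, and the presence or absence of massless modes in a tower deciding which nodes of $X^\tse$ persist.

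There is, however, one concrete gap in your treatment of the $I_2$-fibers over $p\in S_\Delta\setminus S_{\Delta,\tse}$. Compatibility of $[C_{p,\tse}^A]+[C_{p,\tse}^B]=[\Sigma_\tse]=(m_\tse,\mathrm{id})$ with the charge lattice only constrains the classes to the one-parameter family~\eqref{eqn:FTHfibralCurveClassesTwisted}, i.e.\ $\bigl((n+1)m_\tse-q_\tse(\chi_p),-\chi_p\bigr)$ and $\bigl(q_\tse(\chi_p)-nm_\tse,\chi_p\bigr)$ for some $n\in\IZ$; your phrase ``rescales to the integral representative'' silently selects $n=0$, and nothing in the KK spectrum alone distinguishes the holomorphic curve $C^B_{p,\tse}$ from $C^B_{p,\tse}+k\Sigma_\tse$. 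The paper closes this by noting that for $n\neq 0$ the free parts of the two components would have opposite signs, which is incompatible with both components being holomorphic curves whose images in the projective (hence K\"ahler) smoothing $\widetilde{X}^\tse$ carry the classes~\eqref{eq:I2classesSmoothing} and must therefore pair non-negatively with a K\"ahler class; this positivity argument, carried out in Section~\ref{sec:partialHiggsing}, is the ingredient you are missing, and it is a physical/geometric input rather than merely a matter of ``making the KK assignment mathematically precise.'' A secondary point: you take as input that the torsor $X^\tse$ admits a multisection of minimal degree $m_\tse=\ord(\tse)$; in general the index of a torsor need only be a multiple of its period, and the paper instead \emph{derives} the existence and minimality of the $m_\tse$-section as consequences of $\iota_\tse([\Sigma_\tse])=(m_\tse,\mathrm{id})$ and of~\eqref{eq:gcdmgammachis}, so this should not be assumed.
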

Note that by Poincaré duality, $\iota_\tse([\Sigma_\tse])=(m_\tse,\text{\normalfont id})$ implies that $\widehat{X}^\tse$ exhibits an $m_\tse$-section, but no $m$-section with $m<m_\tse$.
The image of this $m_\tse$-section under $\rho_\tse:\widehat{X}^\tse\rightarrow X^\tse$ also gives an $m_\tse$-section on $X^\tse$.

\paragraph{Topology of smoothing}
In Section~\ref{ss:CSterms}, we will compare the Chern-Simons terms of the twisted F-theory compactification with the M-theory compactification on $X^\gamma$. This will allow us to predict the topological invariants of a smooth deformation $\widetilde{X}^\gamma$ of $X^\gamma$.

The results are summarized in Conjecture~\ref{conj:topologyOfSmoothing}:
\begin{conjecture}
Let $\widetilde{X}^{\tse}$ be a smooth deformation of $X^{\tse}$ with a compatible fibration $\tilde{\pi}_\tse:\widetilde{X}^{\tse}\rightarrow B$.
We claim that $\widetilde{X}^\tse$ has an $m_\tse$-section $J_0\in\text{\normalfont Pic}(\widetilde{X}^\tse)$ with respect to this fibration and no $m'$-section with $m'<m_\tse$.
We can choose a basis of vertical divisors $J_\sia=\tilde{\pi}_\tse^* \Jbase_\sia$ with $\Jbase_\sia\in\text{\normalfont Pic}(B)$ for $\sia\in\{1,\ldots,b_2(B)\}$ and introduce the basis
\begin{align}
    D_0=J_0-\frac{1}{2m_\tse}\tilde{\pi}_\tse^*\circ\tilde{\pi}_{\tse*}\left(J_0\cdot J_0\right)\,,\quad D_\sia=J_\sia\,.
    \label{eqn:conj3changeOfBasis}
\end{align}
We define the intersection numbers $\Omega_{\sia\sib}=\Jbase_\sia\cdot \Jbase_\sib$ as well as 
\begin{align}
    k_{\lii\lij\lik}=D_\lii\cdot D_\lij\cdot D_\lik\,,\quad \kappa_\lii=D_\lii\cdot c_2(T\widetilde{X}^{\tse})\,,
    \label{eqn:nonTrivialIntersections}
\end{align}
where $\sia,\sib\in\{1,\ldots,b_2(B)\}$ and $\lii,\lij,\lik\in\{0,\ldots,b_2(B)\}$.
For non-trivial $\chi\in\widehat{\Gamma}^0$, we also introduce
\begin{align}
    N_\chi=N_{\chi}'+N_{-\chi}'\,,\quad N_\chi'=\#\{\,p\in S^0\,\,\vert\,\,[C_{p,0}^E]=\chi\,\}\,.
    \label{eqn:Nchi}
\end{align}
We then further claim that the Euler characteristic of $\widetilde{X}^{\tse}$ is
\begin{align}
    \chi(\widetilde{X}^{\tse})=60\big(b_2(B)-10\big)+2\cdot \#(S_\Delta\backslash S_{\Delta,\gamma})\,, 
    \label{eqn:conj3euler}
\end{align}
while the intersection numbers~\eqref{eqn:nonTrivialIntersections} are given by
\begin{align}k_{\lii\sia\sib}=\delta_{\lii,0}m_{\tse}\Omega_{\sia\sib}\,,\quad k_{00\sia}=0\,,\quad \kappa_\sia=-12\Jbase_\sia\cdot K_B\,,
    \label{eqn:conj3ints}
\end{align}
as well as
\begin{align}
\begin{split}
	k_{000}=&\frac{m_{\tse}^3(10-b_2(B))}{4}-\frac{1}{8m_{\tse}}\sum\limits_{\chi\in\widehat{\Gamma}^0}q_\tse(\chi)^2\left(m_{\tse}-q_\tse(\chi)\right)^2N_\chi\,,\\
        \kappa_0=&4m_{\tse}(13-b_2(B))-\frac{1}{2m_{\tse}}\sum\limits_{\chi\in\widehat{\Gamma}^0}q_\tse(\chi)\left(m_{\tse}-q_\tse(\chi)\right)N_\chi\,.
\end{split}
	\label{eqn:twistedKM}
\end{align}
\label{conj:topologyOfSmoothing}
\end{conjecture}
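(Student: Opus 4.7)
The strategy is to reduce the computation of the topological invariants of $\widetilde{X}^\tse$ to a comparison of five-dimensional Chern-Simons terms across F/M-duality. On the M-theory side, compactification on $\widetilde{X}^\tse$ yields a five-dimensional $\mathcal{N}=1$ effective action whose classical Chern-Simons levels directly encode $k_{\lii\lij\lik}$ and $\kappa_\lii = D_\lii \cdot c_2(T\widetilde{X}^\tse)$. On the F-theory side, the duality \eqref{eqn:MFduality}, applied to the Higgs branch reached by the smooth deformation studied in Section~\ref{sec:partialHiggsing}, identifies this same five-dimensional physics with that of ${\rm F}[X^0][S^1_\tse]$ on a fully Higgsed Coulomb branch. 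There the Chern-Simons levels are generated at one loop by integrating out the entire Kaluza-Klein tower of every six-dimensional state, with KK charges shifted by $q_\tse(\chi_\phi)/m_\tse$ as derived in Section~\ref{sec:kkandhol}. The plan is to match the two expressions sector by sector. The existence of an $m_\tse$-section on $\widetilde{X}^\tse$, and the non-existence of an $m'$-section with $m' < m_\tse$, is inherited from Conjecture~\ref{conj:I2fibers} applied to $\widehat{X}^\tse$ together with~\cite[Theorem 31]{Kollar:2012pv}, which ensures that the genus one fibration deforms along with the threefold; the minimality of $m_\tse$ is enforced by the discrete gauge group structure \eqref{eq:GGdm1}, as a smaller multisection would produce a massless Kaluza-Klein mode incompatible with it.

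The identities $k_{\lii\sia\sib} = \delta_{\lii,0}\, m_\tse\,\Omega_{\sia\sib}$ and $k_{00\sia}=0$ are then purely geometric. The first uses $J_\sia\cdot J_\sib = \Omega_{\sia\sib}\Sigma_\tse$ together with the fact that $J_0$ is an $m_\tse$-section; the second is an application of the projection formula, noting that the shift $D_0 = J_0 - \frac{1}{2 m_\tse}\, \tilde{\pi}_\tse^*\tilde{\pi}_{\tse\,*}(J_0\cdot J_0)$ in \eqref{eqn:conj3changeOfBasis} is designed exactly so as to cancel the pushforward $\tilde{\pi}_{\tse\,*}(J_0\cdot J_0)\cdot \Jbase_\sia$. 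The vertical $c_2$ pairing $\kappa_\sia = -12\,\Jbase_\sia\cdot K_B$ is \eqref{eqn:intVerticalSecondChern}, a consequence of adjunction applied to the smooth elliptic surface $J_\sia$ combined with $[\Delta]=-12 K_B$ from \eqref{eq:Delta}. These parts of the conjecture are independent of both the twist and the charged matter $N_\chi$.

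The new content lies in $k_{000}$ and $\kappa_0$. A five-dimensional half-hypermultiplet of charge $q$ and mass $m$ contributes a definite multiple of $\mathrm{sign}(m)\,q^3$ to $k_{000}$ and of $\mathrm{sign}(m)\,q$ to $\kappa_0$ upon being integrated out. Applying this to the full KK tower of a six-dimensional half-hyper of discrete charge $\chi$, whose $n$-th mode carries $q_\KK = m_\tse n + q_\tse(\chi)$ and mass proportional to $|q_\KK|/m_\tse$, produces the regularized sums $\sum_{n\in\IZ}|q_\KK|^3$ and $\sum_{n\in\IZ}|q_\KK|$. Hurwitz-zeta regularization, via $\zeta(-3,a) = -B_4(a)/4$ and $\zeta(-1,a) = -B_2(a)/2$ combined at $a=q_\tse(\chi)/m_\tse$ and $1-a$, yields the Bernoulli-polynomial combinations $q_\tse(\chi)^2(m_\tse - q_\tse(\chi))^2$ and $q_\tse(\chi)(m_\tse - q_\tse(\chi))$ that appear in \eqref{eqn:twistedKM}, weighted by $N_\chi$. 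The universal terms $\tfrac{m_\tse^3(10 - b_2(B))}{4}$ and $4 m_\tse(13 - b_2(B))$ then collect the analogous contributions of uncharged six-dimensional multiplets --- the graviton multiplet, the $b_2(B)-1$ tensor multiplets, and the neutral hypers whose count is fixed by the gravitational anomaly $H - V + 29 T = 273$ at $V=0$, $T = b_2(B)-1$. The Euler characteristic \eqref{eqn:conj3euler} is obtained independently via a conifold transition, $\chi(\widetilde{X}^\tse) = \chi(\widehat{X}^\tse) - 2\#S^\tse$, combined with the Kodaira-fiber computation $\chi(\widehat{X}^\tse) = -60 K_B^2 + 2\#S_\Delta$ (using $\chi(I_1)=1$, $\chi(I_2)=\chi(II)=2$ and $n_{II}=24 K_B^2$) and the identification $\#S^\tse = \#S_{\Delta,\tse}$ from Conjecture~\ref{conj:I2fibers}.

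The principal technical obstacle will be the careful regularization of the KK sums, including signs, the factor of $\tfrac{1}{2}$ distinguishing half-hypers from full ones, and the normalization of the Kaluza-Klein $\U(1)$ dictated by \eqref{eq:defQkk}; a miscalibration of this normalization reintroduces spurious factors of $m_\tse$ in $k_{000}$ and $\kappa_0$. A built-in consistency check is to specialize to $\tse = 0$, where $q_\tse(\chi)\equiv 0$, all $N_\chi$-dependent terms drop out, and the resulting expressions must reproduce, after undoing the shift \eqref{eqn:conj3changeOfBasis}, the standard triple intersection $J_0^3 = K_B^2$ and the standard $J_0 \cdot c_2$ pairing of a smooth Weierstra\ss\ elliptic Calabi-Yau with section.
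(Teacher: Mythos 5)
Your proposal is correct and follows essentially the same route as the paper: Section~\ref{ss:CSterms} derives $k_{000}$ and $\kappa_0$ exactly as you describe, by matching the one-loop Chern--Simons levels of ${\rm F}[J(\widetilde{X}^\tse)][S^1_\tse]$ against the classical M-theory terms \eqref{eqn:mtheff}, with the shifted Kaluza--Klein charges $nm_\tse+q_\tse(\chi)$ regularized via the Hurwitz zeta function and the Bernoulli identities $B_2(x)+B_2(1-x)$, $B_4(x)+B_4(1-x)$; the geometric identities $k_{\lii\sia\sib}$, $k_{00\sia}=0$, $\kappa_\sia$ and the minimality of the $m_\tse$-section are likewise argued as you do (the latter via the gcd condition \eqref{eq:gcdmgammachis} in Section~\ref{sec:partialHiggsing}).

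The one place you genuinely diverge is the Euler characteristic. The paper obtains \eqref{eqn:conj3euler} from the gravitational anomaly: $H_U=273-29T-H_C$ combined with $H_U=h^{2,1}(\widetilde{X}^\tse)+1$, $T=b_2(B)-1=b_2(\widetilde{X}^\tse)-2$ and $H_C=\tfrac12\sum_{\tse(\chi)\ne1}N_\chi$, so the count of charged hypers enters directly. Your route — $\chi(\widetilde{X}^\tse)=\chi(\widehat{X}^\tse)-2\#S^\tse$ from the conifold transition together with a fiber-type computation of $\chi(\widehat{X}^\tse)$ and $\#S^\tse=\#S_{\Delta,\tse}$ from Conjecture~\ref{conj:I2fibers} — gives the same answer and is more intrinsically geometric, but note that the bookkeeping behind $\chi(\widehat{X}^\tse)=-60K_B^2+2\#S_\Delta$ is slightly delicate: the local contribution of a resolved $I_2$ fiber sitting over a \emph{node} of $\Delta$ is not simply $\chi(I_2)-\chi(I_1)$, since the two colliding $I_1$ branches must be accounted for; the net excess of $+2$ per node is correct but deserves a line of justification (it is consistent with the example, where $\chi(\widetilde{X}^0)=-540$ and $\chi(X^1)=-120$ with $210$ nodes). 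Either derivation is acceptable; the paper's has the advantage of making manifest why only the nodes in $S_\Delta\backslash S_{\Delta,\tse}$ (i.e.\ the charged matter surviving the partial Higgsing) appear in the final formula.
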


Note that the intersection numbers $k_{\lii\sia\sib}$ and $k_{00\sia}$ in~\eqref{eqn:conj3ints} are just a consequence of $J_0$ being an $m_\tse$-section and $D_\alpha$ being vertical divisors, while the relation $\kappa_\sia=-12\Jbase_\sia\cdot K_B$ was discussed below equation~\eqref{eqn:intVerticalSecondChern} above.
In our F-theory discussion, we will also use the number of uncharged half-hypermultiplets $N_0$, which can be expressed as
\begin{align} \label{eq:uncharged}
    N_0=604-58b_2(B)-2\cdot\# S^0\,.
\end{align}

\paragraph{Twisted derived equivalences}
In Section~\ref{sec:modularityAndDerivedEquivalences}, we will use the modular properties of the topological string partition function to derive the following Conjecture~\ref{conj:twisted}.

The conjecture pertains to the derived categories of twisted sheaves on the analytic small resolutions $\widehat{X}^\tse$, where the twist is given by an element of the (cohomological) Brauer group.
For a brief introduction to the Brauer group and twisted sheaves, see e.g.~\cite[Section 4.4]{Katz:2022lyl} (geared towards physicists) or~\cite{caldararuThesis}.
To improve readability, we will write for the moment $X=\widehat{X}^\tse$.
The cohomological Brauer group of $X$ is defined as $\text{Br}'(X)=\tors{H^2(X,\mathcal{O}_X^*)}$.
Since the canonical class of $X$ is trivial and therefore $h^{0,2}(X)=0$, Dolbeault's theorem implies that $H^2(X,\mathcal{O}_X)=0$.
The long exact sequence associated to the exponential sheaf sequence $0\rightarrow \IZ\rightarrow \mathcal{O}_X\rightarrow \mathcal{O}_X^*\rightarrow 0$
then implies that
\begin{align}
    \tors{H^2(X,\mathcal{O}_X^*)}=\tors{H^3(X,\IZ)}\,.
\end{align}
Note that $\mathrm{Br}'(\widehat{X}^\tse)$ thus coincides with $\Gamma^\tse$ as defined in \eqref{eq:defGamma}.
The ``actual'' Brauer group $\text{Br}(X)$ is defined in terms of equivalence classes of sheaves of Azumaya algebras or, equivalently, of projective bundles on $X$.
For details, we again refer to~\cite[Section 4.4]{Katz:2022lyl} or~\cite{caldararuThesis}.
The equivalence of the Brauer group and the cohomological Brauer group is true for quasi-projective varieties~\cite{JongBrauer}. For analytic varieties, it has been proven only in special cases, see e.g.~\cite{Schrer2005}.
For our purposes, it will be sufficient to work with the cohomological Brauer group.

Let $\tseOne,\tseTwo\in \Sh_B(X^0)$.
After performing a partial smoothing if necessary, we can assume that $S^0_{\tseOne}\cap S^0_{\tseTwo}=\emptyset$.
Denote by $[\tseTwo]_{\tseOne}$ the equivalence class of $\tseTwo$ in $\Gamma^\tseOne=\Gamma^0/\langle\tseOne\rangle$.
We then denote the derived category of $[\tseTwo]_{\tseOne}$-twisted sheaves on $\widehat{X}^\tseOne$ by $D^b(\widehat{X}^{\tseOne},[\tseTwo]_{\tseOne})$ and make the following conjecture:

\begin{conjecture}
    Given any $g={\tiny\left(\begin{array}{cc}a&b\\c&d\end{array}\right)}\in{\normalfont\SLtwoZ}$, there exists a twisted derived equivalence
        \begin{align} \label{eq:derivedModular}
            D^b(\widehat{X}^{\tseOne},[\tseTwo]_{\tseOne})\simeq D^b(\widehat{X}^{\tseOnePrime},[\tseTwoPrime]_{\tseOnePrime})\,,\quad 
            \left(\begin{array}{c}\tseOnePrime\\\tseTwoPrime\end{array}\right)=\left(\begin{array}{c}\tseOne^a\tseTwo^c\\\tseOne^b\tseTwo^d\end{array}\right)\,.
        \end{align}
    \label{conj:twisted}
\end{conjecture}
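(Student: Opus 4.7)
I would reduce the $\SLtwoZ$ action to its standard generators $S = \left(\begin{smallmatrix}0&-1\\1&0\end{smallmatrix}\right)$ and $T = \left(\begin{smallmatrix}1&1\\0&1\end{smallmatrix}\right)$ and prove the equivalence \eqref{eq:derivedModular} on each. The $T$-case is tautological: $T$ sends $(\tseOne,\tseTwo) \mapsto (\tseOne,\tseOne\tseTwo)$, and $[\tseOne\tseTwo]_{\tseOne} = [\tseTwo]_{\tseOne}$ inside $\Gamma^{\tseOne} = \Gamma^0/\langle\tseOne\rangle$, so both sides of \eqref{eq:derivedModular} denote literally the same twisted derived category. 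The entire content of the conjecture therefore lies in the $S$-equivalence
\begin{equation*}
  D^b(\widehat{X}^{\tseOne}, [\tseTwo]_{\tseOne}) \,\simeq\, D^b(\widehat{X}^{\tseTwo}, [\tseOne^{-1}]_{\tseTwo}),
\end{equation*}
which specializes, at $\tseOne = 0$, to the Ogg-Shafarevich/Căldăraru duality $D^b(\widehat{X}^{\tseTwo}) \simeq D^b(\widehat{X}^0, [\tseTwo]_0)$ of \cite{caldararuThesis,Caldararu2002}, and hence amounts to its common generalization covering arbitrary pairs $(\tseOne,\tseTwo)$ in our singular, possibly non-Kähler setup.

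I see two complementary routes to this $S$-equivalence. The \emph{categorical route} proceeds in three steps. First, invoke the identification from Section~\ref{sec:ellipticGeneraAndTopologicalStrings} between topological string partition functions and twisted-twined elliptic genera to exhibit the A-model partition functions on $\widehat{X}^{\tseOne}_{\tseTwo}$ and $\widehat{X}^{\tseTwo}_{\tseOne^{-1}}$ as two components of a single $\SLtwoZ$ vector-valued modular form, interchanged by the modular $S$-transformation on the elliptic modulus of the wrapped non-critical string. Second, interpret this, in the spirit of \cite{Schimannek:2021pau,Dierigl:2022zll}, as the statement that the two large-volume backgrounds correspond under mirror symmetry to distinct MUM points in the complex structure moduli space of a common mirror Calabi-Yau threefold. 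Third, realize the equivalence as B-brane parallel transport along a path in the stringy Kähler moduli space joining these two MUM points, using the identification of the B-brane category with $D^b(\widehat{X}^{\tse}, [\cdot]_{\tse})$ recalled in the introduction. The \emph{geometric route}, closer in spirit to \cite{caldararuThesis,Caldararu2002}, would construct the equivalence explicitly as a relative Fourier-Mukai transform with kernel a twisted universal Poincaré sheaf on the analytic fiber product $\widehat{X}^{\tseOne} \times_B \widehat{X}^{\tseTwo}$; the Brauer classes $([\tseTwo]_{\tseOne},[\tseOne^{-1}]_{\tseTwo})$ appear precisely as the obstruction to the existence of a genuine, untwisted universal sheaf, and convolution against such a kernel is the natural candidate equivalence.

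\textbf{Main obstacle.} The principal difficulty in either route is the singular and generically non-Kähler nature of $\widehat{X}^{\tse}$ whenever the relevant torsion classes are non-trivial (comment~\ref{comment:nonKaehlerResolution} in Section~\ref{sec:almostgeneric}). The categorical route requires a rigorous theory of B-brane parallel transport between MUM points on such non-Kähler analytic Calabi-Yau threefolds, together with a rigorous bridge from equality of topological string partition functions at two MUM points to an equivalence of twisted derived categories; both pieces are well established in the smooth projective setting but remain conjectural here. The geometric route would require extending Căldăraru's relative Poincaré sheaf across the $I_2$-fibers predicted by Conjecture~\ref{conj:I2fibers}, which sit over the nodes of the discriminant and are distributed asymmetrically between $\widehat{X}^{\tseOne}$ and $\widehat{X}^{\tseTwo}$ according to the sets $S_{\Delta,\tseOne}$ and $S_{\Delta,\tseTwo}$; controlling the Fourier-Mukai kernel globally across all such nodal configurations, and checking that it intertwines the Brauer twists $[\tseTwo]_{\tseOne}$ and $[\tseOne^{-1}]_{\tseTwo}$ correctly, seems to me the main technical hurdle.
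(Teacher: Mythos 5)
Your categorical route is essentially the paper's own argument for this conjecture (which, being a conjecture, is only motivated physically, not proved): reduce to the generators $S$ and $T$, note that $T$ is tautological since $[\tseTwo]_{\tseOne}=[\tseOne\tseTwo]_{\tseOne}$, identify the partition functions on $X^{\tseOne}_{\tseTwo}$ and $X^{\tseTwo}_{\tseOne^{-1}}$ as $S$-related components of a vector-valued modular object, place the corresponding MUM points in one stringy K\"ahler moduli space, and realize the equivalence by B-brane transport; your assessment of where rigor breaks down in the non-K\"ahler setting also matches the paper's caveats. The only ingredient you omit is the stabilization condition built into the conjecture's statement — one first passes to a partial smoothing so that $S^0_{\tseOne}\cap S^0_{\tseTwo}=\emptyset$ (equivalently $\Gamma^0=\langle\tseOne,\tseTwo\rangle$), which guarantees every exceptional curve sees a non-trivial B-field holonomy and the transport avoids singular worldsheet theories, and one then checks this condition is preserved under the $\SLtwoZ$ action.
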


When $\tseOne$ generates $\Gamma^0$, $\tseTwo=\tseOnePrime=1$ and $\tseTwoPrime=\tseOne$, the claim reduces to~\cite[Theorem 5.1]{Caldararu2002}.
On the other hand, if $\tseOne=\tseOnePrime=1$, $\tseTwo$ generates $\Gamma^0$ and $\tseTwoPrime=\tseTwo^k$ for some $k$ with $\text{\normalfont gcd}(k,\text{\normalfont ord}(\tseTwo))=1$, the claim reduces to~\cite[Theorem 6.1]{Caldararu2002}.
Otherwise, it implies that the results from~\cite{Caldararu2002,Donagi:2003av} generalize to genus one fibered threefolds without a section that have themselves isolated nodes.

\paragraph{Complete invariants of generic fibrations}
Let us now introduce an invariant that characterizes a generic genus one fibration:
\begin{definition}
Let $B$ be a weak del Pezzo surface, $m\in\INnz$ and $\mathfrak{l}\in H^2(B,\IZ)/mH^2(B,\IZ)$.
Let $S$ be a function
\begin{align}
    S:\{1,\ldots,m-1\}\rightarrow \mathbb{N}\,,
\end{align}
with $S(q)=S(m-q)$.
We say that $(B,m,\mathfrak{l},S\,)$ is a generic Calabi-Yau threefold genus one fibration datum.
\end{definition}
Based on the observations from~\cite{Dierigl:2022zll,Dierigl:wip}, we make the following conjecture:
\begin{conjecture}
    Let $(B,m,\mathfrak{l},S\,)$ be a generic Calabi-Yau threefold genus one fibration datum.
    There exists at most one family of smooth Calabi-Yau threefolds $X$ that exhibits a genus one fibration structure $\pi:X\rightarrow B$ such that the following holds:
    \begin{itemize}
        \item The relative Jacobian $X^0$ of $X$ over $B$ satisfies $\Sh_B(X^0)=\IZ_m$.
        \item There exists an $m$-section $J_0$ on $X$ such that
        \begin{align}
            \mathfrak{l}=\frac12\left(m^2\,c_1(B)+\pi_*\left(J_0\cdot J_0\right)\right)\,.
        \end{align}
        \item With $N_\chi$ as in Conjecture~\ref{conj:topologyOfSmoothing} and $\tse$ the element in $\Sh_B(X^0)$ corresponding to $X$, one has
        \begin{align}
            N_\chi=S\left(q_{\tse}(\chi)\right)\,,\quad \chi\in\IZ_m\,.
        \end{align}
    \end{itemize}
    \label{conj:datum}
\end{conjecture}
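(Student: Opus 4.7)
The plan is to reduce the uniqueness claim to a topological classification problem tractable by Wall's theorem, and then to bridge from the resulting diffeomorphism to deformation equivalence via analysis of the Weierstra\ss{} moduli space. First, suppose $X_1, X_2$ are two smooth Calabi-Yau threefolds realizing the datum $(B,m,\mathfrak{l},S)$, with corresponding generators $\tse_1,\tse_2$ of $\Sh_B(X_i^0)=\IZ_m$. Since $\tse_i$ generates all of $\IZ_m$, no exceptional fiber of a small resolution of $X^{\tse_i}$ can carry a trivial torsion class, so the $S_{\Delta,\tse_i}=S_\Delta$ case of Conjecture~\ref{conj:I2fibers} ensures that $X^{\tse_i}$ is smooth and equals $X_i=\widetilde{X}^{\tse_i}$. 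Conjecture~\ref{conj:topologyOfSmoothing} then applies directly: the Euler characteristic~\eqref{eqn:conj3euler}, triple intersections~\eqref{eqn:conj3ints}--\eqref{eqn:twistedKM}, and the pairings $D_\lii\cdot c_2(TX_i)$ become explicit functions of $(B,m,\mathfrak{l},S)$. Indeed, the basis change~\eqref{eqn:conj3changeOfBasis} only involves $\pi_*(J_0\cdot J_0)=2\mathfrak{l}-m^2 c_1(B)$, which is encoded in $\mathfrak{l}$; the sums in~\eqref{eqn:twistedKM} become $\sum_{q=1}^{m-1}q^2(m-q)^2 S(q)$ and $\sum_{q=1}^{m-1}q(m-q) S(q)$; and $\#(S_\Delta\setminus S_{\Delta,\tse_i})=0$ in~\eqref{eqn:conj3euler}.

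Next, I would invoke Wall's classification theorem for simply-connected closed spin 6-manifolds, in the refinement by Jupp and Zhubr covering nontrivial $H^3$. Each $X_i$ is simply connected by the lemma on fundamental groups of almost generic fibrations, spin because $c_1=0$ forces $w_2=0$, and has torsion-free $H^2$ and $H^3$: Conjecture~\ref{conj:TSandTorsion}\ref{en:Gamma} gives $\Gamma^{\tse_i}=\IZ_m/\langle\tse_i\rangle=0$, so $\tors{H^3(X_i,\IZ)}=0$ and dually $\tors{H_2(X_i,\IZ)}=0$; together with $\pi_1=0$ this suffices via the universal coefficient theorem. Wall's invariants for such a manifold—the lattice $H^2$ with its triple cup product, the linear form $p_1=-2c_2$, and the rank of $H^3$, which is $2h^{2,1}=2(h^{1,1}-\chi/2)$ with $h^{1,1}=b_2(B)+1$ by Shioda--Tate--Wazir—are all determined by the datum through the first step. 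Hence $X_1$ and $X_2$ are diffeomorphic.

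The principal obstacle is the last step: promoting diffeomorphism to deformation equivalence within the class of genus one fibered Calabi-Yau threefolds over $B$. The natural route is through the Weierstra\ss{} parameter space of the Jacobian. Fibrations with $\Sh_B(X^0)\supseteq\IZ_m$ correspond to an explicit locus $W_m$ in the space of coefficients $(f,g)\in\Gamma(B,\omega_B^{-4})\oplus\Gamma(B,\omega_B^{-6})$, cut out by polynomial identities of the kind appearing in the discrete-symmetry constructions of~\cite{Morrison:2014era,Cvetic:2015moa,Klevers:2014bqa,Oehlmann:2019ohh}. The invariant $\mathfrak{l}$ selects a stratum encoding the height pairing of the multisection, and the function $S$ prescribes the torsion class of the exceptional curve at each of the $\tfrac{1}{2}\sum_{q=1}^{m-1}S(q)$ nodes of the discriminant. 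One must then establish irreducibility of this stratum $W_{m,\mathfrak{l},S}$ and show that its geometric monodromy on node positions realizes every permutation compatible with the labeling $S$, so that the corresponding total spaces $X$ sweep out a single connected component of the relevant moduli space. At the level of generality of arbitrary $B$ and $m$, this irreducibility is the main technical difficulty, and is likely to require the global toric or GLSM-type constructions announced in~\cite{Dierigl:2022zll,Dierigl:wip}.
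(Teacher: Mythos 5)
First, note that the paper does not prove this statement: it is presented as a conjecture, motivated only by the physical expectation (following \cite{Dierigl:2022zll,Dierigl:wip}) that a six-dimensional supergravity with finite Abelian gauge group is uniquely determined, up to vevs, by its string charge lattice, its charged spectrum and the discrete Green--Schwarz coefficient. There is therefore no proof in the paper to compare against, and your text has to be judged as a proof program. As such it does not close the conjecture.

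The first two steps are sensible but conditional: every quantitative input (Euler characteristic, the cubic form and $c_2$-pairings via the Chern--Simons matching, torsion-freeness of $H^2$ and $H^3$) is extracted from Conjectures~\ref{conj:TSandTorsion}--\ref{conj:topologyOfSmoothing}, which are themselves unproven, so at best you obtain a statement conditional on the rest of the paper. Granting those, the Wall/Jupp/Zhubr step plausibly yields that $X_1$ and $X_2$ are diffeomorphic, provided you also argue that the integral cubic forms can be matched --- the datum only fixes $\pi_*(J_0\cdot J_0)$ modulo $2m\,H^2(B,\IZ)$, so you must first shift $J_0$ by a vertical divisor to align the integral bases, a point you gloss over. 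The decisive problem is the last step: the conjecture asserts uniqueness of the \emph{family}, i.e.\ deformation equivalence within the class of genus one fibrations over $B$, and diffeomorphic Calabi--Yau threefolds need not be deformation equivalent. Your final paragraph correctly identifies this but then merely restates the conjecture in different clothing: irreducibility of the stratum $W_{m,\mathfrak{l},S}$ of Weierstra{\ss} models, together with control of the monodromy action on $\Sh_B(X^0)=\IZ_m$ over it (which is needed to conclude that the torsors $X^{\tse}$ themselves, and not only their Jacobians, sweep out a single family), is essentially equivalent to the statement to be proved, and no argument is offered for it. So the proposal is an honest reduction of the conjecture to a connectedness problem, not a proof; the genuinely hard content is exactly the step you defer.
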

Physically, this corresponds to the claim that a six-dimensional supergravity with a finite Abelian gauge group is, up to choices of vacuum expectation values, uniquely specified by the string charge lattice, the spectrum of charged hypermultiplets and the discrete Green-Schwarz anomaly coefficient that has been introduced in~\cite{Dierigl:2022zll}.

We say that a generic Calabi-Yau threefold genus one fibration datum is \textit{admissible} if there exists a fibration $X$ as in Conjecture~\ref{conj:datum}.
In~\cite{Dierigl:2022zll} it has been observed that for $B=\mathbb{P}^2$ and $m=3$ a necessary condition for the datum $(\mathbb{P}^2,3,\mathfrak{l},S)$ to be admissible is that
$S(1)=0\text{ mod }9$ and $\mathfrak{l}=0$ or $S(1)=6\text{ mod }9$ and $\mathfrak{l}\in\{1,2\}$.
Determining the necessary and sufficient conditions for a generic Calabi-Yau threefold genus one fibration datum to be admissible is an open problem. One of the motivations for this paper is to lay the groundwork for such a classification.

\section{Discrete holonomies in F-theory and genus one fibrations}
\label{sec:discreteHolonomiesAndGenusOne}
We will now apply our discussion from Section~\ref{sec:kkandhol} to circle compactifications of six-dimensional F-theory vacua with discrete Wilson lines and the duality with M-theory.
We assume some familiarity with F-theory on the side of the reader and refer to~\cite{Weigand:2018rez} for a review.

Let us first recall some notation.
We will denote by ${\rm F}[X]$ the effective theory that is associated to a genus one fibered Calabi-Yau threefold $X$, leaving the particular choice of fibration implicit.
Similarly, we use ${\rm M}[X]$ to refer to the effective theory that is associated to the compactification of M-theory on $X$.
Given an effective theory ${\rm F}[X]$ with finite Abelian gauge group $G=\Gamma$, we denote by ${\rm F}[X][S^1_\tse]$ the compactification of ${\rm F}[X]$ on a circle with twist $\tse\in \Gamma$.
We will focus on six-dimensional vacua that are almost generic in the sense of Definition~\ref{def:almostGenericFtheory}.

Standard properties of F-theory imply that if $X^0$ is an almost generic elliptic Calabi-Yau threefold, in the sense of Definition~\ref{def:almostgeneric}, then ${\rm F}[X^0]$ is an almost generic F-theory vacuum without vector multiplets.
The gauge group of the effective theory is then expected to be the Tate-Shafarevich group of the fibration~\cite{Braun:2014oya},\footnote{More generally, as was pointed out in~\cite[Footnote 15]{Bhardwaj:2015oru}, the F-theory gauge group should be identified with the Weil-Ch\^{a}telet group of the fibration.
The geometries that represent elements of the Weil-Ch\^{a}telet group but not of the Tate-Shafarevich group have so-called multiple fibers and/or non-flat fibers.
As we will further discuss in Section~\ref{sec:genericity}, we expect those geometries to only arise for F-theory vacua that are at special points of the tensor branch.
In the almost generic case, the corresponding fibrations should be such that the Weil-Ch\^{a}telet group and the Tate-Shafarevich group coincide.
}
\begin{align}
    \Gsd=\Sh_B(X^0)\,,
\end{align}
and the twisted circle compactification ${\rm F}[X^0][S^1_\tse]$ is expected to be dual to ${\rm M}[X^\tse]$, where $X^\tse$ is the genus one fibration that represents the element $\gamma\in\Sh_B(X^\tse)$.

The question that we want to answer is the following:
\begin{questionnn}
Consider a six-dimensional F-theory vacuum ${\rm F}[X^0]$ with finite Abelian gauge group
\begin{align*}
    \Gsd=\Gamma^0\,,
\end{align*}
and the twisted circle compactification ${\rm F}[X^0][S^1_\tse]$ for some element $\tse\in\Gamma^0$.
Assuming that there exists an M-theory compactification ${\rm M}[X^\tse]$ on a Calabi-Yau threefold $X^\tse$ such that ${\rm M}[X^\tse]$ is dual to ${\rm F}[X^0][S^1_\tse]$, what are the properties of $X^\tse$?
\end{questionnn}
We will carefully study the properties of the twisted circle compactification ${\rm F}[X^0][S^1_\tse]$ in order to answer this question and to physically argue for the Conjectures~\ref{conj:TSandTorsion},~\ref{conj:I2fibers} and~\ref{conj:topologyOfSmoothing} from Section~\ref{sec:mathresults}.

As an important part of our argument, we will adopt the following general assumption about the behavior of M-theory on nodal Calabi-Yau threefolds from~\cite{Katz:2022lyl,Katz:2023zan,Schimannek:2025cok}:
\begin{claim}
    The gauge group $\Gfd$ of the effective theory from M-theory compactified on a projective Calabi-Yau threefold $X$ with at most isolated nodal singularities is given by
    \begin{align}
        \Gfd=\text{\normalfont Hom}\left(H_2(\widehat{X},\mathbb{Z}),{\rm U}(1)\right)\,,
    \end{align}
    where $\widehat{X}$ is any, possibly non-K\"ahler, analytic small resolution of $X$.
    \label{claim:mtheoryNodalCYgaugeGroup}
\end{claim}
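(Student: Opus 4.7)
}
The plan is to reduce the statement to the known structure of M-theory on a smooth (but possibly non-K\"ahler) complex threefold, by using the small resolution $\widehat{X}$ as a ``topological regulator'' for the singular geometry $X$. The claim splits naturally into three assertions: (i) that $\widehat{X}$ can be used to compute the gauge group of ${\rm M}[X]$ at all, (ii) that on $\widehat{X}$ the gauge group is $\text{Hom}(H_2(\widehat{X},\mathbb{Z}),\U(1))$, and (iii) that this answer is independent of the choice of small resolution.

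For (ii), which I would address first because it is the most concrete, I would argue as follows. The continuous part of the gauge group arises in the standard way from the Kaluza--Klein reduction of the M-theory three-form $C_3$ on a basis of harmonic two-forms on $\widehat{X}$; this produces a factor $\U(1)^{b_2(\widehat{X})}=\text{Hom}(H_2(\widehat{X},\IZ)_{\rm free},\U(1))$. For the discrete part, I would invoke the topological classification of flat $C_3$-field configurations: a flat $C_3$ modulo large gauge transformations is an element of $H^3(\widehat{X},\U(1))$, whose torsion part is $\tors{H^3(\widehat{X},\IZ)}$ by the long exact sequence associated to $0\to\IZ\to\IR\to\U(1)\to 0$. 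By the universal coefficient theorem, $\tors{H^3(\widehat{X},\IZ)}\simeq\tors{H_2(\widehat{X},\IZ)}$ (using $H_*(X,\IZ)$ is finitely generated for $X$ a projective variety with compact analytic resolution), so the total gauge group is
\begin{align*}
\U(1)^{b_2(\widehat{X})}\times\tors{H_2(\widehat{X},\IZ)}=\text{Hom}(H_2(\widehat{X},\IZ),\U(1))\,.
\end{align*}
The charges of M2-brane states wrapping a curve class $\beta\in H_2(\widehat{X},\IZ)$ under this gauge group are computed by the natural pairing $\text{Hom}(H_2,\U(1))\times H_2\to\U(1)$, so torsion M2-branes provide the massive matter charged under the discrete factors.

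For (iii), I would note that any two analytic small resolutions of $X$ differ by a sequence of Atiyah flops over the nodes. A flop replaces an exceptional curve $C$ by $-C$; since we only care about $H_2(\widehat{X},\IZ)$ as an abstract abelian group, this does not alter $\text{Hom}(H_2,\U(1))$. Moreover, in M-theory the volume modulus of an exceptional curve is not a K\"ahler parameter (because $\widehat{X}$ is generally not K\"ahler -- see comment~\ref{comment:nonKaehlerResolution} above), but in the singular limit relevant to ${\rm M}[X]$ the curve has vanishing volume and the flop transition is smooth, so all resolutions yield the same low-energy physics.

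The hard part, and the step I would approach last, is (i): justifying that ${\rm M}[X]$ is well-defined on the singular space $X$ and gives the same gauge group as ${\rm M}[\widehat{X}]$ in the limit where the exceptional curves shrink. This is subtle precisely because $\widehat{X}$ is typically non-K\"ahler, so one cannot simply dial a K\"ahler modulus from finite to zero volume along a supersymmetric family. I would approach this by considering a smooth deformation $\widetilde{X}$ of $X$ (whose existence is guaranteed by Theorem~\ref{thm:NamikawaSteenbrinkDeformation}), analyzing ${\rm M}[\widetilde{X}]$ where the standard K\"ahler-moduli analysis is available, and then interpreting the singular limit ${\rm M}[X]$ as the point on the Coulomb/Higgs branch structure where the deformation parameters vanish; the gauge group of $\widehat{X}$ should then be recovered as the group that is left unbroken along the ``would-be resolved'' direction that is obstructed by non-K\"ahlerness. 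A complementary perspective, which I would use as a cross-check, is to match ${\rm M}[X]$ against the twisted circle reduction ${\rm F}[X^0][S^1_\tse]$ predicted by the general analysis of Section~\ref{sec:kkandhol} and verify that the gauge group $\U(1)\times\Gamma^0/\langle\tse\rangle$ coincides with $\text{Hom}(H_2(\widehat{X}^\tse,\IZ),\U(1))$ modulo the Kaluza--Klein $\U(1)$, thereby closing the logical loop via duality.
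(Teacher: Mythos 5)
The first thing to say is that the paper does not prove this statement at all: it is deliberately labeled a \emph{Claim} rather than a lemma, is introduced with the words ``we will adopt the following general assumption \ldots from~\cite{Katz:2022lyl,Katz:2023zan,Schimannek:2025cok}'', and immediately afterwards the authors flag exactly the point you call (i) as the open subtlety --- ``$\widehat{X}$ is a non-projective resolution of the singular Calabi-Yau variety $X$; its status as a compactification manifold hence needs to be clarified.'' So there is no in-paper proof to compare against. Your steps (ii) and (iii) reproduce the standard lore that underlies the claim and are consistent with the paper's usage: reduction of $C_3$ gives $\U(1)^{b_2(\widehat{X})}$, the discrete factor is $\mathrm{Hom}(\tors{H_2(\widehat{X},\IZ)},\U(1))\simeq\tors{H^3(\widehat{X},\IZ)}$ (your route through flat $C_3$ configurations lands on $\tors{H^4}$ and needs Poincar\'e duality on the closed six-manifold $\widehat{X}$ to convert, but the groups agree), and $H_2(\widehat{X},\IZ)$ as an abstract group is insensitive to which of the $2^n$ small resolutions one picks.

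The genuine gap is your step (i), which you rightly identify as the hard part but do not close. The deformation route fails as stated: the smooth deformation $\widetilde{X}$ of Theorem~\ref{thm:NamikawaSteenbrinkDeformation} sits on the Higgs branch, where the discrete gauge symmetry is already broken --- the paper computes $\tGfd^\tse\simeq\U(1)^{T+1}\times\U(1)$ with no discrete factor --- so the gauge group of ${\rm M}[X]$ cannot be obtained as a limit of that of ${\rm M}[\widetilde{X}]$; the entire content of Claim~\ref{claim:mtheoryNodalCYgaugeGroup} is the assertion about which symmetry is \emph{restored} at the singular point, and that is precisely what a deformation argument cannot see. Your proposed F/M-duality cross-check is circular within this paper's logic: the authors combine Claim~\ref{claim:mtheoryNodalCYgaugeGroup} with the field-theoretic result $\Gfd^\tse=\U(1)^{T+1}\times\bigl(\U(1)\times\Gamma^0\bigr)/\langle(\e[-\tfrac{1}{m_\tse}],\tse)\rangle$ to \emph{deduce} $\tors{H_2(\widehat{X}^\tse,\IZ)}=\ker\tse$, i.e.\ Conjecture~\ref{conj:TSandTorsion}; one cannot then invoke the same duality to verify the claim without an independent computation of that torsion. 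A genuine argument would have to address the non-K\"ahler compactification directly (as the cited references attempt), and that is exactly what the paper declines to do.
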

\begin{claim}
    With $X$ and $\widehat{X}$ as in Claim \ref{claim:mtheoryNodalCYgaugeGroup}, the holomorphic curves $C \subset \widehat{X}$ give rise to half-hypermultiplets of charges $\pm [C] \in H_2(\widehat{X},\IZ)$ in $\M[X]$.
    \label{claim:mtheoryNodalCYmatter}
\end{claim}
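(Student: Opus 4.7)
The plan is to combine the standard M-theory BPS analysis of wrapped M2-branes with the non-Kähler geometry licensed by Claim~\ref{claim:mtheoryNodalCYgaugeGroup}. Two inputs are already in hand: the five-dimensional gauge group has been identified as $\Gfd=\mathrm{Hom}(H_2(\widehat{X},\IZ),\U(1))$, and by remark~\ref{comment:nonKaehlerResolution} the exceptional curves resolving the nodes of $X$ represent torsion classes in $H_2(\widehat{X},\IZ)$. It therefore remains to construct the M2-brane BPS states on the curves $C\subset \widehat{X}$, match their charges to $[C]$, and fix the ``half-hypermultiplet'' accounting.

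I would first fix conventions in the smooth Kähler case as a baseline. A rigid holomorphic $\mathbb{P}^1$ with normal bundle $\mathcal{O}(-1)\oplus\mathcal{O}(-1)$ wrapped by an M2-brane yields a 5d BPS hypermultiplet whose central charge is the period of the complexified Kähler class over $C$ and whose gauge charge under $\Gfd$ is the Poincaré dual of $[C]$; CPT pairs it with an M2-brane on the opposite orientation of $C$ carrying charge $-[C]$. The uniform convention is to regard this as two half-hypermultiplets of charges $[C]$ and $-[C]$, so that the plural ``$\pm[C]$'' in the statement is exactly this pairing; when $2[C]=0$, the pair degenerates to a single genuine half-hypermultiplet in a pseudo-real representation, which is what allows such matter to be massless at a $\IQ$-factorial terminal point.

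The same construction can then be transported to non-Kähler $\widehat{X}$. Although the exceptional $\mathbb{P}^1$ cannot be assigned positive volume inside any Kähler cone, the wrapped M2-brane still defines a BPS state whose mass is controlled by the flat $C_3$-period over the corresponding 3-chain; at the singular locus of $X$ this period vanishes, producing a massless multiplet charged under the torsion part of $H_2(\widehat{X},\IZ)$ and hence only under the discrete factor of $\Gfd$. Because its charge lies entirely in the torsion subgroup, the state persists on the generic Coulomb branch of $\M[X]$, matching the absence of any Kähler small resolution that could have lifted it.

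The main obstacle is justifying the BPS counting rigorously when $\widehat{X}$ carries no Kähler form, since the usual Hilbert-scheme and Gopakumar-Vafa arguments presume one. The cleanest workaround is to invoke the duality chain developed in Section~\ref{sec:ellipticGeneraAndTopologicalStrings}: a further circle reduction relates $\M[X]$ to type IIA on $\widehat{X}$ with a torsional $B$-field, where BPS states are counted by twisted coherent sheaves, and the structure sheaf $\mathcal{O}_C$ produces precisely the object demanded by the claim with its charge read off from the pairing of $[C]$ against the torsional $B$-field. The delicate step is to establish that this dualization commutes with the Kähler-to-non-Kähler specialization and that the induced discrete phases align with the charges in $\Gfd$ prescribed by Claim~\ref{claim:mtheoryNodalCYgaugeGroup}; this compatibility is the crux on which the claim ultimately rests.
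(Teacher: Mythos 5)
The paper does not prove this statement: it is labelled a ``Claim'' precisely because it is adopted as a working assumption, imported from the cited references, and the only justification given is the single remark that the hypermultiplets originate from M2- and anti-M2-branes wrapping the holomorphic curves $C$, followed by the explicit caveat that the status of the non-projective resolution $\widehat{X}$ as a compactification manifold ``needs to be clarified.'' Your baseline paragraph (rigid $\mathbb{P}^1$ with normal bundle $\mathcal{O}(-1)\oplus\mathcal{O}(-1)$, CPT pairing giving the charges $\pm[C]$, the degenerate case $2[C]=0$ yielding a single half-hypermultiplet in a real representation) reproduces exactly this motivation and is consistent with the paper's bookkeeping $H_\chi=\tfrac12 N_\chi$ when $\chi=-\chi$. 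To that extent you have correctly identified both the content of the claim and the point at which it ceases to be a theorem.

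Where your proposal goes beyond the paper is in trying to close the non-K\"ahler gap by dualizing to Type~IIA with a torsional B-field and counting twisted sheaves. Be aware that within the logical architecture of this paper that route is circular: the identification of the B-brane category on $X^{\tseOne}_{\tseTwo}$ with twisted sheaves on $\widehat{X}^{\tseOne}$, the torsion-refined invariants, and Claims~\ref{claim:ellipticGtopZ}--\ref{claim:cusps} are all downstream of (and partly motivated by) Claims~\ref{claim:mtheoryNodalCYgaugeGroup} and~\ref{claim:mtheoryNodalCYmatter}, so they cannot be used to establish Claim~\ref{claim:mtheoryNodalCYmatter} independently. You acknowledge this yourself when you call the compatibility of the dualization with the K\"ahler-to-non-K\"ahler specialization ``the crux on which the claim ultimately rests'' --- that crux is the same open issue the paper flags, and neither your argument nor the paper resolves it. So the honest verdict is: your proposal is a faithful and somewhat more detailed rendition of the physical motivation the paper offers, not a proof, and no proof exists in the paper to compare it against.
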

In compactifications on smooth Calabi-Yau threefolds, both Claim \ref{claim:mtheoryNodalCYgaugeGroup} and Claim \ref{claim:mtheoryNodalCYmatter} are well-understood. In particular, the origin of the hypermultiplets in Claim \ref{claim:mtheoryNodalCYmatter} are M2 and anti-M2 branes wrapping the holomorphic curves $C$. The subtlety here is that $\widehat{X}$ is a non-projective resolution of the singular Calabi-Yau variety $X$; its status as a compactification manifold hence needs to be clarified. 

We will see that Claim~\ref{claim:mtheoryNodalCYgaugeGroup} implies that $X^\tse$ is in general singular, namely whenever $\langle \tse\rangle$ is a proper subgroup of $\Gamma^0$.
Since many relevant geometric and physical results are only known to hold in the smooth case, some of our arguments will rely on the partially Higgsed theory ${\rm F}[J(\widetilde{X}^\tse)]$, where $\widetilde{X}^\tse$ is a generic smooth deformation of $X^\tse$.
This corresponds to the sublocus of the Higgs branch where only the subgroup $\langle\tse\rangle\subset\Gsd$ remains unbroken.
The relationship between the theories is summarized in Figure~\ref{fig:twoPathsFM}.

In Section~\ref{sec:genericity} we will discuss under which circumstances an almost generic F-theory vacuum without massless vector multiplets can be obtained as a compactification ${\rm F}[X^0]$ on  an almost generic elliptic Calabi-Yau threefold $X^0$.

\begin{figure}
\centering
\begin{tikzpicture}[
    boxHigh/.style={rectangle, rounded corners, text centered, draw, minimum width=4cm, align=center, minimum height=4cm},
    boxLow/.style={rectangle, rounded corners, text centered, draw, minimum width=4cm, align=center, minimum height=5cm},
    header/.style={label={[rectangle, fill=white, draw, anchor=center, minimum width=2cm, name=\tikzlastnode-header]north:{#1}}},
    node distance=3cm and 5cm 
]

\node[boxHigh, header = {6d Theory ${\rm F}[X_0]$}] (upperLeft) {
    \phantom{block space}\\
    \textbf{Gauge group} $\Gsd$\\[.2em]
    $\Gamma^0=\Sh_B(X^0)$\\[.4em]
    \textbf{Charge lattice} $\Lambda_{6d}$\\[.2em]
    $\begin{aligned}
    \widehat{\Gamma}^0=\text{Hom}(\Gamma^0,\U(1))
    \end{aligned}$\\
};
\node[boxHigh, header = {6d Theory ${\rm F}[J(\widetilde{X}^\tse)]$}] (upperRight) [right=of upperLeft] {
    \phantom{block space}\\
    \textbf{Gauge group} $\Gsd^\tse$\\[.2em]
    $\langle \tse\rangle\subset\Gamma^0$\\[.4em]
    \textbf{Charge lattice} $\Lambda_{6d}^\tse$\\[.2em]
    $\mathbb{Z}_{m_\tse}=\text{Hom}(\langle\tse\rangle,\U(1))$
};
\node[boxLow, header = {5d Theory ${\rm M}[X^\tse]$}] (lowerLeft) [below=of upperLeft] {
    \phantom{block space}\\ 
    \textbf{Gauge group} $\Gfd^\tse$\\[.2em]
    $\U(1)^{T+1}\times\left(\quotient{\U(1)\times \Gamma^0}{\langle(\e[-\tfrac{1}{m_\tse}],\tse)\rangle}\right)$ \\[.4em]
    \textbf{Charge lattice} $\Lambda_{5d}^\tse$\\[.2em]
    $\begin{aligned}
        \text{ker}\left(\vec{0},\e[-\tfrac{1}{m_\tse}],\tse\right)\subset&\mathbb{Z}^{T+1}\times \mathbb{Z}\times \widehat{\Gamma}^0\\
    \end{aligned}$\\[.4em]
    \textbf{KK-tower from} $\chi\in\widehat{\Gamma}^0$\\[.2em]
    $(\vec{0},\,nm_\tse+q_\tse(\chi),\,\chi)\in \Lambda_{5d}^\tse,\,\,n\in\mathbb{Z}$\\
};
\node[boxLow, header = {5d Theory ${\rm M}[\widetilde{X}^\tse]$}] (lowerRight) [below=of upperRight] {
    \phantom{block space}\\
    \textbf{Gauge group} $\tGfd^{\tse}$\\[.2em]
    $\U(1)^{T+1}\times \U(1)$\\[.4em]
    \textbf{Charge lattice} $\widetilde{\Lambda}_{5d}^\tse$\\[.2em]
    $\begin{aligned}
        \mathbb{Z}^{T+1}\times\mathbb{Z}
    \end{aligned}$\\
    \textbf{KK-tower from} $\chi\in\widehat{\Gamma}^0$\\[.2em]
    $(\vec{0},\,nm_\tse+q_\tse(\chi))\in \widetilde{\Lambda}_{5d}^\tse,\,\,n\in\mathbb{Z}$\\
};

\coordinate (StartHorizontalLeft) at ($(upperLeft.east)+(0.5cm,0)$);
\coordinate (EndHorizontalLeft) at ($(upperRight.west)-(0.5cm,0)$);

\coordinate (StartHorizontalRight) at ($(lowerLeft.east)+(0.5cm,0)$);
\coordinate (EndHorizontalRight) at ($(lowerRight.west)-(0.5cm,0)$);

\coordinate (StartVerticalLeft) at ($(upperLeft.south)-(0,0.5cm)$);
\coordinate (EndVerticalLeft) at ($(lowerLeft.north)+(0,0.5cm)$);

\coordinate (StartVerticalRight) at ($(upperRight.south)-(0,0.5cm)$);
\coordinate (EndVerticalRight) at ($(lowerRight.north)+(0,0.5cm)$);

\coordinate (text6d) at ($(upperLeft.west)-(0.8cm,0)$);
\coordinate (text5d) at ($(lowerLeft.west)-(0.8cm,0)$);

\draw[->,thick] (StartHorizontalLeft) -- (EndHorizontalLeft) node[midway, above] {Higgs branch} node[midway, below, align=center] {
    vev for scalar fields\\
    in representations $\chi\in\widehat{\Gamma}^0$\\
    with $\tse(\chi)=1$
};
\draw[->, thick] (StartHorizontalRight) -- (EndHorizontalRight) node[midway, above] {Higgs branch} node[midway, below] {
$\begin{aligned}
    \Lambda_{5d}^\tse&\rightarrow \widetilde{\Lambda}_{5d}^\tse\\
    (\vec{t},n,\chi)&\mapsto(\vec{t},n)
\end{aligned}$
};

\draw[->,thick] (StartVerticalLeft) -- (EndVerticalLeft) node[midway, left, align=center] {$S^1_\tse$} ;
\draw[->,thick] (StartVerticalRight) -- (EndVerticalRight) node[midway, right, align=center] {$S^1_\tse$};

\end{tikzpicture}
\caption{The relationship between the F- and M-theory compactifications that are associated to an almost generic elliptic Calabi-Yau threefold $X^0$.} \label{fig:twoPathsFM}
\end{figure}

\subsection{Six-dimensional F-theory vacua}
Let us first summarize some general properties of six-dimensional almost generic F-theory vacua and the corresponding $\mathcal{N}=1$ effective supergravities in order to fix some of our notation.

The massless field content consists of one gravity multiplet, $T$ tensor multiplets, $V$ vector multiplets and some number of hypermultiplets.
We will focus on theories with $V=0$.
Each hypermultiplet can be thought of as being composed of two charge conjugate half-hypermultiplets and therefore is associated to an element of the set $[\chi]\in\widehat{\Gamma}/(\chi\sim-\chi)$.
It will be convenient to introduce $N_\chi$ to denote the number of half-hypermultiplets that transform in the representation $\chi\in\widehat{\Gamma}$, such that $N_\chi=N_{-\chi}$. The number $H_\chi$ of hypermultiplets of charge $\pm\chi$ is then
\begin{align}
    H_\chi=\left\{\begin{array}{cl}
    \frac12N_\chi&\text{ if }\chi=-\chi\\
    N_\chi&\text{ else}
    \end{array}\right.\,,
\end{align}
and the total number of hypermultiplets is
\begin{align} \label{eq:HthroughNchi}
	H=\frac12\sum_{\chi\in\widehat{\Gamma}}N_\chi\,.
\end{align}

The theory contains $T+1$ tensor fields: one self-dual tensor in the gravity multiplet (which will be assigned the index $0$) and one anti-self dual tensor in each of the $T$ tensor multiplets (indexed from $1$ through $T$ in the following).
They couple to (anti-)self-dual tensionless strings and the $SO(1,T)$-invariant intersection form on the string charge lattice is denoted by $\Omega^{\sia\sib}$ with inverse $\Omega_{\sia\sib}$, where $\alpha,\beta=0,\ldots,T$.

The six-dimensional effective action contains a Green-Schwarz term~\cite{Green:1984bx,GREEN1984117,Sagnotti:1992qw}
\begin{align}
	S^{\text{GS}}_{6\rm d}=-\frac14 a^\sia \Omega_{\sia\sib}B^\sib\text{tr}\,R^2\,,
\end{align}
coupling the tensor fields $B^\beta$ to the Ricci tensor $R$. This cancels the gravitational anomaly if the coefficients $a^\alpha$ satisfy $a^\sia \Omega_{\sia\sib}a^\sib=9-T$ and
\begin{align}
	H+29T=273\,.
    \label{eqn:gravitationalAnomaly}
\end{align}
In general, the action also contains topological terms that are necessary to cancel global anomalies.
For the case $\Gsd=\mathbb{Z}_3$, these have recently been studied in~\cite{Dierigl:2022zll}, see also~\cite{Monnier:2018nfs,Dierigl:2025rfn,Dierigl:wip}, but the details will not be relevant for our analysis.

\subsection{Gauge groups and charge lattices of the twisted compactifications}
\label{sec:GandLambdaTwisted}

We now want to study the gauge group and the charge lattice of the twisted circle compactification of an almost generic F-theory vacuum with $V=0$ and gauge group $\Gsd$.

First note that the gauge group of the untwisted circle compactification ${\rm F}[X^0][S^1_0]$ is
\begin{align}
    \Gfd^0=\U(1)^{T+1}\times\U(1)\times \Gsd\,.
\end{align}
The factor $\U(1)^{T+1}$ arises from dualizing the (anti-)self-dual tensor fields and the additional $\U(1)$ is the Kaluza-Klein gauge symmetry that one obtains after reducing the metric along the circle.

More generally, from our discussion in Section~\ref{sec:kkandhol}, we know that the gauge group of the twisted circle compactification ${\rm F}[X^0][S^1_\tse]$ is given by
\begin{align}
    \Gfd^\tse=\U(1)^{T+1}\times\left(\quotient{\U(1)\times \Gsd}{\langle(\e[-\tfrac{1}{m_\tse}],\tse)\rangle}\right)\,,
    \label{eqn:G5dgamma}
\end{align}
where $m_\tse=\text{ord}(\tse)$ denotes the order of $\tse\in\Gsd$.
The charge lattice of ${\rm F}[X^0][S^1_\tse]$ takes the form
\begin{align}
    \Lambda_{5\dd}^\tse=\text{ker}\left(\vec{0},\e[-\tfrac{1}{m_\tse}],\tse\right)\subset\mathbb{Z}^{T+1}\times \mathbb{Z}\times \widehat{\Gsd}\,,
    \label{eqn:5dchargelatticeMth}
\end{align}
in terms of the six-dimensional charge lattice $\widehat{\Gsd}=\text{Hom}(\Gsd,\U(1))$.
The Kaluza-Klein tower that is associated to a six-dimensional half-hypermultiplet with charge $\chi\in\widehat{\Gsd}$ consists of states with charges
\begin{align}
    \left(\vec{0},n m_\tse + q_\tse(\chi),\chi\right)\in\Lambda_{5\dd}^\tse\,,\quad n\in\mathbb{Z}\,.
    \label{eqn:5dKKtowerMth}
\end{align}
The mass of these states satisfies
\begin{align}
    M_{n,\chi}^\tse \propto \vert n m_\tse + q_\tse(\chi)\vert\,.
    \label{eqn:5dKKtowerMthMass}
\end{align}
We have introduced the notation $N_\chi$ for the number of half-hypermultiplets for each charge~\eqref{eqn:5dKKtowerMth}.

Recall from Section~\ref{sec:twistedKK} that the group~\eqref{eqn:G5dgamma} is isomorphic to
\begin{align} \label{eq:G5dgammaiso}
    \Gfd^\tse\simeq \U(1)^{T+1}\times \U(1)\times \Gsd/\langle\tse\rangle\,,
\end{align}
via a non-canonical isomorphism that depends on the choice of a lift of a certain character. However, as can be seen from~\eqref{eq:isoQuotientProduct}, the projection from $\Gfd^\tse$ to $\Gsd/\langle\tse\rangle$ is canonical and does not depend on this choice.
To isolate the discrete component of the gauge group, we can consider its zeroth fundamental group,
\begin{align} \label{eq:finite5d}
    \pi_0(\Gfd^\tse)=\Gsd/\langle\tse\rangle\,.
\end{align}
The charge lattice associated to this component,
\begin{align}
    \text{Hom}\left(\Gsd/\langle\tse\rangle,\U(1)\right)=\text{ker}\,\tse\subset\widehat{\Gsd}\,,
    \label{eqn:torsionCharges5dMth1}
\end{align}
coincides, via the isomorphism \eqref{eqn:5dchargelatticeMth}, with the torsion sublattice of the charge lattice,
\begin{align} 
    \Tors{\Lambda_{5\dd}}^{\tse}= \text{Hom}\left(\Gsd/\langle\tse\rangle,\U(1)\right)\,.
    \label{eqn:torsionCharges5dMth2}
\end{align}

\subsection{Twisted compactifications and genus one fibered Calabi-Yau threefolds} \label{ss:twistedAndGenusOne}
Let us now assume that the projective, elliptically fibered Calabi-Yau threefold $\pi_0:X^0\rightarrow B$, is almost generic in the sense of Definition~\ref{def:almostgeneric}.
Our goal in this section is to connect the discussion from Section~\ref{sec:GandLambdaTwisted} to the geometry of the fibrations $X^\tse$ for $\tse\in\Sh_B(X^0)$ by using the duality
\begin{align}
    {\rm F}[X^0][S^1_\tse]={\rm M}[X^\tse]\,.
    \label{eqn:dualityFM}
\end{align}
Note that six-dimensional hypermultiplets contain charge conjugate pairs of half-hypermultiplets. As a result, the compactifications ${\rm F}[X^0][S^1_\gamma]$ and ${\rm F}[X^0][S^1_{\gamma^{-1}}]$ are physically equivalent.
This corresponds to the fact that the elements $\gamma$ and $\gamma^{-1}$ can be represented by the same geometry $X^{\gamma}\simeq X^{\gamma^{-1}}$, see the discussion in Section \ref{ss:TSgroup}.
We will now use~\eqref{eqn:dualityFM} and Claims~\ref{claim:mtheoryNodalCYgaugeGroup} and \ref{claim:mtheoryNodalCYmatter} to arrive at Conjectures~\ref{conj:TSandTorsion} and~\ref{conj:I2fibers}.

As discussed in Section~\ref{sec:almostgenericfibrations}, we can choose  analytic small resolutions $\rho_\tse:\widehat{X}^\tse\rightarrow X^\tse$.
Then Claim~\ref{claim:mtheoryNodalCYgaugeGroup} states that the gauge group~\eqref{eqn:G5dgamma} is equal to
\begin{align}
    \Gfd^\tse=\text{Hom}(H_2(\widehat{X}^\tse,\mathbb{Z}),\U(1))\,,
\end{align}
such that the 5d charge lattice can be identified with
\begin{align} \label{eq:lambda5dH2}
    \Lambda_{5\dd}^\tse=H_2(\widehat{X}^\tse,\mathbb{Z})\,.
\end{align}
By \eqref{eqn:torsionCharges5dMth2}, the assumption that $\Gsd=\Sh_B(X^0)$
yields
\begin{align} 
    \text{Tors}\,\Lambda_{5\dd}^0= \text{Hom}\left(\Sh_B(X^0),\U(1)\right)\,,
\end{align}
from which an expression for $\Sh_B(X^0)$ immediately follows by use of the Pontryagin duality theorem, 
\begin{align}
    \Sh_B(X^0)=\text{Hom}\left(\tors{H_2(\widehat{X}^0,\mathbb{Z})},\U(1)\right)\,.
\end{align}
With the identification \eqref{eq:lambda5dH2}, the equations \eqref{eqn:torsionCharges5dMth1} and \eqref{eqn:torsionCharges5dMth2} imply 
\begin{equation}\tors{H_2(\widehat{X}^\tse,\mathbb{Z})}=\text{ker}\,\tse  \subset \tors{H_2(\widehat{X}^0,\mathbb{Z})} \,.
\end{equation}
This leads to parts~\ref{en:Sh} and~\ref{en:Gamma} of Conjecture~\ref{conj:TSandTorsion}. 

To justify Conjecture~\ref{conj:I2fibers}, we rely on Claim \ref{claim:mtheoryNodalCYmatter}, which specifies the localized charged spectrum of $M[X^\tse]$.

We will use the notation from Section~\ref{sec:mathresults}. In particular, we set 
\begin{align}
    \widehat{\Gamma}^{\tse}=\tors{H_2(\widehat{X}^{\tse},\mathbb{Z})}\,,\quad \Gamma^\tse=\text{Hom}(\widehat{\Gamma}^\tse,\U(1))\,,
\end{align}
such that by the above reasoning
\begin{equation}
    \Gsd = \Gamma^0 \,, \quad \pi_0(\Gfd^\tse) = \Gamma^\tse\,, \quad \mathrm{Tors}\, \Lambda_{5\dd}^\gamma = \widehat{\Gamma}^\gamma\,.
\end{equation}

From the perspective of the M-theory compactification ${\rm M}[X^\tse]$, the six-dimensional F-theory limit in which one recovers ${\rm F}[X^0]$ corresponds to the limit in the K\"ahler moduli space of $X^\tse$ where the volume of all fibers shrinks to zero volume.
The charged states that become massless in the six-dimensional decompactification limit therefore arise from M2-branes that wrap curves which are supported on individual fibers of the fibration (M2-branes wrapping curves with support in different fibers cannot lead to bound states). Our description of the 5d charge lattice in~\eqref{eqn:5dchargelatticeMth} allows us to determine the lattice of the homology classes of fibral curves. It is given by
\begin{align}
    H_2(\widehat{X}^\tse,\mathbb{Z})_{\rm F}=\text{ker}\left(\e[-\tfrac{1}{m_\tse}],\tse\right)\subset \mathbb{Z}\times \widehat{\Gamma}^0\,.
    \label{eqn:fibralCurvesChargeLattice}
\end{align}
Recall that the fibral curves are those that are orthogonal to all vertical divisors. This implies that the mass of the corresponding states does not depend on the vacuum expectation values of the tensor multiplet scalars.

Since the Kaluza-Klein tower is generated by adding M2-branes that wrap the generic fiber $\Sigma_\tse$ of the fibration $\widehat{X}^\tse$, and since all fields in the tower carry the same discrete charge, we find that
\begin{align} \label{eq:classOFgenericFiber}
    [\Sigma_\tse]=(m_\tse,\text{id})\in H_2(\widehat{X}^\tse,\mathbb{Z})_{\rm F}\,.
\end{align}
As pointed out below Conjecture \ref{conj:I2fibers}, this implies by Poincar{\' e} duality that there exists an $m_\tse$-section $E_\tse$ on $\widehat{X}^\tse$.
The only other source of fibral curves are components of the $I_2$-fibers in $\widehat{X}^\tse$ over the nodes $S_\Delta$ of the discriminant locus.
Recall that over each such node, the fiber $C_{p,\tse}=(\pi_\tse\circ\rho_\tse)^{-1}(p)$ takes the form $C_{p,\tse}=C_{p,\tse}^A\cup C_{p,\tse}^B$, where $C_{p,\tse}^A,C_{p,\tse}^B$ are two rational curves that intersect transversely in two points.
The corresponding homology classes satisfy
\begin{align}
    [C_{p,\tse}^A]+[C_{p,\tse}^B]=[\Sigma_\tse]\,.
    \label{eqn:homologyABfibralCurves}
\end{align}
If $X^\tse$ has a node over $p$, then we can identify $C_{\tse,p}^B$ with the exceptional curve in $\widehat{X}^\tse$ that resolves this node.

Consider a fixed hypermultiplet in the six-dimensional theory ${\rm F}[X^0]$ that contains two half-hypermultiplets in the non-trivial representations $\pm\chi\in\widehat{\Gamma}^0$.
The states in the corresponding Kaluza-Klein tower in $\F[X^0][S^1_\tse] = \M[X^\gamma]$ must be localized over one of the points $p\in S_{\Delta}$.

If $\tse(\chi)=1$, then, by the definition of $q_\tse(\chi)$ in the introduction to Section \ref{sec:mathresults}, $q_\tse(\chi)=0$, hence, by \eqref{eqn:5dKKtowerMthMass}, the Kaluza-Klein tower in $\M[X^\gamma]$ contains massless modes.
Therefore, $X^\tse$ must have a nodal singularity over $p$ and the homology class of the corresponding exceptional curve in $\widehat{X}^\tse$ is pure torsion.
As we have argued that the class of the generic fiber is given by \eqref{eq:classOFgenericFiber}, its decomposition~\eqref{eqn:homologyABfibralCurves} over a node $p$ implies that
\begin{align}
    [C_{p,\tse}^A]=(m_\tse,\mp\chi)\,,\quad [C_{p,\tse}^B]=(0,\pm\chi)\,,
    \label{eqn:FTHfibralCurveClassesUntwisted}
\end{align}
where the sign of $\chi$ depends on the particular choice of small resolution $\rho_\tse:\widehat{X}^\tse\rightarrow X^\tse$.
We define $\chi_p\in \widehat{\Gamma}^0$ for $p\in S_{\Delta}$ to be the torsion of the exceptional curve in $\widehat{X}^0$ that resolves the node over $p$, i.e. $[C_{p,0}^B]=(0,\chi_p)$.

On the other hand, if $\tse(\chi)\ne 1$, then none of the singularities of $X^\tse$ lie on the fibers over the nodes $p\in S_{\Delta}$ that support the corresponding states; the map $\rho_\tse:\widehat{X}^\tse\rightarrow X^\tse$ is hence an isomorphism in a neighborhood of $p$.
In order to be compatible with~\eqref{eqn:fibralCurvesChargeLattice} and~\eqref{eqn:homologyABfibralCurves} and to produce the correct Kaluza-Klein tower, the homology classes of the components of the fiber must take the form
\begin{align}
    [C_{p,\tse}^{A}]=\left((n+1)m_\tse-q_\tse(\chi),\,-\chi\right)\,,\quad [C_{p,\tse}^{B}]=\left(q_\tse(\chi)-n m_\tse,\,\chi\right)\,,
    \label{eqn:FTHfibralCurveClassesTwisted}
\end{align}
for some $n\in\mathbb{Z}$ and a suitable choice for the labels $A,B$.

In fact, we can show that $n=0$.
Otherwise, by $q_\gamma(\chi) \in \{0, \ldots, m_\gamma-1\}$, the images of the classes of the two components of the fibers in the free part of $H_2(\widehat{X}^\tse,\mathbb{Z})_{\rm F}$ would have opposite sign.
As will be further discussed below in Section~\ref{sec:partialHiggsing} around equation \eqref{eq:I2classesSmoothing}, this would be incompatible with the fact that the smoothing $\widetilde{X}^\tse$ of $X^\tse$ is a smooth projective variety.
Therefore,~\eqref{eqn:FTHfibralCurveClassesTwisted} with $n=0$ is the only possible choice. The equalities~\eqref{eqn:FTHfibralCurveClassesUntwisted} and ~\eqref{eqn:FTHfibralCurveClassesTwisted} conclude the physical derivation of Conjecture~\ref{conj:I2fibers}.

\subsection{Partial Higgsings and smooth deformations}
\label{sec:partialHiggsing}
Consider the theory ${\rm M}[X^\tse]$ for a given $\tse\in\Gamma^0$. As pointed out above equation \eqref{eqn:FTHfibralCurveClassesUntwisted}, each Kaluza-Klein tower of half-hypermultiplets in a representation $\chi\in\widehat{\Gamma}^0$ satisfying $\tse(\chi) = 1$ contains a massless element that has charge 0 under the Kaluza-Klein gauge symmetry.
In theories with 8 supercharges, the possibility to acquire vacuum expectation values is unobstructed also to higher order for the scalars in massless hypermultiplets that only carry a discrete gauge charge.\footnote{This is in contrast with hypermultiplets that are charged under a continuous gauge symmetry. In this case, $\mathcal{N}=2$ supersymmetry gives rise to a potential, see e.g.~\cite{Andrianopoli:1996cm}.}
We will now show that going to a generic point of the five-dimensional Higgs branch in fact breaks the discrete part of the gauge group $\Gfd$ completely and the resulting theory does not contain any charged massless hypermultiplets. Geometrically, this amounts to a complex structure deformation from $X^\tse$ to a smooth projective variety $\widetilde{X}^\tse$.
Since the relevant vacuum expectation values directly descend from those of scalar fields in the six-dimensional theory ${\rm F}[X^0]$, we expect that $\widetilde{X}^\tse$ exhibits a compatible genus one fibration $\tilde{\pi}_\tse:\widetilde{X}^\tse\rightarrow B$ which makes the following diagram commute:
\begin{center}
\begin{tikzpicture}[>=stealth, node distance=3cm, auto]
    \tikzstyle{block} = [text centered]

    \node[block] (top-left)                {$\F[X^0]$};
    \node[block] (top-right) [right of=top-left]   {$\F[J(\widetilde{X}^\gamma)]$};
    \node[block] (bottom-left) [below of=top-left]  {$\M[X^\gamma]$};
    \node[block] (bottom-right) [below of=top-right]{$\M[\widetilde{X}^\gamma]$};

    \draw[->] (top-left) -- (top-right)
    node[midway, above] {\tiny Higgsing} ;
    \draw[->] (bottom-left) -- (bottom-right)
    node[midway, above] {\tiny Higgsing} ;
    \draw[->] (top-left) -- (bottom-left)
    node[midway, left, align=center] {$S^1_{\tse}$} ;
    \draw[->] (top-right) -- (bottom-right)
    node[midway, right, align=center] {$S^1_{\tse}$} ;
    \path (top-left) -- (bottom-right) coordinate[midway] (center);
    \node at (center) {
  \begin{tikzpicture}[scale=1.5, >={Stealth[length=6pt, width=6pt]}]
    \draw[->, line width=0.6pt] (0,0) 
      arc[start angle=50, end angle=390, radius=0.5];
  \end{tikzpicture}
};
  \end{tikzpicture}
\end{center}
This can be seen as a physical version of~\cite[Theorem 31]{Kollar:2012pv}.
Note that the varieties $\widetilde{X}^\tse$ for $\gamma \in \Gamma^0$ no longer belong to the same Tate-Shafarevich group.\footnote{A more consistent notation would therefore be $\widetilde{X^\tse}$.} In particular, $\widetilde{X}^0$ is the generic Weierstrass fibration over a weak del Pezzo surface, hence does not exhibit any torsion in homology; by Conjecture \ref{conj:TSandTorsion}\ref{en:Sh}, its Tate-Shafarevich group is thus trivial.
From the upper arrow in the diagram, we read off that the relative Jacobian $J(\widetilde{X}^\tse)$ of $\widetilde{X}^\tse$ is a partial smoothing of $X^0$.
The geometry $\widetilde{X}^0$ is thus a smooth deformation of $J(\widetilde{X}^\tse)$.

For the untwisted compactification, $\gamma = \mathrm{id}$, hence $\chi(\gamma)=1$ for all $\chi$. The gauge group of ${\rm M}[\widetilde{X}^0]$ is thus given by
\begin{align}
    \tGfd^0=\bigcap\limits_{\substack{\chi\in\widehat{\Gamma}^0\\N_\chi\ne 0}}\ker\,(\vec{0},0,\chi)\subset \U(1)^{T+1}\times\U(1)\times \Gamma^0\,,
    \label{eqn:GTwistedtilde5d}
\end{align}
where $N_\chi$ was introduced above equation \eqref{eq:HthroughNchi}.
However, since the generic Weierstra{\ss} fibration over a weak del Pezzo surface does not have any torsion in homology, it does not give rise to a discrete gauge symmetry, i.e. $\tGfd^0\simeq \U(1)^{T+1}\times\U(1)$. Hence,
\begin{align}\bigcap\limits_{\substack{\chi\in\widehat{\Gamma}^0\\N_\chi\ne 0}}\ker\,\chi = \{1\}\quad \Rightarrow \quad \langle \,\chi\in\widehat{\Gamma}^0\,\vert\,N_\chi\ne 0\,\rangle = \widehat{\Gamma}^0\,.
    \label{eqn:subsetGamma0}
\end{align}
For arbitrary $\tse \in \Sh_B(X^0)$, the commutation of the above diagram requires that on a generic point of the Higgs branch of the six-dimensional theory ${\rm F}[J(\widetilde{X}^\tse)]$, all hyperscalars acquire a VEV whose five-dimensional Kaluza-Klein towers exhibit a massless mode. This is the case when $\tse(\chi)=1$. Hence, the unbroken gauge group of ${\rm F}[J(\widetilde{X}^\tse)]$ is given by
\begin{align}
    \tGsd^\tse=\bigcap\limits_{\substack{\chi\in\widehat{\Gamma}^0\\\tse(\chi)=1}}\ker\,\chi=\langle\tse\rangle\subset \Gamma^0\,.
\end{align}
Upon twisted circle compactification, according to the discussion leading to equation \eqref{eqn:Gdm1prime}, the gauge group descends to
\begin{align}
    \tGfd^\tse= \U(1)^{T+1}\times\left(\quotient{\U(1)\times \langle \tse\rangle}{\langle(\e[-\tfrac{1}{m_\tse}],\tse)\rangle}\right) \simeq \U(1)^{T+1}\times \U(1)\,.
    \label{eqn:GTwistedtilde5d}
\end{align}
This isomorphism (the first factor merely goes along for the ride) was established in Lemma \ref{lem:quotientgroup}; it acts as
\begin{align}
    u: \,(\vec{\xi},\,\phi,\,\gamma^k)\mapsto(\vec{\xi},\,\phi\,\e[\tfrac{k}{m_\tse}])\,.
\end{align}
The dual charge lattice takes the form
\begin{align}
    \widetilde{\Lambda}':=(u^{-1})^*\left({\widetilde{\Lambda}^{\tse}_{5\dd}}\right)=\mathbb{Z}^{T+1}\times\mathbb{Z}\,.
    \label{eqn:twistedHiggsedCharges}
\end{align}
By \eqref{eqn:upullback}, followed by the Higgsing which completely breaks the discrete component of the gauge group, we have a map
\begin{align}
    \Lambda^\tse_{5\dd}\rightarrow \widetilde{\Lambda}'\,,\quad \left(\vec{t},n,\chi\right)\mapsto \left(\vec{t},n\right)
    \label{eqn:HiggsedCharges}
\end{align}
that maps the charge of a state in ${\rm M}[X^\tse]$ to the charge of the image state after Higgs transition to ${\rm M}[\widetilde{X}^\tse]$. Combining~\eqref{eqn:5dKKtowerMth} with~\eqref{eqn:HiggsedCharges}, we see that the gauge charges of a Kaluza-Klein tower in ${\rm M}[\widetilde{X}^\tse]$ associated to a $6\dd$ hypermultiplet whose pre-image before Higgs transition had discrete charge $\chi\in\widehat{\Gamma}^0$ in ${\rm F}[X^0]$ take the form
\begin{align}
    \left(\vec{0},\,nm_\tse+q_\tse(\chi)\right)\in \widetilde{\Lambda}'\,,\quad n\in\mathbb{Z}\,.
    \label{eqn:higgsedKKtower}
\end{align}

Recall from the discussion leading to \eqref{eq:defQkk} that this form of the Kaluza-Klein charge, with $m_\tse$ the order of $\tse \in \widehat{\Gamma}^0$, holds under the hypothesis that the higher dimensional theory exhibits a field with charge $\chi$ satisfying $\gcd(q_{\tse}(\chi),m_\tse)=1$. This is indeed true in the case of $\F[J(\widetilde{X}^\gamma)]$. To see this, recall that the characters carried by fields in $\F[X^0]$ span $\widehat{\Gamma}^0$ by \eqref{eqn:subsetGamma0}. To reach a contradiction, assume that their restriction to $\langle \gamma \rangle$ does not span $\mathrm{Hom}(\langle\tse\rangle,\U(1))$. Any character $\langle \gamma \rangle$ by which this set needs to be augmented lifts, by Lemma \ref{lemma:liftingCharacters}, to a character of $\Gamma^0$ and is required to span $\widehat{\Gamma}^0$. However, the original set already spanned. We can hence conclude that the set
\begin{align}
    \{\,\chi_p\,\,\vert\,\,p\in S_{\Delta}\,,\,\, \tse(\chi_p)\ne 1\,\}
\end{align}
generates $\mathrm{Hom}(\langle\tse\rangle,\U(1))$. In other words, 
\begin{align} \label{eq:gcdmgammachis}
    \text{gcd}\left(m_\tse, \{\,q_\tse(\chi_p)\,\,\vert\,\,p\in S_{\Delta}\,,\,\, \tse(\chi_p)\ne 1\,\,\}\right)=1\,,
\end{align}
as was to be shown.

This discussion has implications for the structure of the genus one fibration of $\widetilde{X}^\tse$. As explained in Appendix~\ref{sec:nodalCY3}, the Mayer-Vietoris sequence gives an isomorphism
\begin{align}
    \quotient{H_2(\widehat{X}^\tse,\mathbb{Z})}{\langle\{\,[\rho_\tse^{-1}(p)]\,\vert\,p\in S^\tse\,\}\rangle}=H_2(\widetilde{X}^\tse,\mathbb{Z})\,,
\end{align}
where $S^\tse$ is the set of nodes in $X^\tse$. This quotient directly corresponds to the map~\eqref{eqn:HiggsedCharges}.
The last $\mathbb{Z}$ factor in~\eqref{eqn:twistedHiggsedCharges} can be identified with the lattice of homology classes of fibral curves
\begin{align}
    H_2(\widetilde{X}^\tse,\mathbb{Z})_{\rm F}=\mathbb{Z}\,.
\end{align}
Since the Kaluza-Klein tower should again be generated by multiply wrapping the generic fiber $\widetilde{\Sigma}_\tse$, \eqref{eqn:higgsedKKtower} implies that the divisor $\widetilde{E}_\tse$ associated to the Kaluza-Klein $\U(1)$ symmetry satisfies $\widetilde{E}_\tse \cdot \widetilde{\Sigma}_\tse = m_\gamma$. In particular, \eqref{eqn:higgsedKKtower} implies that $\widetilde{X}^\tse$ must exhibit an $m_\tse$-section $\widetilde{E}_\tse$. By \eqref{eq:gcdmgammachis}, this $m_\tse$ must be minimal, i.e. no $k$-section with $k<m_\tse$ may occur.

Let $p\in S_{\Delta}$ be one of the nodes that correspond to a representation $\chi_p\in\widehat{\Gamma}^0$ with $\tse(\chi_p)\ne 1$.
Then $\widetilde{X}^\tse$ has an $I_2$-fiber over $p$ that takes the form $\widetilde{C}_{p,\tse}=\widetilde{C}_{p,\tse}^A\cup \widetilde{C}_{p,\tse}^B$. By $[\widetilde{C}_{p,\tse}] = [\widetilde{\Sigma}_\tse]$ and~\eqref{eqn:higgsedKKtower}, we know that the homology classes of $\widetilde{C}_{p,\tse}^A$ and $\widetilde{C}_{p,\tse}^B$ must be of the form~\eqref{eqn:FTHfibralCurveClassesTwisted}, absent the discrete component. For any value of $n$ determining these classes other than $n=0$, the homology classes of the two holomorphic curves would have opposite signs in $H_2(\widetilde{X}^\tse,\mathbb{Z})_{\rm F}$, in contradiction with the fact that $\widetilde{X}^\tse$ is projective and therefore K\"ahler. We thus can identify the homology classes as
\begin{align} \label{eq:I2classesSmoothing}
    [\widetilde{C}_{p,\tse}^{A}]=m_\tse-q_\tse(\chi)\,,\quad [\widetilde{C}_{p,\tse}^{B}]=q_\tse(\chi)\,.
\end{align}
By~\eqref{eqn:HiggsedCharges}, we can conclude, as announced above, that also in~\eqref{eqn:FTHfibralCurveClassesTwisted}, we must set $n=0$. 

\subsection{Intersection numbers from Chern-Simons theory}
\label{ss:CSterms}
In this subsection, we will match the Chern-Simons terms in the effective actions of ${\rm F}[J(\widetilde{X}^\tse)][S^1_\tse]$ and ${\rm M}[\widetilde{X}^\tse]$ in order to arrive at Conjecture~\ref{conj:topologyOfSmoothing}.
Other references in which the F-theory and M-theory Chern-Simons terms have been compared are for example~\cite{Bonetti:2011mw,Bonetti:2013ela,Grimm:2013oga,Anderson:2014yva,Bhardwaj:2019fzv}.

Recall that 6d (1,0) theories require a Green-Schwarz term\footnote{The hypermultiplets charged under a discrete gauge group will in general also lead to global anomalies that have to be cancelled by a topological Green-Schwarz mechanism, see e.g.~\cite{Dierigl:2022zll}.}
\begin{equation} \label{eq:6dGS}
    S^{\mathrm{GS}}_{6\dd} = -\int \frac{1}{2} \Omega_{\alpha \beta} B^\alpha X_4^\beta
 \end{equation}
to cancel gravitational, gauge and mixed anomalies. Here, $\alpha$ and $\beta$ index the $T+1$ tensor fields in the theory. The symmetric matrix $\Omega_{\alpha \beta}$ together with two sets of coefficients $a^\alpha$ and $b^\beta_a$, with $a$ indexing simple gauge group factors, parametrize the anomaly polynomial $I_8$.\footnote{As we will re-derive below, this matrix coincides with its namesake introduced in Section~\ref{sec:genusonefibrations} when $S^{\mathrm{GS}}_{6\dd}$ arises from compactification of F-theory on a Calabi-Yau threefold. Until that point, $\Omega_{\alpha \beta}$ signifies the field theory quantity in this section.} 
For the local anomalies to allow cancellation via a Green-Schwarz term, $I_8$ must factorize as
\begin{equation}
    I_8 = \frac{1}{2} \Omega_{\alpha \beta} X_4^\alpha X_4^\beta \,,
\end{equation}
with 
\begin{equation}
    X_4^\alpha = \frac{1}{2} a^\alpha \tr R^2 + 2 \sum_a b_a^\alpha \tr F_a^2 \,.
\end{equation}
Crucially, the constants $\Omega_{\alpha \beta}$, $a^\alpha$ and $b^\alpha_a$ only depend, up to basis change, on the charged field content of the theory. In the absence of non-Abelian gauge symmetry, the sole constraint that the matrix $\Omega_{\alpha \beta}$ and the constants $a^\alpha$ must satisfy is
\begin{equation} \label{eq:OmegaA}
    a^\alpha \Omega_{\alpha \beta} a^\beta = 9-T \,.
\end{equation}

The (untwisted) circle reduction of 6d (1,0) theories obtained from F-theory via compactification on a smooth elliptically fibered Calabi-Yau threefold has been worked out in~\cite{Ferrara:1996wv,Bonetti:2011mw}. As we have argued for in Section~\ref{sec:partialHiggsing}, $\widetilde{X}^0$ is such a geometry. We will first review the discussion of~\cite{Bonetti:2011mw} pertaining to this case, focusing on the 5d Chern-Simons terms and in the absence of non-Abelian gauge symmetry, before generalizing to the twisted case.

The five-dimensional Chern-Simons terms are of the form
\begin{align}
    S_{5\dd}^{\text{CS}}=\int\left[-\frac12\Omega_{\alpha\beta}A^0F^{\alpha}F^{\beta}-\frac18 a^\alpha\Omega_{\alpha\beta}A^\beta\text{tr}\,R^2+k_0A^0F^0F^0+\lambda_0A^0\text{tr}\,R^2\right]\,,
	\label{eqn:ftheff}
\end{align}
where the gauge fields $A^\alpha$ arise from dualizing the tensor fields and $A^0$ denotes the Kaluza-Klein gauge field. The first two terms in \eqref{eqn:ftheff} originate from the six-dimensional kinetic terms of the tensor fields and from the Green-Schwarz term \eqref{eq:6dGS}.
The last two terms, on the other hand, are generated by one loop contributions after integrating out massive fields.

In~\cite{Bonetti:2013ela}, the coefficients $k_0$ and $\lambda_0$ are calculated by summing up the one loop contributions of the relevant Kaluza-Klein towers. This leads to the expressions
\begin{align}
\begin{split}
    k_0 =&\frac{1}{12} \sum_{k=1}^\infty k^3\big[ 2(V-H-T)* \underline{1} + 2*\underline{5} + (1-T)*(\underline{-4}) \big]\\
    =& \sum_{k=1}^\infty \frac{k^3 \big[T+3-H]}{6} =-\frac{9-T}{24}\,,\\
    \lambda_0=&\frac{1}{96}\sum_{k=1}^\infty k\big[ 2(V-H-T)*\underline{1} + 2*(\underline{-19}) + (1-T)*\underline{8} \big]\\
    =& - \sum_{k=1}^\infty \frac{k \big[10(T+3)+2H \big]}{96} = \frac{12-T}{24}\,.
\end{split}
\label{eqn:oneloopCS}
\end{align}
The sum over $k$ in both expressions ranges over the elements of the Kaluza-Klein towers, while the $k$-factors reflect the Kaluza-Klein $\U(1)$ charge of the $k^{\mathrm{th}}$ modes. The three underlined integers in the expression for $k_0$ and for $\lambda_0$ are determined by the representations of the states contributing to the Chern-Simons term under the little group in 5d, $\mathrm{SU}(2)\times \mathrm{SU}(2)$: they are associated to the representations $(\frac{1}{2},0)$, $(\tfrac{1}{2},1)$ and $(1,0)$ respectively. A sign multiplies the contribution of hyper- and tensor multiplets as the spinors/tensors in these multiplets transform in the conjugate representations $(0,\frac{1}{2})$ and $(0,1)$ respectively. 

To obtain the final expression for $k_0$ and $\lambda_0$,
the gravitational anomaly cancellation condition~\eqref{eqn:gravitationalAnomaly} has been implemented and zeta-function regularization $\sum k^s\rightarrow \zeta(-s)$ performed, with $\zeta(-3)=\tfrac{1}{120}$ and $\zeta(-1) = -\tfrac{1}{12}$. The intermediate expressions are obtained by setting $V=0$.
We have included them because in the twisted setting, the hyper and tensor multiplets exhibit different Kaluza-Klein charges and thus enter with different $k$-dependent coefficients in the sum.

The sought after relation to intersection numbers of the compactification manifold emerges upon considering the effective Lagrangian of the M-theory compactification on a smooth Calabi-Yau threefold.
The Chern-Simons terms arising from this compactification take the form~\cite{Cadavid:1995bk,Antoniadis:1997eg}
\begin{align}
    S_{5\dd}^{\text{CS}}=\int\left[-\frac16k_{ijk} A^i F^j F^k+\frac{1}{96}\kappa_i A^i\text{tr}\,R^2\right]\,,
	\label{eqn:mtheff}
\end{align}
where $k_{ijk}$ are the triple intersection numbers of an appropriate basis of divisors, and $\kappa_i$ their intersections with the second Chern class of the compactification manifold.

Comparing \eqref{eqn:ftheff} and \eqref{eqn:mtheff} leads to the identification
\begin{align}
	k_{000}=\frac{9-T}{4}\,,\quad k_{i\alpha\beta}=\delta_{i,0}\Omega_{\alpha\beta}\,,\quad k_{00\beta}=0\,,
        \quad \kappa_0=4(12-T)\,,\quad  \kappa_\beta=-12\Omega_{\alpha\beta}a^\beta\,,\label{eqn:untwistedKijkKappa}
\end{align}
while the condition \eqref{eq:OmegaA}, ensuring factorization of the anomaly polynomial, expressed in terms of intersection numbers becomes 
\begin{equation} \label{eq:factorizationIntersection}
    \kappa_\alpha\Omega^{\alpha\beta}\kappa_\beta=144(9-T) \,,
\end{equation}
with $\Omega^{\alpha \beta}$ denoting the inverse of the matrix $\Omega_{\alpha \beta}$.

Based on the Shioda-Tate-Wazir theorem generalized to smooth genus one fibered Calabi-Yau threefolds, we have introduced a basis $\{J_0, J_\alpha \}$ of divisors in \eqref{eq:STWbasis}. In the case of $\widetilde{X}^0$, we require only the standard version of this theorem, for which $J_0$ is a section and $\{J_\alpha\}$ are $b_2(B)$ vertical divisors  of the smooth elliptic Calabi-Yau threefold. This leads to the identification
\begin{align}
    T=b_2(B)-1 \,.
\end{align}
We impose the constraint $k_{00\beta} = 0$ by shifting the section $J_0$ by appropriate vertical divisors. We thus arrive at the new basis
\begin{align} \label{eq:shiftJtoDuntwisted}
    D_0=J_0-\frac12\tilde{\pi}_0^{*}\circ\tilde{\pi}_{0*}(J_0\cdot J_0)=J_0-\frac12\tilde{\pi}_0^{*}K_B\,,\quad D_\alpha=J_\alpha\,.
\end{align}
Upon the identification
\begin{align} \label{eq:MmathID}
    \Omega_{\alpha\beta}=\Jbase_\alpha\cdot \Jbase_\beta\,,\quad a^\alpha=\Omega^{\alpha\beta}\Jbase_\beta\cdot K_B\,,
\end{align}
this basis satisfies all constraints in~\eqref{eqn:untwistedKijkKappa}, as can be verified via the relations reviewed in Section~\ref{sec:genusonefibrations}.
Note that the second equality in~\eqref{eq:MmathID} implies ${a^\alpha = c^\alpha}$ in the notation of~\eqref{eq:expansionKB}.
Recalling that the constraint~\eqref{eq:factorizationIntersection} is a reformulation of~\eqref{eq:OmegaA}, we see that it follows immediately from~\eqref{eq:cOc} and the expression for $K_B \cdot K_B$ in \eqref{eq:Noether} following from Noether's formula.

Note that at least for theories with $0\le T<9$, the existence of a divisor $D$ on $\widetilde{X}^0$ that satisfies the conditions~\eqref{eqn:fibcondition} is guaranteed by~\eqref{eqn:untwistedKijkKappa} and therefore we could already deduce from the Chern-Simons terms that $\widetilde{X}_0$ has to exhibit an elliptic fibration.

\bigskip

We will now adapt this analysis to the twisted compactifications from 6d to 5d which arise upon compactifying M-theory on the resolved geometry $\widetilde{X}^\tse$.

In the presence of the discrete Wilson line along the circle, the values of $k_0$ and $\lambda_0$ change due to the shift of the Kaluza-Klein charges of the fields that propagate in the loop, but the structure of the five-dimensional Chern-Simons terms~\eqref{eqn:ftheff} remains the same.

The Kaluza-Klein tower associated to a state that in the six-dimensional theory ${\F}[X^0]$ before the Higgs transition carried charge $\chi\in\widehat{\Gamma}^0$ is given in~\eqref{eqn:higgsedKKtower}. Recall that these charges are with regard to a rescaled Kaluza-Klein gauge field $\tilde{A}^0=\frac{1}{m_\tse}A^0$, see the discussion below equation \eqref{eq:defQkk}. With this normalization, the Chern-Simons terms take the form
\begin{align}
        S_{5\dd}^{\text{CS}}=\int\left[-\frac{m_\tse}{2}\Omega_{\alpha\beta}\tilde{A}^0F^{\alpha}F^{\beta}-\frac18 a^\alpha\Omega_{\alpha\beta}A^\beta\text{tr}\,R^2+\tilde{k}_0\tilde{A}^0\tilde{F}^0\tilde{F}^0+\tilde{\lambda}_0\tilde{A}^0\text{tr}\,R^2\right]\,.
	\label{eqn:ftheffTwisted}
\end{align}
Taking into account the shifted KK-charges, the expressions in~\eqref{eqn:oneloopCS} are  modified to
\begin{align}
	\begin{split}
        \tilde{k}_0=&\frac16\sum\limits_{k=0}^\infty \left[(km_\tse)^3\left(T+3\right)-\frac12\sum\limits_{\chi\in\widehat{\Gamma}^0}\left(k m_\tse +q_\tse(\chi)\right)^3N_\chi\right]\,,\\
		\tilde{\lambda}_0=&-\frac{1}{96}\sum\limits_{k=0}^\infty \left[10km_\tse (T+3)+\sum\limits_{\chi\in\widehat{\Gamma}^0}\left(k m_\tse+q_\tse(\chi)\right)N_\chi\right]\,.
	\end{split}
	\label{eqn:oneloopCStwisted}
\end{align}
In order to regularize the sums, we will use the Hurwitz zeta function
\begin{align}
	\zeta(s,a)=\sum\limits_{n=0}^\infty\frac{1}{(n+a)^s}\,.
\end{align}
For positive $n$, $\zeta(-n,a)=-B_{n+1}(a)/(n+1)$ in terms of the Bernoulli polynomials $B_n(x)$. These satisfy the useful identities
\begin{align}
	\begin{split}
	B_2\left(1-x\right)+B_2\left(x\right)=\frac13+2x(x-1)\,,\quad B_4\left(1-x\right)+B_4\left(x\right)=-\frac{1}{15}+2x^2(x-1)^2\,.
	\end{split}
\end{align}
We then find
\begin{align}
	\begin{split}
		\tilde{k}_0=&\frac{m_\tse^3}{6}\left[\frac{T+3}{120}-\frac14\sum\limits_{\chi\in\widehat{\Gamma}^0}\left(\zeta\left(-3,\frac{q_\tse(\chi)}{m_\tse}\right)+\zeta\left(-3,1-\frac{q_\tse(\chi)}{m_\tse}\right)\right)N_\chi\right]\\
		=&\frac{m_\tse^3}{6}\left[\frac{T+3}{120}-\frac{1}{16}\sum\limits_{\chi\in\widehat{\Gamma}^0}\left(\frac{1}{15}-\frac{2}{m_\tse^4}q_\tse(\chi)^2(m_\tse-q_\tse(\chi))^2\right)N_\chi\right]\\
		=&\frac{m_\tse^3}{6}\left[\frac{T-H+3}{120}+\frac{1}{8m_\tse^4}\sum\limits_{\chi\in\widehat{\Gamma}^0}q_\tse(\chi)^2(m_\tse-q_\tse(\chi))^2N_\chi\right]\\
		=&m_\tse^3\left[\frac{T-9}{24}+\frac{1}{48m_\tse^4}\sum\limits_{\chi\in\widehat{\Gamma}^0}q_\tse(\chi)^2(m_\tse-q_\tse(\chi))^2N_\chi\right]\,,
	\end{split}
\end{align}
where we have again used the gravitational anomaly cancellation $H=273-29T$,
and similarly
\begin{align}
	\begin{split}
		\tilde{\lambda}_0=&m_\tse\left[\frac{12-T}{24}-\frac{1}{192m_\tse^2}\sum\limits_{\chi\in\widehat{\Gamma}^0}q_\tse(\chi)\left(m_\tse-q_\tse(\chi)\right)N_\chi\right]\,.
	\end{split}
	\label{eqn:oneloopCStwisted2}
\end{align}
In order to match~\eqref{eqn:mtheff} and~\eqref{eqn:ftheffTwisted}, we must have
\begin{align}
	k_{000}=\frac{m_\tse^3(9-T)}{4}-\frac{1}{8m_\tse}\sum\limits_{\chi\in\widehat{\Gamma}^0}q_\tse(\chi)^2\left(m_\tse-q_\tse(\chi)\right)^2N_\chi\,,\quad k_{i\alpha\beta}=\delta_{i,0}m_\tse\Omega_{\alpha\beta}\,,\quad k_{00\beta}=0\,,
	\label{eqn:twistedK}
\end{align}
as well as
\begin{align}
	\kappa_0=4m_\tse(12-T)-\frac{1}{2m_\tse}\sum\limits_{\chi\in\widehat{\Gamma}^0}q_\tse(\chi)\left(m_\tse-q_\tse(\chi)\right)N_\chi\,,\quad \kappa_\beta=-12\Omega_{\alpha\beta}a^\beta\,.
	\label{eqn:twistedKappa}
\end{align}
The condition \eqref{eq:factorizationIntersection} expressed in terms of intersection numbers is not modified; it remains $\kappa_\alpha\Omega^{\alpha\beta}\kappa_\beta=144(9-T)$, as in the untwisted case.

We have argued in Section~\ref{sec:partialHiggsing} that $\widetilde{X}^\tse$ exhibits a genus one fibration $\tilde{\pi}_\tse:\widetilde{X}^\tse\rightarrow B$ with an $m_\tse$-section $J_0$.
We now require the Shioda-Tate-Wazir theorem, as generalized to genus one fibrations in~\cite[Section 8]{Braun:2014oya}, to again conclude, based on \eqref{eq:STWbasis}, that $T = b_2(B) -1$. We perform the analogous shift to \eqref{eq:shiftJtoDuntwisted} to satisfy the constraint $k_{00\beta}=0$, and arrive at the basis of divisors
\begin{align} \label{eq:basisDiv}
    D_0=J_0-\frac{1}{2m_\tse}\tilde{\pi}_\tse^{*}\circ\tilde{\pi}_{\tse*}(J_0\cdot J_0)\,,\quad D_\alpha=J_\alpha\,.
\end{align}
The identifications
\begin{align}
   \Omega_{\alpha\beta}=\Jbase_\alpha\cdot \Jbase_\beta\,,\quad a^\alpha=\Omega^{\alpha\beta}\Jbase_\beta\cdot K_B
\end{align}
yield $k_{i \alpha \beta}$ as given in~\eqref{eqn:twistedK}, as well as  
 $\kappa_\beta$ as given in~\eqref{eqn:twistedKappa} for those cases where the proofs of~\eqref{eqn:intVerticalSecondChern} from~\cite[Appendix D]{Grimm:2013oga},~\cite[Appendix B]{Cota:2019cjx} or the argument given in Section~\ref{sec:genusonefibrations} apply.
On the other hand, the values of $k_{000}$, $\kappa_0$ and for $\kappa_\beta$ in full generality are physical predictions for the topological invariants of $\widetilde{X}^\tse$.

To determine the Euler characteristic of $\widetilde{X}^\tse$, we can consider the gravitational anomaly cancellation condition~\eqref{eqn:gravitationalAnomaly} and write it as
\begin{align}
    H_U=273-29T-H_C\,,
\end{align}
where $H_C$ and $H_U$ respectively denote the number of charged und uncharged hypermultiplets in ${\rm F}[J(\widetilde{X}^\tse)]$, such that $H=H_U+H_C$.
The number of charged hypermultiplets is given by $H_C=\widetilde{N}_\tse/2$, with
\begin{align}
    \widetilde{N}_\tse=\sum\limits_{\substack{\chi\in\widehat{\Gamma}^0\\\tse(\chi)\ne 1}}N_\chi\,.
\end{align}
On the other hand, the number of uncharged hypermultiplets is related to the complex structure deformations of $\widetilde{X}^\tse$ via $H_U=h^{2,1}(\widetilde{X}^\tse)+1$.
Using also $T=b_2(B)-1=b_2(\widetilde{X}^\tse)-2$, we obtain
\begin{align}
    \chi(\widetilde{X}^\tse)=60\big(b_2(B)-10\big)+\widetilde{N}_\tse\,.
\end{align}
Together with the predictions~\eqref{eqn:twistedK} and~\eqref{eqn:twistedKappa} based on comparing the Chern-Simons terms, we have thus arrived at Conjecture~\ref{conj:topologyOfSmoothing}.

Let us again point out that, at least for $0\le T<9$, the topological invariants~\eqref{eqn:twistedK} and~\eqref{eqn:twistedKappa} already imply the existence of a genus one fibration on $\widetilde{X}^\tse$ via the results of~\cite{Wilson1989}.
The existence of an $m_\tse$-section can then also be deduced from $k_{0\alpha\beta}=m_\tse\Omega_{\alpha\beta}$ .

\subsection{Do all almost generic vacua arise from almost generic fibrations?}
\label{sec:genericity}
Up to this point, we have motivated the first three of the conjectures formulated in Section \ref{sec:mathresults}, concerning the geometric properties of the almost generic genus one fibered Calabi-Yau threefolds $X^\tse$, by studying the properties of the corresponding F- and M-theory vacua.
In this section, we will discuss under which circumstances a six-dimensional almost generic F-theory vacuum, in the sense of Definition~\ref{def:almostGenericFtheory}, with the number of vector multiplets $V=0$, can take the form ${\rm F}[X^0]$ for some almost generic elliptic Calabi-Yau threefold $X^0$.
While we will not arrive at a complete set of necessary and sufficient conditions, we will collect relevant mathematical and physical considerations and highlight the potential loopholes that might obstruct a one-to-one correspondence between such F-theory vacua and almost generic elliptic Calabi-Yau threefolds.

As our starting point, let $X^0$ be any projective Calabi-Yau threefold with an elliptic fibration $\pi_0:X^0\rightarrow B$ leading to an almost generic F-theory vacuum $\F[X^0]$ without massless vector multiplets.
Since ${\rm F}[X^0]$ depends only on the axio-dilaton profile encoded in $X^0$, we can assume that the fibration is in Weierstra{\ss} form.
As we have already discussed in the introduction of this section, it is expected that the gauge group of ${\rm F}[X^0]$ is the Weil-Ch{\^a}telet group $\Gamma^0=\text{WC}_B(X^0)$.
This is because the elements of $\text{WC}_B(X^0)$ essentially correspond to the inequivalent genus one fibrations that encode the same axio-dilaton profile.
The twisted circle compactification ${\rm F}[X^0][S^1_\tse]$ is then expected to be dual to ${\rm M}[X^\tse]$, where $X^\tse$ is the genus one fibration that represents the element $\gamma\in\text{WC}_B(X^0)$.

\paragraph{Condition~\ref{condge:smoothqfac}}
Since we assume that the theory ${\rm F}[X^0]$ is at a sufficiently generic point of the tensor branch, we can assume that the base $B$ of the fibration is smooth.
Furthermore, $X^0$ must be $\IQ$-factorial. Else, it admits a projective small resolution~\cite[Corallary 4.5]{Kawamata1988}. The physical interpretation of such resolutions, however, is going to a generic point of the Coulomb branch of a continuous five-dimensional gauge symmetry.
Since the base of the fibration is already smooth, this gauge symmetry would have to descend from six-dimensions, in contradiction to the assumption that the six-dimensional gauge group is finite.

The question whether $X^0$ contains at most isolated nodes as singularities is closely related to Condition~\ref{condge:disc} that we will discuss below.
Assuming that Condition~\ref{condge:disc} holds, the potential singularities of $X^0$ are discussed in Appendix~\ref{sec:singularFibers}.
The upshot is that $X^0$ will have at most isolated nodes in the $I_2$-fibers over nodes of the discriminant, but it could potentially also have isolated $A_2$-singularities over some of the cusps of the discriminant. These locally are of the form
\begin{align} \label{eq:A2}
    \{\,z_1^2+z_2^2+z_3^2+z_4^n=0\,\}\subset\mathbb{C}^4 \,, \quad n=3 \,.
\end{align}
As $n$ is odd, such $A_2$-singularities do not admit any small resolution \cite[Corollary 1.16]{ReidMinMod}. Their presence implies the existence of localized uncharged matter \cite{Arras:2016evy}.\footnote{In \cite{Arras:2016evy}, the existence of such hypermultiplets is demonstrated explicitly in examples where such singularities arise as enhancements of type $I_1$ and type $II$ singularities.}
Giving a vacuum expectation value to the corresponding scalars removes the singularities. It follows that if $X^0$ satisfies Condition~\ref{condge:disc}, it must be smooth over each of the cusps of the discriminant.

\paragraph{Condition~\ref{condge:disc}}
Let us now discuss the singular fibers of $X^0$.
Due to the absence of any continuous gauge symmetry in ${\rm F}[X^0]$, we can assume that the fibers over the generic points of any one-dimensional component of the discriminant locus $\Delta$ are of type $I_1$ or $II$, see e.g.~\cite[Table 4.1]{Weigand:2018rez}.
Fibrations with a one-dimensional component of the discriminant locus over which the generic fiber is of type $II$ were discussed, for example, in~\cite{Arras:2016evy,Grassi:2018rva}.

Let us write $\Delta=\Delta_{I_1}\cup \Delta_{II}$, where $\Delta_{I_1}$ and $\Delta_{II}$ are the one-dimensional components of $\Delta$ over which the generic fibers are respectively of type $I_1$ and $II$.
We want to argue that $\Delta_{II}$ is empty, such that the only type $II$ fibers of the fibration that exist are over isolated points of $\Delta_{I_1}$.

First note that a non-trivial component $\Delta_{II}$ can only exist if the Weierstra{\ss} coefficients factorize as $f=h f'$, $g=h g'$ and neither $f',g'$ nor $g',h$ have a factor in common. 

Let us first discuss the situation where $f \not\equiv 0$. Then $d \sim h^2$ and ${\Delta_{II}=\{\,h=0\,\}}$. Also, $g'$ cannot be constant, else $h \in \Gamma(B,\cL^6)$ would force $f' \in \Gamma(B,\cL^{-2})$, in contradiction to $\cL = \omega_B^{-1} \neq \cO_B$ for $B$ a weak del Pezzo surface.
Thus, one necessarily has $\Delta_{I_1}\ne \emptyset$.
In examples, e.g. those discussed in~\cite[Section 4.2]{Arras:2016evy} and~\cite[Appendix C.1]{Arras:2016evy}, the component $\Delta_{II}$ then supports isolated fibers of type $III$ and/or $IV$.
These arise in particular over the points $\Delta_{I_1}\cap \Delta_{II}$.
Either of these fiber types corresponds to a $\mathbb{Q}$-factorial terminal singularity of the form \eqref{eq:A2} in the Weierstra{\ss} fibration, implying the presence of localized uncharged hypermultiplets~\cite{Arras:2016evy}.
Giving a vacuum expectation value to the corresponding scalar fields corresponds to a partial smoothing of the fibration that removes these singularities. We therefore expect, if the vacuum expectation values of the uncharged hypermultiplet scalars are sufficiently generic, that $\Delta_{I_1}\cap\Delta_{II}=\emptyset$.
We have already argued that $\Delta_{I_1}\ne \emptyset$.
It would be interesting to either exclude the situation where $\Delta_{II}$ is a non-trivial curve that does not intersect $\Delta_{I_1}$ or to study the physical implications of such a configuration.

We are left with the possibility that $f$ vanishes identically.
This would lead to an isotrivial fibration with $\Delta=\Delta_{II}$.
If $g$ is a generic section of $\mathcal{L}^{\otimes 6}$, then $\{\,g=0\,\}$ is a smooth curve, every singular fiber of $X^0$ is of type $II$ and $X^0$ itself is smooth.
In particular, the F-theory vacuum in this case exhibits only uncharged hypermultiplets; a generic choice of the corresponding vacuum expectation values then deforms the geometry to a generic Weierstra{\ss} fibration over $B$.
However, if $g$ is not generic, the situation is more complicated and we will not try to analyze it in this paper.

Even assuming that $\Delta=\Delta_{I_1}$, we still have to argue that, without affecting the charged spectrum of the corresponding F-theory compactification, one we always deform the complex structure such that $\Delta_{I_1}$ is irreducible and has only isolated nodes and cusps as singularities.
Again, we will leave a more careful analysis of this question to future work.

\paragraph{Condition~\ref{condge:TSWC}}
Let us now discuss why we expect that
\begin{align}
    \Sh_B(X^0)=\text{WC}_B(X^0)\,.
    \label{eqn:ShEqualsWC}
\end{align}
The obstruction to this equality would be the existence of genus one fibrations $X^\tse$ that represent elements of $\text{WC}_B(X^0)$ and contain fibers that are either non-reduced -- so-called multiple fibers -- or non-flat, that is of dimension greater than one.

We will first argue that the fibrations $X^\gamma$ do not have any non-flat fibers.
The physics associated with non-flat fibers has been discussed, for example in~\cite{Lawrie:2012gg,Braun:2013nqa,Borchmann:2013hta,Buchmuller:2017wpe,Apruzzi:2018nre,Dierigl:2018nlv,Oehlmann:2019ohh,Apruzzi:2019opn,Apruzzi:2019enx}.
Non-flat fibers in a genus one fibered Calabi-Yau threefold $\pi:X\rightarrow B$ appear when the dimension of the fiber over a point $p\in B$ jumps to $\text{dim}\,\pi^{-1}(p)=2$.
Since in the F-theory limit of ${\rm M}[X]$, the surface $\pi^{-1}(p)$ shrinks to a point, the M5-branes wrapped on that surface correspond to tensionless strings in the six-dimensional theory.
In six dimensions, tension can be generated for the strings by giving a vacuum expectation value to the scalar in the corresponding tensor multiplet that couples to the string.
Geometrically, this amounts to blowing up the base of the fibration.
Since we have restricted our analysis to six-dimensional F-theory vacua ${\rm F}[X^0]$ with a generic choice for the tensor multiplet scalars, they do not contain any tensionless strings.
We can therefore assume that the corresponding torus fibrations $X^\gamma$ are flat.

We also expect that the fibrations do not contain any multiple fibers. Such fibers  have been discussed in the context of F-theory in~\cite{deBoer:2001wca,Bhardwaj:2015oru,Anderson:2018heq,Oehlmann:2019ohh,Anderson:2023wkr,Anderson:2023tfy,Ahmed:2024wve}.
One type of multiple fibers arises in fibrations over a base $B$ that exhibits orbifold singularities, see e.g.~\cite{Bhardwaj:2015oru,Anderson:2018heq,Anderson:2019kmx,Kohl:2021rxy}.
Physically, this leads to an enhancement of $\pi_0(\Gsd)$ that arises only at a special sublocus of the tensor branch.
We have again excluded this possibility by assuming that we are on a sufficiently generic point of the tensor branch.

Another origin of multiple fibers are compactifications which involve an outer automorphism of the \textit{affine} Dynkin diagram of a non-Abelian gauge symmetry.
Such examples have been discussed, for example, in~\cite[Appendix B]{Oehlmann:2019ohh} and~\cite{Bhardwaj:2019fzv,Kim:2019dqn,Braun:2021lzt,Kim:2021cua,Bhardwaj:2022ekc,Lee:2022uiq,Anderson:2023wkr}.
Since we assume that the gauge group of ${\rm F}[X^0]$ is finite, we expect that this second type of multiple fibers is again absent in the geometries $X^\gamma$.

In~\cite[Appendix B]{Oehlmann:2019ohh}, a class of genus one fibrations was discussed that exhibits multiple fibers over isolated points on a 1-dimensional component of the discriminant locus over which the fiber is generically of type $I_2$.
The fibrations also exhibit an additional fibral divisor and the corresponding six-dimensional gauge symmetry was argued to be $\Gsd={\rm SU}(2)\times\mathbb{Z}_4$.
After performing a Higgs transition that breaks the gauge group to $\mathbb{Z}_4$, the multiple fibers disappear.
This leads us to believe that this type of multiple fiber can only appear in the presence of a non-trivial continuous gauge symmetry.

We are not aware of any other situation in which multiple fibers can arise.
Without any non-flat or multiple fibers in any of the fibrations $X^\gamma$ for $\tse\in\text{WC}_B(X^0)$, we expect that all of these fibrations are locally elliptic and therefore we expect~\eqref{eqn:ShEqualsWC} to hold.

\section{Discrete holonomies in M-theory and twisted-twined elliptic genera}
\label{sec:ellipticGeneraAndTopologicalStrings}
Let $X^0$ be an almost generic elliptic Calabi-Yau threefold in the sense of Definition \ref{def:almostgeneric}.
The F-theory vacuum ${\rm F}[X^0]$ then exhibits a finite Abelian gauge group $\Gsd=\Gamma^0$, where $\Gamma^0=\Sh_B(X^0)$.
In the previous section, we studied the twisted circle compactifications ${\rm F}[X^0][S^1_\tse]$ for $\tse\in\Gamma^0$.
The duality with M-theory incorporates the twisting into the compactification threefold,
\begin{align}
    {\rm F}[X^0][S^1_\tse]={\rm M}[X^\tse]\,,
\end{align}
where $X^\tse$ is the genus one fibered Calabi-Yau threefold that is associated to the element $\tse\in\Sh_B(X^0)$.

If we compactify on another circle, the resulting four-dimensional theory is dual to Type IIA string theory on $X^\tse$, i.e.
\begin{align}
    {\rm F}[X^0][S^1_\tse][S^1]={\rm M}[X^\tse][S^1]={\rm IIA}[X^\tse]\,.
    \label{eqn:FtheoryIIAuntwisted}
\end{align}
This naturally leads to the question of what happens on the Type IIA side of the duality if we twist the compactification along both of the circles.

It was argued in~\cite{Schimannek:2021pau,Dierigl:2022zll,Katz:2022lyl,Katz:2023zan} that given two elements $\tseOne,\tseTwo\in\Sh_B(X^0)$, the generalization of~\eqref{eqn:FtheoryIIAuntwisted} takes the form
\begin{align} 
    {\rm F}[X^0][S^1_{\tseOne}][S^1_{\tseTwo}]= {\rm M}[X^{\tseOne}][S^1_{\tseTwo}]={\rm IIA}[X^{\tseOne}_{\tseTwo}]\,,
    \label{eqn:FtheoryIIAtwisted}
\end{align}
where $X^{\tseOne}_{\tseTwo}$ refers to the Calabi-Yau threefold $X^\tseOne$ together with a B-field topology on $X^\tseOne$ that corresponds to the element $\tseTwo$.
We will review in Section~\ref{sec:Bfields} that the topology of a flat B-field on $X^\tseOne$ is conjecturally classified by elements of $\Gamma^\tseOne=\tors{H^3(\widehat{X}^\tseOne,\mathbb{Z})}$.
By invoking Conjecture~\ref{conj:TSandTorsion}\ref{en:Gamma}, we also have $\Gamma^\tseOne=\Gamma^0/\langle\tseTwo\rangle$ and can indeed associate to $\tseTwo$ a B-field topology $[\tseTwo]\in\Gamma^\tseOne$.
Recall from the discussion above Conjecture~\ref{conj:twisted} that $\tors{H^3(\widehat{X}^\tseOne,\mathbb{Z})}$, and therefore $\Gamma^\tseOne$, can also be identified with the cohomological Brauer group $\text{Br}(\widehat{X}^\tseOne)$.

In this section, we will argue for the following three claims, which can be seen as consequences of the duality \eqref{eqn:FtheoryIIAtwisted} and generalize earlier results from~\cite{Cota:2019cjx,Schimannek:2021pau,Dierigl:2022zll,Katz:2022lyl}:
\begin{claim}\label{claim:ellipticGtopZ}
    The A-model topological string partition function at non-trivial base degree $\beta$
    on $X^\tseOne_\tseTwo$ is related by a change of coordinates to the twisted-twined elliptic genus
    $\IE{\tiny\left[\begin{array}{c}\tseOne\\\tseTwo\end{array}\right]}_\beta(\tau,\lambda)$.
\end{claim}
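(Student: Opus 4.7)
The plan is to exploit the chain of dualities~\eqref{eqn:FtheoryIIAtwisted} and track BPS states through each step. On the topological string side, the partition function at non-trivial base degree $\beta \in H_2(B,\IZ)$ can be expanded as a generating function of (torsion-refined) Gopakumar-Vafa invariants counting M2-branes wrapping curves in $\widehat{X}^{\tseOne}$ whose projection to $B$ has class $\beta$. The B-field topology $[\tseTwo]_\tseOne \in \Gamma^\tseOne$ enters as a phase pairing with the torsion class of the wrapped curve, as reviewed for us in the discussion of torsion-refined Gopakumar-Vafa invariants in the introduction. On the F-theory side, curves of base degree $\beta$ correspond to non-critical strings in ${\rm F}[X^0]$ wrapping $\beta \subset B$, whose worldsheet theories compute the elliptic genera at base degree $\beta$.

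The first step is to make precise how the two twists act on the string worldsheet. Using our analysis in Section~\ref{sec:kkandhol}, interpreting the M-theory circle of ${\rm M}[X^\tseOne] = {\rm F}[X^0][S^1_{\tseOne}]$ as the torus direction of the worldsheet shows that the $\tseOne$-twist shifts the boundary conditions on worldsheet fields along the spatial cycle of the torus, selecting the $\tseOne$-twisted sector of the string Hilbert space. Applying a second twisted reduction ${\rm M}[X^\tseOne][S^1_{\tseTwo}] = {\rm IIA}[X^{\tseOne}_{\tseTwo}]$ then realises the $\tseTwo$-twist as a discrete holonomy along the temporal cycle, which in the elliptic genus inserts the operator representing the action of $\tseTwo$ on that twisted sector; this is precisely the twined trace defining $\IE{\scriptsize \left[\begin{array}{c}\tseOne\\\tseTwo\end{array}\right]}_\beta(\tau,\lambda)$.

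The second step is to identify the change of coordinates. The modular parameter $\tau$ of the elliptic genus is identified with the K\"ahler parameter of the generic fiber $\Sigma_\tseOne$ of $\widehat{X}^\tseOne$, while the fiber-class fugacity $\lambda$ descends from the $SU(2)_L$ little-group refinement on the five-dimensional side and pairs with the Cartan of the 6d R-symmetry that is visible in the elliptic genus. Vertical divisors $J_\alpha$ on $\widehat{X}^\tseOne$ provide the base K\"ahler parameters that combine into the $q_\beta$-grading, and torsion curves provide the discrete $[\tseTwo]_\tseOne$-character phases. The non-trivial content in this identification is that the natural topological string variables on the singular geometry $X^\tseOne_\tseTwo$ are adapted to the $m_\tseOne$-section of $\widehat{X}^\tseOne$ established in Conjecture~\ref{conj:I2fibers}; comparing with the analogous refined identification on a smooth deformation $\widetilde{X}^\tseOne$ requires absorbing the shift~\eqref{eqn:conj3changeOfBasis} into $\tau$, and rescaling by a factor of $m_\tseOne$ that reflects the covering of the $\tseOne$-twisted Kaluza-Klein circle used in Section~\ref{sec:holonomy}.

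The main obstacle will be to give a clean BPS-counting argument that survives the fact that $\widehat{X}^\tseOne$ is non-K\"ahler whenever $S_{\Delta,\tseOne}\ne\emptyset$, so that the Gopakumar-Vafa expansion is not literally a statement about a projective Calabi-Yau threefold. Here we must rely on Claims~\ref{claim:mtheoryNodalCYgaugeGroup} and~\ref{claim:mtheoryNodalCYmatter} that M2-branes wrapping the exceptional curves still furnish the physical half-hypermultiplet spectrum, and check consistency by comparing the resulting multi-cover formulae against the modular anomaly equations for $\IE{\scriptsize \left[\begin{array}{c}\tseOne\\\tseTwo\end{array}\right]}_\beta$ for at least one $\SLtwoZ$-orbit, including the ``esoteric'' mirrors mentioned in the introduction. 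A complementary check is provided by requiring that summing over the lifts~\eqref{eqn:liftingCondition} of the character $\widetilde{\chi}_\tseOne$ yields the correct multiplicities $N_\chi$ of charged half-hypermultiplets, which pins down the normalisations in the change of coordinates.
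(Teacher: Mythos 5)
Your overall strategy coincides with the paper's (Sections~\ref{sec:twistedmodularity}--\ref{sec:modularityTopString}): the $\tseOne$-twist is realized as the boundary condition along the spatial cycle of the worldsheet torus, selecting the $\tseOne$-twisted sector, the $\tseTwo$-twist as the holonomy along the temporal cycle inserting $U_\tseTwo$ into the trace, and the modular parameter $\tau$ is identified with the complexified volume of the generic fiber of $X^\tseOne$. The gap is in the change of coordinates itself, which is the substance of the claim. In the paper, the generic fiber represents the class $(\vec 0,m_\tseOne,m_\tseOne\chi_\tseOne)$ after applying the isomorphism $u^*$ associated to a choice of lift $\chi_\tseOne$, and pairing its \emph{torsion} part with the B-field $[\tseTwo]$ via the complexified volume map \eqref{eqn:Cvolume} produces the additive shift in \eqref{eqn:CvolumeFibre}, so that $\tau=m_\tseOne\tilde\tau+q_\tseTwo(m_\tseOne\chi_\tseOne)/m_\tseTwo$ as in \eqref{eqn:EgenusTopString}. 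Your proposal retains only the rescaling by $m_\tseOne$ and attributes the remaining shift to the basis change \eqref{eqn:conj3changeOfBasis}; but that change shifts $J_0$ by a vertical divisor and therefore only redefines the base parameters $t^\sia$, leaving the coefficient of the fiber-counting divisor untouched. The fractional shift of the elliptic-genus argument --- previously observed only empirically --- is precisely what the paper's derivation is designed to explain, and it comes with the consistency requirement, which you do not address, that the lift $\chi_\tseOne$ and the splitting \eqref{eqn:H2ismorphism} of $H_2(X^\tseOne,\IZ)$ be chosen compatibly.

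Two smaller points: $\lambda$ is the topological string coupling, conjugate to the Cartan $J_3^L$ of the little-group ${\rm SU}(2)_L$, not a ``fiber-class fugacity'' and not an R-symmetry chemical potential; and the non-K\"ahlerness of $\widehat{X}^\tseOne$ is handled in the paper not by re-deriving a Gopakumar--Vafa expansion but by defining the complexified volume through the ample cone of the singular $X^\tseOne$ together with rational Poincar\'e duality (Section~\ref{sec:Bfields}), so your ``main obstacle'' is resolved at the level of defining the coordinates rather than by a multi-cover consistency check.
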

\begin{claim}\label{claim:monodromy}
    In the large base limit of the stringy K\"ahler moduli space of $X^\tseOne_\tseTwo$, the monodromy group is given by the subgroup $\Gamma^{\tseOne}_{\tseTwo}$ of the modular group that leaves the twisting and twining elements of the elliptic genus invariant.
\end{claim}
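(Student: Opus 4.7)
The plan is to combine Claim~\ref{claim:ellipticGtopZ} with the natural $\SLtwoZ$-action on pairs $(\tseOne,\tseTwo)$ induced by modular transformations of the elliptic fiber, and then to identify the monodromies that preserve the geometry $X^\tseOne_\tseTwo$ with the stabilizer of $(\tseOne,\tseTwo)$.

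First, I would set up the large base limit. In this limit the base K\"ahler parameters are sent to infinity while the fiber modulus $\tau$ and the flavor fugacity $\lambda$ remain finite. By Claim~\ref{claim:ellipticGtopZ}, at any fixed non-trivial base degree $\beta$, the topological string partition function on $X^\tseOne_\tseTwo$ is, up to the change of coordinates described there, the twisted-twined elliptic genus $\IE{\tiny\left[\begin{array}{c}\tseOne\\\tseTwo\end{array}\right]}_\beta(\tau,\lambda)$ of the associated non-critical strings of ${\rm F}[X^0]$. Monodromies of the partition function in the large base limit therefore act on $(\tau,\lambda)$ through the modular group $\SLtwoZ$.

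Next, I would invoke the standard action of $\SLtwoZ$ on the boundary conditions of a torus partition function. Under $g=\left(\begin{smallmatrix}a&b\\c&d\end{smallmatrix}\right)\in\SLtwoZ$, an elliptic genus carrying twist $\tseOne$ along the spatial cycle and twine $\tseTwo$ along the Euclidean time cycle maps to the elliptic genus with labels $(\tseOne^a\tseTwo^c,\tseOne^b\tseTwo^d)$, which is precisely the action on pairs appearing in Conjecture~\ref{conj:twisted}. A monodromy $g$ can therefore map the topological string partition function on $X^\tseOne_\tseTwo$ back to itself (up to an overall phase from a modular anomaly) only if $(\tseOne^a\tseTwo^c,\tseOne^b\tseTwo^d)=(\tseOne,\tseTwo)$, i.e.\ only if $g$ lies in the stabilizer $\Gamma^\tseOne_\tseTwo\subset\SLtwoZ$ of the pair $(\tseOne,\tseTwo)$.

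For the converse inclusion, I would use Conjecture~\ref{conj:twisted}: every $g\in\SLtwoZ$ produces a twisted derived equivalence between $D^b(\widehat{X}^\tseOne,[\tseTwo]_\tseOne)$ and $D^b(\widehat{X}^{\tseOnePrime},[\tseTwoPrime]_{\tseOnePrime})$, which is realized physically by brane transport along a path in the stringy K\"ahler moduli space connecting the corresponding large base limits. When $g\in\Gamma^\tseOne_\tseTwo$, this path is a closed loop based at $X^\tseOne_\tseTwo$ and thus contributes to the monodromy group. Such loops generate all of $\Gamma^\tseOne_\tseTwo$, so together with the previous paragraph the monodromy group is exactly $\Gamma^\tseOne_\tseTwo$.

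The main obstacle will be ruling out extra monodromies that do not descend from the $\SLtwoZ$-action on the fiber. Physically this is natural, because in the large base limit the theory degenerates to a sum over six-dimensional non-critical strings whose torus partition functions are controlled by $\SLtwoZ$ alone. A fully rigorous argument would require a careful analysis of the analytic structure of the topological string partition function as a function of all K\"ahler parameters, together with the mirror-symmetric identification of the distinct MUM-points in a single $\SLtwoZ$-orbit with the large base limits of the various $X^{\tseOnePrime}_{\tseTwoPrime}$, so that monodromies which do not permute these basepoints are exactly those preserving the pair $(\tseOne,\tseTwo)$.
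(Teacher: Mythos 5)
Your proposal follows essentially the same route as the paper: identifying the base-degree-$\beta$ partition function with the twisted-twined elliptic genus via Claim~\ref{claim:ellipticGtopZ} and then using the $\SLtwoZ$ action on the twist/twine labels in \eqref{eqn:ttEllGenMod} to conclude that only the stabilizer $\Gamma^{\tseOne}_{\tseTwo}$ of $(\tseOne,\tseTwo)$ can act as monodromy in the large base limit is exactly the argument of Sections~\ref{sec:twistedmodularity} and~\ref{sec:modularityAndDerivedEquivalences}. The one caveat is your converse inclusion: you invoke Conjecture~\ref{conj:twisted}, but in the paper's logical structure that conjecture is itself \emph{derived from} Claims~\ref{claim:ellipticGtopZ}--\ref{claim:cusps}, so using it here is circular; the paper instead gets the full stabilizer directly from the fact that every $g\in\Gamma^{\tseOne}_{\tseTwo}$ preserves the elliptic genus up to the anomalous phase $\rho^{\tseOne}_{\tseTwo}(g,\lambda)$, so that the large-base slice of the moduli space is the modular curve $\IH/\Gamma^{\tseOne}_{\tseTwo}$ whose fundamental group realizes all of $\Gamma^{\tseOne}_{\tseTwo}$ as monodromies.
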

\begin{claim}\label{claim:cusps}
    The cusps of the modular curve $\SLtwoZ/\Gamma^{\tseOne}_{\tseTwo}$ correspond to MUM points associated to the geometries $X^\tseOnePrime_\tseTwoPrime$ in the $\SLtwoZ$ orbit of $(\tseOne,\tseTwo)
    $.
\end{claim}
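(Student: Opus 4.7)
The plan is to argue for all three claims together by tracing the chain of dualities
\begin{equation*}
{\rm F}[X^0][S^1_{\tseOne}][S^1_{\tseTwo}]={\rm M}[X^{\tseOne}][S^1_{\tseTwo}]={\rm IIA}[X^{\tseOne}_{\tseTwo}]
\end{equation*}
and then exploiting modularity. For Claim \ref{claim:ellipticGtopZ}, I would start from the well-established M-theory/IIA identification of the A-model topological string partition function on a smooth Calabi-Yau threefold with a generating series of BPS states of M2-branes wrapped on holomorphic curves, and argue that the presence of a flat but topologically non-trivial B-field $[\tseTwo]\in\Gamma^{\tseOne}$ on $X^{\tseOne}$ refines the count by a phase determined by the pairing of the torsion class of the wrapped curve with $\tseTwo$ (this is the torsion-refined GV invariant of \cite{Schimannek:2021pau}). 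Passing back up to six-dimensional F-theory via the $S^1_{\tseTwo}$ lift, these BPS states correspond to bound states of 6d BPS strings with Kaluza-Klein modes; the sum over such bound states at fixed base degree $\beta$ is precisely the elliptic genus of the string, now twisted by $\tseOne$ (from the holonomy around the first circle) and twined by $\tseTwo$ (from the holonomy around the second). The change of coordinates exchanges the fiber Kähler parameter for $\tau$, the base Kähler parameters for the string tensions, and a transverse Kähler parameter for the elliptic genus fugacity $\lambda$.

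For Claim \ref{claim:monodromy}, I would invoke the modular transformation behaviour of $\tseOne$-twisted, $\tseTwo$-twined elliptic genera: under $g=\left(\begin{smallmatrix}a&b\\c&d\end{smallmatrix}\right)\in\SLtwoZ$, such a genus maps to a $\tseOnePrime$-twisted, $\tseTwoPrime$-twined one with $(\tseOnePrime,\tseTwoPrime)=(\tseOne^a\tseTwo^c,\tseOne^b\tseTwo^d)$, exactly the transformation appearing in Conjecture~\ref{conj:twisted}. The monodromy of the A-model partition function around the large base limit of $X^{\tseOne}_{\tseTwo}$ acts on $\tau$ by shifts $\tau\mapsto\tau+1$ together with the phases induced by its transformation law as a vector-valued (quasi-)modular object. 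By Claim \ref{claim:ellipticGtopZ}, modular transformations that preserve both $\tseOne$ and $\tseTwo$ up to an overall phase act as genuine monodromies on $X^{\tseOne}_{\tseTwo}$ itself, whereas those that do not must land on a different geometry $X^{\tseOnePrime}_{\tseTwoPrime}$; the true monodromy group is therefore the stabilizer $\Gamma^{\tseOne}_{\tseTwo}$ of $(\tseOne,\tseTwo)$ inside $\SLtwoZ$.

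For Claim \ref{claim:cusps}, I would use the fact that the cusps of $\SLtwoZ/\Gamma^{\tseOne}_{\tseTwo}$ are labelled by the $\Gamma^{\tseOne}_{\tseTwo}$-orbits of $\IQ\cup\{\infty\}$, equivalently by representatives $g\in\Gamma^{\tseOne}_{\tseTwo}\backslash\SLtwoZ$. Each such representative sends the standard cusp $\ii\infty$ (which corresponds to the large fiber limit of $X^{\tseOne}_{\tseTwo}$) to another cusp, under which the twisting/twining data transforms as in Claim \ref{claim:monodromy}. Combined with the twisted derived equivalences of Conjecture~\ref{conj:twisted} and the already-observed identification of the large fiber limit of a configuration $X^{\tseOnePrime}_{\tseTwoPrime}$ with a MUM point in the mirror moduli space, this identifies each cusp of $\SLtwoZ/\Gamma^{\tseOne}_{\tseTwo}$ with the large volume point of one of the geometries $X^{\tseOnePrime}_{\tseTwoPrime}$ in the $\SLtwoZ$-orbit of $(\tseOne,\tseTwo)$.

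The hard part will be Claim \ref{claim:ellipticGtopZ}: making the dictionary between the topological string on a singular (or non-Kähler resolved) Calabi-Yau with torsional B-field and the twisted-twined elliptic genus precise requires controlling which D-brane bound states contribute when the wrapped curves are torsional and when the holonomy $\tseOne$ obstructs genuine 6d BPS states. The physical consistency check comes from extracting the torsion-refined Gopakumar-Vafa invariants by inverse Fourier transform over $\tseTwo\in\Gamma^{\tseOne}$ and verifying that this reproduces the expected 6d BPS spectrum; this provides both a definition of the identification and a nontrivial test of the duality underlying all three claims.
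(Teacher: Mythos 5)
Your route to Claim~\ref{claim:cusps} is essentially the paper's: combine the identification~\eqref{eqn:EgenusTopString} of the base-degree-$\beta$ partition function with the twisted-twined elliptic genus (Claim~\ref{claim:ellipticGtopZ}) with the transformation law~\eqref{eqn:ttEllGenMod}, so that the limit $\tau\rightarrow\tau_\infty(g)=a/c$ together with $t_\sia\rightarrow\ii\infty$ is read off as the large volume limit of $X^{\tseOnePrime}_{\tseTwoPrime}$ with $(\tseOnePrime,\tseTwoPrime)=(\tseOne^a\tseTwo^c,\tseOne^b\tseTwo^d)$, and the inequivalent such limits are exactly the cusps of the modular curve. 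Two remarks. First, a point of logical order: you invoke the twisted derived equivalences of Conjecture~\ref{conj:twisted} as an \emph{input} to Claim~\ref{claim:cusps}, whereas in the paper the dependence runs the other way --- the fact that the MUM points of $X^{\tseOne}_{\tseTwo}$ and $X^{\tseOnePrime}_{\tseTwoPrime}$ sit in the same stringy K\"ahler moduli space (i.e.\ Claim~\ref{claim:cusps}) is what \emph{motivates} Conjecture~\ref{conj:twisted} via brane transport. Using the conjecture to establish the claim and then the claim to motivate the conjecture would be circular; the claim should rest only on Claims~\ref{claim:ellipticGtopZ} and~\ref{claim:monodromy} together with the modularity of the elliptic genera. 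Second, a small imprecision: the cusps are labelled by $\Gamma^{\tseOne}_{\tseTwo}$-orbits on $\IQ\cup\{\ii\infty\}$, which correspond to double cosets $\Gamma^{\tseOne}_{\tseTwo}\backslash\SLtwoZ/\Gamma_\infty$ (with $\Gamma_\infty$ the stabilizer of $\ii\infty$), not to single cosets in $\Gamma^{\tseOne}_{\tseTwo}\backslash\SLtwoZ$; this is consistent with the fact that the geometry $X^{\tseOnePrime}$ attached to a cusp depends only on $\tseOnePrime=\tseOne^a\tseTwo^c$, i.e.\ only on the cusp $a/c$ and not on the full matrix $g$.
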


We will first discuss the twisted-twined elliptic genera and their modular properties in Section~\ref{sec:twistedmodularity}.
Then, in Section~\ref{sec:Bfields}, we will review flat but topologically non-trivial B-fields on Calabi-Yau threefolds with at most isolated nodal singularities and discuss how the notion of a complexified K\"ahler class extends to these singular spaces.
We will substantiate Claim~\ref{claim:ellipticGtopZ} in Section~\ref{sec:modularityTopString} by working out the detailed map between the parameters of the elliptic genus and the topological string partition function, filling in some details that have hitherto not appeared in the literature.
Claim~\ref{claim:cusps} and its relation to Conjecture~\ref{conj:twisted} will be discussed in Section~\ref{sec:modularityAndDerivedEquivalences}.

Note that for Claim \ref{claim:cusps} to be consistent with Claim \ref{claim:monodromy}, we must have $\Gamma^{\tseOne}_\tseTwo \cong \Gamma^\tseOnePrime_{\tseTwoPrime}$ whenever $(\tseOne,\tseTwo)$ and $(\tseOnePrime,\tseTwoPrime)$ lie in the same $\SLtwoZ$ orbit. This follows from the fact that conjugate subgroups are isomorphic.

\subsection{Twisted boundary conditions and the modular group}
\label{sec:twistedmodularity}

From a Type IIB perspective, the non-critical strings in ${\rm F}[X^0]$ arise from D3-branes that wrap holomorphic curves in the base $B$ of the Weierstra{\ss} fibration $X^0$. The worldsheet theory of the resulting effective string is a two-dimensional $(0,4)$-SCFT. The string charge lattice can be identified with $H_2(B,\mathbb{Z})$, the possible classes of curves wrapped by the D3-brane. For each $\beta\in H_2(B,\mathbb{Z})$, we can compute the elliptic genus
\begin{align}
    {\mathbb E}_\beta(\tau,\lambda)=\text{Tr}_{\mathcal{H}^\beta}(-1)^Fq^{H_0}\bar{q}^{\tilde{H}_0}y^{J_3^L}
\end{align}
associated to the string of this charge wrapping a torus $T^2$ of complex structure $\tau$. Here, $q=e^{2\pi {\rm i}\tau}$, $y=e^{2\pi {\rm i} \lambda}$ and $J_3^L$ is the Cartan generator associated to a global ${\rm SU}(2)_L$ symmetry.

The gauge symmetry of ${\rm F}[X^0]$ induces a corresponding global symmetry $\Gamma^0$ of the string worldsheet theory which can be used to twist the theory.
Let $\tseOne, \tseTwo \in \Gamma^0$ and denote by $U_\tseTwo$ the unitary operator representing $\tseTwo$ on the Hilbert space.
We can define the $\tseOne$-twisted and $\tseTwo$-twined elliptic genus
\begin{align}
    {\mathbb E}{\tiny\left[\begin{array}{c}\tseOne\\\tseTwo\end{array}\right]}_\beta(\tau,\lambda)=\text{Tr}_{\mathcal{H}_{\tseOne}^\beta}(-1)^Fq^{H_0}\bar{q}^{\tilde{H}_0}y^{J_3^L}U_{\tseTwo}\,,
\end{align}
where $\mathcal{H}_{\tseOne}^\beta$ is the Hilbert space of $\tseOne$-twisted states of charge $\beta$.
Note that the twining by $\tseTwo$ can alternatively be interpreted in terms of twisted boundary conditions in the time direction. We will therefore also refer to $ {\mathbb E}{\tiny\left[\begin{array}{c}\tseOne\\\tseTwo\end{array}\right]}_\beta$ as the $(\tseOne,\tseTwo)$-twisted elliptic genus.

Two bases of the integral homology $H_2(T^2,\IZ)$ of the worldsheet torus are related by an $\mathrm{SL}(2,\IZ)$ transformation. We will write this transformation as a left action on the basis vector $(\Sigma_1, \Sigma_2)^T$. This leads to the familiar $\mathrm{SL}(2,\IZ)$ action on the complex structure $\tau$ of the torus,
\begin{align}
    g=\left(\begin{array}{cc}
    a&b\\
    c&d
    \end{array}\right)\in \text{SL}(2,\mathbb{Z})\,,\quad g:\,\tau\mapsto\frac{a\tau+b}{c\tau+d}\,.
\end{align}
We write the twisting-twining group action associated to the two basis cycles as 
\begin{equation}
    (\gamma_1, \gamma_2) \cdot \begin{pmatrix} \Sigma_1 \\ \Sigma_2 \end{pmatrix} = \gamma_1 \Sigma_1 + \gamma_2 \Sigma_2 \,.
\end{equation}
A transformation of the basis vector must then be accompanied by the inverse transformation acting from the right on the vector $(\gamma_1, \gamma_2)$,\footnote{The technical necessity for this transformation stems from the behavior of the key quantities
\begin{align}
    Q_{k}(\mu, \lambda, \tau) = \frac{1}{(2\pi \ii)^k}\sum_{(m_1,m_n) \neq (0,0)} \frac{\lambda^{-m_1} \mu^{m_2}}{(m_1 \tau + m_2)^k} 
\end{align}
that arise when considering chiral algebras with automorphism on a torus \cite[Theorem 4.8]{Dong:1997ea}; here, $(\mu, \lambda)$ are eigenvalues of $(\gamma_1, \gamma_2)$. Formal manipulations of the infinite sum are justified for $k \ge 3$. It is then easy to see \cite[Theorem 4.6]{Dong:1997ea} that
\begin{align}
    Q_k (\mu, \lambda, g \tau) = (c \tau + d)^k Q_k\left( (\mu, \lambda) g, \tau\right) \,.
\end{align}}
such that for all $g={\tiny\left(\begin{array}{cc}a&b\\c&d\end{array}\right)}\in\SLtwoZ$, one has
\begin{align}
\begin{split}
    {\mathbb E}{\tiny \left[ \left(\tseOne,\tseTwo\right)g\right]}_\beta(\tau,\lambda)={\mathbb E}{\tiny\left[\begin{array}{c}\tseOne^a\tseTwo^c\\\tseOne^b\tseTwo^d\end{array}\right]}_\beta(\tau,\lambda)=&\rho^{\tseOne}_{\tseTwo}(g,\lambda) {\mathbb E}{\tiny\left[\begin{array}{c}\tseOne\\\tseTwo\end{array}\right]}_\beta \left(\frac{a\tau+b}{c\tau+d}, \frac{\lambda}{c\tau+d}\right)\,.
    \end{split}
    \label{eqn:ttEllGenMod}
\end{align}
Here, $\rho^{\tseOne}_{\tseTwo}(g,\lambda)$ is a phase that measures the 't~Hooft anomalies of the global symmetries of the worldsheet theory, see~\cite{Dierigl:2022zll}.

Due to the transformation of $(\gamma_1, \gamma_2)$, the partition function is only invariant, potentially up to an anomalous phase, under modular transformations that leave the boundary conditions invariant, as was already noted in~\cite{ZUBER1986127} for general two-dimensional CFTs. This leads us to the following definition:
\begin{definition}\label{def:stabilizerGroup}
Given a finite Abelian group $G$ and elements $\tseOne,\tseTwo\in G$, we define
\begin{align}
    \Gamma^{\tseOne}_{\tseTwo}=\left\{\,\,\left(\begin{array}{cc}a&b\\c&d\end{array}\right)\in{\normalfont\SLtwoZ}\,\,\big\vert\,\,\left(\begin{array}{c}\tseOne\\\tseTwo\end{array}\right)=\left(\begin{array}{c}\tseOne^a\tseTwo^c\\\tseOne^b\tseTwo^d\end{array}\right)\,\,\right\}\,.
\end{align}
\end{definition}
In view of Claim \ref{claim:monodromy} and Claim \ref{claim:cusps}, the group $\Gamma^\tseOne_\tseTwo$ for $\gamma_1, \gamma_2 \in \Gamma^0$ will play an important role in the following. To determine it, we first note that $\langle \tseOne,\tseTwo\rangle\subset\Gamma^0$ is closed under the $\SLtwoZ$ action; we can hence restrict our considerations to this subgroup.
The fundamental theorem of finite Abelian groups implies the isomorphism
\begin{align}
     \langle \tseOne,\tseTwo\rangle=\mathbb{Z}_{k_1}\times\mathbb{Z}_{k_2}\,,
\end{align}
such that $k_1$ divides $k_2$. 
It will be convenient to use additive notation. We will denote the images of $\tseOne,\tseTwo$ under the isomorphism as $(r_1,r_2)$ and $(s_1,s_2)$ respectively.
Setting $m=k_2/k_1$, we can equivalently consider the stabilizer of $(r_1',r_2)=(mr_1,r_2)$ and $(s_1',s_2)=(ms_1,s_2)$ as elements of $\mathbb{Z}_{k_2}\times\mathbb{Z}_{k_2}$. Defining
\begin{align}
    \text{Stab}_{k_2}(t) = \left\{ g \in \mathrm{SL}(2,\IZ_{k_2}) \, | \, t = tg \right\}\,,
\end{align}
where $t$ denotes the tuple $t=\left(\begin{array}{c}
        (r_1',r_2),
        (s_1',s_2)
    \end{array}\right)$,
we obtain the group $\Gamma^{\gamma_1}_{\gamma_2}$ as the preimage
\begin{align}
    \Gamma^{\gamma_1}_{\gamma_2}=\pi_{k_2}^{-1}\text{Stab}_{k_2}(t)\
\end{align}
under the projection $\pi_{k_2}:\,\text{SL}(2,\mathbb{Z})\rightarrow \text{SL}(2,\mathbb{Z}_{k_2})$.

The possible twists and corresponding symmetry groups for the simplest case  $\Gamma^0=\mathbb{Z}_2$ are shown in Table~\ref{tab:twistSymZ2}. The case $\Gamma^0=\mathbb{Z}_4$, which will be needed in Section \ref{sec:exampleZ4}, is shown in Table~\ref{tab:twistSymZ4}. Finally, we also consider the case $\Gamma^0=\mathbb{Z}_2\times\mathbb{Z}_2$, which will be relevant in the forthcoming \cite{wipZ2Z2}, in Table~\ref{tab:twistSymZ2Z2}.
A discussion of the relevant congruence subgroups can be found in Appendix~\ref{sec:congruenceSubgroups}.

\begin{table}[ht!]
\centering
\subfloat[$\Gamma^0=\mathbb{Z}_2$\label{tab:twistSymZ2}]{
\begin{tabular}{|c|c|}\hline
    $r,\,s\in\mathbb{Z}_2$&$\Gamma^r_s$\\\hline
    $(r,s)=(0,0)$&$\text{SL}(2,\mathbb{Z})$\\\hline
    $(r,s)=(1,0)$&$\Gamma^1(2)$\\\hline
    $(r,s)=(0,1)$&$\Gamma_1(2)$\\\hline
    $(r,s)=(1,1)$&$\Lambda_2$\\\hline
\end{tabular}
}
\subfloat[$\Gamma^0=\mathbb{Z}_4$\label{tab:twistSymZ4}]{
\begin{tabular}{|c|c|}\hline
    $r,\,s\in\mathbb{Z}_4$&$\Gamma^{r}_{s}$\\\hline
    $(r,s)=(0,0)$&$\text{SL}(2,\mathbb{Z})$\\\hline
    $(r,s)=(\pm 1,0)$&$\Gamma^1(4)$\\\hline
    $(r,s)=(2,0)$&$\Gamma^1(2)$\\\hline
    $(r,s)=(2,\pm 1)$&$\Gamma_1(4)\cap\Gamma(2)$\\\hline
    $(r,s)=(0,\pm 1)$&$\Gamma_1(4)$\\\hline
    $(r,s)=(0,2)$&$\Gamma_1(2)$\\\hline
\end{tabular}
}\\
\subfloat[$\Gamma^0=\mathbb{Z}_2\times\mathbb{Z}_2$\label{tab:twistSymZ2Z2}]{
\begin{tabular}{|c|c|}\hline
    $(r_1,r_2),\,(s_1,s_2)\in\mathbb{Z}_2\times\mathbb{Z}_2$&$\Gamma^{(r_1,r_2)}_{(s_1,s_2)}$\\\hline
    $(r_1,r_2)=(s_1,s_2)=(0,0)$&$\text{SL}(2,\mathbb{Z})$\\\hline
    $(r_1,r_2)\ne (0,0)\,,\,\,(s_1,s_2)=(0,0)$&$\Gamma^1(2)$\\\hline
    $(r_1,r_2)= (0,0)\,,\,\,(s_1,s_2)\ne (0,0)$&$\Gamma_1(2)$\\\hline
    $(r_1,r_2)=(s_1,s_2)\ne (0,0)$&$\Lambda_2$\\\hline
    $(r_1,r_2)\ne(0,0)\,,\,\,(s_1,s_2)\ne(0,0)\,,\,\,(r_1,r_2)\ne (s_1,s_2)$&$\Gamma(2)$\\\hline
\end{tabular}
}
\label{tab:twistSym}
\caption{The stabilizer groups $\Gamma^\tseOne_\tseTwo$ for three examples.}
\end{table}

The fact that the half-hypermultiplets in the F-theory vacuum ${\rm F}[X^0]$ always contain charge conjugate pairs entails the identification ${\rm F}[X^0][S^1_{\tseOne}]\simeq {\rm F}[X^0][S^1_{\tseOne^{-1}}]$ and therefore $X^\tseOne\simeq X^{\tseOne^{-1}}$.
Similarly, we expect that
\begin{align}
    {\rm F}[X^0][S^1_{\tseOne}][S^1_{\tseTwo}]\simeq {\rm F}[X^0][S^1_{\tseOne^{-1}}][S^1_{\tseTwo^{-1}}]\,.
\end{align}
This implies that the twisted-twined elliptic genera must be invariant under simultaneous inversion of the twist and the twining,
\begin{align}
    {\mathbb E}{\tiny\left[\begin{array}{c}\tseOne\\\tseTwo\end{array}\right]}_\beta(\tau,\lambda)= {\mathbb E}{\tiny\left[\begin{array}{c}\tseOne^{-1}\\\tseTwo^{-1}\end{array}\right]}_\beta(\tau,\lambda)\,.
    \label{eqn:ellGenChargeInversion}
\end{align}
In general, this imposes further conditions, in addition to the modularity.
We will explicitly verify this in the example in Section~\ref{sec:exampleZ4}, using the results in Appendix~\ref{app:mformsG14capG2}.

To briefly summarize the relevant point, we can take $\Gamma^0=\mathbb{Z}_4$ and consider ${\mathbb E}{\tiny\left[\begin{array}{c}1\\0\end{array}\right]}_\beta(\tau,\lambda)$ and  ${\mathbb E}{\tiny\left[\begin{array}{c}2\\1\end{array}\right]}_\beta(\tau,\lambda)$ that, by Table~\ref{tab:twistSymZ4}, should respectively transform as meromorphic Jacobi forms for the groups
\begin{align}
    \Gamma^1_0=\Gamma^1(4)\,,\quad \Gamma^2_1=\Gamma_1(4)\cap\Gamma(2)\,.
\end{align}
The two twisted-twined elliptic genera are related by a modular transformation, but, as we show in Appendix~\ref{app:mformsG14capG2}, the rings of modular forms for the two groups $\Gamma^1_0$ and $\Gamma^2_1$ are not isomorphic.
The reason is that~\eqref{eqn:ellGenChargeInversion} together with~\eqref{eqn:ttEllGenMod} implies that
\begin{align}
    {\mathbb E}{\tiny\left[\begin{array}{c}2\\1\end{array}\right]}_\beta(\tau,\lambda)= {\mathbb E}{\tiny\left[\begin{array}{c}2\\3\end{array}\right]}_\beta(\tau,\lambda)\propto {\mathbb E}{\tiny\left[\begin{array}{c}2\\1\end{array}\right]}_\beta(\tau+1,\lambda)\,,
\end{align}
which is only satisfied by a subset of $\Gamma^2_1$ modular or Jacobi forms. We show in Appendix~\ref{app:mformsG14capG2} that $M(\Gamma^1(4))$ is actually isomorphic to the subring $\oplus_{k\ge 0}\Upsilon_k\subset M(\Gamma_1(4)\cap\Gamma(2))$, where
\begin{align}
    \Upsilon_k=\left\{\,\, \phi(\tau)\in M_k\left(\Gamma_1(4)\cap\Gamma(2)\right)\,\,\big\vert\,\, \phi\left(\tau+1\right)=(-1)^k\phi(\tau) \,\,\right\}\,.
\end{align}

\subsection{B-field topology and complexified K\"ahler classes on nodal Calabi-Yau threefolds}
\label{sec:Bfields}
Generalizing the works \cite{Vafa:1986wx,Vafa:1994rv}, the authors of~\cite{Aspinwall:1995rb} propose that the A-model on a smooth Calabi-Yau threefold $X$ depends not merely on the conventional complexified K\"ahler class, but also on a discrete datum, an element of $\tors{H^3(X,\IZ)}$. The two together are captured by the cohomology group $H^2(X,\IC^*)$ \cite{Aspinwall:1995rb}. It is natural to associate the additional degree of freedom to the flat B-field; this then takes values in $H^2(X,\U(1))$, as follows from the following simple argument: Assume that $X$ is simply connected, such that ${H^1(X,\U(1)) = 0}$. The short exact sequence
\begin{align}
    0\rightarrow \mathbb{Z}\rightarrow\mathbb{R}\rightarrow\U(1)\rightarrow 0
\end{align}
then leads to the exact sequence in cohomology
\begin{align}
    0\rightarrow H^2(X,\mathbb{Z})\rightarrow H^2(X,\mathbb{R})\rightarrow H^2(X,\U(1))\rightarrow \tors{H^3(X,\mathbb{Z})}\rightarrow 0\,.
\end{align}
This implies the sought for relation
\begin{align} 
    H^2(X,\U(1))\simeq \quotient{H^2(X,\mathbb{R})}{H^2(X,\IZ)}\times\tors{H^3(X,\mathbb{Z})}\,,
    \label{eqn:H2iso}
\end{align}
with the first factor associated to the conventional (periodic) B-field, and the second to the torsional contribution.\footnote{The torsional contribution can be intuitively attributed to a torsional field strength H.}
As discussed above Conjecture~\ref{conj:twisted}, the torsional factor in~\eqref{eqn:H2iso} can be identified with the cohomological Brauer group of $X$,
\begin{align}
    \text{Br}'(X)= \tors{H^3(X,\mathbb{Z})}\,,
\end{align}
which for smooth projective varieties is isomorphic to the Brauer group $\text{Br}(X)$~\cite{GabberDJ}.
Note that the isomorphism \eqref{eqn:H2iso} is not canonical.
The universal coefficient theorem gives
\begin{align}
    H^2(X,\U(1))=\text{Hom}\left(H_2(X,\mathbb{Z}),\U(1)\right)\,,\quad \tors{H^3(X,\IZ)}=\text{Hom}\left(\tors{H_2(X,\mathbb{Z})},\U(1)\right)\,,
\end{align}
and therefore a choice of isomorphism~\eqref{eqn:H2iso} induces a dual isomorphism
\begin{align}
    H_2(X,\mathbb{Z})\simeq\free{H_2(X,\mathbb{Z})}\times \tors{H_2(X,\mathbb{Z})}\,.
    \label{eqn:H2liso}
\end{align}

Moving beyond the smooth case, we now turn our attention to $\IQ$-factorial Calabi-Yau threefolds $X$ whose only singularities are isolated nodes.
We recall from~\cite{Katz:2022lyl,Katz:2023zan} the following claim, which is in the same spirit as Claims \ref{claim:mtheoryNodalCYgaugeGroup} and \ref{claim:mtheoryNodalCYmatter} above:
\begin{claim}\label{claim:Bfield}
    Given a Calabi-Yau threefold $X$ with isolated nodes, the cohomology class of a flat B-field takes values in $H^2(\widehat{X},\U(1))$, where $\widehat{X}$ is any small resolution of $X$.\footnote{Recall from Comment \ref{comment:nonKaehlerResolution} in Section \ref{sec:almostgeneric} that when $X$ is $\IQ$-factorial, none of the resolutions $\widehat{X}$ can be K\"ahler.}
\end{claim}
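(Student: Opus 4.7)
My approach parallels the reasoning used to justify Claims \ref{claim:mtheoryNodalCYgaugeGroup} and \ref{claim:mtheoryNodalCYmatter}: the physics on the nodal geometry $X$ is to be encoded in an auxiliary analytic small resolution $\widehat{X}$, which need not be K\"ahler. The first step is to recall that in the smooth case, a flat B-field on a simply connected Calabi-Yau threefold is classified by $H^2(X,\U(1))$, as follows from the cohomology long exact sequence associated to $0 \to \IZ \to \IR \to \U(1) \to 0$, and that this group decomposes non-canonically into a continuous part $H^2(X,\IR)/H^2(X,\IZ)$ and a discrete part $\tors{H^3(X,\IZ)} \cong \mathrm{Br}'(X)$.

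The second step is to identify which group replaces $H^2(X,\U(1))$ when $X$ acquires isolated nodes. The guiding principle is that the B-field must pair consistently with the charges of wrapped M2-branes to produce well-defined phases $\exp(2\pi\ii \int_C B)$. By Claim \ref{claim:mtheoryNodalCYmatter}, these charges live in $H_2(\widehat{X},\IZ)$, which in addition to the classes inherited from $X$ also contains the (possibly torsion) classes generated by the exceptional curves $C^E_p$ resolving the nodes. Pontryagin duality then forces the B-field classifying group to be
\begin{equation*}
\mathrm{Hom}\bigl(H_2(\widehat{X},\IZ),\U(1)\bigr) \;\cong\; H^2(\widehat{X},\U(1))\,,
\end{equation*}
whose new discrete component $\tors{H^3(\widehat{X},\IZ)}$ supplies precisely the phases under which different torsion-refined Gopakumar--Vafa sectors, as discussed in the introduction, are distinguished.

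The third step is to verify that the resulting group is intrinsic to $X$, i.e. independent of the choice among the $2^{\# \mathrm{Sing}(X)}$ analytic small resolutions. Two such resolutions are related by flops of individual exceptional $\mathbb{P}^1$'s, and a flop sends $[C^E_p]$ to $-[C^E_p]$ while leaving all other homology generators unchanged; since $B$ and $-B$ assign conjugate phases to any wrapped curve, and complex conjugation is a symmetry at the level of topological B-field data, the group $H^2(\widehat{X},\U(1))$ is well defined. A complementary Mayer--Vietoris comparison of $X$ and $\widehat{X}$, localized on the conifold model $\{z_1^2+z_2^2+z_3^2+z_4^2=0\}$, would make the precise relationship between $H^2(X,\U(1))$ and $H^2(\widehat{X},\U(1))$ explicit, and in particular identify which torsion cohomology of $\widehat{X}$ is genuinely new.

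The main obstacle is conceptual rather than computational: the string-theoretic B-field is defined in the smooth case via the worldsheet sigma model, but this definition is unavailable on $X$ near the nodes, while $\widehat{X}$ itself is non-K\"ahler and hence not a valid string background in its own right. Bridging this gap cannot be done purely within perturbative Type IIA, and instead requires lifting the question to M-theory, where Claims \ref{claim:mtheoryNodalCYgaugeGroup} and \ref{claim:mtheoryNodalCYmatter} have already been argued for; Claim \ref{claim:Bfield} then emerges as the Pontryagin dual statement on the B-field side.
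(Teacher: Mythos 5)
There is no proof to compare against: the paper does not prove Claim~\ref{claim:Bfield}, but explicitly \emph{recalls} it from the references \cite{Katz:2022lyl,Katz:2023zan} and adopts it as a working hypothesis ``in the same spirit as'' Claims~\ref{claim:mtheoryNodalCYgaugeGroup} and~\ref{claim:mtheoryNodalCYmatter}. Your proposal is therefore best read as a physical motivation rather than a proof, and as such it is sound and matches the logic the paper uses elsewhere: the smooth-case classification via $0\to\IZ\to\IR\to\U(1)\to 0$ is exactly the paper's preceding paragraph, and your central step --- that the B-field must pair with the M2-brane charge lattice $H_2(\widehat{X},\IZ)$ of Claim~\ref{claim:mtheoryNodalCYmatter}, so that Pontryagin duality (equivalently the universal coefficient theorem, since $\U(1)$ is divisible) forces the classifying group to be $\mathrm{Hom}(H_2(\widehat{X},\IZ),\U(1))\cong H^2(\widehat{X},\U(1))$ --- is precisely the duality that the paper exploits in Section~\ref{ss:twistedAndGenusOne} to identify $\Lambda_{5\dd}^\tse$ with $H_2(\widehat{X}^\tse,\IZ)$. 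You are also right, and commendably explicit, that this cannot be upgraded to a worldsheet or purely mathematical derivation because $\widehat{X}$ is non-K\"ahler; the paper makes the same concession by phrasing the statement as a Claim. One small imprecision: your flop argument conflates two things. Well-definedness of the \emph{group} follows simply because any two analytic small resolutions are related by flops that induce isomorphisms of $H_2(\widehat{X},\IZ)$ (negating some exceptional classes), so the groups $H^2(\widehat{X},\U(1))$ are isomorphic; the remark about $B$ and $-B$ assigning conjugate phases concerns how a \emph{given} B-field is represented in the two resolutions, not whether the classifying group is intrinsic. This does not affect the conclusion.
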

\noindent 

In the context of $\IQ$-factorial Calabi-Yau threefolds with isolated nodes, Claim \ref{claim:Bfield} specifies how to generalize the real part of the complexified K\"ahler class beyond the smooth case. For smooth Calabi-Yau varieties (which are in particular K\"ahler manifolds), the imaginary part is a choice of K\"ahler class. With $X$ not a manifold and $\widehat{X}$ not K\"ahler, we need a substitute for this notion.

On $X$, we adopt the conventional substitute of replacing the K\"ahler cone by the ample cone $\Amp(X)$: The first Chern class $c_1$ provides the map
\begin{align}
    c_1: \, \Amp(X) \, \rightarrow H^{1,1}(X) \cap H^2(X, \IZ) \,,
\end{align}
and by the Nakai-Moishezon-Kleiman criterion (see e.g. \cite[Theorem 1.2.23]{LazarsfeldPositivityI}, the image of an ample line bundle $L$ on a projective variety $X$ satisfies the inequality\footnote{The integral is to be taken over the smooth locus of $X$.}
\begin{align}
    \int_V c_1(L)^{\dim(V)} > 0
\end{align}
for every positive-dimensional irreducible subvariety $V \subset X$, just as we would require for a form representing a K\"ahler class.

As Claim \ref{claim:Bfield} identifies the class of the flat B-field with a class on the small resolution $\widehat{X}$ of $X$, we may wish to map $c_1(L)$ to a class on $\widehat{X}$ as well. This is easy thanks to~\cite[Theorem 5.16]{Grassi:2018rva}, which applies as $X$ is assumed $\IQ$-factorial and isolated nodes are terminal. This theorem states that $X$ satisfies rational Poincar\'e duality, i.e. we have a canonical isomorphism $H^2(X,\mathbb{Q}) \simeq H_4(X,\mathbb{Q})$. As $\widehat{X}$ is a small resolution of $X$, we further have $H_4(X,\IQ) = H_4(\widehat{X}, \IQ)$. Finally, by Poincaré duality on $\widehat{X}$, we see that $c_1(L)$ determines a class in $H^2(\widehat{X}, \IQ)$.

Given a divisor $J\in\text{Pic}(X)\otimes\mathbb{Q}$, we can hence use the map $\text{Pic}(X)\rightarrow H^2(X,\mathbb{Z})$ composed with the isomorphism just discussed to obtain an element of $H^2(\widehat{X},\mathbb{Q})$ that we will also denote by $J$.
Given divisors $\omega_{\rm B},\omega_{\rm K}\in\text{Pic}(X)\otimes\mathbb{Q}$, we say that
\begin{align}
    \omega_{\mathbb{C}}=\omega_{\rm B}+{\rm i}\omega_{\rm K}\
\end{align}
is a \textit{complexified K\"ahler class} on $X$ if $\omega_{\rm K} \in \Amp(X)$.~\footnote{The divisor $\omega_{\rm B}$ should not be confused with the canonical bundle $\omega_B$ on $B$.}

Combining a complexified K\"ahler class with an element $\alpha\in\tors{H^3(\widehat{X},\mathbb{Z})}\simeq \text{Br}'(\widehat{X})$, we can assign a {\it complexified volume} to a curve class on $\widehat{X}$ via the map
\begin{align}
    \text{Vol}_{\IC}:\,\free{H_2(\widehat{X},\mathbb{Z})}\times\tors{H_2(\widehat{X},\mathbb{Z})}\rightarrow\IC\,,\quad (\beta,\chi)\mapsto\frac{q_\alpha(\chi)}{m_\alpha}+\int_\beta \omega_{\IC}\,,
    \label{eqn:Cvolume}
\end{align}
where we have used the same notation as in \eqref{eq:chiPhi}.
A singularity is stabilized in the sense of \cite{Aspinwall:1995rb} when $\text{Vol}_{\IC}([C_p]) \neq 0$ for every exceptional curve $C_p \in \widehat{X}$ over a node $p \in X$.

It is natural to ask in what sense the imaginary part of this complexified volume is the volume that is measured by an actual metric on $X$.
Since $X$ is $\mathbb{Q}$-factorial, we can choose a smoothing $\widetilde{X}$ of $X$ which is a K\"ahler Calabi-Yau threefold. To justify the terminology we have introduced, one should show that $\omega_{\IC}$ on $X$, together with an associated degenerate metric, can be obtained as a limit from a (conventional) K\"ahler form on $\widetilde{X}$. Away from the singular locus $S$ of $X$, the metric should be smooth and assign a positive volume to the holomorphic curves in $X-S$. We leave this interesting task for future work.

\subsection{Twisted-twined elliptic genera and topological strings}
\label{sec:modularityTopString}

To relate the twisted-twined elliptic genera to the topological string partition function on $X^{\tseOne}_{\tseTwo}$, we need to introduce a basis of divisors on $X^{\tseOne}$ and a suitable parametrization of the complexified K\"ahler class.

We know from the discussion in Section~\ref{sec:discreteHolonomiesAndGenusOne} that $X^{\tseOne}$ has an $m_\tseOne$-section, with $m_\tseOne$ the order of $\tseOne$ in $\Gamma^0$. We denote a choice of such an $m_\tseOne$-section by $J_0$. We take a basis of divisors $\Jbase_\sia\in \text{Pic}(B)$, $\sia\in\{1,\ldots,b_2(B)\}$, that generate a simplicial sub-cone of the K\"ahler cone, with corresponding vertical divisors $J_\sia=\pi_\tseOne^*(\Jbase_\sia)$.
We can then expand the complexified K\"ahler class on $X^{\tseOne}$ as
\begin{align}\omega_{\mathbb{C}}=\tilde{\tau}\,J_0+\tilde{t}^\sia\, J_\sia
\end{align}
such that $\text{Im}\,\tilde{\tau}$ is $1/m_\tseOne$ times the volume of the generic fiber. 
The complexified volume~\eqref{eqn:Cvolume} of the generic fiber also depends on the choice of topology for the flat B-field.
We choose the torsional part of the cohomology class of the flat B-field to be
\begin{align}
    [\tseTwo]\in \tors{H^3\left(\widehat{X}^{\tseOne},\mathbb{Z}\right)}\simeq \quotient{\Gamma^0}{\langle\tseOne\rangle}\,.
\end{align}
There is a compatible choice of isomorphism
\begin{align}
    H_2(X^\tseOne,\mathbb{Z})\simeq \mathbb{Z}^{T+1}\times\mathbb{Z}\times\widehat{\Gamma}^{\tseOne}\,,
    \quad H_2(X^\tseOne,\mathbb{Z})_F\simeq \mathbb{Z}\times\widehat{\Gamma}^{\tseOne}\,,
    \label{eqn:H2ismorphism}
\end{align}
such that the complexified volume of a curve in the class $(\vec{v},n,\chi)$ is given by
\begin{align}
    \text{Vol}_{\mathbb{C}}(\vec{v},q,\chi)=\frac{q_\tseTwo(\chi)}{m_\tseTwo}+n\tilde{\tau}+\tilde{t}^\alpha\vec{v}_\alpha\,.
\end{align}Following~\cite{Cota:2019cjx,Schimannek:2019ijf}, we then change to the same basis which already made an appearance in~\eqref{eq:basisDiv},
\begin{align}
    D_0=J_0-\frac{1}{2m_\tseOne}\pi_\tseOne^*\circ\pi_{\tseOne*}(J_0\cdot J_0)\,,\quad D_\sia=J_\sia\,,
\end{align}
and introduce $t^\sia$ such that $\omega_{\mathbb{C}}=\tilde{\tau}\,D_0+t^\sia\, D_\sia$.
Note that $D_0$ is invariant under shifts of $J_0$ by vertical divisors and is therefore independent of the particular choice of $m_\tseOne$-section. We also introduce the exponentiated complexified K\"ahler parameters $Q^\beta=e^{2\pi {\rm i} t^\sia \Jbase_\sia\cdot\beta}$ for $\beta\in H_2(B,\mathbb{Z})$.

We denote the A-model topological string partition function on $X^{\tseOne}_{\tseTwo}$ by $Z[X^{\tseOne}_{\tseTwo}](t,\tilde{\tau},\lambda)$, where $\lambda$ is the topological string coupling, and expand it as
\begin{align}
    Z[X^{\tseOne}_{\tseTwo}](t,\tilde{\tau},\lambda)=Z[X^{\tseOne}_{\tseTwo}]_0(\tilde{\tau},\lambda)\left(1+\sum\limits_{\beta\in H_2(B,\mathbb{Z})}Z[X^{\tseOne}_{\tseTwo}]_\beta(\tilde{\tau},\lambda)Q^\beta\right)\,.
\end{align}
We refer to $Z[X^{\tseOne}_{\tseTwo}]_\beta(\tilde{\tau},\lambda)$ as the base degree $\beta$ topological string partition function.

Generalizing the observations from~\cite{Schimannek:2021pau,Dierigl:2022zll}, this is related to the twisted-twined elliptic genera of the non-critical strings in ${\rm F}[X^0]$ by identifying the elliptic argument $\tau$ with the complexified volume of the generic fiber in $X^{\tseOne}$.
Recall from \eqref{eqn:5dchargelatticeMth} that  the charge lattice of the twisted F-theory compactification ${\rm F}[X^0][S^1_\tseOne]$ can be written as
\begin{align}
\Lambda_{5\dd}^{\tseOne}=\ker\left(\vec{0},\e[-\tfrac{1}{m_{\tseOne}}],\tseOne\right)\subset \mathbb{Z}^{T+1}\times\mathbb{Z}\times\widehat{\Gamma}^0\,.
\end{align}
In these conventions, the Kaluza-Klein towers associated to the half-hypermultiplets are generated by the charge $(\vec{0},m_{\tseOne},0)$.\footnote{Recall that we are using additive notation for characters. We therefore denote the unit character as 0.}
On the other hand, the isomorphism~\eqref{eqn:upullback} for a choice of lift $\chi_{\tseOne}$ of the character on $\langle \gamma_1 \rangle$ induced by $\gamma_1 \mapsto \e[-\tfrac{1}{m_{\tseOne}}]$ maps $(\vec{0},m_{\tseOne},0)$ to the element
\begin{align}
    (\vec{0},m_{\tseOne},{m_\tseOne}\chi_{\tseOne})\in \mathbb{Z}^{T+1}\times\mathbb{Z}\times\widehat{\quotient{\Gamma^0}{\langle\tseOne\rangle}}.
\end{align}
This isomorphism must be chosen to be consistent with~\eqref{eqn:H2ismorphism}.
With~\eqref{eq:classOFgenericFiber}, this implies that the generic fiber of the fibration represents the class $(\vec{0},m_{\tseOne},m_\tseOne\chi_{\tseOne})$. The complexified volume of the generic fiber is then
\begin{align}
    \tau:=\text{Vol}_{\mathbb{C}}(\vec{0},m_{\tseOne},m_\tseOne\chi_{\tseOne})=\frac{q_\tseTwo(m_\tseOne\chi_{\tseOne})}{m_\tseTwo}+m_{\tseOne}\tilde{\tau}\,.
    \label{eqn:CvolumeFibre}
\end{align}
We are now in a position to make precise the parameter identification underlying Claim \ref{claim:ellipticGtopZ}: the topological string partition function should depend on a parametrization of the complexified K\"ahler cone, while the modular parameter of the elliptic genus should coincide with the volume of the general elliptic fiber. Consequently,
\begin{align}
    Z[X^{\tseOne}_{\tseTwo}]_\beta(\tilde{\tau},\lambda)={\mathbb E}{\tiny\left[\begin{array}{c}\tseOne\\\tseTwo\end{array}\right]}_\beta\left(m_\tseOne\tilde{\tau}+\frac{q_\tseTwo({m_\tseOne}\chi_{\tseOne})}{m_\tseTwo},\lambda\right)\,.
    \label{eqn:EgenusTopString}
\end{align}

The shift of the argument on the right-hand side of~\eqref{eqn:EgenusTopString} by $q_\tseTwo({m_\tseOne}\chi_{\tseOne})/m_\tseTwo$ has been observed in an example in~\cite[Section 7.4]{Schimannek:2021pau}.
We now understand its physical and geometrical origin.
Its non-trivial compatibility with the modular properties discussed in Section~\ref{sec:twistedmodularity} will be further demonstrated in the example that we discuss in Section~\ref{sec:exampleZ4}.

Note that~\eqref{eqn:CvolumeFibre} depends on the choice of the lift $\chi_\tseOne$.
This corresponds to the fact that given any element $\chi'\in \widehat{\Gamma^0/\langle\tseOne\rangle}$, we get an isomorphism
\begin{align}
    H_2(X^\tseOne,\mathbb{Z})_F\rightarrow H_2(X^\tseOne,\mathbb{Z})_F\,,\quad (n,\chi)\mapsto (n,\chi+n\chi')\,,
\end{align}
which is compensated by the action on $\tilde{\tau}$ given by
\begin{align}
    \tilde{\tau}\mapsto \tilde{\tau}+\frac{q_\tseTwo({m_\tseOne}\chi_{\tseOne})-q_\tseTwo\left({m_\tseOne}(\chi_{\tseOne}+\chi')\right)}{m_\tseOne m_\tseTwo}\,.
\end{align}
Let us therefore stress again that the choice of isomorphism~\eqref{eqn:H2ismorphism} is not independent of the choice of lift $\chi_{\tseOne}$: both have to be made consistently in order for~\eqref{eqn:EgenusTopString} to hold.

The charge $\beta$ twisted-twined elliptic genus ${\mathbb E}{\tiny\left[\begin{array}{c}\tseOne\\\tseTwo\end{array}\right]}_\beta(\tau,\lambda)$ takes the form
\begin{align}
    {\mathbb E}{\tiny\left[\begin{array}{c}\tseOne\\\tseTwo\end{array}\right]}_\beta(\tau,\lambda)=\frac{\Delta(\tau)^{l_{\beta}}}{\eta(\tau)^{12c_1(B)\cdot\beta}}\frac{\phi_\beta(\tau,\lambda)}{\prod_{\sia=1}^{b_2(B)}\prod_{s=1}^{\beta_\sia}\phi_{-2,1}(\tau,s\lambda)}\,,
\end{align}
where $\eta(\tau)$ is the Dedekind eta function, $\phi_{-2,1}(\tau,\lambda)$ is a weak Jacobi form of weight $-2$ and index $1$ and we use $\beta_\sia=\Jbase_\alpha\cdot\beta$.
We elaborate on $\Delta(\tau)$ below.
The numerator $\phi_\beta(\tau,\lambda)$ is a $\Gamma^{\tseOne}_{\tseTwo}$ weak Jacobi form that depends on $X^{\tseOne}_{\tseTwo}$.
The weight and index of $\phi_\beta(\tau,\lambda)$ are such that ${\mathbb E}{\tiny\left[\begin{array}{c}\tseOne\\\tseTwo\end{array}\right]}_\beta(\tau,\lambda)$ is a meromorphic Jacobi form of weight 0 and index $\beta(\beta-c_1(B))/2$.
Some basic properties of the relevant modular and Jacobi forms are collected in Appendix~\ref{app:modularJacobi}.

We denote by $\Delta(\tau)$ a certain cusp form with a multiplier system for $\Gamma^{\tseOne}_{\tseTwo}$ of weight $2$ that is introduced in Appendix~\ref{sec:cuspForms}.
It depends on $\tseOne,\tseTwo$ and a choice of this isomorphism
\begin{align}
    \Gamma^0\simeq \mathbb{Z}_{k_1}\times\ldots\mathbb{Z}_{k_n}\,.
\end{align}
The exponent of $l_\beta$ of $\Delta(\tau)$ is linear in $\beta$ and also depends on the choice of isomorphism.
The multiplier system captures the global anomaly of the discrete gauge symmetry of the six-dimensional F-theory vacuum~\cite{Dierigl:2022zll}.
If $\tseTwo$ is trivial and~\eqref{eqn:cuspFormGammaIso} is chosen such that $\tseOne$ corresponds to the element $1\text{ mod }m_\tseOne$ in $\langle\tseOne\rangle\simeq \IZ_{m_\tseOne}$ then
\begin{align}
    \Delta(\tau)=\Delta[\begin{array}{c|cc}m_\tseOne&1&0\end{array}](\tau)\,,
\end{align}
in terms of~\eqref{eqn:cuspForm1}, and the exponent is fixed  modulo $m_\tseOne$ by the congruency condition~\cite[Section 4.5]{Cota:2019cjx}
\begin{align}
    l_{\beta}=\frac12\left[m_\tseOne^2c_1(B)+\pi_{\tseOne*}(J_0\cdot J_0)\right]\cdot \beta\,\,\,\text{mod}\,\,\,m_\tseOne\,.
\end{align}
We will come back to the question of how this exponent can be determined in general in~\cite{wipZ2Z2}.

\subsection{Modularity and twisted derived equivalences}
\label{sec:modularityAndDerivedEquivalences}

As discussed in Section~\ref{sec:twistedmodularity}, the twisted-twined elliptic genera associated to the non-critical strings in ${\rm F}[X^0]$ transform like vector-valued Jacobi forms under the action of the modular group.
Given an element $g\in\SLtwoZ$ and $\tseOne,\tseTwo\in\Gamma^0$, let us denote
\begin{align}
 g=\left(\begin{array}{cc}a&b\\c&d\end{array}\right)\in\SLtwoZ\,,\quad \left(\begin{array}{c}\tseOnePrime\\\tseTwoPrime\end{array}\right)=\left(\begin{array}{c}\tseOne^a\tseTwo^c\\\tseOne^b\tseTwo^d\end{array}\right)\,.
 \label{eqn:SL2zactionDerivedEquivalence}
\end{align}
We denote the image of the cusp ${\rm i}\infty$ under $g$ by $\tau_\infty(g)\in\mathbb{Q}\cup\{{\rm i}\infty\}$, such that
\begin{align}
    \tau_\infty(g)=\left\{\begin{array}{cl}
        \frac{a}{c}&\text{ for }c\ne 0\\
        {\rm i}\infty&\text{ else}
    \end{array}\right.\,.
\end{align}

Using the identification~\eqref{eqn:EgenusTopString} with the topological string partition function, the vector valued modularity~\eqref{eqn:ttEllGenMod} is the manifestation of Claims \ref{claim:monodromy} and \ref{claim:cusps}~\cite{Knapp:2021vkm,Schimannek:2021pau}: the limit
\begin{align}
    \tau\rightarrow \tau_\infty(g)\in\mathbb{Q}\cup\{{\rm i}\infty\}\,,\quad t_\sia\rightarrow {\rm i}\infty\,,
\end{align}
in the stringy K\"ahler moduli space of $X^\tseOne_\tseTwo$ is a boundary point of maximally unipotent monodromy (MUM) that can be interpreted as the large volume limit associated to $X^{\tseOnePrime}_{\tseTwoPrime}$.
In particular, if two pairs $(\tseOne, \tseTwo)$ and $(\tseOnePrime,\tseTwoPrime)$ lie on the same $\SLtwoZ$ orbit, then MUM points associated to $X^{\tseOne}_{\tseTwo}$ and $X^{\tseOnePrime}_{\tseTwoPrime}$ lie on the same moduli space.

From the open string perspective, the stringy K\"ahler moduli space can be interpreted as a slice in the space of Bridgeland stability conditions on the category of topological B-branes.
The fact that the large volume limits associated to $X^\tseOne_\tseTwo$ and $X^{\tseOnePrime}_{\tseTwoPrime}$ exist in the same moduli space therefore implies that they share the same category of topological B-branes.

For a smooth projective Calabi-Yau $X$ with a flat B-field that is topologically trivial, the latter is given by the bounded derived category of coherent sheaves $D^b(X)$~\cite{Witten:1998cd,Sharpe:1999qz,Douglas:2000gi,Aspinwall:2004jr}.
However, in the presence of a B-field that is flat but topologically non-trivial, and therefore represents a non-trivial (cohomological) Brauer class $\alpha\in\text{Br}'(X)$, the correct category of topological B-branes is the $\alpha$\textit{-twisted derived category} $D^b(X,\alpha)$~\cite{Kapustin:2000aa}.

The identification of categories is, from the physical perspective, using the open string worldsheet theory to transport the branes between different points in the moduli space~\cite{Herbst:2008jq}.
In order for this theory to be non-singular, we will assume the equivalent conditions
\begin{align}
    S^0_{\tseOne}\cap S^0_{\tseTwo}=\emptyset\quad\Leftrightarrow\quad \Gamma^0=\langle \tseOne,\tseTwo\rangle\,.
    \label{eqn:fullyStabilized}
\end{align}
This ensures that the exceptional curves over all of the nodes in $X^\tseOne$ measure a non-trivial B-field holonomy and are therefore ``stabilized'' in the sense of~\cite{Aspinwall:1995rb}.
As a consequence, all of the states that arise from 2-branes have a non-zero mass in the effective theory.
Note that if the condition~\eqref{eqn:fullyStabilized} should not be satisfied, we can always deform away the ``unstable'' nodes.

It is easy to see that condition~\eqref{eqn:fullyStabilized} is preserved under the $\SLtwoZ$ action~\eqref{eqn:SL2zactionDerivedEquivalence}.
The action $\tau\rightarrow-1/\tau$ just exchanges $S^0_{\tseOne}$ and $S^0_{\tseTwo}$.
On the other hand, the action $\tau\rightarrow\tau+1$ maps $S^0_{\tseTwo}$ to $S^0_{\tseTwo\tseOne}$, but after intersecting with $S^0_{\tseOne}$ we restrict to nodes that correspond to representations on which $\tseOne$ acts trivially and therefore $S^0_\tseOne\cap S^0_\tseTwo=S^0_\tseOne\cap S^0_{\tseOne\tseTwo}$.

We then expect that for any $\SLtwoZ$ element~\eqref{eqn:SL2zactionDerivedEquivalence}, there is a twisted derived equivalence that relates the categories of topological B-branes on $\widehat{X}^\tseOne_\tseTwo$ and $\widehat{X}^{\tseOnePrime}_{\tseTwoPrime}$, leading to Conjecture~\ref{conj:twisted}.

\section{Example with $\Gamma^0=\mathbb{Z}_4$}
\label{sec:exampleZ4}
We will now illustrate the discussion in this paper by way of the example of a six-dimensional F-theory vacuum with gauge group
\begin{align}
    \Gamma^0=\IZ_4\,,
\end{align}
and where the numbers $N_q$ of half-hypermultiplets with $\mathbb{Z}_4$-charge $q\text{ mod }4$ are
\begin{align}
    N_0=126\,,\quad N_1=144\,,\quad N_2=132\,,\quad N_3=144\,.
\end{align}

By our discussion in Section~\ref{sec:fieldTheoryExampleZ4}, depending on the twist $\tse\in\Gamma^0$, a circle compactification will lead to a 5d theory with the discrete gauge symmetry intact, reduced to $\IZ_2$ or completely absent. As discussed in Section~\ref{sec:almostgenericfibrations}, we expect the associated torus-fibered geometries to exhaust the geometries underlying the elements of the Tate-Shafarevich group $\Sh(X) = \IZ_4$.
We will write $X^k$ to denote the geometry that is associated to the twist $\tse=\e[\tfrac{k}{4}] \in \IZ_4$, with $k=0,\ldots,3$, and have $X^1=X^3$.

By Conjectures \ref{conj:TSandTorsion} and \ref{conj:I2fibers}, the variety $X^1$ should be smooth and exhibit a 4-section. We define $X^1$ as a complete intersection in a $\IP^3$ bundle over base $\IP^2$, with generic fiber a degree 4 curve $C$ in $\IP^3$. Intersection with the relative hyperplane divisor yields the 4-section of this fibration.
We will construct the singular 2-section geometry $X^2$ as a double cover of a $\IP^1$ bundle over $\IP^2$, following a discussion in~\cite{An2001}: in this reference, an intermediate step in the construction of the Jacobian of a degree 4 curve $C$ in $\IP^3$ is a rational map to a double cover $C'$ of $\IP^1$, which will serve as the generic fiber of $X^2$. 
The singular elliptic fibration $X^0$ is the relative Jacobian of both $X^1$ and $X^2$.

\subsection{The elements of the Tate-Shafarevich group} \label{sec:exampleGeom}
We will now discuss the three varieties associated to the twisted circle compactifications of this example.
\begin{itemize}
    \item We construct the smooth Calabi-Yau threefold $X^1=X^3$ as a complete intersection in the projective bundle $\pi_V:V\rightarrow\mathbb{P}^2$,
        \begin{align}
            V=\mathbb{P}\left(\mathcal{O}_{\mathbb{P}^2}(-1)\oplus\mathcal{O}_{\mathbb{P}^2}^{\oplus 3} \right)\,.
            \label{eqn:ex2V}
        \end{align}
    As we show in Appendix~\ref{sec:appGeom}, the vanishing locus of a generic section of the rank 2 vector bundle
        \begin{align}
            E=\zeta^{\otimes 2}\otimes\pi_V^*\left(\mathcal{O}_{\mathbb{P}^2}(1)\oplus\mathcal{O}_{\mathbb{P}^2}(1)\right)\,,
        \end{align}
    with $\zeta$ denoting the relative hyperplane bundle on $V$, is a Calabi-Yau threefold whose generic fiber is a degree 4 curve in $\IP^3$.
    In terms of the corresponding homogeneous coordinates $x_1$ of weight $(1,-1)$, $x_{2,3,4}$ of weight $(1,0)$ and $y_{1,2,3}$ of weight $(0,1)$, see Table \ref{tab:toricV}, we can write
        \begin{align}
            X^{1}=\{\,p_1(x_{1,\ldots,4},y_{1,\ldots,3})=p_2(x_{1,\ldots,4},y_{1,\ldots,3})=0\,\}\subset V\,,
            \label{eqn:discZ4ex}
        \end{align}
    with $p_1,p_2$ being polynomials that are both weighted homogeneous of degree $(2,1)$.  The fiber curve $C$ over each base point $[y_1:y_2:y_3]$ in the base of $X^1$ is hence described by the intersection of two quaternary quadrics.
    This allows us to introduce two $4 \times 4$ symmetric matrices $A_1(y_{1,\ldots,3})$, $A_2(y_{1,\ldots,3})$ such that
    \begin{equation} \label{eq:quaternaryQuadrics}
        p_i = \myvec{x}^T A_i(y_{1,\ldots,3}) \myvec{x} \,, \quad i =1,2 \,. 
    \end{equation}  
    Since \eqref{eqn:discZ4ex} is a complete intersection of two ample hypersurfaces, the Lefschetz hyperplane theorem implies that
        \begin{align}\label{eq:h2X1}
            H_2(X^1,\mathbb{Z})=H_2(V,\mathbb{Z})=\mathbb{Z}\times\mathbb{Z}\,.
        \end{align}
    
     \item We can construct the variety $X^2$ as a singular double cover of the trivial $\mathbb{P}^1$-bundle 
    \begin{align} \label{eq:W}
       \pi_W \,:\, W=\mathbb{P}^1\times\mathbb{P}^2 \rightarrow \IP^2\,.
    \end{align}
    To this end, we introduce the pencil of quadrics\footnote{In simpler terms, this is a two dimensional linear family of symmetric matrices.} 
    \begin{equation} \label{eq:pencilquad}
        A_{z_1,z_2}\left(y_{1,\ldots,3}\right) = z_1 A_1(y_{1,\ldots,3}) + z_2 A_2(y_{1,\ldots,3}) 
    \end{equation}
    based on the matrices $A_{1,2}$ introduced in \eqref{eq:quaternaryQuadrics}, and define the double cover of $W$ by
    \begin{equation}
        y^2 = \det A_{z_1,z_2}\left(y_{1,\ldots,3}\right) \,,
        \label{eqn:X2eq}
    \end{equation}
     with $z_{1,2}$ homogeneous coordinates on the first factor of $W$ and $y_{1,\ldots, 3}$ homogeneous coordinates on the second.
    It follows from~\cite[Theorem 3.1]{An2001} that over each point of the base $\IP^2$, the fiber curve $C'$ is a double cover of $\IP^1$ that has the same Jacobian as the corresponding fiber $C$ of $X^1$.

    To see that the variety $X^2$ thus defined is a Calabi-Yau threefold, note that the degree assignment of the coordinates $x_1$ and $x_{2,3,4}$ given above equation \eqref{eqn:discZ4ex} implies that
    \begin{align}
        \deg_y(A_{z_1,z_2})_{1,1}=3 \,,\quad \deg_y(A_{z_1,z_2})_{1,i\neq1}=2 \,, \quad
        \deg_y(A_{z_1,z_2})_{i\neq1,j\neq 1}=1 \,. \quad
    \end{align}
    Hence, $\det A_{z_1,z_2}$ is a bidegree (4,6) polynomial in the variables $z_{1,2}$, $y_{1,\ldots,3}$. It follows that
    \begin{equation}
        \det A_{z_1,z_2} \in \Gamma(\pi_1^* \cO_{\IP^1}(2)^{\otimes 2} \otimes \pi_2^* \cO_{\IP^2}(3)^{\otimes 2})\,, 
    \end{equation}
    where $\pi_i$ for $i=1,2$ is respectively the projection from $W$ to $\mathbb{P}^1$ and $\mathbb{P}^2$.
    The determinant is thus a section of the anti-canonical bundle of $W$, which is exactly what is needed for the double cover to be Calabi-Yau (see e.g. \cite[Section 2.1]{MirandaSmooth}). 
    
    We can see the 2-section of $X^2$ explicitly by introducing polynomials $e_i(y_1,y_2,y_3)$, $i=0,\ldots,4$, such that
    \begin{align} \label{eq:coeffDoubleCoverQuartic}
    y^2 = \det A_{z_1,z_2}= e_0 \,z_1^4 + 4e_1 \,z_1^3 z_2 + 6e_2\, z_1^2 z_2^2 + 4e_3\, z_1 z_2^3 + e_4\, z_2^4 \,.
    \end{align}
    Two linearly equivalent 2-sections of the variety are then given by $\{\,z_1=0\,\}$ and $\{\,z_2=0\,\}$.
    
    \item The variety $X^0$ can be obtained by replacing the fibers of $X^2$ with the corresponding Weierstra{\ss} curves, using the results from~\cite{Weil1983} as cited in~\cite{An2001}.
\end{itemize}

In order to verify our conjectures in this example, we have to proceed in a slightly non-linear fashion.
We first focus on verifying that $X^0$ satisfies the conditions~\ref{condge:smoothqfac} and~\ref{condge:disc} while at the same time showing that Conjectures~\ref{conj:TSandTorsion} and~\ref{conj:twisted} hold for the different choices of $\tse,\tseOne,\tseTwo\in\mathbb{Z}_4$.

\subsection{Almost genericity and Conjectures~\ref{conj:TSandTorsion} and~\ref{conj:twisted}}
Using the explicit expression for the discriminant locus $\Delta\subset \mathbb{P}^2$, which one obtains by constructing $X^0$, one can check that it is reduced and irreducible.
After making a sufficiently generic choice for the coefficients in~\eqref{eqn:discZ4ex}, one can determine by explicit calculation that it has 426 isolated singularities, of which 216 are cusps and 210 are nodes.\footnote{We will obtain these numbers more elegantly below.}
Therefore, $X^0$ satisfies condition~\ref{condge:disc}.

The base of $X^0$ is $\IP^2$, hence smooth.
It thus remains to show $\IQ$-factoriality of $X^0$ to obtain~\ref{condge:smoothqfac}.
By a theorem of Namikawa and Steenbrink, quoted in Appendix~\ref{sec:nodalCY3} as Theorem~\ref{thm:NamikawaSteenbrinkDeformation}, the $\IQ$-factoriality of $X^0$ is equivalent to $b_2(\widehat{X}^0)=b_2(X^0)$.
Since $X^0$ is a Weierstra{\ss} fibration, we have $b_2(X^0)=2$.
On the other hand, $b_2(\widehat{X}^0)>2$ would imply that $\widehat{X}^0$, and therefore also $X^0$, has a non-zero Mordell-Weil rank.
However, since the group of sections of $X^0$ acts on the $4$-sections of $X^1$, this contradicts $b_2(X^1)=2$.
Therefore $b_2(\widehat{X}^0)=2$ and $X^0$ is $\mathbb{Q}$-factorial.

Verifying Conjecture~\ref{conj:TSandTorsion} requires knowing $\Sh_{\IP^2}(X^0)$ as well as knowledge of the torsion in the 3-cohomology of the analytic small resolutions of all elements of $\Sh_{\IP^2}(X^0)$. We compute $\Sh_{\IP^2}(X^0)$ by considering the smooth element $X^1$ of this group. The fibration of $X^1$ over $\mathbb{P}^2$ does not exhibit any non-flat or multiple fibers and it has a 4-section that is induced by the relative hyperplane divisor in the ambient $\mathbb{P}^3$-bundle. Applying a discussion analogous to that of~\cite[Example 1.18]{dg92} to $X^1$ shows that
\begin{align}
    \Sh_{\mathbb{P}^2}(X^0)=\mathbb{Z}_4\,.
    \label{eqn:ex4secTS}
\end{align}
To determine the cohomological torsion of $\widehat{X}^0$ and thus verify Conjecture~\ref{conj:TSandTorsion}\ref{en:Sh}, we use the twisted derived equivalence 
\begin{align} \label{eq:derEquivX0}
    D^b(X^1)\simeq D^b(\widehat{X}^0,\alpha)\,,
\end{align}
where $\alpha\in\text{Br}(\widehat{X}^0)$ is an element of order four.
This equivalence follows from an application of~\cite[Theorem 5.1]{Caldararu2002}.
An argument analogous to that in~\cite[Appendix A]{Katz:2022lyl}, which is in turn based on~\cite{Addington2020}, then shows that $H^3(\widehat{X}^0,\mathbb{Z})/\langle\alpha\rangle$ has to be torsion free and therefore $\tors{H_3(X^1,\IZ)}$ is indeed generated by $\alpha$ and equal to $\mathbb{Z}_4$, in agreement with~\ref{conj:TSandTorsion}\ref{en:Sh}.\footnote{In fact, this argument falls slightly short of a proof, because the relevant~\cite[Corollary 1.2]{Moulinos2019} requires the geometry to be a scheme, while $\widehat{X}^0$ is in general only a Moishezon manifold. The problem can be circumvented by following~\cite[Footnote 1]{Addington2020} and proving a compatibility between pushforwards on algebraic and topological K-theory. This compatibility is expected to hold in the case at hand, but we are not aware of any proof in the literature. We thank Sheldon Katz and Nicolas Addington for helpful discussions on this point.}

Since $X^1$ is smooth, we have $\widehat{X}^1 \cong X^1$ and $\tors{H_3(X^1,\IZ)}$ vanishes by~\eqref{eq:h2X1}. Conjecture~\ref{conj:TSandTorsion}\ref{en:Gamma} thus holds for $\gamma = \e[\frac{1}{4}]$.
An argument analogous to the proof of Proposition 3.3 in~\cite{Katz:2023zan} can be used to prove that\footnote{This will be elaborated on in~\cite{wipZ2Z2}.}
\begin{align}
    \tors{H^3(\widehat{X}^2,\mathbb{Z})}\simeq \tors{H_2(\widehat{X}^2,\mathbb{Z})}\simeq \mathbb{Z}_2\,.
\end{align}
Conjecture~\ref{conj:TSandTorsion}\ref{en:Gamma} therefore also holds for $\gamma = \e[\frac{1}{2}]$.

Let us now turn to Conjecture~\ref{conj:twisted}.
It suffices to check the claim for the generators of $\SLtwoZ$,
\begin{align}
    T=\left(\begin{array}{cc}1&1\\0&1\end{array}\right)\,,\quad S=\left(\begin{array}{cc}0&1\\-1&0\end{array}\right)\,.
\end{align}
For $T$, the claim is trivial, since $[\tseTwo]_\tseOne=[\tseTwo \tseOne]_\tseOne$.
On the other hand, the twisted derived equivalence~\eqref{eq:derEquivX0} precisely corresponds to the $S$-transformation with $\tseOne= \e[\frac{1}{4}]$ and $\tseTwo=1$.
The $S$-transformation for the case $\tseOne= \e[\frac{1}{4}]$ and $\tseTwo=\e[\frac{1}{2}]$ corresponds to the twisted derived equivalence
\begin{align} \label{eq:derEquivX2}
    D^b(X^1)\simeq D^b(\widehat{X}^2,[\tseTwo]_\tseOne)\,,
\end{align}
This was shown in~\cite{Calabrese2015}, using the results from~\cite{Kuznetsov2013}.\footnote{See Appendix~\ref{ss:X1inblowup} for the identification of $X^1$ with the variety called $X$ in \cite{Calabrese2015}.}
In the case $\tseOne= \e[\frac{1}{2}]$ and $\tseTwo=1$, in order to apply Conjecture~\ref{conj:twisted}, we first have to remove the unstable nodes as discussed in Section~\ref{sec:modularityAndDerivedEquivalences}.
This amounts to replacing $X^2$ by a generic smooth deformation $\widetilde{X}^2$.
The claim of Conjecture~\ref{conj:twisted} then follows from an application of~\cite[Theorem 5.1]{Caldararu2002}.
The $S$-transformation for all other cases is either trivial or equivalent to one of these three.

\subsection{Homology of exceptional curves and Conjecture~\ref{conj:I2fibers}}
All singular points of a Jacobian fibration lie above singular points of the discriminant divisor $\Delta=0$ on the base \cite{MirandaSmooth}. The discriminant divisor of $X^0$ has 426 singular points on the base $\IP^2$. Of these, $216 = 24\times9$ are cusps, as follows from \eqref{eq:numberCusps}; above these lie cuspidal type $II$ fibers whose points are smooth inside $X^0$. Conjecture~\ref{conj:I2fibers} is concerned with the remaining 210 points, which are nodes. We have called this set $S_{\Delta}$ in Section \ref{sec:almostgeneric}. Above these lie nodal $I_2$ fibers, each of which contains a singular point of $X^0$.

All singular points of $X^1$ and of $X^2$ likewise lie above singular points of the discriminant divisor -- this and the further claims made about these two varieties in this subsection are justified in Appendix \ref{app:discSing}. The fibers over the 216 cusps are also cuspidal in both $X^1$ and $X^2$, and contain only smooth points of the geometry.
The set of nodes of the discriminant can be decomposed as $S_\Delta=S_\Delta^{(1)}\cup S_\Delta^{(2)}$.
On $S_\Delta^{(1)}$, which contains 144 points, the four zeros of $\det A_{z_1,z_2}$ -- recall equation \eqref{eq:pencilquad} -- coincide in pairs, while at the 66 points of $S_\Delta^{(2)}$, only two zeros coincide, but the rank of $A_{z_1,z_2}$ drops by 2. The number 144 of resolved $I_2$ fibers follows from \cite[Equation 2.23]{Morrison:2014era}: it is given by
\begin{equation} \label{eq:numberI2quartic}
    4(-K_B+L)\cdot(-3K_B -L) = 4(-2K_B)(-2K_B) = 4\times6\times6 = 144 \,.
\end{equation}
Here, $2L$ designates the line bundle of which the coefficient of $z_1^4$ in \eqref{eq:coeffDoubleCoverQuartic} is a section.
Since the coefficient is a polynomial of degree six in the homogeneous coordinates on the base $B=\mathbb{P}^2$, we have $L=-K_B$.
The number 66 follows from~\cite[Theorem 1]{HarrisTu84}: the cohomology class of the degeneracy locus over which a symmetric bundle map $\phi: E \rightarrow E^*$ from a smooth complex vector bundle $E$ of rank~4 to its dual drops rank by at least 2 is given by
\begin{equation}
    -4 (c_1(E)c_2(E) - c_3(E) ) \,.
\end{equation}
The matrix $A_{z_1,z_2}$ can be interpreted as a bundle map 
\begin{align}
    A_{z_1,z_2} \,:\, \cO_{\IP^2 \times \IP^1}(-1,0) \oplus \cO_{\IP^2 \times \IP^1}(0,0)^{\oplus 3} \,\rightarrow\, \cO_{\IP^2 \times \IP^1}(2,1) \oplus \cO_{\IP^2 \times \IP^1}(1,1)^{\oplus 3} \,.
\end{align}
Using~\cite[Theorem 10]{HarrisTu84}, we can formally twist by $\cO_{\IP^2\times\IP^1}(-\tfrac{1}{2},-\tfrac{1}{2})$ and then apply~\cite[Theorem 1]{HarrisTu84}.

\begin{itemize}
    \item \textbf{Twist $\tse=\e[\tfrac{1}{4}]$:} The fibration $X^1$ exhibits resolved $I_2$-fibers over all points $p\in S_\Delta$.
    Moreover, the two rational components of an $I_2$-fiber over a point $p\in S_{\Delta}^{(k)}$, $k=1,2$, transversely intersect the 4-section respectively $4-k$ and $k$ times (see also~\cite{Oehlmann:2019ohh}).

    The fiber homology of $X^{1}$ takes the form
    \begin{align}
        H_2(X^{1},\mathbb{Z})_{\rm F}\simeq \Lambda'= \mathbb{Z}\,,
    \end{align}
    with the notation $\Lambda'$ (and $\Lambda$ which will arise momentarily) following \eqref{eq:5dchargeLattice}.
    We can choose the labels $A,B$ such that the curve classes $[C_{p,1}^A],[C_{p,1}^B] \in \Lambda'$ satisfy
    \begin{align}
        \left([C_{p,1}^A],[C_{p,1}^B]\right)=\left\{\begin{array}{cl}
            (3,1)&\text{ for }p\in S_{\Delta}^{(1)}\\[.3em]
            (2,2)&\text{ for }p\in S_{\Delta}^{(2)}
        \end{array}\right.\,.
    \end{align}
    Using the isomorphism
    \begin{align}
        u^*:\,\Lambda'&\rightarrow\Lambda=\text{ker}\left(\e[-\tfrac{1}{4}],\e[\tfrac{1}{4}]\right)\subset\mathbb{Z}\times\mathbb{Z}_4\\ q_{\KK}&\mapsto (q_{\KK},[q_{\KK}]_4)\,,
    \end{align}
    we find that this is compatible with Conjecture~\ref{conj:I2fibers} if we identify
    \begin{align}
        \chi_p=\left\{\begin{array}{cl}
            [1]_4&\text{ for }p\in S_{\Delta}^{(1)}\\[.3em]
            {[2]_4}&\text{ for }p\in S_{\Delta}^{(2)}
        \end{array}\right.\,.
        \label{eqn:exZ4chiAssumption}
    \end{align}
    \item \textbf{Twist $\tse=\e[\tfrac{2}{4}]$:} 
    The fibration $X^2$ exhibits a nodal singularity over each point $p\in S_\Delta^{(2)}$, but a resolved $I_2$-fiber over each point $p\in S_\Delta^{(1)}$.
    Each of the rational components of the $I_2$-fibers is transversely intersected once by the 2-section on $X^2$.
    On the other hand, in a small resolution $\rho_2:\widehat{X}^2\rightarrow X^2$ each of the exceptional curves over the node in $X^2$ over a point $p\in S_\Delta^{(2)}$ represents the non-trivial 2-torsion class.

    The fiber homology of $\widehat{X}^{2}$ takes the form
    \begin{align}
        H_2(\widehat{X}^{2},\mathbb{Z})_{\rm F}\simeq \Lambda'= \mathbb{Z}\times\mathbb{Z}_2\,,
    \end{align}
    and we can choose the labels $A,B$ such that
    \begin{align}
        \Lambda'\ni\left([C_{p,2}^A],[C_{p,2}^B]\right)=\left\{\begin{array}{cl}
            \left((1,[0]_2),\,(1,[1]_2)\right)&\text{ for }p\in S_{\Delta}^{(1)}\\[.3em]
            \left((2,[0]_2),\,(0,[1]_2)\right)&\text{ for }p\in S_{\Delta}^{(2)}
        \end{array}\right.\,.
    \end{align}
    Using the isomorphism
    \begin{align}
        u^*:\,\Lambda'\longrightarrow\Lambda=\text{ker}\left(\e[-\tfrac{2}{4}],\e[\tfrac{2}{4}]\right)\subset\mathbb{Z}\times\mathbb{Z}_4\,,
    \end{align}
    acting as
    \begin{align}
        \quad \left(q_{\KK},[q_{\IZ_2}]_2\right)\longmapsto \left(q_{\KK},[2q_{\IZ_2}-q_{\KK}]_4\right)\,,
    \end{align}
    we find that
    \begin{align}
        \Lambda\ni\left([C_{p,2}^A],[C_{p,2}^B]\right)=\left\{\begin{array}{cl}
            \left((1,[3]_4),\,(1,[1]_4)\right)&\text{ for }p\in S_{\Delta}^{(1)}\\[.3em]
            \left((2,[2]_4),\,(0,[2]_4)\right)&\text{ for }p\in S_{\Delta}^{(2)}
        \end{array}\right.\,.
    \end{align}
    Assuming again~\eqref{eqn:exZ4chiAssumption}, this confirms that Conjecture~\ref{conj:I2fibers} also holds for $X^2$.

    \item \textbf{Twist $\tse=\e[\tfrac{0}{4}]$:} The fibration $X^0$ exhibits a nodal singularity over each point of both $S_{\Delta}^{(1)}$ and $S_{\Delta}^{(2)}$. We have already established that
    \begin{align}
        H_2(\widehat{X}^{0},\mathbb{Z})_{\rm F}=\mathbb{Z}\times\mathbb{Z}_4\,.
    \end{align}
    Together with~\eqref{eqn:exZ4chiAssumption}, Conjecture~\ref{conj:I2fibers} now predicts that
    \begin{align} \label{eq:ExCurveClassesX0}
        \Lambda\ni\left([C_{p,0}^A],[C_{p,0}^B]\right)=\left\{\begin{array}{cl}
            \left((1,[\pm 3]_4),\,(0,[\mp 1]_4)\right)&\text{ for }p\in S_{\Delta}^{(1)}\\[.3em]
            \left((1,[2]_4),\,(0,[2]_4)\right)&\text{ for }p\in S_{\Delta}^{(2)}
        \end{array}\right.\,,
    \end{align}
    where the sign depends on the particular choice of $\widehat{X}^0$ and changes under flop transitions.
\end{itemize}

\subsection{Intersection numbers and Conjecture~\ref{conj:topologyOfSmoothing}}

Let us now verify Conjecture~\ref{conj:topologyOfSmoothing} for the different choices of twist $\tse\in\mathbb{Z}_4$. We have determined
\begin{equation}
    |S_{\Delta}^{(1)}| = 144 \,, \quad |S_{\Delta}^{(2)}| = 66
\end{equation}
above. Together with the prediction \eqref{eq:ExCurveClassesX0} for the  classes of the exceptional curves of $X^0$ and with equation \eqref{eq:uncharged} which computes the number of uncharged hypermultiplets, this determines the values of the constants $N_\chi$ introduced in \eqref{eqn:Nchi} to be
\begin{equation}
    N_0 = 126 \,, \quad N_1 = N_3 = 144 \,, \quad N_2 = 132 \,,
\end{equation}
where we have written $N_i := N_{[i]_4} $.

We choose a basis of the K\"ahler cone $J_0,J_1$ such that $J_1$ is the vertical divisor and $J_0$ is the class of an $m$-section, potentially shifted by multiples of $J_1$. After changing to the basis
\begin{align}
    D_0=J_0-\frac{1}{2m}\tilde{\pi}^*\circ\tilde{\pi}_*\left(J_0\cdot J_0\right)=J_0-\frac{c_{001}}{2m}J_1\,,\quad D_1=J_1\,,
\end{align}
as in~\eqref{eqn:conj3changeOfBasis}, we can then compare
\begin{align}
    k_{\lii\lij\lik}=D_\lii\cdot D_\lij\cdot D_\lik\,,\quad \kappa_\lii=D_\lii\cdot c_2(T\widetilde{X})\,,
\end{align}
with the results~\eqref{eqn:twistedK} and~\eqref{eqn:twistedKappa}.
By construction we have $\kappa_{001}=\kappa_{111}=0$ and $\kappa_{011}=m$.
Since $X$ is genus one fibered over $\mathbb{P}^2$ we also have $b_1=\kappa_1=36$.
The only non-trivial invariants to compare are therefore
\begin{align}
    k_{000}=c_{000}-\frac{3}{4m}c_{001}^2\,,\quad \kappa_0=b_0-36\frac{c_{001}}{2m}\,.
    \label{eqn:exz4p2changebasis}
\end{align}

\begin{itemize}
    \item The geometry $X^1$ is already smooth and therefore $\widetilde{X}^1=X^1$.
        A basis of the K\"ahler cone is given by the divisors $J_0=\{\,x_2=0\,\}$ and $J_1=\{\,y_1=0\,\}$ and the corresponding topological invariants~\eqref{eqn:topInvcandb} are
        \begin{align}
            (c_{000},c_{001},c_{011},c_{111})=(9,\,8,\,4,\,0)\,,\quad (b_0,b_1)=(54,\,36)\,,
            \label{eqn:exz4topinv}
        \end{align}
        while the Euler characteristic is $\chi=-120$.
    Since $S_{\Delta,1}=\emptyset$ and $\# S_\Delta=210$, we find from~\eqref{eqn:conj3euler} that indeed $\chi(\widetilde{X}^1)=-120$.
    On the other hand, using~\eqref{eqn:exz4p2changebasis} with~\eqref{eqn:exz4topinv} and $m=4$ we find agreement with the prediction
     \begin{align}
        k_{000}=-3\,,\quad \kappa_0=18\,,
    \end{align}
    from~\eqref{eqn:twistedKM}, using $N_1=N_3=144$ and $N_2=132$.

    \item The smooth deformation $\widetilde{X}^2$ of $X^2$ is a generic Calabi-Yau double cover of $W=\mathbb{P}^1\times\mathbb{P}^2$.
    Denoting the projection by $\tilde{\pi}_{\rm dc}:\widetilde{X}^2\rightarrow W$ and using again the homogeneous coordinates $z_1,z_2,y_{1,2,3}$ on $W$ we have a basis of the K\"ahler cone
    \begin{align}
        J_0=\tilde{\pi}_{\rm dc}^*(\{\,z_1=0\,\})\,,\quad J_1=\tilde{\pi}_2^*(\{\,y_1=0\,\})\,,
    \end{align}
    where $\tilde{\pi}_2=\pi_W\circ\tilde{\pi}_{\rm dc}$ and $J_0$ is a 2-section on $\widetilde{X}^2$.
    The Euler characteristic of $\widetilde{X}^2$ is $\chi=-252$ and the other topological invariants~\eqref{eqn:topInvcandb} are
    \begin{align}
        (c_{000},c_{001},c_{011},c_{111})=(0,\,0,\,2,\,0)\,,\quad (b_0,b_1)=(24,\,36)\,.
    \end{align}

    We also have $S_{\Delta,2}=66$ and therefore~\eqref{eqn:conj3euler} correctly reproduces the Euler characteristic of $\widetilde{X}^2$.
    The change of basis~\eqref{eqn:exz4p2changebasis} is trivial and we can immediately verify that
    \begin{align}
        k_{000}=0\,,\quad \kappa_0=24\,,
    \end{align}
    matches with the prediction from~\eqref{eqn:twistedKM}.

    \item The smooth deformation of $X^0$ is the generic Weierstra{\ss} fibration $\widetilde{X}^0$ over $\mathbb{P}^2$.
    The latter can be constructed as a resolution of a degree 18 hypersurface in $\mathbb{P}^4_{11169}$ and has been studied for example in~\cite{Candelas:1994hw,Alim:2012ss,Huang:2015sta}.
    The Euler characteristic is $\chi=-540$ and in terms of a suitable basis $J_0,J_1$ for the K\"ahler cone one finds the topological invariants~\eqref{eqn:topInvcandb} to be
    \begin{align}
        (c_{000},c_{001},c_{011},c_{111})=(9,\,3,\,1,\,0)\,,\quad (b_0,b_1)=(102,\,36)\,.
    \end{align}
    Using that $S_{\Delta}=S_{\Delta,0}$, we find that the Euler characteristic is correctly reproduced by~\eqref{eqn:conj3euler}.
    Moreover, after the change of basis~\eqref{eqn:exz4p2changebasis} we find the invariants
    \begin{align}
        k_{000}=\frac94\,,\quad \kappa_0=48\,,
    \end{align}
    that match with the prediction from~\eqref{eqn:twistedKM}.
\end{itemize}

\subsection{Modularity}
After having verified the conjectures made in Section~\ref{sec:mathresults}, we will now illustrate the modular properties of the topological string partition function that have been discussed in Section~\ref{sec:ellipticGeneraAndTopologicalStrings}.
The technical details of the underlying calculations can be found in~\cite[Section 7]{Schimannek:2021pau}.\footnote{We use a slightly different 4-section fibration in our example in order to be able to use the Lefschetz hyperplane theorem, but the relevant calculations to obtain the base degree one topological string partition functions are essentially the same.}

Recall that the modular groups $\Gamma^{\tseOne}_{\tseTwo}$ for the different twistings and twinings $\tseOne,\tseTwo\in\mathbb{Z}_4$ are listed in Table~\ref{tab:twistSymZ4}, and the corresponding rings of modular forms discussed in Appendix~\ref{app:modularJacobi}.
The partition functions fall into three different $\SLtwoZ$ orbits\footnote{The orbit of an integral vector under $\SLtwoZ$ is uniquely determined by the gcd of the two entries.} that correspond to three different modular curves and stringy K\"ahler moduli spaces.
The three stringy K\"ahler moduli spaces are not connected by a continuous deformation, at least not directly,\footnote{They can be connected by sequences of conifold transitions that are not compatible with the torus fibration structure.} but by a discrete change of B-field topology that is possible at certain points in the moduli space.
The modular curves arise as slices in the large base limit of the moduli spaces~\cite{Knapp:2021vkm,Schimannek:2021pau}.
The relationship of the different moduli spaces and the relevant large volume limits is illustrated in Figure~\ref{fig:exz4geos} that has been adapted from~\cite[Figure 9]{Schimannek:2021pau}.

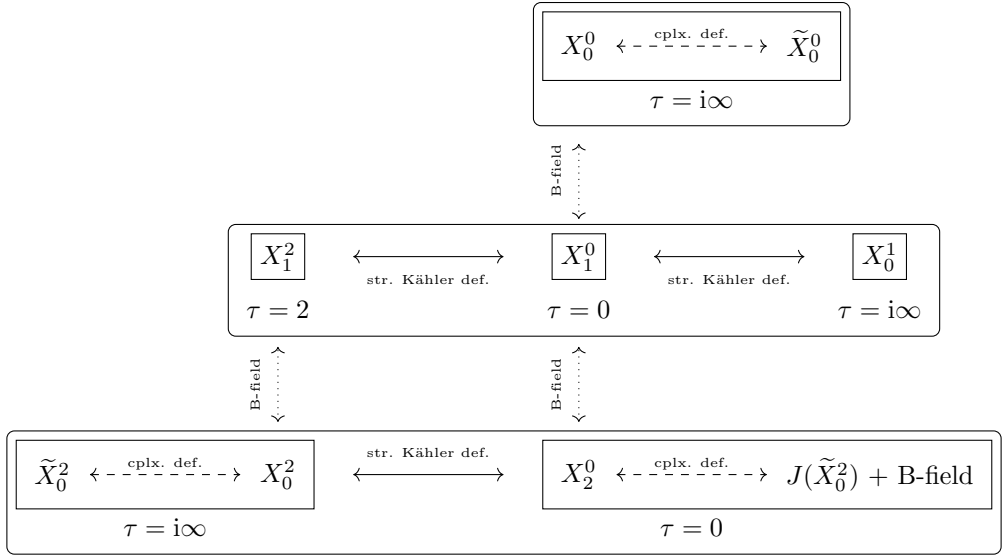
\begin{figure}[h!]
	\begin{tikzpicture}[remember picture,node distance=4mm]
            \begin{scope}[shift={(0,0)}]
		      \node[align=center] (x00) at (5,5) {$X^0_0$};
		      \node[align=center] (xt00) at (8,5) {$\widetilde{X}^0_0$};
                \node[align=center] (l00) at (6.5,4.3) {$\tau={\rm i}\infty$};
		      \node[align=center] at (6.5,5.15) {\tiny cplx. def.};
                \node[draw,fit=(x00) (xt00)] (ob00) {};
                \node[draw,fit=(ob00) (l00),rounded corners=0.1cm] {};
		      \draw [<->,dashed] (5.5,5) to (7.5,5);
            \end{scope}
            \begin{scope}[shift={(0,-.65)}]
                \node[align=center,rotate=90] at (4.7,4) {\tiny B-field};
            \end{scope}
            \begin{scope}[shift={(0,-.8)}]
                \node[draw,align=center] (x21) at (1,3) {$X^2_1$};
                \node[align=center] (l21) at (1,2.3) {$\tau=2$};
		      \node[draw,align=center] (x01) at (5,3) {$X^0_1$};
                \node[align=center] (l01) at (5,2.3) {$\tau=0$};
		      \node[draw,align=center] (x10) at (9,3) {$X^1_0$};
                \node[align=center] (l10) at (9,2.3) {$\tau={\rm i}\infty$};
		      \node[align=center] at (3,2.7) {\tiny str. K\"ahler def.};
		      \node[align=center] at (7,2.7) {\tiny str. K\"ahler def.};
		      \draw [<->] (2,3) to (4,3);
		      \draw [<->] (6,3) to (8,3);
		      \draw [<->,dotted] (5,3.5) to (5,4.5);
                \node[draw,fit=(x21) (x01) (x10) (l21) (l01) (l10),rounded corners=0.1cm,minimum height=1cm] {};
            \end{scope}
            \begin{scope}[shift={(0,-1.5)}]
		      \draw [<->,dotted] (1,1.5) to (1,2.5);
		      \node[align=center,rotate=90] at (.7,2) {\tiny B-field};
		      \draw [<->,dotted] (5,1.5) to (5,2.5);
		      \node[align=center,rotate=90] at (4.7,2) {\tiny B-field};
            \end{scope}
            \begin{scope}[shift={(0,-1.7)}]
		      \node[align=center] (x20) at (1,1) {$X^2_0$};
		      \node[align=center] (x02) at (5,1) {$X^0_2$};
		      \node[align=center] (xt20) at (-2,1) {$\widetilde{X}^2_0$};
		      \node[align=center] (xt02) at (9,1) {$J(\widetilde{X}^2_0)$ + B-field};
		      \draw [<->,dashed] (-1.5,1) to (0.5,1);
		      \draw [<->,dashed] (5.5,1) to (7.5,1);
		      \node[align=center,rotate=0] at (-0.5,1.15) {\tiny cplx. def.};
		      \node[align=center,rotate=0] at (6.5,1.15) {\tiny cplx. def.};
		      \draw [<->] (2,1) to (4,1);
		      \node[align=center] at (3,1.3) {\tiny str. K\"ahler def.};
                \node[align=center] (l20) at (-0.5,0.3) {$\tau={\rm i}\infty$};
                \node[align=center] (l02) at (6.5,0.3) {$\tau=0$};
                \node[draw,fit=(x20) (xt20)] (ob20) {};
                \node[draw,fit=(x02) (xt02)] (ob02) {};
                \node[draw,fit=(ob20) (ob02) (l20) (l02),rounded corners=0.1cm] {};
            \end{scope}
            
	\end{tikzpicture}
	\centering
	\caption{Relationship between the three stringy K\"ahler moduli spaces and the relevant large volume limits in the example with $\Gamma^0=\mathbb{Z}_4$.
            Geometries that correspond to the same large volume limit are surrounded by square boxes while large volume limits that are contained in the same stringy K\"ahler moduli space are surrounded by rounded boxes.
            In each case, the large base limit $t\rightarrow {\rm i}\infty$ is left implicit.
        }
	\label{fig:exz4geos}
\end{figure}

We obtain the base degree one topological string partition functions
\begin{align} \label{eq:ZtopEZ4}
    \begin{split}
    \begin{array}{rlcrl}   Z\left[X^0_0\right]_1(\tilde{\tau},\lambda)=&\mathbb{E}{\tiny\left[\begin{array}{c}0\\0\end{array}\right]}_1\left(\tilde{\tau},\lambda\right)\,,&\phantom{xx}&
    Z\left[X^1_0\right]_1(\tilde{\tau},\lambda)=&\mathbb{E}{\tiny\left[\begin{array}{c}1\\0\end{array}\right]}_1\left(4\tilde{\tau},\lambda\right)\,,\\
    Z\left[X^0_1\right]_1(\tilde{\tau},\lambda)=&\mathbb{E}{\tiny\left[\begin{array}{c}0\\1\end{array}\right]}_1\left(\tilde{\tau},\lambda\right)\,,&&
    Z\left[X^2_0\right]_1(\tilde{\tau},\lambda)=&\mathbb{E}{\tiny\left[\begin{array}{c}2\\0\end{array}\right]}_1\left(2\tilde{\tau},\lambda\right)\,,\\
    Z\left[X^0_2\right]_1(\tilde{\tau},\lambda)=&\mathbb{E}{\tiny\left[\begin{array}{c}0\\2\end{array}\right]}_1\left(\tilde{\tau},\lambda\right)\,,&&
    Z\left[X^2_1\right]_1(\tilde{\tau},\lambda)=&\mathbb{E}{\tiny\left[\begin{array}{c}2\\1\end{array}\right]}_1\left(2\tilde{\tau}+\frac12,\lambda\right)\,,
    \end{array}
    \end{split}
\end{align}
in terms of the twisted-twined elliptic genera
\begin{align}
    \begin{split}
    \begin{array}{rlcrl}
    \mathbb{E}{\tiny\left[\begin{array}{c}0\\0\end{array}\right]}_1(\tau,\lambda)=&-\frac{E_4(31E_4^3+113E_6^2)}{48\eta(\tau)^{36}\phi_{-2,1}(\tau,\lambda)}\,,&\phantom{xx}&
    \mathbb{E}{\tiny\left[\begin{array}{c}1\\0\end{array}\right]}_1\left(\tau,\lambda\right)=&\frac{1}{2^{13}}\frac{3\tilde{\epsilon}_1^4(\tilde{\epsilon}_2-\tilde{\epsilon}_1^2)^5(7\tilde{\epsilon}_1^2-6\tilde{\epsilon}_2)}{\eta(\tau)^{36}\phi_{-2,1}(\tau,\lambda)}\,,\\
    \mathbb{E}{\tiny\left[\begin{array}{c}0\\1\end{array}\right]}_1\left(\tau,\lambda\right)=&\frac{1}{2^{13}}\frac{3\epsilon_1^4(\epsilon_2-\epsilon_1^2)^5(7\epsilon_1^2-6\epsilon_2)}{\eta(\tau)^{36}\phi_{-2,1}(\tau,\lambda)}\,,&&
    \mathbb{E}{\tiny\left[\begin{array}{c}2\\0\end{array}\right]}_1(\tau,\lambda)=&\frac{3\Delta_{2,3}^3(\tilde{e}_4-17\tilde{e}_2^2)}{16\eta(\tau)^{36}\phi_{-2,1}(\tau,\lambda)}\,,\\
    \mathbb{E}{\tiny\left[\begin{array}{c}0\\2\end{array}\right]}_1(\tau,\lambda)=&\frac{3\Delta_{3,4}^3(e_4-17e_2^2)}{16\eta(\tau)^{36}\phi_{-2,1}(\tau,\lambda)}\,,&&
    \mathbb{E}{\tiny\left[\begin{array}{c}2\\1\end{array}\right]}_1\left(\tau,\lambda\right)=&\frac{\rm i}{2^{13}}\frac{3\hat{\epsilon}_1^4(\hat{\epsilon}_2-\hat{\epsilon}_1^2)^5(7\hat{\epsilon}_1^2-6\hat{\epsilon}_2)}{\eta(\tau)^{36}\phi_{-2,1}(\tau,\lambda)}\,,
    \end{array}
    \end{split}
\end{align}
and the modular and Jacobi forms defined in Appendix~\ref{app:modularJacobi}. Note that in accord with~\eqref{eqn:EgenusTopString} and Conjecture~\ref{conj:I2fibers}, the coefficient $m$ of $\tilde{\tau}$ on the RHS of the equations~\eqref{eq:ZtopEZ4} indicates that $X^\tseOne_\tseTwo$ exhibits an $m$-section, and no $m'$-section for $m'<m$. The shift of the argument in the final equality of~\eqref{eq:ZtopEZ4} comes about as 
\begin{equation}
    \frac{q_{\e[\tfrac{1}{4}]}\left(2\chi_{\e[\tfrac{1}{2}]}\right)}{4} = \frac{1}{2} \chi_{\e[\tfrac{1}{2}]}(\e[\tfrac{1}{4}]) = \frac{1}{2} \,.
\end{equation}
All of the cases not explicitly listed in~\eqref{eq:ZtopEZ4} can be easily obtained by shifting $\tau\rightarrow \tau+1$ and using~\eqref{eqn:ttEllGenMod}, as well as the invariance under charge conjugation~\eqref{eqn:ellGenChargeInversion}.

Comparing with Table~\ref{tab:twistSymZ4}, we find that all of the elliptic genera transform as meromorphic Jacobi forms under the correct modular group.
Using the transformation behavior discussed in Appendix~\ref{app:modularJacobi}, one can also check that
\begin{align}
    \begin{split}
    \mathbb{E}{\tiny\left[\begin{array}{c}2\\1\end{array}\right]}_1\left(\tau,\lambda\right)\propto \mathbb{E}{\tiny\left[\begin{array}{c}1\\0\end{array}\right]}_1\left(\frac{2\tau+1}{\tau+1},\frac{\lambda}{\tau+1}\right)\,,\quad 
    \mathbb{E}{\tiny\left[\begin{array}{c}0\\1\end{array}\right]}_1\left(\tau,\lambda\right)\propto \mathbb{E}{\tiny\left[\begin{array}{c}1\\0\end{array}\right]}_1\left(-\frac{1}{\tau},\frac{\lambda}{\tau}\right)\,,
    \end{split}
\end{align}
as well as
\begin{align}
    \mathbb{E}{\tiny\left[\begin{array}{c}0\\2\end{array}\right]}_1\left(\tau,\lambda\right)\propto \mathbb{E}{\tiny\left[\begin{array}{c}2\\0\end{array}\right]}_1\left(-\frac{1}{\tau},\frac{\lambda}{\tau}\right)\,,
\end{align}
as is expected from~\eqref{eqn:ttEllGenMod}.
Some of the Gopakumar-Vafa invariants for $X^1$ and the $\mathbb{Z}_2$-refined Gopakumar-Vafa invariants for $X^2$ can be found in~\cite[Appendix D]{Katz:2022lyl}.

\section{Outlook and open questions}
\label{sec:questions}

In this paper, we have analyzed twisted circle compactifications of gravitational theories exhibiting discrete gauge symmetries and explored their implications for F-theory compactifications on a class of geometries that we call almost generic genus one fibered Calabi-Yau threefolds.

 Our analysis directly applies to compactifications with six-dimensional gauge groups that are arbitrary products of cyclic groups. The example we studied in detail in Section \ref{sec:exampleZ4} corresponds to the gauge group $\Gsd=\IZ_4$. No examples of six-dimensional F-theory compactifications with non-cyclic discrete gauge groups have been studied so far in the literature.
Part of our motivation for this paper is to lay the groundwork for the upcoming~\cite{wipZ2Z2}, which aims to fill this gap by constructing explicit examples of almost generic genus one fibered Calabi-Yau threefolds that lead to F-theory vacua with gauge group $\Gsd=\IZ_2\times\IZ_2$.

While we have obtained detailed results about the geometry of the Calabi-Yau threefolds that correspond to F-theory vacua with only discrete gauge symmetries, it is natural to ask which discrete groups can actually appear as gauge groups of F-theory compactifications on Calabi-Yau threefolds.
The fact that the set of possible groups is finite follows from the finiteness of the set of elliptically fibered Calabi-Yau threefolds~\cite{Gross:1993fd,Filipazzi:2021dcw}.
Explicit constructions exist in the literature for $\Gsd=\IZ_2$~\cite{Morrison:2014era,Klevers:2014bqa,Mayrhofer:2014laa,Pioline:wip}, $\mathbb{Z}_3$~\cite{Klevers:2014bqa,Cvetic:2015moa,Dierigl:2022zll,Pioline:wip}, $\IZ_4$~\cite{Braun:2014qka,Oehlmann:2019ohh,Pioline:wip}, $\IZ_5$~\cite{Knapp:2021vkm,Pioline:wip} and soon for $\IZ_2\times\IZ_2$~\cite{wipZ2Z2}.
As we discussed in Section~\ref{sec:ellipticGeneraAndTopologicalStrings}, the stringy K\"ahler moduli space of the $\tseOne=\e[1/N]$ twisted and $\tseTwo=1$ twined compactification of a six-dimensional F-theory vacuum with gauge group $\Gsd=\ldots\times \IZ_N$ should correspond to a Calabi-Yau threefold that has a stringy K\"ahler moduli space which in the large base limit contains the modular curve $\IH/\Gamma_1(N)$.
However, as was also pointed out in~\cite[Section 7]{Knapp:2021vkm}, this modular curve is only simply connected for $N\le 10$ and $N=12$.
The conjecture that moduli spaces in quantum gravity have to be simply connected~\cite{Ooguri:2006in} then suggests that these might be the only values for $N$ such that $\IZ_N$ can appear as a factor of $\Gsd$.
However, we cannot exclude the possibility that a non-trivial 1-cycle on the modular curve is contracted in the interior of the moduli space.
The question of the possible discrete gauge symmetries is therefore still wide open.
The bounds on six-dimensional supergravities and F-theory vacua, without explicitly taking into account the possible discrete gauge symmetries, are under intense investigation, see e.g.~\cite{Kumar2009,Kumar2010,Kumar2011,Seiberg2011,Kim2019,Tarazi:2021duw,Grassi:2023aks,Hamada2024a,Hamada2024b,Kim:2024hxe,Hamada:2025vga,Birkar:2025rcg,Birkar:2025gvs}.
We hope that our results provide a starting point to also address the bounds on the discrete gauge sector.

In Section~\ref{sec:genericity}, we have discussed the relationship between almost generic F-theory vacua without massless vector multiplets and almost generic elliptically fibered Calabi-Yau threefolds.
While every almost generic elliptically fibered Calabi-Yau threefold corresponds to such an F-theory vacuum, we have encountered various subtleties that could potentially lead to counterexamples for the converse statement.
We believe that the almost generic F-theory vacua with finite gauge groups form a rich yet tractable class of theories that allow to explore question related to string universality, global anomalies and the landscape of Calabi-Yau threefolds.
It would therefore be useful to further understand the class of the corresponding geometries, expanding our analysis from Section~\ref{sec:genericity}.

Given the already significant complexity of this case, we have restricted ourselves to F-theory vacua that do not exhibit any massless vector multiplets and are at a sufficiently generic point in the space of vacuum expectation values of both the hypermultiplet and tensor multiplet scalars.
Our techniques can easily be adapted to include massless vector multiplets and therefore study theories where the six-dimensional gauge group is not finite.
On the other hand, at special points in the moduli space of the tensor multiplet scalars, the corresponding genus one fibered Calabi-Yau threefolds are expected to exhibit non-flat and/or non-reduced fibers.
As one of the consequences, the Tate-Shafarevich group is then only a subgroup of the Weil-Ch\^atelet group.
Examples of such theories have been discussed for example in~\cite{deBoer:2001wca,Lawrie:2012gg,Braun:2013nqa,Borchmann:2013hta,Bhardwaj:2015oru,Buchmuller:2017wpe,Anderson:2018heq,Apruzzi:2018nre,Dierigl:2018nlv,Oehlmann:2019ohh,Apruzzi:2019opn,Apruzzi:2019enx,Oehlmann:2019ohh,Anderson:2023wkr,Anderson:2023tfy,Ahmed:2024wve}, but still pose many open questions.
It would be very interesting to try to expand our analysis to such cases and thus obtain a more systematic understanding of the physical role of the Weil-Ch\^atelet group and its relationship to the Tate-Shafarevich group.
Again, we leave this to future work.

\section*{Acknowledgments}

\noindent
We thank Markus Dierigl, Antonella Grassi, Thomas Grimm, Sheldon Katz, Albrecht Klemm, Johanna Knapp, Paul Oehlmann, Boris Pioline, Emanuel Scheidegger, Eric Sharpe, Washington Taylor and Timo Weigand for helpful discussions.
We also thank Paul Oehlmann for comments on the draft.
The work of AKKP is supported under ANR grant ANR-21-CE31-0021.
The work of TS was also supported during part of the work on this project by the same grant.

\appendix

\section{Calabi-Yau threefolds with isolated nodes}
\label{sec:nodalCY3}

In this appendix, we will review some of the properties of projective Calabi-Yau threefolds with isolated nodal singularities.

We first introduce some relevant notions.
Given a projective variety $X$ with singular locus $S\subset X$, a birational morphism $\pi:\widehat{X}\rightarrow X$ is called a resolution if $\widehat{X}$ is smooth and an isomorphism over $X\backslash S$. We will often refer to $\widehat{X}$ itself as the resolution with the morphism being left implicit.
Any component of $\pi^{-1}(S)$ that is of codimension one in $\widehat{X}$ is called an exceptional divisor of the resolution.
If the resolution does not have any exceptional divisors, it is called small, otherwise, it is called large.
We call a smooth variety $\widetilde{X}$ a smoothing or a smooth deformation of $X$ if the two varieties are related by a deformation of the complex structure, i.e. of the defining equations.
For a nice overview of the general properties of singular Calabi-Yau threefolds, we refer to~\cite[Appendix A]{Cox1999-nj} and~\cite{Arras:2016evy}.

We will now review the local geometry of isolated nodal singularities on threefolds. These are also called threefold ordinary double points or conifold singularities. They are the simplest isolated surface singularities.
Locally, the geometry around a node is just the well known conifold
\begin{align}
    V=\{\,uv-zw=0\,\}\subset\mathbb{C}^4\,,
    \label{eqn:conifold}
\end{align}
with the singularity being at the origin $u=v=z=w=0$. One way to projectively resolve this space is to blow-up the singular point. This however introduces an exceptional divisor that renders the canonical class non-trivial.
The resulting space of this large resolution is therefore not Calabi-Yau, see e.g.~\cite[Section 4.1]{Davies:2009ub}.
There also exist two small resolutions of the conifold: these are obtained by blowing up along divisors inside $V$, namely
\begin{align}
    \widehat{V}=\text{Bl}_{\{u=z=0\}}V\,,\quad \widehat{V}'=\text{Bl}_{\{u=w=0\}}V\,.
\end{align}
The two small resolutions are related by the Atiyah-flop and are both isomorphic to the total space of $\mathcal{O}_{\mathbb{P}^1}(-1)\oplus\mathcal{O}_{\mathbb{P}^1}(-1)$.
In particular, both of the small resolutions are projective.

We now consider the global situation. Let $X$ be a projective threefold $X$ with $n$ isolated conifold singularities.
When considered as an analytic variety, one can choose a neighborhood for each node on which $X$ is described (upon choice of appropriate holomorphic coordinates) by the equation~\eqref{eqn:conifold}.
As a result, one can always define $2^n$ analytic small resolutions $\pi:\widehat{X}\rightarrow X$ that are related by flopping the $n$ individual exceptional curves.
However, in general, none of these analytic small resolutions is a projective variety.

Note that every non-singular complex projective variety admits a K\"ahler metric that is induced by the Fubini–Study metric after embedding into projective space.
Conversely, a compact K\"ahler variety of dimension $d\ge 3$ with holonomy ${\rm SU}(d)$ is projective~\cite[Proposition 1]{Beauville1983}.
In the context of compact Calabi-Yau threefolds, we can therefore use the terms projective small resolution and K\"ahler small resolution interchangeably.

Consider for example the nodal quintic
\begin{align}
    X=\{\,U^{(4)}(x_1,\ldots,x_5)V^{(1)}(x_1,\ldots,x_5)-Z^{(4)}(x_1,\ldots,x_5)W^{(1)}(x_1,\ldots,x_5)=0\,\}\subset\mathbb{P}^4\,,
\end{align}
where $U^{(4)},Z^{(4)}$ are generic homogeneous polynomials of degree 4 and $V^{(1)},W^{(1)}$ are generic homogeneous polynomials of degree 1 in the homogeneous coordinates on $\mathbb{P}^4$.
By comparing with~\eqref{eqn:conifold} and invoking Bezout's theorem, it is easy to see that $X$ has 16 nodal singularities at
\begin{align}
    S=\{\,U^{(4)}=V^{(1)}=Z^{(4)}=W^{(1)}=0\,\}\subset X\,.
\end{align}
As discussed in~\cite[Section 1.1]{Candelas:1989ug}, one can construct a projective small resolution $\widehat{X}$ of $X$ as a complete intersection in $\mathbb{P}^4\times\mathbb{P}^1$.
One can easily calculate $h^{1,1}(\widehat{X})=2$ and check that all of the exceptional curves are in the same homology class in $\widehat{X}$.
The geometry $\widehat{X}'$ that one obtains by flopping all of the exceptional curves in $\widehat{X}$ simultaneously is also a projective Calabi-Yau threefold and, as discussed in~\cite[Section 4.5]{Doran:2024kcb}, can be obtained as a complete intersection in a $\mathbb{P}^1$-bundle on $\mathbb{P}^4$.

Intuitively, the flop changes the sign of the homology class of the curve.
Since in going from $\widehat{X}$ to $\widehat{X}'$, the class of all of the exceptional curves is mapped to its inverse, changing the sign of the contribution of the dual to this class to the K\"ahler class renders the volume of all exceptional curves of the flopped geometry positive.
However, if one considers any of the $2(2^{15}-1)$ small resolutions of $X$ that are related to $\widehat{X}$ by flopping only a subset of the exceptional curves, there is no possibility to choose a K\"ahler class on the resulting geometry such that the volume of all of the exceptional curves remains positive.

A general criterion for when a projective threefold $X$ with a set of isolated nodes $S$ admits a projective small resolution was given in~\cite[p.98]{WernerThesis} (see~\cite[Theorem 11.2]{werner2022smallresolutionsspecialthreedimensional} for a recent translation):
\begin{theorem}[Werner'87]
Let $X$ be a complex projective threefold with isolated nodes and $\pi:\widehat{X}\rightarrow X$ any analytic small resolution.
There exists a projective small resolution of $X$ if and only if the class of all of the exceptional curves in $\widehat{X}$ is non-trivial in $H_2(\widehat{X},\mathbb{Q})$.
\label{thm:Werner87}
\end{theorem}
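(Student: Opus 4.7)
The plan is to establish the two directions separately. Throughout, let $C_1,\ldots,C_n\subset\widehat{X}$ denote the exceptional $\IP^1$'s of the small resolution.

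For the forward implication, suppose $X$ admits a projective small resolution $\pi':\widehat{X}'\to X$ with K\"ahler class $\omega$. Then $\omega\cdot[C_i']>0$ for each exceptional curve $C_i'$ of $\pi'$, so in particular $[C_i']\ne 0$ in $H_2(\widehat{X}',\IQ)$. Any two analytic small resolutions of a threefold with only isolated nodes are linked by a sequence of Atiyah flops along exceptional curves; outside the flopped loci the two resolutions are canonically biholomorphic, so there is a natural identification $H_2(\widehat{X},\IQ)\simeq H_2(\widehat{X}',\IQ)$ under which each $[C_i]$ is sent to $\pm[C_i']$. Non-triviality then propagates through this identification and the sign change, yielding $[C_i]\ne 0$ for every $i$.

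The backward implication is more involved and I would carry it out in three steps. Assuming $[C_i]\ne 0$ in $H_2(\widehat{X},\IQ)$ for each $i$, the first step is to produce a line bundle $L\in\Pic(\widehat{X})$ with $L\cdot C_i\ne 0$ for every $i$. Since $X$ is projective and $\widehat{X}\to X$ is proper birational, $\widehat{X}$ is Moishezon, so $\NS(\widehat{X})$ makes sense and its image in $H^2(\widehat{X},\IQ)$ is the algebraic part. Because homological and numerical equivalence agree rationally for algebraic 1-cycles on a compact complex manifold, each $[C_i]$ remains non-zero in $N_1(\widehat{X})\otimes\IQ$, and non-degeneracy of the pairing $\NS(\widehat{X})\times N_1(\widehat{X})\to\IZ$ supplies divisor classes $L_i$ with $L_i\cdot C_i\ne 0$. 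A generic rational combination $L=\sum_i \lambda_i L_i$ then avoids the finitely many hyperplanes $\{L\cdot C_j=0\}$ and, upon clearing denominators, yields the required integral $L$. The second step is to set $\epsilon_i:=\mathrm{sgn}(L\cdot C_i)$ and Atiyah-flop every $C_i$ with $\epsilon_i=-1$, producing an analytic small resolution $\pi':\widehat{X}'\to X$ on which the strict transform $L'$ satisfies $L'\cdot C_i'=|L\cdot C_i|>0$. The third step is to conclude projectivity: the only $\pi'$-contracted curves are the $C_i'$, so $L'$ is $\pi'$-ample by the relative Nakai–Moishezon criterion; then for an ample $H$ on $X$ and $k\gg 0$ the line bundle $(\pi')^{*}H^{\otimes k}\otimes L'$ is ample on $\widehat{X}'$, making $\widehat{X}'$ a projective small resolution of $X$.

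The hard part is the first step of the backward direction, namely the construction of $L$. On a non-K\"ahler Moishezon threefold one cannot invoke Hodge theory directly, and one has to establish that $\NS(\widehat{X})\otimes\IQ$ is large enough to separate the exceptional curves via the numerical pairing. This rests on two classical but non-trivial facts: that on any Moishezon manifold the pairing $\NS\times N_1\to\IZ$ is non-degenerate modulo torsion, and that for algebraic 1-cycles on compact complex manifolds numerical and homological equivalence agree rationally. With these in hand the flop argument and the ampleness conclusion are essentially formal; the entire content of the theorem lies in converting the hypothesis ``$[C_i]\ne 0$'' into an actual line bundle that detects each exceptional curve.
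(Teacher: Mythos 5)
First, a remark on scope: the paper does not prove Theorem~\ref{thm:Werner87} at all --- it is quoted from Werner's thesis --- so there is no in-paper argument to compare yours against; your proposal has to stand on its own.

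Your forward direction is fine, and Steps 2 and 3 of the backward direction (flop the negative curves, then combine relative ampleness of $L'$ with the pullback of an ample class on $X$) are the standard and correct way to finish. The genuine gap is in Step~1, exactly where you say ``the entire content of the theorem lies,'' and the two facts you invoke do not close it. The assertion that homological and numerical equivalence agree rationally for $1$-cycles is a theorem for smooth \emph{projective} threefolds, where it is proved by combining hard Lefschetz with the Lefschetz $(1,1)$ theorem. On $\widehat{X}$ the Lefschetz $(1,1)$ theorem survives (Moishezon manifolds satisfy the $\partial\bar\partial$-lemma), but hard Lefschetz does not: whenever the backward direction has any content, $\widehat{X}$ is non-K\"ahler (a K\"ahler Moishezon manifold is projective by Moishezon's theorem). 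Poincar\'e duality over $\IQ$ gives you a rational class in $H^2(\widehat{X},\IQ)$ pairing non-trivially with $[C_i]$, but only its $(1,1)$-component contributes to the pairing, and that component need not be rational; so there is no a priori reason that $\NS(\widehat{X})\otimes\IQ$ separates a non-zero $(2,2)$-class. In other words, ``$[C_i]\neq 0$ in $H_2(\widehat{X},\IQ)$ implies $[C_i]\neq 0$ in $N_1(\widehat{X})\otimes\IQ$'' \emph{is} the theorem, not a citable input.

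The way to close the gap is to use the projectivity of $X$ rather than general facts about $\widehat{X}$. Every line bundle on $\widehat{X}$ is the strict transform of a Weil divisor $D$ on $X$, and $D$ is $\IQ$-Cartier if and only if its strict transform $\widehat{D}$ satisfies $\widehat{D}\cdot C_i=0$ for all $i$. Since $\widehat{D}\cdot C_i$ depends only on the class $[C_i]$, the map $D\mapsto(\widehat{D}\cdot C_i)_i$ lands in $\text{Hom}(V,\IQ)$ with $V=\langle[C_1],\ldots,[C_n]\rangle_\IQ\subset H_2(\widehat{X},\IQ)$, and its rank is the defect $\sigma(X)=\text{rk}\left(\text{Weil}(X)/\text{Cart}(X)\right)$. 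The essential Hodge-theoretic input --- Werner's defect formula, reproved by Namikawa--Steenbrink and quoted as Theorem~\ref{thm:NamikawaSteenbrinkDefect} --- gives $\sigma(X)=b_4(X)-b_2(X)=b_2(\widehat{X})-b_2(X)=\dim_\IQ V$, so the image is all of $\text{Hom}(V,\IQ)$ and every non-zero $[C_i]$ is detected by a divisor. With that lemma in place of your Step~1, the rest of your argument goes through.
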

In other words, a projective small resolution of $X$ exists if and only if none of the exceptional curves in any analytic small resolution $\widehat{X}$ is trivial or torsion in $H_2(\widehat{X},\mathbb{Z})$.

The homology groups of a threefold $X$ with $n$ isolated nodes $S\subset X$ and a small resolution $\rho:\widehat{X}\rightarrow X$ can be compared using the Mayer-Vietoris sequence, following~\cite{Clemens1983}.
When treated as a topological space, we can glue $3$-discs $D^3$ with $\partial D^3=C$ into each of the exceptional curves $C$ inside $\widehat{X}$ to get a topological space $X'$ that is homotopic to $X$.
The Mayer-Vietoris sequence then gives a long exact sequence in singular homology
\begin{align}
    0\rightarrow H_3(\widehat{X})\rightarrow H_3(X)\rightarrow \mathbb{Z}^n\rightarrow H_2(\widehat{X})\rightarrow H_2(X)\rightarrow 0\,.
\end{align}
This implies that
\begin{align}
    H_2(X)=H_2(\widehat{X})/\langle\{\,[\rho^{-1}(p)]\in H_2(\widehat{X})\,,\,\, p\in S\,\}\rangle\,,
\end{align}
and taking the Euler characteristic of the sequence one also obtains
\begin{align}
        b_3(X)-b_2(X)=b_3(\widehat{X})-b_2(\widehat{X})+n\,.
\end{align}
Another part of the same Mayer-Vietoris sequence also gives
\begin{align}
    0\rightarrow H_4(\widehat{X})\rightarrow H_4(X)\rightarrow 0\,,
\end{align}
such that $b_4(\widehat{X})=b_4(X)$, as should be the case for small resolutions of threefolds.

One can also use the Mayer-Vietoris sequence to compare the homology of $X$ and a smooth deformation $\widetilde{X}$, again following~\cite{Clemens1983}.
To this end, one glues $4$-cells into each of the 3-cycles that is contracted when deforming $\widetilde{X}$ back to $X$.
This gives the exact sequence in singular homology
\begin{align}
        0\rightarrow H_2(\widetilde{X})\rightarrow H_2(X)\rightarrow 0\,,
\end{align}
and shows that $b_2(X)=b_2(\widetilde{X})$.

Let us now briefly discuss the notions of terminal singularities and $\mathbb{Q}$-factoriality.
A normal variety $X$ is said to have only \textit{terminal singularities} if the canonical divisor $K_X$ is $\mathbb{Q}$-Cartier and there exists a projective resolution $\rho:Y\rightarrow X$ such that
\begin{align}
    K_Y=\rho^*K_X+\sum_{i}a_i E_i\,,
\end{align}
where the sum runs over all of the exceptional divisors $E_i$ and all of the coefficients $a_i$ are strictly positive~\cite[Section 3.1]{Matsuki2002}.
This is a local property. Isolated nodes in a threefold are always terminal.

On the other hand, a normal variety $X$ is said to be \textit{$\mathbb{Q}$-factorial} if all (algebraic) Weil divisors on $X$ are $\mathbb{Q}$-Cartier.
This is not a local property, it depends on the global structure of the geometry.
The variety being $\mathbb{Q}$-factorial is equivalent to the vanishing of the so-called \textit{defect}
\begin{align}
\sigma(X)=\text{rk}\left(\text{Weil}(X)/\text{Cart}(X)\right)\,.
    \label{eqn:defect}
\end{align}
The following theorem, which is part of~\cite[Theorem 3.2]{Namikawa1995}, will allow us to equate $b_2(X)$ and $b_4(X)$ when $X$ is $\IQ$-factorial:
\begin{theorem}[Namikawa, Steenbrink, `95]\label{thm:NamikawaSteenbrinkDefect}
Let X be a normal projective threefold with only isolated hypersurface singularities such that $H^2(X, \cO_X)=0$. Let $b_i(X)$ denote the $i$-th Betti number of the singular cohomology of $X$. Then 
\begin{align}
    \sigma(X)=b_4(X)-b_2(X)\,.
\end{align}
\end{theorem}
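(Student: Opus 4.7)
The plan is to identify $\mathrm{rk}\,\mathrm{Pic}(X) = b_2(X)$ and $\mathrm{rk}\,\mathrm{Cl}(X) = b_4(X)$, where $\mathrm{Cl}(X) = \mathrm{Weil}(X)/{\sim_{\mathrm{lin}}}$, and then invoke the short exact sequence
\begin{equation*}
0 \to \mathrm{Pic}(X) \to \mathrm{Cl}(X) \to \mathrm{Weil}(X)/\mathrm{Cart}(X) \to 0,
\end{equation*}
which upon taking $\IQ$-ranks gives $\sigma(X) = \mathrm{rk}\,\mathrm{Cl}(X) - \mathrm{rk}\,\mathrm{Pic}(X)$.

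For the first identification I would use the exponential sheaf sequence on the normal analytic space $X$, yielding
\begin{equation*}
H^1(X,\cO_X) \to \mathrm{Pic}(X) \to H^2(X,\IZ) \to H^2(X,\cO_X).
\end{equation*}
The hypothesis $H^2(X,\cO_X)=0$ makes the third arrow surjective; the kernel of the second is the divisible subgroup $\mathrm{Pic}^0(X)$ coming from the complex torus $H^1(X,\cO_X)/H^1(X,\IZ)$. Hence $\mathrm{Pic}(X)/\mathrm{Pic}^0(X)$ embeds as a full-rank subgroup of $H^2(X,\IZ)$, giving $\mathrm{rk}\,\mathrm{Pic}(X) = b_2(X)$.

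For the second identification I would take a projective resolution $\rho:\widehat{X}\to X$ and compare $X$ with $\widehat X$ via the Clemens-type Mayer-Vietoris arguments recalled in Appendix \ref{sec:nodalCY3}: because isolated hypersurface singularities in dimension three have Milnor fibres homotopy equivalent to bouquets of $3$-spheres, the reduced cohomology of the link of each singular point is concentrated in middle degree, and one obtains $H_4(X,\IQ) \cong H_4(\widehat{X},\IQ)$. Poincaré duality on the smooth projective threefold $\widehat{X}$ then yields $b_4(\widehat{X}) = b_2(\widehat{X})$. The group $\mathrm{Cl}(X)$ maps injectively to $\mathrm{Pic}(\widehat{X})$ by strict transform, and modulo the subgroup generated by classes of exceptional divisors the image is all of $\mathrm{Pic}(\widehat{X})$; combined with $\mathrm{rk}\,\mathrm{Pic}(\widehat{X}) = b_2(\widehat{X}) = b_4(X)$ -- which follows from the vanishing $H^2(\widehat{X},\cO_{\widehat{X}})=0$, itself a consequence of $H^2(X,\cO_X)=0$ and of $R^q\rho_*\cO_{\widehat{X}}=0$ for $q>0$ (valid in our setting by Grauert-Riemenschneider together with the concentration of the vanishing cohomology in middle degree) -- this yields $\mathrm{rk}\,\mathrm{Cl}(X) = b_4(X)$.

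The main obstacle will be the last step, namely verifying that the classes of exceptional divisors in $\mathrm{Pic}(\widehat{X})$ are precisely the extra classes not descending from $\mathrm{Cl}(X)$. Concretely, one must show that a divisor on $\widehat{X}$ whose restriction away from the exceptional locus is a strict transform of a Weil divisor on $X$ differs from that strict transform by a $\IQ$-linear combination of exceptional divisors iff the underlying Weil divisor is $\IQ$-Cartier. This is a purely local question at each isolated hypersurface singular point $p$, for which I would analyze the local class group $\mathrm{Cl}(\cO_{X,p}^{\mathrm{h}})$ via the Grothendieck local cohomology sequence and identify its $\IQ$-rank with $b_2(\widehat{X}_p) - b_2(X_p)$, where $\widehat{X}_p$ is a small Stein neighbourhood of the exceptional fibre over $p$; patching these local contributions globally using normality and the Leray spectral sequence for $\rho$ then matches $\mathrm{Weil}(X)/\mathrm{Cart}(X)$ with the span of exceptional divisor classes in $\mathrm{Pic}(\widehat{X})$, closing the argument.
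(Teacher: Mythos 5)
The paper does not prove this statement; it quotes it from Namikawa--Steenbrink, so your argument has to stand on its own. Your reduction $\sigma(X)=\mathrm{rk}\,\mathrm{Cl}(X)-\mathrm{rk}\,\mathrm{Pic}(X)$ and the identification $\mathrm{rk}\,\mathrm{Pic}(X)=b_2(X)$ via the exponential sequence are fine (modulo passing to N\'eron--Severi groups so the ranks are finite). The second half breaks down. A \emph{projective} resolution $\rho:\widehat{X}\to X$ of an isolated hypersurface singularity is in general not small: its fibres contain divisors, and each exceptional prime divisor $E_i$ contributes a class to $H_4(\widehat{X},\IQ)$ that dies in $H_4(X,\IQ)$. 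So the asserted isomorphism $H_4(X,\IQ)\cong H_4(\widehat{X},\IQ)$ is simply false -- already for the blow-up of a single node, where $E\cong\IP^1\times\IP^1$ raises $b_4$ by one. The Clemens/Mayer--Vietoris comparison of Appendix~\ref{sec:nodalCY3} applies only to \emph{small} resolutions, and the link of a threefold hypersurface singularity does not have cohomology concentrated in middle degree (the node has link $S^2\times S^3$, with $H^2\neq 0$). Worse, even granting your two claims the bookkeeping is internally inconsistent: from $\mathrm{Cl}(X)\otimes\IQ\cong\bigl(\mathrm{Pic}(\widehat{X})/\langle E_i\rangle\bigr)\otimes\IQ$ together with $\mathrm{rk}\,\mathrm{Pic}(\widehat{X})=b_4(X)$ you would conclude $\mathrm{rk}\,\mathrm{Cl}(X)=b_4(X)-\#\{E_i\}$, not $b_4(X)$.

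The correct accounting is $\mathrm{rk}\,\mathrm{Cl}(X)=\mathrm{rk}\,\mathrm{Pic}(\widehat{X})-\#\{E_i\}=b_4(\widehat{X})-\#\{E_i\}$, so the substantive statement you must prove is $b_4(\widehat{X})-b_4(X)=\#\{E_i\}$, i.e.\ that the exceptional divisor classes span precisely the kernel of the surjection $H_4(\widehat{X},\IQ)\to H_4(X,\IQ)$. In the long exact sequence of the pair $(\widehat{X},E)$ this requires killing the maps $H^3(E)\to H^4(\widehat{X},E)\cong H^4(X)$ and $H^4(E)\to H^5(\widehat{X},E)$, and this is exactly where the hypersurface hypothesis and $H^2(X,\cO_X)=0$ do real work, via the weight/semipurity arguments on the vanishing cohomology that constitute the core of Namikawa--Steenbrink's proof; your sketch omits this entirely. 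A further unproved input is $H^2(\widehat{X},\cO_{\widehat{X}})=0$: by Leray this needs $R^2\rho_*\cO_{\widehat{X}}=0$, i.e.\ rationality of the singularities, which does not follow from Grauert--Riemenschneider alone and is not among the stated hypotheses. The local analysis you defer to the end is the right instinct for comparing Weil and Cartier divisors, but it cannot substitute for this global degree-four computation.
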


Note that if a partial K\"ahler small resolution $\widehat{X}$ of $X$ exists, then $b_2(\widehat{X})>b_2(X)$. But since $b_2(\widehat{X})=b_4(\widehat{X})=b_4(X)$, this automatically implies that $\sigma(X)>0$.
The converse fact, that $\sigma(X)>0$ implies the existence of a partial K\"ahler small resolution, follows from~\cite[Corollary 4.5]{Kawamata1988}.

Note on the other hand that if $\sigma(X)=0$, then given any small resolution $\rho:\widehat{X}\rightarrow X$, one has $b_2(\widehat{X})=b_2(X)$.
For a projective threefold $X$ with isolated nodes, being $\mathbb{Q}$-factorial is therefore equivalent to the fact that the homology classes of \textit{all} of the exceptional curves are trivial or torsion in $H_2(\widehat{X},\mathbb{Z})$.
The nodes of $X$ are then all $\mathbb{Q}$-factorial terminal singularities.
Furthermore, we have the following~\cite[Theorem 1.3]{Namikawa1995}
\begin{theorem}[Namikawa, Steenbrink, `95] \label{thm:NamikawaSteenbrinkDeformation}
    Let $X$ be a $\IQ$-factorial Calabi-Yau threefold which admits only isolated rational hypersurface singularities. Then $X$ can be deformed to a smooth Calabi-Yau threefold $\widetilde{X}$.
\end{theorem}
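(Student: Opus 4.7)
My plan is to prove the statement via deformation theory, following the philosophy that for Calabi-Yau threefolds with mild singularities the obstruction to smoothing is controlled by Hodge-theoretic data. The starting point is the local-to-global exact sequence of tangent cohomologies,
\begin{equation*}
0 \to H^1(X, T_X) \to T^1_X \to H^0(X, \mathcal{T}^1_X) \xrightarrow{\;\text{ob}\;} H^2(X, T_X),
\end{equation*}
where $\mathcal{T}^1_X$ is the sheaf of first-order deformations of $X$, which for isolated hypersurface singularities is a skyscraper sheaf supported at $\mathrm{Sing}(X)$ with stalks equal to the Jacobian modules $T^1_{X,p}$. For each node-like isolated hypersurface singularity, $T^1_{X,p}$ contains a distinguished smoothing direction (nonzero generically). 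The goal is to exhibit a class in $T^1_X$ whose image in $H^0(\mathcal{T}^1_X) = \bigoplus_p T^1_{X,p}$ is simultaneously a smoothing direction at every singular point, and then to show that this first-order deformation extends to an honest deformation.

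The heart of the argument, where $\IQ$-factoriality enters, is proving surjectivity of the restriction map $T^1_X \to H^0(\mathcal{T}^1_X)$ onto the smoothing locus. The plan is to run the mixed Hodge theory of the pair $(X, X\setminus \mathrm{Sing}(X))$ and use the cohomology-with-compact-support long exact sequence to identify the cokernel of this restriction with a piece of $H^4(X,\IC)/H^4(\widehat X,\IC)$, where $\widehat X \to X$ is any small resolution. By the Namikawa–Steenbrink defect formula recalled in Theorem~\ref{thm:NamikawaSteenbrinkDefect}, this cokernel is controlled by $\sigma(X) = b_4(X) - b_2(X)$. The hypothesis that $X$ is $\IQ$-factorial forces $\sigma(X) = 0$, so the restriction map is surjective onto the relevant smoothing sublocus, yielding a first-order deformation $\xi \in T^1_X$ with the desired local profile.

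It remains to promote $\xi$ to an actual deformation, i.e.\ to kill the obstructions in $H^2(X, T_X)$ at all orders. Here the Calabi-Yau assumption takes over: the Bogomolov–Tian–Todorov unobstructedness theorem extended to the relevant class of singular Calabi-Yau varieties (the Kawamata–Ran–Kollár version, applied here in the setting of rational isolated hypersurface singularities using that $K_X \sim 0$ makes $T_X \cong \Omega_X^2$ outside the singular locus) implies that the deformation functor of $X$ is smooth, so $\xi$ lifts to a formal family and, by Grauert/Artin algebraization, to an actual one-parameter family $\mathcal{X} \to \Delta$. On a generic fiber, the local smoothing directions are realized and the fiber is smooth; triviality of the canonical class is preserved under deformation, so this fiber is the desired smooth Calabi-Yau $\widetilde{X}$.

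The main technical obstacle will be the second step: pinning down the Hodge-theoretic identification of $\mathrm{coker}(T^1_X \to H^0(\mathcal{T}^1_X))$ with a defect-controlled piece of cohomology. The semipurity of the mixed Hodge structure on local cohomology near each isolated singularity, together with the interaction of $H^0(\mathcal{T}^1_X)$ with the $(2,2)$-part of $H^4_c$ of a small resolution, is delicate and is exactly the content requiring the full strength of the Steenbrink spectral sequence for the vanishing cohomology of an isolated hypersurface singularity. A secondary, more routine difficulty is verifying that the Bogomolov–Tian–Todorov unobstructedness applies verbatim in the $\IQ$-factorial terminal Calabi-Yau setting; this should follow by reducing to the smooth open part and extending via the vanishing of higher local tangent cohomology at rational hypersurface singularities, but it has to be stated carefully.
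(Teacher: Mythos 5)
The paper does not prove this statement: it is imported verbatim as Theorem~1.3 of Namikawa--Steenbrink (1995) and used as a black box (e.g.\ to guarantee that the smoothings $\widetilde{X}^\tse$ exist), so there is no internal proof to compare against. That said, your sketch does reproduce the architecture of the original argument: the local-to-global sequence for $T^1$, Hodge-theoretic control of whether global first-order deformations hit smoothing directions at every singular point, $\IQ$-factoriality entering through the defect, and unobstructedness of the deformation functor (Ran--Kawamata--Tian--Namikawa) to integrate the first-order deformation. As a roadmap this is the right strategy.

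Two concrete problems remain, both in the step you yourself identify as the heart of the matter. First, your identification of the cokernel with ``a piece of $H^4(X,\IC)/H^4(\widehat X,\IC)$ for a small resolution $\widehat X$'' cannot be right: for any small resolution one has $b_4(\widehat X)=b_4(X)$ (see the Mayer--Vietoris computation in Appendix~\ref{sec:nodalCY3}), so that quotient vanishes identically and is blind to $\IQ$-factoriality; worse, the theorem covers arbitrary isolated rational hypersurface singularities, e.g.\ the $A_2$ point $z_1^2+z_2^2+z_3^2+z_4^3=0$, which admits no small resolution at all, so the object $\widehat X$ need not exist in the generality claimed. The actual argument works with the smooth locus $U=X\setminus\mathrm{Sing}(X)$ and the mixed Hodge structures on $H^3(U)$, $H^4_c(U)$ and the local cohomology at the singular points; the defect $\sigma(X)=b_4(X)-b_2(X)$ of Theorem~\ref{thm:NamikawaSteenbrinkDefect} measures the corank of the relevant restriction map there. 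Second, ``surjective onto the relevant smoothing sublocus'' is not the right target. Since each singularity is a hypersurface singularity, $T^1_{X,p}\otimes k(p)\cong\IC$, and the precise statement needed is that the composite $T^1_X\to T^1_{X,p}\otimes k(p)$ is nonzero for every $p$; a first-order deformation whose local class is a unit modulo $\mathfrak{m}_p$ smooths the singularity at $p$, and a generic element of $T^1_X$ then works for all $p$ simultaneously. Tying the corank of that map to $\sigma(X)$ via semipurity is exactly the technical content of the original paper; you correctly flag it as delicate, but without it the proposal is a plan rather than a proof.
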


Physically, as was discussed for example in~\cite{Arras:2016evy,Grassi:2018rva}, the presence of $\mathbb{Q}$-factorial terminal singularities corresponds to the fact that M-theory compactified on such geometries contains localized massless matter that is uncharged or only charged under a finite part of the gauge group.
Therefore, these states remain massless on a generic point of the Coulomb branch. The Calabi-Yau threefold is hence singular and does not admit any K\"ahler small resolution.

We can therefore say, under the hypotheses of Theorem~\ref{thm:NamikawaSteenbrinkDefect}, that we are on a generic point of a chamber of the Coulomb branch if ${\sigma(X)=0}$.
The fact that from such points, $X$ can be deformed to a smooth threefold is the mathematical counterpart of the observation that going to a generic point of the Higgs branch by giving a vacuum expectation value to all of the massless scalars breaks part of the gauge symmetry such that after the transition/Higgsing, all massless states are completely uncharged.
The deformation also transforms uncharged localized matter, which arises from M2-branes localized at singularities, into unlocalized uncharged matter, which arises for example from expanding the M-theory 3-form along harmonic 3-forms on the Calabi-Yau threefold.

\section{Singular fibers of the Weierstra{\ss} fibration}
\label{sec:singularFibers}

In this section, we will discuss some properties of Weierstra{\ss} fibrations with isolated fibers of Kodaira types $I_2$ and $II$.

\paragraph{Type $I_2$ fibers over nodes of the discriminant}
We first show that over an isolated node of the discriminant, the Weierstra{\ss} fibration has a fiber of type $I_2$ and itself exhibits an isolated nodal singularity.
Let us assume that $u,v$ are local coordinates on $B$ such that the discriminant $\Delta=\{\,4f^3+27g^2=0\,\}$ has an isolated nodal singularity at $u=v=0$.
Writing $d(u,v)=4f^3+27g^2$, this means that $d$, $\partial_u d$ and $\partial_v d$ vanish at $u=v=0$,
while at the same time the determinant of the Hessian matrix is non-vanishing, i.e. $h(0,0)\ne 0$ for
\begin{align}
    h(u,v)=\det\left(\begin{array}{cc}
        \partial_u\partial_ud&\partial_u\partial_vd\\
        \partial_v\partial_ud&\partial_v\partial_vd
    \end{array}\right)\,.
\end{align}
One can check that $h(0,0)\ne 0$ implies that neither $f$ nor $g$ is allowed to vanish at the origin.
The vanishing orders of $(f,g,\Delta)$ are therefore $(0,0,2)$ and the fibration has an $I_2$-fiber over the origin.

We can make an ansatz for the Weierstra{\ss} coefficients
\begin{align} \label{eq:fandg}
    \begin{split}
        f=&a_0+a_{1,0}u+a_{0,1}v+a_{2,0}u^2+a_{1,1}uv+a_{0,2}v^2+\ldots\,,\\
        g=&b_0+b_{1,0}u+b_{0,1}v+b_{2,0}u^2+b_{1,1}uv+b_{0,2}v^2+\ldots\,,
    \end{split}
\end{align}
and use the freedom to rescale
\begin{align} 
    f\rightarrow \lambda^2f\,,\quad g\rightarrow \lambda^3g\,,\quad \lambda\in\mathbb{C}^*\,,
\end{align}
to set $a_0=-3$.
Imposing $d=\partial_u d=\partial_v d=0$ at $u=v=0$ then fixes $b_0=2$ as well as $a_{1,0}=-b_{1,0}$ and $a_{0,1}=-b_{0,1}$.

One can then check that the Weierstra{\ss} fibration $p(x,y,u,v)=0$, with
\begin{align}
    p(x,y,u,v)=y^2-\left(x^3+fx+g\right)\,,
    \label{eqn:WSappendix}
\end{align}
has a singularity at $(x,y,u,v)=(1,0,0,0)$ and that this is the only singularity over $u=v=0$.
Moreover, the determinant of the Hessian matrix of $p(x,y,u,v)$ at $(1,0,0,0)$ is proportional, as a polynomial in the coefficients of $u$ and $v$ in~\eqref{eq:fandg}, to $h(0,0)$.
The Weierstra{\ss} fibration itself therefore has an ordinary double point at $(1,0,0,0)$.

\paragraph{Type $I_2$ fibers over other singularities}
Although an isolated node of the discriminant always corresponds to an $I_2$-fiber of the Weierstra{\ss} fibration, the converse is not true in general.
To see this, let us consider again the general ansatz for the Weierstra{\ss} coefficients in a local neighborhood.

Consider again coordinates $u,v$ on $\mathbb{C}^2$ and take the Weierstra{\ss} coefficients
\begin{align}
    f=-3(1+2v)\,,\quad g=2+3v(2+v)+u^2(1-3v)\,.
\end{align}
Then the vanishing orders of $(f,g,\Delta)$ are $(0,0,2)$ at the origin.
However, the determinant of the Hessian matrix of $4f^3+27g^2$ vanishes at the origin and the discriminant has a cuspidal singularity
\begin{align}
    \Delta=\{\,0=u^2+v^3+\ldots\}\,.
\end{align}
In fact, after changing coordinates
\begin{align}
    x\rightarrow z_4+1\,,\quad u\rightarrow z_2\left(1+\frac{\sqrt{3}}{2}(z_1+\sqrt{3}z_4)\right)\,,\quad v\rightarrow z_4+\frac{1}{\sqrt{3}}z_1\,,\quad y\rightarrow {\rm i}z_3\,,
\end{align}
the Weierstra{\ss} equation
\begin{align}
    y^2=x^3-3(1+2v)x+2+3v(2+v)+u^2(1-3v)
\end{align}
takes the form
\begin{align}
    0=z_1^2+z_2^2+z_3^2+z_4^3+\mathcal{O}(z^4)\,.
\end{align}
The fibration therefore has an $A_2$ singularity at $(x,y,u,v)=(1,0,0,0)$.
This does not admit \textit{any} small resolution, not even a non-K\"ahler one.

We have seen above that for the general ansatz of an isolated $I_2$-fiber, the Hessian matrix of the Weierstra{\ss} fibration at the origin is proportional to that of the discriminant.
$I_2$-fibers over non-nodal singularities of the discriminant can therefore only occur if the Weierstra{\ss} fibration itself has a singularity that is not an isolated node.

\paragraph{Type $II$ fibers over cusps of the discriminant}
It is easy to check that isolated fibers of type $II$ correspond to cuspidal singularities of the discriminant but the Weierstra{\ss} fibration remains smooth:
consider a point where the vanishing order of $(f,g,\Delta)$ is $(\ge1,1,2)$.
We can make an ansatz
\begin{align}
    \begin{split}
        f=&a_{1,0}u+a_{0,1}v+a_{2,0}u^2+a_{1,1}uv+a_{0,2}v^2+\ldots\,,\\
        g=&b_{1,0}u+b_{0,1}v+b_{2,0}u^2+b_{1,1}uv+b_{0,2}v^2+\ldots\,,
    \end{split}
\end{align}
and without loss of generality we can assume that $b_{0,1}=0$ and $b_{1,0}\ne 0$.
Then the discriminant indeed has a cusp at $u=v=0$.
However, we can check that the Weierstra{\ss} fibration $p(x,y,u,v)=0$, with $p(x,y,u,v)$ as in~\eqref{eqn:WSappendix}, would have a singularity over $u=v=0$ if and only if $b_{1,0}=b_{0,1}=0$.

\section{Further details on the geometries of the example of Section~\ref{sec:exampleGeom}} \label{sec:appGeom}
In this appendix, we will elaborate on the construction and geometric properties of the Calabi-Yau threefolds $X^k$, $k=0,1,2$, that we introduced in Section~\ref{sec:exampleGeom}.

\subsection{Toric data associated to $X^1$ and $X^2$} \label{ss:toric}
\paragraph{$X^1$ torically} The toric data for the ambient space
\begin{align} \label{eq:defV}
    V=\mathbb{P}\left(\mathcal{O}_{\mathbb{P}^2}(-1)\oplus \mathcal{O}_{\mathbb{P}^2}^{\oplus 3}\right)
\end{align}
of the variety $X^1$ introduced in Section~\ref{sec:exampleGeom} is given in Table~\ref{tab:toricV}.

\begin{table}
\begin{equation} \nonumber
    \left[\begin{NiceArray}{c|cccccc|c}
        \text{hom. coord.}    & \multicolumn{6}{|c|}{\text{1-cone gens.}} & \text{divisor class} \\ \hline
        x_1 & 1 & 0 & 0 & 0 & 0 & 0 & -H\\
        x_2 & 0 & 1 & 0 & 0 & 0 & 0 & D_{2}\\
        x_3 & 0 & 0 & 1 & 0 & 0 & 0 & D_{3}\\
        x_4 & 0 & 0 & 0 & 1 & 0 & 0 & D_{4}\\        
        y_1 & 1 & 0 & 0 & 0 & 1 & 0 & H\\
        y_2 & 0 & 0 & 0 & 0 & 0 & 1 & H\\
        y_3 & 0 & 0 & 0 & 0 & -1 & -1 & H\\
    \end{NiceArray}\right]
    \quad \xmapsto{\, \IP \,} \quad
    \left[\begin{NiceArray}{c|ccccc|c}
        \text{hom. coord.}    & \multicolumn{5}{|c|}{\text{1-cone gens.}} & \text{divisor class} \\ \hline
        x_1 & 1 & 0 & 0 & 0 & 0 & D_\zeta - H\\
        x_2 & 0 & 1 & 0 & 0 & 0 & D_\zeta\\
        x_3 & 0 & 0 & 1  & 0 & 0 & D_\zeta\\
        x_4 & -1 & -1 & -1  & 0 & 0 & D_\zeta\\        
        y_1 & 1 & 0 & 0 & 1 & 0 & H\\
        y_2 & 0 & 0 & 0 & 0 & 1 & H\\
        y_3 & 0 & 0 & 0 & -1 & -1 & H\\
    \end{NiceArray}\right] 
\end{equation} 
\caption{Toric data for $\cO_{\IP^2}(-1)\oplus \cO_{\IP^2}^{\oplus 3} $ (LHS) and its projectivization $V$ (RHS).} \label{tab:toricV} 
\end{table}

From the table, we can read off the anti-canonical class $-K_V$ of $V$ as the sum of all torically invariant divisors.
This gives
\begin{equation}
    -K_V = 4D_\zeta + 2H \,,
\end{equation}
where $D_\zeta$ is the relative hyperplane class of the projective bundle $\pi_V:V\rightarrow \mathbb{P}^2$ and $H$ is the pullback of the hyperplane class on $\mathbb{P}^2$ to $V$.
The Calabi-Yau threefold $X^1$ is defined as the complete intersection associated to the nef partition 
\begin{equation}
    K_V = D+D \,, \quad -D = 2D_\zeta + H \,.
\end{equation}
In other words, $X^1$ can be obtained as the intersection of the vanishing loci of two generic sections of
\begin{equation}
    \cO_V(-D) = \zeta^{\otimes 2} \otimes \pi_V^* \cO_{\IP^2}(1) \,,
\end{equation}
where $\zeta$ is the relative hyperplane bundle on $V$.
As the complete intersection in $V$ is associated to the nef partition of the anti-canonical bundle of $V$, the variety $X^1$ is Calabi-Yau. 

Note that the divisor $D$ is ample.\footnote{This distinguishes it from the $\IZ_4$ example studied in~\cite[Section 7.2]{Schimannek:2021pau}. However, we expect that the corresponding 4-section fibration $X_1^{(4)}$ in that reference also exhibits $H_2(X_1^{(4)},\IZ)=\mathbb{Z}^2$, i.e. does not exhibit any torsion in homology.} The Lefshetz hyperplane theorem thus applies to $X^1$ and identifies its homology below degree $\dim V - 2=3$ with that of $V$.

\paragraph{$X^2$ torically} The threefold $X^2$ can be described as the anti-canonical divisor of a line bundle over the base
\begin{equation}
    W = \IP^1 \times \IP^2 \,, \quad \pi_W: W \rightarrow \IP^2 = B \,.
\end{equation}
The toric data of the total space of this line bundle is given in Table \ref{tab:toricX2}.
\begin{table}
    \centering
    \begin{minipage}{0.45\textwidth}
        \centering
        \begin{equation} \nonumber
            \left[\begin{NiceArray}{c|cccc|c}
            \text{hom. coord.}    & \multicolumn{4}{|c|}{\text{1-cone gens.}} & \text{divisor class} \\ \hline
           y   & 1  &  0 & 0 & 0 & 3 H_1 + 2 H_2\\
            z_1 & -1 &  1 & 0 & 0 & H_{2}\\
            z_2 & -1 & -1 & 0 & 0 & H_{2}\\
            y_1 & -1 &  0 & 1 & 0 & H_1\\
            y_2 & -1 &  0 & 0 & 1 & H_1\\
            y_3 & -1 &  0 & -1 & -1 & H_1\\
        \end{NiceArray}\right]
        \end{equation} 
        \caption{Toric data for the ambient space of $X^2$.} \label{tab:toricX2} 
    \end{minipage}\hfill
    \begin{minipage}{0.45\textwidth}
        \centering
        \begin{equation} \nonumber
        \left[\begin{NiceArray}{c|ccccc|c}
        \text{hom. coord.}    & \multicolumn{5}{|c|}{\text{1-cone gens.}} & \text{divisor class} \\ \hline
        \lambda   & 0  &  0 & 1 & 1 & 1 & E\\
        z_1 & 1 &  0 & 0 & 0 & 0 & H_{5}\\
        z_2 & 0 &  1 & 0 & 0 & 0 &H_{5}\\
        z_3 & 0 &  0 & 1 & 0 & 0 & H_5 - E\\
        z_4 & 0 &  0 & 0 & 1 & 0 & H_5 - E\\
        z_5 & 0 &  0 & 0 & 0 & 1 & H_5 - E\\
        z_6 & -1 &  -1 & -1 & -1 & -1 & H_5\\
        \end{NiceArray}\right]
        \end{equation} 
         \caption{Toric data for the blowup of $\IP^5$ along a plane.} \label{tab:toricblowupP5} 
    \end{minipage}
\end{table}
The anti-canonical class of the geometry is given by 
\begin{equation}
    2[y] = 4 H_2 + 6 H_1 \,.
\end{equation}
We see that the defining equation~\eqref{eqn:X2eq} of $X^2$ is indeed a section of the corresponding anti-canonical line bundle.

\subsection{$X^1$ as a complete intersection in a blowup of $\IP^5$} \label{ss:X1inblowup}
The ambient space $V$ defined as the projectivization of a vector bundle over $\IP^2$ in \eqref{eq:defV} can also be obtained by blowing up $\IP^5$ along a plane. The map from $V$ to $\IP^5 \times \IP^2$ is given by
\begin{equation}
    ([y_1:y_2:y_3],[x_1:x_2:x_3:x_4] \mapsto ( [x_2:x_3:x_4:x_1 y_1:x_1 y_2:x_1 y_3],[y_1:y_2:y_3]) \,,
\end{equation}
whose image is manifestly the blowup of $\IP^5$ along the vanishing locus of the three homogeneous coordinates $z_3,z_4,z_5$ in the nomenclature introduced in Table \ref{tab:toricblowupP5}. It is not difficult to arrive at the identification of divisors
\begin{equation}
    H_5 = D_\zeta \,, \quad E = D_\zeta - H \,.
\end{equation}
We conclude that $X^1$ corresponds to the intersection $(3H_5-E,3H_5-E)$ in the blowup.

\subsection{Behavior of $X^{1,2} \in \Sh(X^0) = \IZ_4$ above singularities of the discriminant divisor}  \label{app:discSing}

\paragraph{Double covers of quartics and $X^2$} \label{ss:X2}
We have defined $X^2$ as the double cover of the trivial $\IP^1$ bundle $W= \IP^1 \times \IP^2$ that is the vanishing locus of the polynomial
\begin{equation} \label{eq:quartic}
    F = y^2 - P_4 \,,
\end{equation}
where $P_4$ is the quartic polynomial 
\begin{equation} \label{eq:P4det}
    P_4 = \det \Azz(y_{1,\ldots,3}) \,,
\end{equation}
in $z_1$ and $z_2$.
The $4\times 4$ matrix $\Azz$ was introduced in~\eqref{eq:pencilquad}.
Until we state otherwise below, the discussion will not be specific to $P_4$ being a determinant.
We will follow the notation in~\eqref{eq:coeffDoubleCoverQuartic} for its coefficients. 

The equation~\eqref{eq:quartic} can be mapped to Weierstra{\ss} form, following e.g.~\cite{An2001}.
In terms of the $e_0,\ldots,e_4$ introduced in~\eqref{eq:coeffDoubleCoverQuartic}, the two invariants of the quartic equation are given by\footnote{Invariants are polynomials in the coefficients of the quartic that upon a $\mathrm{GL}(2)$-action on its variables $(z_1,z_2)$ are invariant up to a power of the determinant of the $\mathrm{GL}(2)$ matrix. This power is 4 for $i$ and 6 for $j$.}
\begin{align} \label{eq:iAndjForQuartic}
    \begin{split}
    f &= 4\left(4 e_1 e_3-e_0 e_4 -3a_2^2\right)\,, \\
    g &= 4\left(e_0 e_2 e_4 + 2e_1 e_2 e_3 - e_0 e_3^2 - e_4 e_1^2 -e_2^3\right) \,.
    \end{split}
\end{align}
These invariants determine the Weierstra{\ss} equation and its discriminant:
\begin{equation}
    y^2 = x^3 +f\, x +g\,, \quad \Delta = 4f^3 +27 g^2 \,.
\end{equation}

The singularities of $X^2$ lie at points on $X^2$ at which $\nabla F = 0$. The conditions $\partial_y F = \partial_{z_i} F = 0$ imply that the fibers of $X^2$ over the discriminant divisor of $P_4$ are singular. As the discriminant of the quartic and its Weierstra{\ss} form coincide, $X^2$ has singular fibers over the same points of the base variety $B$ as $X^0$.

We will next argue that, just as for $X^0$, singular points of  $X^2$ lie over singular points of the discriminant divisor; however, again in parallel to $X^0$, not all of these points have singularities of the threefold above them. Those that do are a subset of those that do for $X^0$, as we will show.

To see that a singular point of $X^2$ must necessarily lie over a singular point of the discriminant divisor, introduce an affine parameter $z$ on the $\IP^1$ coordinatized by $(z_1:z_2)$ and let $s$ be a local parameter on the base. Consider a point above the discriminant divisor at which $\partial_s F = 0$. Shift $z$ and $s$ if necessary so that they vanish at this point, and keep any other local parameters fixed. The Taylor expansion of $P_4$ starts at quadratic order in both $z$ and $s$:
\begin{equation}
    P_4 = \frac{1}{2} \left( P_{4,zz} z^2 + P_{4,zs} z s + P_{4,ss} s^2 \right) + \ldots \,.
\end{equation}
If $P_{4,zz} \neq 0$, we can write
\begin{equation}
    P_4 = \frac{1}{2} P_{4,zz} (z- \alpha_1 s)(z - \alpha_2 s)  + \ldots \,.
\end{equation}
We see that in this case, $P_4$ has zeros $z_{1,2} = \alpha_{1,2} s + \cO(s^2)$ that move linearly in $s$ close to $s=0$. Hence,
\begin{equation}
    \Delta = s^2 \Delta_{\mathrm{red}}(s) \,, \quad \Delta_{\mathrm{red}}(0) = \mathrm{const.} \,,
\end{equation}
i.e. the derivative of the discriminant vanishes at this point with regard to the local parameter $s$. If $P_{4,zz} = 0$, then $P_4$ at $s=0$ has at least a triple zero at $z=0$. In this case, explicit calculation using \eqref{eq:iAndjForQuartic} shows that $f$ and $g$ vanish, $\Delta = 4f^3 + 27 g^2$ thus vanishes to at least second order in $s$. Repeating the argument for the second local parameter on the base, the claim that singularities of $X^2$ must lie over singularities of the discriminant divisor follows. 

The behavior over the singular points of the discriminant depends on the vanishing pattern of its four zeros: as we are considering a 2d base, the patterns that generically occur are $(2,1,1)$ (only two zeros coincide), $(3,1)$ (three zeros coincide), and $(2,2)$ (two pairs of two zeros coincide). We will discuss these cases in turn.

\begin{itemize}
    \item $(3,1)$: $f$, $g$ vanish for this case. These are hence the points over which $X^0$ has a cusp. Just as in the case of $X^0$, no singularities of $X^2$ lie above these points: $\partial_s \Delta=0$ without the need to impose a constraint on the $s$-derivative of the coefficients of $P_4$. Hence, $\partial_s F \neq 0$, generically.
    \item $(2,2)$: $f$ and $g$ are non-zero in this case. $X^0$ hence exhibits nodes above these points of the base. However, unlike for $X^0$, no singularity lies above these points in the case of $X^2$. To see this, express the coefficients $e_0, \ldots, e_4$ in \eqref{eq:coeffDoubleCoverQuartic} in terms of the two distinct zeros $x_1, x_2$ of the quartic; explicit computation then shows that the coefficients of each formal derivative $\partial_s e_i$, $i=0,\ldots,4$ in $\partial_s \Delta$ vanish. Hence, just as for the $(3,1)$ pattern, $\partial_s \Delta$ vanishes without the need to impose conditions on the $s$-derivative of the coefficients of $P_4$. Hence, $\partial_s F \neq 0$, generically. The geometry of the fiber over such points is easy to visualize \cite[Section 2.2]{Morrison:2014era}: it is a resolved $I_2$ fiber consisting of two rational degree 1 curves intersecting in two points.
    \item $(2,1,1)$: $f$ and $g$ are again non-zero.
    The Jacobian fibration $X^0$ therefore has a nodal singularity over each of these points.
    Unlike the case $(2,2)$, $\partial_s \Delta =0$ does not follow simply from the vanishing pattern; the vanishing of the derivative implies a constraint on the derivatives $\partial_s a_i$, $i=0,\ldots, 4$. This constraint renders $\partial_s F = 0$ as well. This can easily be seen by direct calculation: express the coefficients $e_i$ in terms of the three distinct zeros $x_1, x_2, x_3$ of $\Delta$, take the formal $s$ derivative of the discriminant and solve for e.g. $\partial_s e_0$. Substituting back into $\partial_s F$, this is seen to vanish at the double zero in $x$. 
\end{itemize}
The discussion so far has been for general $P_4$. For $P_4$ in the form \eqref{eq:P4det}, we will see in the following discussion that the cases $(3,1)$ and $(2,2)$ coincide with the matrix $A_{z_1,z_2}$ having corank 1, while it drops rank by 2 when the singularity has vanishing pattern $(2,1,1)$.

\paragraph{Intersection of two quadrics and $X^1$}
Consider two quadrics $U$ and $V$ in $\IP^n$ and the associated $(n+1)\times(n+1)$ symmetric matrices $Q_U$ and $Q_V$.
Their intersection up to projective equivalence is governed by the so-called Segre symbol $\sigma$ of the pencil of quadrics associated to $Q_U + \lambda Q_V$. 
The definition of the Segre symbol can be found e.g. in~\cite[Section 8.6.1]{DolgachevBook} or~\cite{FevolaQuadrics}.
It takes the form $\sigma = [\sigma_1, \ldots, \sigma_r]$ in terms of tuples of positive integers
\begin{equation}
    \sigma_i = (\sigma_i^{(1)},\ldots,\sigma_i^{(s_i)}) \,, \quad \sigma_i^{(1)} \ge \ldots \ge \sigma_i^{(s_i)} \,,
\end{equation}
with $r \le n+1$, and $\sum_{i=1}^r \sum_{k=1}^{s_i} \sigma_i^{(k)} = n+1 $. It implies that an invertible matrix $C$ exists such that $C^T Q_U C$ and $C^T Q_V C$ are in Jordan normal form with $\sigma_i^{(k)} \times \sigma_i^{(k)}$ blocks of the respective form
\begin{equation}
    J_{U,i}^{(k)} = 
        \begin{pmatrix}
            0 & \cdots &0 & \alpha_i \\
            0 & \cdots  &\alpha_i & 1 \\
            \vdots & \iddots &\iddots  & \vdots \\
        \alpha_i & 1 & \cdots & 0 
        \end{pmatrix}\,, \quad
    J_{V,i}^{(k)} = 
        \begin{pmatrix}
            0 & \cdots &0 & 1 \\
            0 & \cdots  &1 & 0 \\
            \vdots & \reflectbox{$\ddots$} &   & \vdots \\
        1 & 0 & \cdots & 0 
        \end{pmatrix} \,.
\end{equation}
The entries $\alpha_1 , \ldots, \alpha_r$ are the negatives of the distinct roots of $\det( Q_U + \lambda Q_V) = 0$ as a polynomial equation in $\lambda$, and $\sigma_i$ is a partition of the multiplicity of the $i^{\mathrm{th}}$ root.

Pencils associated to a given Segre symbol $\sigma$ determine a stratum $\Gr_\sigma$ within the space $\Gr(2,\IS^{n+1})$ of pencils of $(n+1)\times(n+1)$ symmetric matrices $\IS^{n+1}$.
The codimension of this stratum in $\Gr(2,\IS^{n+1})$ is computed in \cite{FevolaQuadrics} in terms of the conjugate partitions $\sigma_i^*$, $i=1,\ldots, r$, as
\begin{equation}
    \codim (\Gr_\sigma) = \sum_{i=1}^r \sum_{k=1}^{\sigma_i^{(1)}} \begin{pmatrix} \sigma_i^{*,(k)}+1 \\ 2 \end{pmatrix} - r \,.
\end{equation}
As we are interested in this construction fibered over a 2-dimensional space, we only need to consider Segre symbols with stratum $\Gr_\sigma$ of codimension 2 or lower.
We list these together with the geometry of the associated intersection, which can be found e.g. in \cite{HodgePedoe} or \cite[Example 3.1]{FevolaQuadrics}\footnote{In \cite{FevolaQuadrics}, the codimension in $\Gr(2,\IS^4)$ is the second of the three codimensions cited in the table of Example 3.1.}, in Table \ref{tab:pencilQuadrics}.

\begin{table}
    \centering
    \begin{tabular}{|c|c|c|}\hline
        Segre symbol & Codimension & Geometry of intersection of quadrics \\ \hline
         $[1,1,1,1]$ & 0 & irreducible quartic curve\\
         $[2,1,1]$ & 1 & nodal quartic curve\\
         $[3,1]$ & 2 & cuspidal quartic curve\\
         $[2,2]$ & 2 & twisted cubic and line intersecting in two points\\
         $[(1,1),1,1]$ & 2 & two conics intersecting in two points\\\hline
    \end{tabular}
    \caption{Strata of codimension at most 2 in $\Gr(2,\IS^4)$.}
    \label{tab:pencilQuadrics}
\end{table}

Turning now to the torus fibration $X^1$ over the base $\IP^2$, we will show that the last two entries in Table \ref{tab:pencilQuadrics} correspond to the fibers, in the  notation introduced in the paragraph above equation \eqref{eq:numberI2quartic}, over the points $S_\Delta^{(1)}$ and $S_\Delta^{(2)}$ of the base respectively. Recall that in~\eqref{eq:quaternaryQuadrics}, we have introduced two quaternary quadrics (i.e. quadrics in $\IP^3$)
\begin{equation}
    p_i = \myvec{x}^T A_i \myvec{x} \,, \quad i=1,2 \,,
\end{equation}
with the entries of the matrices $A_i$ homogeneous polynomials of the appropriate degree in the coordinates $(y_1:y_2:y_3)$ of the base.  $X^1$ is defined as the intersection of the two hypersurfaces $\{p_i=0\}$, $i=1,2$. 

Singularities of $X^1$ lie at points $p\in X^1$ at which the gradients $\nabla p_1$ and $\nabla p_2$ are proportional.
We first keep the base coordinates fixed (and omit them in our notation) to study singularities in the fiber.
Then
\begin{equation}
    (\myvec{x},\myvec{y}) \in X^1 \, \text{is a fiber singularity} \quad \Leftrightarrow \quad p_1(\myvec{x})=p_2(\myvec{x})=0 \,\wedge \,\exists (z_1:z_2) \in \IP^1 : (z_1 A_1 + z_2 A_2 )\myvec{x} = 0 \,.
\end{equation}
The point $(\myvec{x},\myvec{y}) \in X^1$ is hence a singular point of the fiber when the pencil of quadrics
\begin{equation}
    \Azz=z_1 A_1 + z_2 A_2
\end{equation} 
contains a member of rank less than 4, provided that an element in the kernel of this member lies on $X^1$. Given a two dimensional base of the elliptic fibration, the two cases $\corank \Azz =1$ and $\corank \Azz =2$ generically occur.
We shall consider them in turn.

\begin{itemize}
    \item Corank 1: Let $\myvec{x} \neq 0$ lie in (and hence span) the kernel of $\Azz$. The condition $(\myvec{x},\myvec{y}) \in X^1$ can be expressed in terms of the pencil of quadrics. By Jacobi's formula,
    \begin{equation}
        \partial_{z_1} \det \Azz = \tr \left(\adj(\Azz) \partial_{z_1} \Azz\right)= \tr \left(\adj(\Azz) A_1 \right)\,.
    \end{equation}
    The adjugate of a matrix $A$ satisfies $A\adj A = I\,\det A$, with $I$ representing the unit matrix. Each column of $\adj \Azz$ must thus lie in the kernel of $\Azz$, i.e. be proportional to $\myvec{x}$. 
    Therefore,
    \begin{equation}
        \adj \Azz = \myvec{x} \myvec{w}^T \,.
    \end{equation}
    We can further constrain $\myvec{w}$ by invoking that the adjugate of a symmetric matrix is symmetric:
    \begin{equation}
        \myvec{x} \myvec{w}^T = \myvec{w} \myvec{x}^T
    \end{equation}
    implies that $\myvec{w}$ is a multiple of $\myvec{x}$, $\myvec{w} = \gamma \myvec{x}$ for some $\gamma \neq 0$. Thus,
    \begin{equation}
        \partial_{z_1} \det \Azz = \gamma \,\tr \left(\myvec{x} \myvec{x}^T A_1 \right)= \gamma \,\myvec{x}^T A_1 \myvec{x} \,.
    \end{equation}
    But $A_{z_1,z_2} \myvec{x} = 0$ together with $p_1(x) = \myvec{x}^T A_1 \myvec{x} = 0$ implies $p_2(x)= \myvec{x}^T A_2 \myvec{x} = 0$. Reintroducing the base dependence for clarity, 
    we conclude that for $\myvec{x} \in \ker A_{z_1,z_2}(\myvec{y})$, $(\myvec{x},\myvec{y}) \in X^1$ exactly when the derivative of the determinant $\det \Azz(\myvec{y})$ with regard to a $\IP^1$ coordinate vanishes.
    
    Consulting Table \ref{tab:pencilQuadrics}, we see by explicitly writing out the canonical representatives of the quadrics associated to the respective Segre symbols that the pencils with $\codim \Gr_\sigma \le 2$ containing rank 1 members at ${(z_1:z_2) = (1:-\alpha_1)}$ have Segre symbol $[2,1,1]$, $[3,1]$, and $[2,2]$. From our discussion of double covers of quadrics in Section \ref{ss:X2}, we know that $[3,1]$ occurs precisely over the cusps of the discriminant divisor, while $[2,2]$ fibers arise over the points of $S_\Delta^{(1)}$. In particular, we read off from Table \ref{tab:pencilQuadrics} that the two components of $[2,2]$ fibers intersect the 4-section of $X^1$ in 3 and 1 point, respectively.
    
    \item Corank 2: Again by Jacobi's formula, the derivative of $\det \Azz$ now vanishes identically.\footnote{The adjugate of a matrix $A$ is the transpose of the cofactor matrix of $A$, i.e. the entries are first minors of $A$. These all vanish for $A$ of corank higher than 1.} Let 
    $\ker \Azz = \langle \myvec{x_1},\myvec{x_2} \rangle$. We can then find a linear combination of these vectors such that 
    \begin{equation}
        (\mu \myvec{x_1} + \lambda \myvec{x_2})^T A_1 (\mu \myvec{x_1} + \lambda \myvec{x_2}) = 0 \,.
    \end{equation}
    The vector $\mu \myvec{x_1} + \lambda \myvec{x_2}$ is then also isotropic with regard to $A_2$, hence lies on $X^1$.

    Table \ref{tab:pencilQuadrics} shows that the only pencils with $\codim \Gr_\sigma \le 2$ containing a rank 2 member at ${(z_1:z_2) = (1:-\alpha_1)}$ are the ones with Segre symbol $[(1,1),1,1]$. These fibers occur above the points $S_\Delta^{(2)}$. From Table \ref{tab:pencilQuadrics}, we can read off that the two components of $[(1,1),1,1]$ fibers intersect the 4-section of $X^1$ each in two points.
\end{itemize}

We finally argue that the threefold $X^1$ itself is smooth at the singular points of these fibers. For $(\myvec{x},\myvec{y}) \in X^1$ to be singular, we require the existence of a point $(z_1:z_2) \in \IP^1$ which in addition to
    \begin{equation}
        (z_1 A_1 + z_2 A_2) \myvec{x} = 0
    \end{equation}
    satisfies
    \begin{equation} \label{eq:X1sing}
        \myvec{x}^T (z_1 \partial_{y_i} A_1 + z_2 \partial_{y_i} A_2) \myvec{x} = 0 \,, \quad i = 1,2,3\,.
    \end{equation}
Let us again distinguish according to the corank of $z_1 A_1 + z_2 A_2$.
\begin{itemize}
    \item Corank 1: As above, we can show that the condition \eqref{eq:X1sing} is equivalent to
    \begin{equation} \label{eq:singBase}
        \partial_{y_i} \det \Azz = 0 \,, \quad i = 1,2,3\,.
    \end{equation}
    The condition for $X^1$ and $X^2$ to be singular hence coincide in this case. Having already imposed two conditions on the base coordinates to obtain a singular fiber, \eqref{eq:singBase} will generically not hold anywhere over a 2-dimensional base.
    \item Corank 2: With $\ker \Azz = \langle \myvec{x_1},\myvec{x_2} \rangle$ as above, we now need to impose two additional constraints on the linear combination $\mu \myvec{x_1} + \lambda \myvec{x_2}$, to satisfy \eqref{eq:X1sing}. Generically, no solutions will exist. Unlike the case of $X^2$, $X^1$ will hence generically be smooth over such points of the base as well.
\end{itemize}

\section{Jacobi forms and modular forms for congruence subgroups}
\subsection{The ring of weak Jacobi forms}
In this section, we will summarize some of the basic properties of (weak) Jacobi forms as presented in~\cite{Eichler1985}.
A Jacobi form of weight $w$ and index $m$ is a holomorphic function $\phi_{w,m}:\mathbb{H}\times\mathbb{C}\rightarrow\mathbb{C}$ that for each ${\tiny\left(\begin{array}{cc}a&b\\c&d\end{array}\right)}\in\SLtwoZ$ satisfies the modular transformation law
\begin{align}
    \phi_{w,m}\left(\frac{a\tau+b}{c\tau+d},\frac{z}{c\tau+d}\right)=(c\tau+d)^k{\rm e}^m\!\!\left[\frac{cz^2}{c\tau+d}\right]\phi_{w,m}(\tau,z)\,,
\end{align}
and for each $a,b\in\mathbb{Z}$ satisfies the elliptic transformation law
\begin{align}
    \phi_{w,m}(\tau,z+a\tau+b)={\rm e}^{-m}[a^2\tau+2az]\,\phi_{w,m}(\tau,z)\,.
\end{align}
This implies that $\phi_{w,m}(\tau,z)$ admits a Fourier expansion
\begin{align}
    \phi_{w,m}(\tau,z)=\sum\limits_{n,r}q^ny^r\,,\quad q=e^{2\pi{\rm i}\tau}\,,\quad y=e^{2\pi{\rm i}z}\,.
\end{align}
It is called a \textit{holomorphic Jacobi form} if $c(n,r)=0$ unless $4mn\ge r^2$ and it is called a \textit{weak Jacobi form} if $c(n,r)=0$ unless $n\ge 0$.

The space of weak Jacobi forms of even weight $w$ and index $m$ is generated as
\begin{align}
    J^{\text{weak}}_{w,m}=\bigoplus\limits_{k\ge 0}^m M_{w+2k}\left(\SLtwoZ\right)\phi_{-2,1}(\tau,z)^k\phi_{0,1}(\tau,z)^{m-k}\,,
\end{align}
where $M_{w+2k}\left(\SLtwoZ\right)$ is the ring of modular forms of weight $w+2k$.
The weak Jacobi forms $\phi_{-2,1}(\tau,z)$, $\phi_{0,1}(\tau,z)$ of index $1$ and respective weight $-2$ and $0$ are given by
\begin{align}
    \phi_{-2,1}(\tau,z)=-\frac{\vartheta_1(\tau,z)^2}{\eta(\tau)^6}\,,\quad \phi_{0,1}(\tau,z)=4\left(\frac{\vartheta_2(\tau,z)^2}{\vartheta_2(\tau,0)^2}+\frac{\vartheta_3(\tau,z)^2}{\vartheta_3(\tau,0)^2}+\frac{\vartheta_4(\tau,z)^2}{\vartheta_4(\tau,0)^2}\right)\,,
\end{align}
in terms of the Jacobi theta functions $\vartheta_1(\tau,z)=-\vartheta{\tiny\left[\begin{array}{c}\frac12\\\frac12\end{array}\right]}(\tau,z)$ and
\begin{align}
    \vartheta_2(\tau,z)=\vartheta{\tiny\left[\begin{array}{c}\frac12\\0\end{array}\right]}(\tau,z)\,,\quad \vartheta_3(\tau,z)=\vartheta{\tiny\left[\begin{array}{c}0\\0\end{array}\right]}(\tau,z)\,,\quad \vartheta_4(\tau,z)=\vartheta{\tiny\left[\begin{array}{c}0\\\frac12\end{array}\right]}(\tau,z)\,,
\end{align}
with
\begin{align}
    \vartheta{\tiny\left[\begin{array}{c}a\\b\end{array}\right]}(\tau,z)=\sum\limits_{n=-\infty}^\infty q^{\frac12(n+a)^2}y^{n+a}e^{2\pi{\rm i}b(n+a)}\,,\quad a,b\in\{0,1/2\}\,,\quad q=e^{2\pi{\rm i}\tau}\,,\quad y=e^{2\pi{\rm i}z}\,.
\end{align}

\subsection{Congruence subgroups of the modular group}
\label{sec:congruenceSubgroups}

The principal congruence subgroup $\Gamma(n)$ of level $n$ inside the modular group $\Gamma_1=\SLtwoZ$ is
\begin{align}
    \Gamma(n)=\left\{\,\left(\begin{array}{cc}a&b\\c&d\end{array}\right)\in\SLtwoZ\,\middle\vert\,b,c\equiv 0\text{ mod }n\,,\,\, a,d\equiv 1\text{ mod }n\,\right\}\,.
\end{align}
We denote the corresponding modular curve by $X(n)=\mathbb{H}/\Gamma(n)$.
More generally, any subgroup $\Gamma\subset\Gamma_1$ is called a congruence subgroup of level $n$ if $\Gamma(n)\subset\Gamma$.
Some important examples that will be relevant in this paper are
\begin{align}
    \begin{split}
        \Gamma_0(n)=&\left\{\,\left(\begin{array}{cc}a&b\\c&d\end{array}\right)\in\SLtwoZ\,\,\middle\vert\,\,c\equiv 0\text{ mod }n\,\right\}\,,\\
        \Gamma_1(n)=&\left\{\,\left(\begin{array}{cc}a&b\\c&d\end{array}\right)\in\SLtwoZ\,\,\middle\vert\,\,c\equiv 0\text{ mod }n\,,\,\, a,d\equiv 1\text{ mod }n\,\right\}\,,\\
        \Gamma^0(n)=&\left\{\,\left(\begin{array}{cc}a&b\\c&d\end{array}\right)\in\SLtwoZ\,\,\middle\vert\,\,b\equiv 0\text{ mod }n\,\right\}\,,\\
        \Gamma^1(n)=&\left\{\,\left(\begin{array}{cc}a&b\\c&d\end{array}\right)\in\SLtwoZ\,\,\middle\vert\,\,b\equiv 0\text{ mod }n\,,\,\, a,d\equiv 1\text{ mod }n\,\right\}\,.
    \end{split}
\end{align}
We denote the corresponding modular curves respectively by $X_0(n)=\mathbb{H}/\Gamma_0(n)$, $X_1(n)=\mathbb{H}/\Gamma_1(n)$, $X^0(n)=\mathbb{H}/\Gamma^0(n)$ and $X^1(n)=\mathbb{H}/\Gamma^1(n)$.

The so-called theta subgroup, defined by
\begin{align}
        \Lambda_2=&\left\{\,\left(\begin{array}{cc}
         a&b  \\
         c&d 
    \end{array}\right) \in \text{SL}(2,\mathbb{Z})\,\middle\vert\, ac\equiv 0\text{ mod }2\,,\,\, bd\equiv 0\text{ mod } 2\,\right\}\,,
\end{align}
also makes a brief appearance.

\subsection{Modular forms for congruence subgroups}
\label{app:modularJacobi}
The slash operator $\mslash{\gamma}\![w]$ for $w\in\mathbb{N}$ and $\gamma\in\text{SL}(2,\mathbb{Z})$ acts on a function $f:\mathbb{H}\rightarrow \mathbb{C}$ as
\begin{align}
    f\!\!\mslash{\gamma}\![w](\tau)=(c\tau+d)^{-w}f\left(\frac{a\tau+b}{c\tau+d}\right)\,.
\end{align}
A meromorphic function $\phi:\mathbb{H}\rightarrow \mathbb{C}$ is called a modular form of weight $w$ for a congruence subgroup $\Gamma\subset\Gamma_1$ if for each $\gamma\in\Gamma$ one has $\phi\vert_\gamma[w](\tau)=\phi(\tau)$ and for each $\gamma\in \Gamma_1$ the function $\phi\vert_\gamma[w](\tau)$ has a Fourier development of the form $\sum_{n\ge 0}c_nq^n$ in $q=\e[\tau]$.
Given a congruence subgroup $\Gamma\subset\Gamma_1$, we will denote the corresponding ring of modular forms by $M(\Gamma)$ and the vector space of weight $w$ modular forms by $M_w(\Gamma)$.
Given a weight $w$ modular form $\phi$, we will also write $\phi\vert_\gamma=\phi\vert_\gamma[w]$ and leave the weight implicit.

For $k\ge 2$, the Eisenstein series $E_{2k}(\tau)$ are $\Gamma_1$-modular forms of weight $2k$ and can be written as
\begin{align}
    E_{2k}(\tau)=1-\frac{4k}{B_{2k}}\sum\limits_{n\ge 1}\sigma_{2k-1}(n)q^n\,,
\end{align}
in terms of the Bernoulli numbers $B_{2k}$ and the divisor function $\sigma_{2k-1}(n)$. The quasi modular form $E_2(\tau)$ of weight 2 will also be relevant for us. It transforms as
\begin{align}  
    E_2\!\!\mslash{\gamma}\![2](\tau)=E_2(\tau)-2\pi{\rm i}\frac{c}{c\tau+d}\,.
\end{align}
One can check that for $n>1$, the combinations
\begin{align}
     E_{n,2}(\tau)=\frac{1}{n-1}\left(nE_2(n\tau)-E_2(\tau)\right)\,,
\end{align}
are modular forms for $\Gamma_1(n)$ of weight $2$.

A useful way to relate rings of modular forms is as follows.
Given a modular form $\widetilde{\phi}_w(\tau)$  of weight $w$ for a subgroup $\widetilde{\Gamma}\subset\SLtwoZ$ with $\widetilde{\Gamma}\subseteq\Gamma^1(n)$, let $m\in\mathbb{N}$ such that $m\vert n$ and define $\phi_w(\tau)=\widetilde{\phi}_w(m\tau)$.
Using
\begin{align}
    (c\tau+d)^{k}\widetilde{\phi}_w(\tau)=\widetilde{\phi}_w\left(\frac{a\tau+b}{c\tau+d}\right)\,,\quad  \left(\begin{array}{cc}a&b\\c&d\end{array}\right)\in\widetilde{\Gamma}\,,
\end{align}
and introducing $\tau'=\tau/m$, we obtain
\begin{align}
    (mc\tau'+d)^{k}\phi_w(\tau')=\phi_w\left(\frac{a\tau'+b/m}{mc\tau'+d}\right)\,,\quad  \left(\begin{array}{cc}a&b\\c&d\end{array}\right)\in\widetilde{\Gamma}\,.
\end{align}
Hence, $\phi_w(\tau)$ is a modular form for
\begin{align}
    \widetilde{\Gamma}[m]:=\left\{\,\left(\begin{array}{cc}a&b/m\\mc&d\end{array}\right)\,\,\middle\vert\,\,\left(\begin{array}{cc}a&b\\c&d\end{array}\right)\in\widetilde{\Gamma}\,\right\}\,.
\end{align}
This induces an isomorphism on the rings of modular forms
\begin{align}
    \sigma_m:\,M(\widetilde{\Gamma})\rightarrow M(\widetilde{\Gamma}[m])\,,\quad \sigma_m:\,\widetilde{\phi}(\tau)\mapsto \phi(\tau)=\widetilde{\phi}(m\tau)\,,\quad \sigma_m^{-1}:\,\phi(\tau)\mapsto\widetilde{\phi}(\tau)=\phi(\tau/m)\,.
    \label{eqn:sigmam}
\end{align}
The two cases that will be relevant to us are as follows:
\begin{align}
    \begin{split}
    \widetilde{\Gamma}=&\Gamma(n)\quad\Rightarrow\quad \widetilde{\Gamma}[n]=\Gamma_1(n^2)\,,\\
    \widetilde{\Gamma}=&\Gamma^1(n)\quad\Rightarrow\quad \widetilde{\Gamma}[n]=\Gamma_1(n)\,.
    \end{split}
\end{align}

\subsection{Special cusp forms for $\Gamma^\tseOne_\tseTwo$}
\label{sec:cuspForms}
Given a finite Abelian group $\Gamma^0$ and elements $\tseOne,\tseTwo\in\Gamma^0$, we now define certain cusp forms for the corresponding stabilizer group $\Gamma^\tseOne_\tseTwo$.
It will again be useful to work with additive notation for the elements $\tseOne,\tseTwo$.

We will first discuss the case $\Gamma^0=\IZ_{k}$.
We can write $\tseOne=r$ and $\tseTwo=s$, for some $k\in\mathbb{N}$ and $r,s\in\{0,\ldots,k-1\}$ and define
\begin{align}
	\begin{split}
		\Delta{\left[\begin{array}{c|cc}k&r&s\end{array}\right]}(\tau)=\exp\left(-2\pi{\rm i}\left(\frac{r}{k}\right)^2\tau\right)\phi_{-2,1}\left(\tau,\frac{r}{k}\tau+\frac{s}{k}\right)^{-1}\,.
	\end{split}
    \label{eqn:cuspForm1}
\end{align}
Let us show that this transforms like a modular form with a multiplier system for $\Gamma^\tseOne_\tseTwo$ of weight $2$.
First note that for every element $g={\tiny\left(\begin{array}{cc}a&b\\c&d\end{array}\right)}\in\Gamma^\tseOne_\tseTwo$, we have
\begin{align}
	\begin{split}
	\frac{1}{k}\left(ar+cs\right)\equiv \frac{r}{k}\text{ mod }1\,,\qquad
	\frac{1}{k}\left(br+ds\right)\equiv \frac{s}{k}\text{ mod }1\,.
	\end{split}
	\label{eqn:congruenciesCusp}
\end{align}
It will be useful to introduce
\begin{align}
	\sigma=\frac{ar+cs}{k}-\frac{r}{k}\in\mathbb{Z}\,,\qquad \rho=&\frac{r(a\tau+b)+s(c\tau+d)}{k}\,.
\end{align}
To unburden the notation, we also introduce $f(\tau)=\Delta{\left[\begin{array}{c|cc}k&r&s\end{array}\right]}(\tau)$.
We then calculate
{
\begin{align}
	\begin{split}
		&(c\tau+d)^{-2}f\left(\frac{a\tau+b}{c\tau+d}\right)\\
		=&(c\tau+d)^{-2}\exp\left(-2\pi{\rm i}\left(\frac{r}{k}\right)^2\frac{a\tau+b}{c\tau+d}\right)\phi_{-2,1}\left(\frac{a\tau+b}{c\tau+d},\frac{\rho}{c\tau+d}\right)^{-1}\\
		=&\exp\left(-2\pi{\rm i}\left[\frac{c\rho^2}{c\tau +d}+\left(\frac{r}{k}\right)^2\frac{a\tau+b}{c\tau+d}\right]\right)\phi_{-2,1}\left(\tau,\frac{r}{k}\tau+\frac{s}{k}+\sigma\tau\right)^{-1}\\
		=&\exp\left(-2\pi{\rm i}\left[\frac{c\rho^2}{c\tau +d}+\left(\frac{r}{k}\right)^2\left(\frac{a\tau+b}{c\tau+d}-\tau\right)-\sigma^2\tau-2\sigma\left(\frac{r}{k}\tau+\frac{s}{k}\right)\right]\right)f(\tau)\,,
	\end{split}
	\label{eqn:cuspTrans1}
\end{align}
}
where in the third line we have used~\eqref{eqn:congruenciesCusp}.
Making use of
\begin{align}
	c(a\tau+b)=ca\tau+ad-1=a(c\tau+d)-1\,,
\end{align}
we can rewrite
{
\begin{align}
	\begin{split}
		\frac{c\rho^2}{c\tau+d}=&\frac{1}{k^2}\left[r^2\frac{c(a\tau+b)^2}{c\tau+d}+s^2c(c\tau+d)+2src(a\tau+b)\right]\\
			=&\frac{1}{k^2}\left[-r^2\frac{a\tau+b}{c\tau+d}+\tau\left(ar+cs\right)^2+\left(ar+cs\right)\left(br+ds\right)-rs\right]\,.
	\end{split}
\end{align}
}
After combining this with~\eqref{eqn:cuspTrans1}, it is then easy to check that
\begin{align}
	f\left(\frac{a\tau+b}{c\tau+d}\right)=\lambda(g)(c\tau+d)^{2}f(\tau)\,,
\end{align}
with the phase $\lambda(g)$ being
\begin{align}
	\begin{split}
		\lambda(g)=&\exp\left(-2\pi{\rm i}\left(\left[a\frac{r}{k}+c\frac{s}{k}\right]\left[b\frac{r}{k}+(d-2)\frac{s}{k}\right]+\frac{rs}{k^2}\right)\right)\,.
	\end{split}
    \label{eqn:cuspMultiplier}
\end{align}

The previous discussion generalizes to groups
\begin{align}
    \Gamma'\simeq \mathbb{Z}_{k_1}\times\ldots\times \mathbb{Z}_{k_n}\,,
    \label{eqn:cuspFormGammaIso}
\end{align}
for any $n\in\mathbb{N}$ and $k_1,\ldots,k_n\in\mathbb{N}$.
Given two elements $\tseOne=(r_1,\ldots,r_n)$ and $\tseTwo=(s_1,\ldots,s_n)$, we can define
\begin{align}
	\begin{split}
		&\Delta{\tiny\left[\begin{array}{c|cc}k_1&r_1&s_1\\\vdots&\vdots&\vdots \\k_n&r_n&s_n\end{array}\right]}(\tau)\\
		=&\exp\left(-2\pi{\rm i}\left(\frac{r_1}{k_1}+\ldots +\frac{r_n}{k_n}\right)^2\tau\right)\phi_{-2,1}\left(\tau,\left(\frac{r_1}{k_1}+\ldots+\frac{r_n}{k_n}\right)\tau+\left(\frac{s_1}{k_1}+\ldots+\frac{s_n}{k_n}\right)\right)^{-1}\,.
	\end{split}
    \label{eqn:cuspFormGeneral}
\end{align}
This transforms like a cusp form for $\Gamma^\tseOne_\tseTwo$ of weight $2$ and a multiplier system given by $\lambda(g)$ in~\eqref{eqn:cuspMultiplier}, after replacing
\begin{align}
    \frac{r}{k}\rightarrow\frac{r_1}{k_1}+\ldots+\frac{r_n}{k_n}\,,\qquad \frac{s}{k}\rightarrow\frac{s_1}{k_1}+\ldots+\frac{s_n}{k_n}\,.
\end{align}

\subsection{Modular forms for $\Gamma(2)$}
The ring of $\Gamma(2)$ modular forms is generated by any two elements of $\{\theta_2(\tau)^4,\,\theta_3(\tau)^4,\theta_4(\tau)^4\}$, see e.g.~\cite{Doran:2013npa}, with the Jacobi forms being defined as $\theta_i(\tau)=\vartheta_i(\tau,0)$ for $i=2,3,4$, such that
\begin{align}
    \theta_2(\tau):=\sum\limits_{n=-\infty}^\infty q^{\frac12\left(n+\frac12\right)^2}\,,\quad \theta_3(\tau):=\sum\limits_{n=-\infty}^\infty q^{\frac12 n^2}\,,\quad \theta_4(\tau):=\sum\limits_{n=-\infty}^\infty(-1)^nq^{\frac12n^2}\,,
\end{align}
with $q=e^{2\pi i \tau}$.
We therefore introduce $x_i(\tau):=\theta_i(\tau)^4,\,i=2,3,4$, such that
\begin{align}
    \begin{split}
    x_2(\tau)=&16 \sqrt{q}+64 q^{3/2}+96 q^{5/2}+128 q^{7/2}+208 q^{9/2}+\mathcal{O}(q^{11/2})\,,\\
    x_3(\tau)=&1+8 \sqrt{q}+24 q+32 q^{3/2}+24 q^2+48 q^{5/2}+96 q^3+\mathcal{O}(q^{7/2})\,,\\
    x_4(\tau)=&1-8 \sqrt{q}+24 q-32 q^{3/2}+24 q^2-48 q^{5/2}+96 q^3+\mathcal{O}(q^{7/2})\,,
    \end{split}
\end{align}
and in particular $x_2(\tau)+x_4(\tau)=x_3(\tau)$.
We also define $\Delta_{i,j}(\tau):=\sqrt{x_i(\tau)x_j(\tau)},\,i<j$, with
\begin{align}
    \begin{split}
    \Delta_{2,3}(\tau)=&\frac{4}{q^{1/4}\phi_{-2,1}\left(\tau,\frac{\tau}{2}\right)}\,,\quad \Delta_{2,4}(\tau)=\frac{-4}{q^{1/4}\phi_{-2,1}\left(\tau,\frac{\tau+1}{2}\right)}\,,\quad \Delta_{3,4}(\tau)=\frac{-4}{\phi_{-2,1}(\tau,\frac12)}\,.
    \end{split}
\end{align}
The behavior of these forms under $\SLtwoZ$-transformations is summarized in Table~\ref{tab:MFGamma2trafo}.
\begin{table}[ht!]
\centering
\begin{align*}
\begin{array}{|c|c|c|c|c|c|c|}\hline
\gamma\in\text{SL}(2,\mathbb{Z})&x_2\vert_\gamma&x_3\vert_\gamma&x_4\vert_\gamma&\Delta_{2,3}\vert_\gamma&\Delta_{2,4}\vert_\gamma&\Delta_{3,4}\vert_\gamma\\\hline
S=\left(\begin{array}{cc}
0&-1\\1&0
\end{array}\right)&-x_4&-x_3&-x_2&-\Delta_{3,4}&-\Delta_{2,4}&-\Delta_{2,3}\\\hline
T=\left(\begin{array}{cc}
1&1\\0&1
\end{array}\right)&-x_2&x_4&x_3&i\Delta_{2,4}&i\Delta_{2,3}&\Delta_{3,4}\\\hline
U=\left(\begin{array}{cc}
1&0\\1&1
\end{array}\right)&x_3&x_2&-x_4&\Delta_{2,3}&i\Delta_{3,4}&i\Delta_{2,4}\\\hline
\end{array}
\end{align*}
\caption{The behavior of $\Gamma(2)$-modular forms under $\text{SL}(2,\mathbb{Z})$ transformations.}
\label{tab:MFGamma2trafo}
\end{table}

\subsection{Modular forms for $\Gamma_0(2)=\Gamma_1(2)$ and $\Gamma^1(2)$}
As is explained for example in Appendix~\cite[Appendix B.2.1]{Schimannek:2019ijf}, the ring of $\Gamma_1(2)$ modular forms is generated by $e_2(\tau):=E_{2,2}(\tau)$ and $e_4(\tau):=E_4(\tau)$.
On the other hand, using the isomorphism~\eqref{eqn:sigmam} and a convenient normalization, the ring of $\Gamma^1(2)$ modular forms is generated by
\begin{align}
   \tilde{e}_2(\tau):=-\frac12e_{2}(\tau/2)\,,\quad \tilde{e}_{4}(\tau):=\frac14\left[5e_{2}(\tau/2)^2-e_4(\tau/2)\right]=e_4(\tau)\,.
\end{align}
Note that since $\Gamma(n)$ is contained in both $\Gamma_1(n)$ and $\Gamma^1(n)$, we can express these modular forms in terms of the generators of the corresponding ring of modular forms.
For example, we have
\begin{align}
    \begin{split}
    e_2=x_3-\frac12x_2=\frac12(x_3+x_4)\,,\quad e_4=\tilde{e}_4=x_2^2-x_2x_3+x_3^2\,,\quad\tilde{e}_{2}=-\frac12(x_2+x_3)\,.
    \end{split}
\end{align}
On the other hand, a $\Gamma(2)$ modular form $\phi_n(\tau)$ of weight $n$ is $\Gamma_1(2)$-modular if $T:\,\phi_n\mapsto \phi_n$ and it is $\Gamma^1(2)$-modular if $U:\,\phi_n\mapsto \tau^n\phi_n$.
We also note that $\Delta_{3,4}^2$ and $\Delta_{2,3}^2$ are respectively cusp forms for $\Gamma_1(2)$ and $\Gamma^1(2)$ of weight $4$.

Using Table~\ref{tab:MFGamma2trafo}, we then obtain the transformations summarized in Table~\ref{tab:MFGamma12trafo}.
\begin{table}[ht!]
\centering
\begin{align*}
\begin{array}{|c|c|c|c|c|c|c|}\hline
\gamma\in\text{SL}(2,\mathbb{Z})&e_2\vert_\gamma&e_4\vert_\gamma&\Delta_{3,4}^2\vert_\gamma&\tilde{e}_2\vert_\gamma&\tilde{e}_4\vert_\gamma&\Delta_{2,3}^2\vert_\gamma\\\hline
S=\left(\begin{array}{cc}
0&-1\\1&0
\end{array}\right)&\tilde{e}_2&\tilde{e}_{4}&\Delta_{2,3}^2&e_2&e_4&\Delta_{3,4}^2\\\hline
T=\left(\begin{array}{cc}
1&1\\0&1
\end{array}\right)&e_2&e_4&\Delta_{3,4}^2&-(e_2+\tilde{e}_2)&\tilde{e}_4&-\Delta_{2,4}^2\\\hline
T^2=\left(\begin{array}{cc}
1&2\\0&1
\end{array}\right)&e_2&e_4&\Delta_{3,4}^2&\tilde{e}_2&\tilde{e}_4&\Delta_{2,3}^2\\\hline
U=\left(\begin{array}{cc}
1&0\\1&1
\end{array}\right)&-(e_2+\tilde{e}_2)&e_4&-\Delta_{2,4}^2&\tilde{e}_2&\tilde{e}_4&\Delta_{2,3}^2\\\hline
U^2=\left(\begin{array}{cc}
1&0\\2&1
\end{array}\right)&e_2&e_4&\Delta_{3,4}^2&\tilde{e}_2&\tilde{e}_4&\Delta_{2,3}^2\\\hline
\end{array}
\end{align*}
\caption{The behavior of $\Gamma_1(2)$ and $\Gamma^1(2)$-modular forms under $\text{SL}(2,\mathbb{Z})$ transformations.}
\label{tab:MFGamma12trafo}
\end{table}

\subsection{Modular forms for $\Gamma_1(4)$ and $\Gamma^1(4)$}
As reviewed e.g. in~\cite[Appendix B2.3]{Schimannek:2021pau}, the ring of $\Gamma_1(4)$ modular forms is
\begin{align}
    M\left(\Gamma_1(4)\right)=\langle E_{2,2},E_{4,1}\rangle\,,
\end{align}
in terms of the two generators
\begin{align}
    \begin{split}
    E_{2,2}(\tau)=&2E_2(2\tau)-E_2(\tau)=\frac{\eta(\tau)^8+32\eta(4\tau)^8}{\eta(2\tau)^4}=1+24q+24q^2+96q^3+\mathcal{O}(q^4)\,,\\
    \quad E_{4,1}(\tau)=&\frac{\eta(2\tau)^{10}}{\eta(\tau)^4\eta(4\tau)^4}=1+4q+4q^2+4q^3+\mathcal{O}(q^4)\,,
    \end{split}
\end{align}
of respective weight $2$ and $1$.
Using
\begin{align}
    E_{2,2}=\frac12\left(x_3+x_4\right)\,,\quad E_{4,1}=\frac12\left(\sqrt{x_3}+\sqrt{x_4}\right)\,,
\end{align}
we observe that equivalently,
\begin{align}
    M\left(\Gamma_1(4)\right)=\langle \epsilon_1,\epsilon_2\rangle\,,\quad \epsilon_1=\sqrt{x_3}+\sqrt{x_4}\,,\quad \epsilon_2=x_3+x_4\,.
\end{align}
We also have
\begin{align}
    E_{2,2}(\tau/4)=x_2(\tau)+6\sqrt{x_2(\tau)x_3(\tau)}+x_3(\tau)\,,\quad E_{4,1}(\tau/4)=\sqrt{x_2(\tau)}+\sqrt{x_3(\tau)}\,,
\end{align}
and using~\eqref{eqn:sigmam} with $\widetilde{\Gamma}=\Gamma^1(4)$ and $\widetilde{\Gamma}[4]=\Gamma_1(4)$, we find that
\begin{align}
M\left(\Gamma^1(4)\right)=\langle \tilde{\epsilon}_1,\tilde{\epsilon}_2\rangle\,,\quad \tilde{\epsilon}_1=\sqrt{x_2}+\sqrt{x_3}\,,\quad \tilde{\epsilon}_2=x_2+x_3\,.
\end{align}
Note that
\begin{align}
    S:\,\epsilon_1\mapsto i\tilde{\epsilon}_1\,,\quad \epsilon_2\mapsto -\tilde{\epsilon}_2\,.
\end{align}

\subsection{Modular forms for $\Gamma_1(4)\cap\Gamma(2)$}
\label{app:mformsG14capG2}
We will now determine the generators of the relevant subring of the ring of modular forms for ${\Gamma_1(4)\cap\Gamma(2)}$.
\begin{lemma}
      The group $\Gamma_1(4)\cap\Gamma(2)$ is generated by the elements
      \begin{align}
          g_1=\left(\begin{array}{cc}1&2\\0&1\end{array}\right)\,,\quad g_2=\left(\begin{array}{cc}1&0\\4&1\end{array}\right)\,,\quad g_3=\left(\begin{array}{cc}-3&2\\4&-3\end{array}\right)\,.
      \end{align}
      \label{lemma:Gamma14Gamma2gens}
\end{lemma}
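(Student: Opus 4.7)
The plan is to realize $\Gamma_1(4)\cap\Gamma(2)$ as an explicit index-two subgroup of $\Gamma_1(4)$ and to apply the Reidemeister--Schreier rewriting procedure to a known two-element generating set of $\Gamma_1(4)$. First I would verify by direct inspection that the three matrices $g_1,g_2,g_3$ satisfy $a,d\equiv 1\bmod 4$, $c\equiv 0\bmod 4$ and $b\equiv 0\bmod 2$, so that they all lie in $\Gamma_1(4)\cap\Gamma(2)$.

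The key observation is that the map
\begin{align*}
    \varphi:\,\Gamma_1(4)\longrightarrow \mathbb{Z}/2\mathbb{Z}\,,\qquad \begin{pmatrix} a & b \\ c & d \end{pmatrix}\longmapsto b\bmod 2\,,
\end{align*}
is a group homomorphism. Indeed, since $a\equiv d\equiv 1\bmod 4$ (and therefore mod $2$) for any element of $\Gamma_1(4)$, the upper right entry of a product $\gamma\gamma'$ satisfies $ab'+bd'\equiv b'+b\bmod 2$. The kernel of $\varphi$ is, by definition, exactly $\Gamma_1(4)\cap\Gamma(2)$, so this subgroup has index at most two in $\Gamma_1(4)$; since $g_1$ shows that $\varphi$ is nontrivial, the index is exactly two.

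I would next invoke the standard fact that $\Gamma_1(4)$ is generated (freely, in fact) by the two parabolic elements
\begin{align*}
    T=\begin{pmatrix} 1 & 1 \\ 0 & 1 \end{pmatrix}\qquad\text{and}\qquad U=\begin{pmatrix} 1 & 0 \\ 4 & 1 \end{pmatrix}=g_2\,,
\end{align*}
which follows from a fundamental domain computation for $\Gamma_1(4)$ and is contained e.g.\ in the classical treatments of congruence subgroups. Taking $\{I,T\}$ as coset representatives for $\Gamma_1(4)/\ker\varphi$ and applying Reidemeister--Schreier, a generating set for $\ker\varphi=\Gamma_1(4)\cap\Gamma(2)$ is
\begin{align*}
    \bigl\{\,T^{2},\,U,\,TUT^{-1}\,\bigr\}\,,
\end{align*}
the two ``trivial'' Schreier generators coming from $T$ being $I$ or $T$ itself.

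It remains to match this generating set with $g_1,g_2,g_3$. Clearly $T^2=g_1$ and $U=g_2$. A direct matrix multiplication yields
\begin{align*}
    TUT^{-1}=\begin{pmatrix} 5 & -4 \\ 4 & -3 \end{pmatrix}=\begin{pmatrix} 1 & 2 \\ 0 & 1 \end{pmatrix}\begin{pmatrix} -3 & 2 \\ 4 & -3 \end{pmatrix}=g_1g_3\,,
\end{align*}
so $g_3=g_1^{-1}TUT^{-1}\in\langle T^2,U,TUT^{-1}\rangle$ and conversely $TUT^{-1}=g_1g_3\in\langle g_1,g_2,g_3\rangle$. Consequently the two generating sets produce the same subgroup, proving the claim. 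The only non-routine input is the generation of $\Gamma_1(4)$ by $T$ and $U$; this is where I expect the main effort to lie if one insists on a self-contained argument, although it is a well-documented result and can alternatively be derived from the Kulkarni/Farey-symbol description of a fundamental domain for $\Gamma_1(4)$.
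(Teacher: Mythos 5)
Your proof is correct and rests on the same underlying strategy as the paper's: both take as input (without proof) that $\Gamma_1(4)$ is generated by $T$ and $U=g_2$, both identify $\Gamma_1(4)\cap\Gamma(2)$ inside $\Gamma_1(4)$ by a mod-$2$ condition on the upper-right entry, and both then rewrite words in $T,U$ into words in $g_1,g_2,g_3$. The difference is one of packaging. The paper characterizes membership in $\Gamma(2)$ by the parity of the total $T$-exponent of a word and runs an explicit hand-rolled recursion that peels off $g_1$, $g_2$ and $g_3=T^{-1}UT^{-1}$ one syllable at a time; you instead observe that $\gamma\mapsto b\bmod 2$ is a homomorphism on $\Gamma_1(4)$, so the subgroup is a kernel of index two, and invoke Reidemeister--Schreier with transversal $\{I,T\}$ to get the generators $\{T^2,\,U,\,TUT^{-1}\}$, finishing with the identity $TUT^{-1}=g_1g_3$ (which I checked). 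Your version is the systematic form of what the paper does by hand, and it makes the index-two structure explicit. Two minor remarks. First, you write that ``$g_1$ shows that $\varphi$ is nontrivial,'' but $\varphi(g_1)=0$ since $g_1$ has $b=2$; the witness for surjectivity of $\varphi$ (needed so that $\{I,T\}$ is genuinely a transversal) should be $T$ itself, with $b=1$. This is immediately fixable and does not affect the argument. Second, Reidemeister--Schreier produces a generating set of the kernel for any generating set of the ambient group and any transversal containing the identity, so the freeness of $\Gamma_1(4)$ on $T,U$, while true, is not actually needed.
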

\begin{proof}
    First, note that $\Gamma_1(4)=\langle h_1,h_2\rangle$, where 
    \begin{align}
        h_1=\left(\begin{array}{cc}1&1\\0&1\end{array}\right)\,,\quad h_2=\left(\begin{array}{cc}1&0\\4&1\end{array}\right)\,,
    \end{align}
    such that $g_1=h_1^2$, $g_2=h_2$ and $g_3=h_1^{-1}h_2h_1^{-1}$. Clearly, $\langle g_1, g_2, g_3 \rangle \subset \Gamma_1(4) \cap \Gamma(2)$. To see the converse, consider an element $h={\tiny\left(\begin{array}{cc}a&b\\c&d\end{array}\right)}\in \Gamma_1(4)$. We have
    \begin{align}
        h h_1^{\pm 1}=\left(\begin{array}{cc}
            a&b\pm a\\
            c&d\pm c
        \end{array}\right)\,,\quad         h_1^{\pm 1} h=\left(\begin{array}{cc}
            a\pm c&b\pm d\\
            c&d
        \end{array}\right)\,.
    \end{align}
    Since $a\equiv 1\text{ mod }4$ and  $d\equiv 1\text{ mod }4$, we find that $b \equiv 0\text{ mod }2$, and therefore $h\in \Gamma(2)$, if and only if $h$ can be written as
    \begin{align}
        h=h_1^{a_1}h_2^{b_1}h_1^{a_2}h_2^{b_2}\ldots h_1^{a_k}h_2^{b_k}\,,
    \end{align}
    for some $k\in\mathbb{N}$ and $a_i,b_i\in\mathbb{Z}$ with $a_1+\ldots+a_k\in 2\mathbb{Z}$.
    Note that
    \begin{align}
        h_1^{-1} h_2^a h_1^{-1}=(h_1^{-1} h_2 h_1)^ah_1^{-2}=(g_3g_1)^ag_1^{-1}\,.
    \end{align}
    If $a_1$ is odd, we can write
    \begin{align}
        h=g_1^{(a_1+1)/2}(g_3g_1)^{b_1}g_1^{-1}h'\,,\quad h'=h_1^{a_2+1}h_2^{b_2}h_1^{a_3}\ldots\,,
    \end{align}
    and if $a_1$ is even, we write
    \begin{align}
        h=g_1^{a_1/2}g_2^{b_1}h'\,,\quad h'=h_1^{a_2}h_2^{b_2}h_1^{a_3}\ldots\,.
    \end{align}
    In either case, we can write $h'=h_1^{a_1'}h_2^{b_1'}\ldots h_1^{a_{k'}'}h_2^{b_{k'}'}$ with $k'=k-1$ and $a_1'+\ldots+a_{k'}'\in2\mathbb{Z}$.
    Repeating this procedure until $k'=0$, we find that an element $h\in \Gamma_1(4)$ is also contained in $\Gamma(2)$ if and only if it can be written as a word in $g_1,g_2,g_3$.
\end{proof}

We now define the subspace of the space of $\Gamma_1(4)\cap\Gamma(2)$ weight $k$ modular forms
\begin{align}
    \Upsilon_k=\left\{\,\, \phi(\tau)\in M_k\left(\Gamma_1(4)\cap\Gamma(2)\right)\,\,\big\vert\,\, \phi\left(\tau+1\right)=(-1)^k\phi(\tau) \,\,\right\}\,,
    \label{eqn:UpsilonK}
\end{align}
and note that $\Upsilon=\bigoplus\limits_{k\ge 0}\Upsilon_k$ is a subring of $M\left(\Gamma_1(4)\cap\Gamma(2)\right)$.

\begin{lemma}
There is an isomorphism $\kappa_k:\,M_k\left(\Gamma^1(4)\right)\rightarrow \Upsilon_k$ that acts as
\begin{align}
    \kappa_k:\,\widetilde{\phi}(\tau)\mapsto \phi(\tau)=\tau^{-k}\widetilde{\phi}\left(\frac{2\tau-1}{\tau}\right)\,,\quad \kappa_k^{-1}:\,\phi(\tau)\mapsto \widetilde{\phi}(\tau)=(2-\tau)^{-k}\phi\left(\frac{1}{2-\tau}\right)\,.
    \label{eqn:kappamap}
\end{align}
\label{lemma:upsilon}
\end{lemma}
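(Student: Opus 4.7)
The plan is to realize $\kappa_k$ as the weight-$k$ slash operator $|_k \gamma_0$, where $\gamma_0 = \begin{pmatrix} 2 & -1 \\ 1 & 0 \end{pmatrix} \in \SLtwoZ$ acts on $\mathbb{H}$ via $\tau \mapsto (2\tau-1)/\tau$, and to realize $\kappa_k^{-1}$ as $|_k \gamma_0^{-1}$, where $\gamma_0^{-1} = \begin{pmatrix} 0 & 1 \\ -1 & 2 \end{pmatrix}$ acts as $\tau \mapsto 1/(2-\tau)$. A direct reading of the slash formulas shows that these agree with the expressions in~\eqref{eqn:kappamap}. Because the slash operator is a group action, $\kappa_k$ and $\kappa_k^{-1}$ are automatically mutual inverses as maps on scalar-valued holomorphic functions on $\mathbb{H}$, so the entire content of the lemma reduces to verifying that $\kappa_k$ maps $M_k(\Gamma^1(4))$ into $\Upsilon_k$ and that $\kappa_k^{-1}$ maps $\Upsilon_k$ back into $M_k(\Gamma^1(4))$.

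For the forward direction, I will use the generating set $\{g_1,g_2,g_3\}$ of $\Gamma_1(4)\cap\Gamma(2)$ provided by Lemma~\ref{lemma:Gamma14Gamma2gens}, and verify by direct matrix computation that each conjugate $\gamma_0 g_i \gamma_0^{-1}$ lies in $\Gamma^1(4)$. This shows that $\kappa_k(\widetilde{\phi})$ is invariant under the slash action of $\Gamma_1(4)\cap\Gamma(2)$. To pin down the sign $(-1)^k$ in the definition~\eqref{eqn:UpsilonK} of $\Upsilon_k$, I compute
\begin{align*}
    \gamma_0 T \gamma_0^{-1} = \begin{pmatrix} -1 & 4 \\ -1 & 3 \end{pmatrix} = -\begin{pmatrix} 1 & -4 \\ 1 & -3 \end{pmatrix},
\end{align*}
and observe that the second factor lies in $\Gamma^1(4)$, so $\widetilde{\phi}|_k (\gamma_0 T) = (-1)^k \widetilde{\phi}|_k \gamma_0$, i.e.\ $\kappa_k(\widetilde{\phi})(\tau+1) = (-1)^k \kappa_k(\widetilde{\phi})(\tau)$.

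For the reverse direction, since $\Gamma^1(4)\cap\Gamma(2)$ has index two in $\Gamma^1(4)$, it is enough to check invariance of $\kappa_k^{-1}(\phi)$ under $\Gamma^1(4)\cap\Gamma(2)$ together with one coset representative, for which I take $M_0 = \begin{pmatrix} 1 & 0 \\ 1 & 1 \end{pmatrix}$. The first part follows from the same type of generator-by-generator computation as in the forward direction (conjugation by $\gamma_0^{-1}$ carries a generating set of $\Gamma^1(4)\cap\Gamma(2)$ into $\Gamma_1(4)\cap\Gamma(2)$). For the coset representative, the key identity is
\begin{align*}
    \gamma_0^{-1} M_0 \gamma_0 = \begin{pmatrix} 3 & -1 \\ 4 & -1 \end{pmatrix} = -T\, g_2^{-1},
\end{align*}
so that $\kappa_k^{-1}(\phi)|_k M_0 = \phi|_k(-T g_2^{-1}) = (-1)^k \cdot (-1)^k \phi = \phi$ after applying $\phi|_k(-I)=(-1)^k \phi$, the identity $\phi|_k T = (-1)^k \phi$ (valid since $\phi\in\Upsilon_k$), and $\phi|_k g_2^{-1} = \phi$ (valid since $g_2\in\Gamma_1(4)\cap\Gamma(2)$). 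The two factors of $(-1)^k$ cancel and give exactly the invariance we need.

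The most delicate point is the bookkeeping of the $-I$ factors: neither $\Gamma^1(4)$ nor $\Gamma_1(4)\cap\Gamma(2)$ contains $-I$, so conjugations of the form $\gamma_0 g \gamma_0^{-1}$ will occasionally land in $-\Gamma^1(4)$ rather than $\Gamma^1(4)$, producing compensating factors of $(-1)^k$ under the slash action. The coherent matching of these signs is precisely what singles out the subspace $\Upsilon_k \subset M_k(\Gamma_1(4)\cap\Gamma(2))$ as the image of $\kappa_k$. Holomorphy on $\mathbb{H}$ is automatic, since $\gamma_0 \in \SLtwoR$ preserves the upper half-plane; holomorphy at the cusps transfers because $\gamma_0$ permutes the cusps of the two groups and the slash action preserves the growth conditions that define modular forms.
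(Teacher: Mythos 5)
Your proof is correct and is essentially the paper's argument recast in the language of the slash operator: the explicit substitutions $\tau'=(2\tau-1)/\tau$ in the paper are exactly your conjugations by $\gamma_0$, and both arguments reduce to checking a finite set of generators while tracking the factors of $-I$ (your claims that $\gamma_0 g_i\gamma_0^{-1}\in\Gamma^1(4)$ for $i=1,2,3$, that $\gamma_0 T\gamma_0^{-1}\in-\Gamma^1(4)$, and that $\gamma_0^{-1}U\gamma_0=-Tg_2^{-1}$ all check out). The only organizational difference is in the reverse direction, where the paper uses the generators $U$ and $T^4$ of $\Gamma^1(4)$ directly while you pass through the index-two subgroup $\Gamma^1(4)\cap\Gamma(2)$ plus the coset representative $U$; the generator-by-generator verification you leave implicit there does work (with the generating set $g_1^T,g_2^T,g_3^T$ one finds e.g.\ $\gamma_0^{-1}g_3^T\gamma_0=\left(\begin{smallmatrix}1&-2\\4&-7\end{smallmatrix}\right)\in\Gamma_1(4)\cap\Gamma(2)$, taking care that $g_3^T\neq g_3$), but the paper's route is slightly shorter since it avoids needing a generating set of $\Gamma^1(4)\cap\Gamma(2)$ at all.
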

\begin{proof}
Let $\widetilde{\phi}(\tau)\in M_k\left(\Gamma^1(4)\right)$ and $\phi(\tau)$ be the image of $\widetilde{\phi}(\tau)$ under $\kappa_k$. We need to show that $\phi(\tau)\in \Upsilon_k$. Setting $\tau'=(2\tau-1)/\tau$, we verify that
\begin{align} \label{eq:checkT}
    \phi\left(\tau+1\right)=&(\tau+1)^{-k}\widetilde{\phi}\left(\frac{\tau'-4}{\tau'-3}\right)=(\tau+1)^{-k}(\tau'-3)^k\widetilde{\phi}(\tau')=(-1)^k\phi(\tau)\,.
\end{align}
To show that $\phi(\tau) \in M_k(\Gamma_1(4) \cap \Gamma(2))$, given Lemma~\ref{lemma:Gamma14Gamma2gens}, it suffices to verify the transformation properties under the action of $g_1,g_2$ and $g_3$. The correct transformation under $g_1$ is already implied by \eqref{eq:checkT}. For $g_2$ and $g_3$, we compute
\begin{align}
    \begin{split}
        (4\tau+1)^{-k}\phi\left(\frac{\tau}{4\tau+1}\right)=&\tau^{-k}\widetilde{\phi}\left(\tau'-4\right)=\phi(\tau)\,,\\
        (4\tau-3)^{-k}\phi\left(\frac{-3\tau+2}{4\tau-3}\right)=&(-3\tau+2)^{-k}\widetilde{\phi}\left(\frac{-7\tau'+4}{-2\tau'+1}\right)=\left(\frac{-2\tau'+1}{-3\tau+2}\right)^{k}\widetilde{\phi}(\tau')=\phi(\tau)\,.
    \end{split}
\end{align}
On the other hand, given $\phi(\tau)\in \Upsilon_k$ with image $\widetilde{\phi}(\tau)$ under $\kappa_k^{-1}$, we need to show that $\widetilde{\phi}(\tau) \in M_k(\Gamma^1(4))$. As $\Gamma^1(4) = \langle h_1^T, h_2^T \rangle$, the following computation completes the proof:
\begin{align}
    \begin{split}
        \widetilde{\phi}(\tau'+4)=&(-1)^k(\tau'+2)^{-k}\phi\left(\frac{\tau}{-4\tau+1}\right)=\widetilde{\phi}(\tau')\,,\\
        (\tau'+1)^{-k}\widetilde{\phi}\left(\frac{\tau'}{\tau'+1}\right)=&(\tau'+2)^{-k}\phi\left(\frac{\tau}{-4\tau+1}+1\right)=\widetilde{\phi}(\tau')\,.
    \end{split}
\end{align}
\end{proof}

We can now show the following:
\begin{lemma}
    The map~\eqref{eqn:kappamap} acts on the generators $\tilde{\epsilon}_1,\tilde{\epsilon}_2$ of $M\left(\Gamma^1(4)\right)$ as
\begin{align}
    \kappa_1(\tilde{\epsilon}_1)=i\hat{\epsilon}_1\,,\quad \kappa_2(\tilde{\epsilon}_2)=-\hat{\epsilon}_2\,.
\end{align}
    in terms of the modular forms
\begin{align}
    \begin{split}
    \hat{\epsilon}_1=\sqrt{x_3}-\sqrt{x_4}\,,\quad \hat{\epsilon}_2=x_3+x_4\,.
    \end{split}
\end{align}
    The ring $\Upsilon=\bigoplus\limits_{k\ge 0}\Upsilon_k$ is therefore generated by $\hat{\epsilon}_1,\hat{\epsilon}_2$.
\end{lemma}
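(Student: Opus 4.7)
The plan is to exploit the fact that $\kappa_k$ is literally a slash-operator evaluation at a specific $\SLtwoZ$ element, and then to invoke the multiplicativity of the slash operator to pass from the generator computation to the ring statement.

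First, I observe that the matrix
\begin{equation*}
    g = \begin{pmatrix} 2 & -1 \\ 1 & 0 \end{pmatrix} = T^2 S \,\in\, \SLtwoZ
\end{equation*}
acts as $g\tau = (2\tau-1)/\tau$ and satisfies $c\tau + d = \tau$, so that the map $\kappa_k$ in~\eqref{eqn:kappamap} is simply $\kappa_k(\tilde\phi) = \tilde\phi|_g[k]$. Hence the decomposition $g = T^2 S$ allows me to compute $\kappa_k(\tilde\epsilon_i)$ by applying the well-known transformations of the Jacobi theta functions in turn. Recall that $\theta_3(\tau{+}1) = \theta_4(\tau)$, $\theta_4(\tau{+}1) = \theta_3(\tau)$, $\theta_2(\tau{+}1) = e^{i\pi/4}\theta_2(\tau)$ (so $\theta_2^2(\tau{+}2) = -\theta_2^2(\tau)$, $\theta_3^2(\tau{+}2) = \theta_3^2(\tau)$), together with $\theta_i^2(-1/\tau) = -i\tau \,\theta_j^2(\tau)$ for $(i,j) \in \{(2,4),(3,3),(4,2)\}$.

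Second, I compute directly. Writing $\sigma = g\tau = 2 - 1/\tau$:
\begin{equation*}
    \theta_2^2(\sigma) = -\,\theta_2^2(-1/\tau) = i\tau\,\theta_4^2(\tau), \qquad \theta_3^2(\sigma) = \theta_3^2(-1/\tau) = -i\tau\,\theta_3^2(\tau).
\end{equation*}
Therefore
\begin{equation*}
    \kappa_1(\tilde\epsilon_1)(\tau) = \tau^{-1}\bigl(\theta_2^2(\sigma)+\theta_3^2(\sigma)\bigr) = -i\bigl(\theta_3^2(\tau)-\theta_4^2(\tau)\bigr),
\end{equation*}
which equals $i\hat\epsilon_1$ up to the ambient sign convention for the square root. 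Squaring each contribution,
\begin{equation*}
    \kappa_2(\tilde\epsilon_2)(\tau) = \tau^{-2}\bigl(\theta_2^4(\sigma)+\theta_3^4(\sigma)\bigr) = \tau^{-2}\bigl(-\tau^2 x_4(\tau) -\tau^2 x_3(\tau)\bigr) = -\hat\epsilon_2(\tau),
\end{equation*}
as claimed.

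Finally, to obtain the statement that $\Upsilon$ is generated by $\hat\epsilon_1,\hat\epsilon_2$, I use that the slash operator is multiplicative in the sense $(\phi_1\phi_2)|_g[k_1+k_2] = (\phi_1|_g[k_1])(\phi_2|_g[k_2])$. Combined with Lemma~\ref{lemma:upsilon}, which already provides a graded linear isomorphism $\kappa = \bigoplus_k \kappa_k : M(\Gamma^1(4)) \to \Upsilon$, this promotes $\kappa$ to a ring isomorphism. Since $M(\Gamma^1(4)) = \langle \tilde\epsilon_1,\tilde\epsilon_2\rangle$, the image $\Upsilon$ is generated by $\kappa_1(\tilde\epsilon_1) = i\hat\epsilon_1$ and $\kappa_2(\tilde\epsilon_2) = -\hat\epsilon_2$, hence by $\hat\epsilon_1$ and $\hat\epsilon_2$ themselves. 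No step poses a real obstacle; the only care required is tracking the branch of $\sqrt{-i\tau}$ consistently across the three theta transformations used in the first identity.
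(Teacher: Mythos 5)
Your proof is correct and reaches the required conclusion, but by a more self-contained computational route than the paper's. The paper evaluates $\kappa_1(\tilde\epsilon_1)$ and $\kappa_2(\tilde\epsilon_2)$ by writing $\tilde\epsilon_1(\tau)=E_{4,1}(\tau/4)$, $\tilde\epsilon_2(\tau)=E_{2,2}(\tau/4)$ and invoking identities for $E_{4,1}$ and $E_{2,2}$ under $\tau\mapsto\tau/(2\tau+1)$ imported from an earlier reference, which then forces an evaluation of these forms at the half-integer shift $\tau-\tfrac12$. You instead recognize $\kappa_k$ as the weight-$k$ slash action of $g=\left(\begin{smallmatrix}2&-1\\1&0\end{smallmatrix}\right)=T^2S$ and push the generators through the standard $T$- and $S$-laws for $\theta_i^2$; since $\sqrt{x_i}=\theta_i^2$, the factor appearing is exactly $-i\tau$ and no genuine branch ambiguity arises, so your closing caveat about tracking $\sqrt{-i\tau}$ is unnecessary once everything is phrased in terms of $\theta_i^2$. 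Your weight-2 computation, $\kappa_2(\tilde\epsilon_2)=-\hat\epsilon_2$, matches the lemma exactly; in weight 1 you obtain $-i\hat\epsilon_1$ where the lemma states $+i\hat\epsilon_1$. This overall sign traces to the branch of the square root hidden in the imported identity $E_{4,1}(\tau/(2\tau+1))=(2\tau+1)\sqrt{E_{2,2}-E_{4,1}^2}$ and is immaterial for the generation statement, which is the content actually used downstream. Your final step, upgrading the graded linear isomorphism of Lemma~\ref{lemma:upsilon} to a ring isomorphism via multiplicativity of the slash operator and transporting the generators, is precisely the argument left implicit behind the paper's ``therefore''; making it explicit is a small improvement.
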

\begin{proof}
In~\cite[Appendix B.2.3]{Schimannek:2021pau} it was calculated that
\begin{align}
    E_{4,1}\left(\frac{\tau}{2\tau+1}\right)=(2\tau+1)\sqrt{E_{2,2}(\tau)-E_{4,1}(\tau)^2}\,,\quad E_{2,2}\left(\frac{\tau}{2\tau+1}\right)=(2\tau+1)^2E_{2,2}(\tau)\,.
\end{align}
Recalling that $M\left(\Gamma^1(4)\right)$ is generated by $\widetilde{\phi}_1(\tau)=E_{4,1}(\tau/4)$ and $\widetilde{\phi}_2(\tau)=E_{2,2}(\tau/4)$, we define $\phi_i(\tau)=\kappa_i(\widetilde{\phi}_i)$, $i=1,2$ and calculate
\begin{align}
    \begin{split}
    \phi_1(\tau)=&\tau^{-1}E_{4,1}\left(\frac{2\tau-1}{4\tau}\right)=\tau^{-1}E_{4,1}\left(\frac{\tau-\frac12}{2(\tau-\frac12)+1}\right)\\
            =&2\sqrt{E_{2,2}\left(\tau-\frac12\right)-E_{4,1}\left(\tau-\frac12\right)^2}=i\left(\sqrt{x_3(\tau)}-\sqrt{x_4(\tau)}\right)\,,\\
    \phi_2(\tau)=&\tau^{-2}E_{4,2}\left(\frac{2\tau-1}{4\tau}\right)=\tau^{-2}E_{2,2}\left(\frac{\tau-\frac12}{2(\tau-\frac12)+1}\right)\\
    =&4E_{2,2}\left(\tau-\frac12\right)=6\sqrt{x_3(\tau)x_4(\tau)}-x_3(\tau)-x_4(\tau)\,.
    \end{split}
\end{align}
The claim then follows after expressing $E_{4,1}$ and $E_{2,2}$ in terms of $\tilde{\epsilon}_1,\tilde{\epsilon}_2$.
\end{proof}

\section{Intersection theory on the Chow ring}
\label{sec:Chow}
The natural setting for intersection theory on a variety $X$ (and  on more general schemes) is its Chow ring $\Ch_*(X)$. In this appendix, we formulate some of the concepts introduced in Section \ref{sec:genusonefibrations} in this context and provide further details which the reader might find useful when performing intersection theory on torus fibrations.
\begin{itemize}
    \item For a smooth fibration $\pi : X\rightarrow B$ over a smooth base $B$, the projection formula
        \begin{equation}
          \pi_*(A \cdot \pi^* b) = \pi_* A \cdot b \,,\quad A\in \Ch_*(X), \, b \in \Ch_*(B) \,,
       \end{equation}
    permits pushing forward intersection theory calculations on $X$ to calculations on $B$. As the proper pushforward of a point is a point, this in particular allows evaluating triple intersections in $X$ by pushing forward to the base $B$.
    \item To study the intersection theory of holomorphic sections and $m$-sections, let us first consider the more general setup where 
        \begin{equation}
            i: S \rightarrow X
        \end{equation}
        is an embedding of a smooth closed subvariety $S$ into $X$. When no confusion can arise, we will use the letter $S$ also to denote the class of $S$ and $i(S)$ in $\Ch_*(S)$, $\Ch_*(X)$ respectively (so that $i_* S = S)$. Then \cite[Example 8.1.1]{FultonIntersection}
        \begin{equation}
            i_* i^* A = S \cdot A \quad \mathrm{for} \quad A \in \Ch_*(X) \,.
        \end{equation}
        By adjunction,
        \begin{equation}
            K_S = (K_X + S)|_S = i^*(K_X + S)\,. 
        \end{equation}
        On a Calabi-Yau variety $X$, we thus have the relation
        \begin{equation} \label{eq:ssKs}
            i_* K_S = S \cdot S  \,.
        \end{equation}
    \item Let $i_* S = J_0$ be a holomorphic $m$-section. From the definitions,\footnote{For a proper map $\pi: X \rightarrow B$, $\pi_* J_0$ is given by the  multiple $\deg \pi|_{J_0}$ of the image, where $\deg \pi|_{J_{0}}:=0$ if $\dim \pi(J_0) < \dim J_0$, see e.g. \cite[Section 1.4]{FultonIntersection}.}
        \begin{equation}
            J_0 \cdot E = m \,, \quad \deg \pi|_{J_0} = m \,, \quad \pi_* J_0 = m B \,.
        \end{equation}
        Introduce the degree $m$ map $f = \pi \circ i : S \rightarrow B$, such that $f_* f^* = m \,\id_B$. Then, by the ramification formula \cite[Theorem 5.5]{MR0637060},
        \begin{equation} \label{eq:ram}
            K_S = f^* K_B + R_f \quad \Rightarrow \quad f_* K_S = m K_B + f_* R_f \,,
        \end{equation}
        where $R_f$ is the ramification divisor associated to $f$. Hence, upon acting on \eqref{eq:ssKs} by $\pi_*$ and comparing to \eqref{eq:ram}, we obtain
        \begin{equation}
            \pi_* \left( J_0 \cdot J_0 \right) = m K_B + f_* R_f \,.
        \end{equation}
        Also, by acting with $i_*$ on the LHS of \eqref{eq:ram},
        \begin{equation} \label{eq:iKs}
            i_* K_S = S \cdot \pi^* K_B + i_* R_f \,.
        \end{equation}
    \item We now specialize to $i_*S = J_0$ a holomorphic section. Then $R_f =0$, and by combining \eqref{eq:ram} and \eqref{eq:iKs},
        \begin{equation}
            J_0 \cdot J_0 = J_0 \cdot \pi^* K_B \quad \Rightarrow \quad J_0 \cdot J_0 \cdot J_0 = J_0 \cdot \pi^* \left( K_B \cdot K_B\right) \,,
        \end{equation}
        where we have used that as $\pi$ is flat, $\pi^*$ is functorial \cite[Section 1.7]{FultonIntersection}.
        Hence, 
        \begin{equation}
            \pi_* (J_0 \cdot J_0 \cdot J_0) = K_B \cdot K_B \,.
        \end{equation}
    \item For $J_0$ a holomorphic section, the relation
    \begin{equation} \label{eq:J0c2}
        J_0 \cdot c_2(TX) = c_2(B) - K_B \cdot K_B
    \end{equation}
    follows from the adjunction formula and the Whitney sum formula \cite{Bonetti:2011mw}.

    \item By functoriality of $\pi^*$,
    \begin{equation}
        J_\alpha \cdot J_\beta = \pi^*(\Jbase_\alpha \cdot \Jbase_\beta) = \Omega_{\alpha \beta}\, \pi^{-1}(p) = \Omega_{\alpha \beta}\, E \,.
    \end{equation}
    \item By definition of $m$-sections and vertical divisors, 
    \begin{align}
        \pi_* J_\alpha &= 0 \,, \quad \alpha =1, \ldots, b_2(B) \,,\\
        \pi_* J_0 & = m B \,, \\ \label{eq:piJ0J0}
        \pi_* \left(J_0 \cdot J_0\right) &= \sum_\alpha c^\alpha \Jbase_\alpha \quad \Leftrightarrow \quad c^\alpha = \Omega^{\alpha \beta} \pi_*(J_0 \cdot J_0) \cdot \Jbase_\beta \,.
    \end{align}
    When $J_0$ is a holomorphic section, we have merely introduced an expansion
    \begin{equation} \label{eq:expansionKB}
        K_B = \sum_\alpha c^\alpha j_\alpha 
    \end{equation}
    for the canonical divisor of the base in \eqref{eq:piJ0J0}. Note that in this case,
    \begin{align} \label{eq:cOc}
        c^\alpha \Omega_{\alpha \beta} c^\beta = K_B \cdot K_B \,.
    \end{align}

\end{itemize}

\addcontentsline{toc}{section}{References}
\bibliographystyle{utphys}
\bibliography{names}

\end{document}